\documentclass{article}
\usepackage{amssymb}
\usepackage{amsmath}
\usepackage{amsfonts}
\usepackage{graphicx}

\setcounter{MaxMatrixCols}{10}

\newtheorem{theorem}{Theorem}

\newtheorem{axiom}[theorem]{Axiom}

\newtheorem{conjecture}[theorem]{Conjecture}
\newtheorem{corollary}[theorem]{Corollary}

\newtheorem{definition}[theorem]{Definition}
\newtheorem{example}[theorem]{Example}
\newtheorem{exercise}[theorem]{Exercise}
\newtheorem{lemma}[theorem]{Lemma}

\newtheorem{proposition}[theorem]{Proposition}
\newtheorem{remark}[theorem]{Remark}

\newenvironment{proof}[1][Proof]{\noindent\textbf{#1.} }{\ \rule{0.5em}{0.5em}}
\typeout{TCILATEX Macros for Scientific Word 3.0 <19 May 1997>.}
\typeout{NOTICE:  This macro file is NOT proprietary and may be 
freely copied and distributed.}
\makeatletter
%
\newcount\@hour\newcount\@minute\chardef\@x10\chardef\@xv60
\def\tcitime{
\def\@time{%
  \@minute\time\@hour\@minute\divide\@hour\@xv
  \ifnum\@hour<\@x 0\fi\the\@hour:%
  \multiply\@hour\@xv\advance\@minute-\@hour
  \ifnum\@minute<\@x 0\fi\the\@minute
  }}%

\@ifundefined{hyperref}{}{}

\@ifundefined{qExtProgCall}{\def\qExtProgCall#1#2#3#4#5#6{\relax}}{}
%
%
%
%
\def\QCTOpt[#1]#2{%
  \def\QCTOptB{#1}
  \def\QCTOptA{#2}
}
\def\QCTNOpt#1{%
  \def\QCTOptA{#1}
  \let\QCTOptB\empty
}
\def\Qct{%
  \@ifnextchar[{%
    \QCTOpt}{\QCTNOpt}
}
\def\QCBOpt[#1]#2{%
  \def\QCBOptB{#1}
  \def\QCBOptA{#2}
}
\def\QCBNOpt#1{%
  \def\QCBOptA{#1}
  \let\QCBOptB\empty
}
\def\Qcb{%
  \@ifnextchar[{%
    \QCBOpt}{\QCBNOpt}
}
\def\PrepCapArgs{%
  \ifx\QCBOptA\empty
    \ifx\QCTOptA\empty
      {}%
    \else
      \ifx\QCTOptB\empty
        {\QCTOptA}%
      \else
        [\QCTOptB]{\QCTOptA}%
      \fi
    \fi
  \else
    \ifx\QCBOptA\empty
      {}%
    \else
      \ifx\QCBOptB\empty
        {\QCBOptA}%
      \else
        [\QCBOptB]{\QCBOptA}%
      \fi
    \fi
  \fi
}
\newcount\GRAPHICSTYPE
\GRAPHICSTYPE=\z@
\def\GRAPHICSPS#1{%
 \ifcase\GRAPHICSTYPE
   \special{ps: #1}%
 \or
   \special{language "PS", include "#1"}%
 \fi
}%
%
%
%
\def\graffile#1#2#3#4{%
    \bgroup
    \leavevmode
    \@ifundefined{bbl@deactivate}{\def~{\string~}}{\activesoff}
    \raise -#4 \BOXTHEFRAME{%
        \hbox to #2{\raise #3\hbox to #2{\null #1\hfil}}}%
    \egroup
}%
%
\def\draftbox#1#2#3#4{%
 \leavevmode\raise -#4 \hbox{%
  \frame{\rlap{\protect\tiny #1}\hbox to #2%
   {\vrule height#3 width\z@ depth\z@\hfil}%
  }%
 }%
}%
\newcount\draft
\draft=\z@

\newif\ifwasdraft
\wasdraftfalse

\def\GRAPHIC#1#2#3#4#5{%
 \ifnum\draft=\@ne\draftbox{#2}{#3}{#4}{#5}%
  \else\graffile{#1}{#3}{#4}{#5}%
  \fi
 }%
\def\addtoLaTeXparams#1{%
    \edef\LaTeXparams{\LaTeXparams #1}}%
%

\newif\ifBoxFrame \BoxFramefalse
\newif\ifOverFrame \OverFramefalse
\newif\ifUnderFrame \UnderFramefalse

\def\BOXTHEFRAME#1{%
   \hbox{%
      \ifBoxFrame
         \frame{#1}%
      \else
         {#1}%
      \fi
   }%
}

\def\doFRAMEparams#1{\BoxFramefalse\OverFramefalse\UnderFramefalse\readFRAMEparams#1\end}%
\def\readFRAMEparams#1{%
 \ifx#1\end%
  \let\next=\relax
  \else
  \ifx#1i\dispkind=\z@\fi
  \ifx#1d\dispkind=\@ne\fi
  \ifx#1f\dispkind=\tw@\fi
  \ifx#1t\addtoLaTeXparams{t}\fi
  \ifx#1b\addtoLaTeXparams{b}\fi
  \ifx#1p\addtoLaTeXparams{p}\fi
  \ifx#1h\addtoLaTeXparams{h}\fi
  \ifx#1X\BoxFrametrue\fi
  \ifx#1O\OverFrametrue\fi
  \ifx#1U\UnderFrametrue\fi
  \ifx#1w
    \ifnum\draft=1\wasdrafttrue\else\wasdraftfalse\fi
    \draft=\@ne
  \fi
  \let\next=\readFRAMEparams
  \fi
 \next
 }%
%

\def\IFRAME#1#2#3#4#5#6{%
      \bgroup
      \let\QCTOptA\empty
      \let\QCTOptB\empty
      \let\QCBOptA\empty
      \let\QCBOptB\empty
      #6%
      \parindent=0pt%
      \leftskip=0pt
      \rightskip=0pt
      \setbox0 = \hbox{\QCBOptA}%
      \@tempdima = #1\relax
      \ifOverFrame
          \typeout{This is not implemented yet}%
          \show\HELP
      \else
         \ifdim\wd0>\@tempdima
            \advance\@tempdima by \@tempdima
            \ifdim\wd0 >\@tempdima
               \textwidth=\@tempdima
               \setbox1 =\vbox{%
                  \noindent\hbox to \@tempdima{\hfill\GRAPHIC{#5}{#4}{#1}{#2}{#3}\hfill}\\%
                  \noindent\hbox to \@tempdima{\parbox[b]{\@tempdima}{\QCBOptA}}%
               }%
               \wd1=\@tempdima
            \else
               \textwidth=\wd0
               \setbox1 =\vbox{%
                 \noindent\hbox to \wd0{\hfill\GRAPHIC{#5}{#4}{#1}{#2}{#3}\hfill}\\%
                 \noindent\hbox{\QCBOptA}%
               }%
               \wd1=\wd0
            \fi
         \else
            \ifdim\wd0>0pt
              \hsize=\@tempdima
              \setbox1 =\vbox{%
                \unskip\GRAPHIC{#5}{#4}{#1}{#2}{0pt}%
                \break
                \unskip\hbox to \@tempdima{\hfill \QCBOptA\hfill}%
              }%
              \wd1=\@tempdima
           \else
              \hsize=\@tempdima
              \setbox1 =\vbox{%
                \unskip\GRAPHIC{#5}{#4}{#1}{#2}{0pt}%
              }%
              \wd1=\@tempdima
           \fi
         \fi
         \@tempdimb=\ht1
         \advance\@tempdimb by \dp1
         \advance\@tempdimb by -#2%
         \advance\@tempdimb by #3%
         \leavevmode
         \raise -\@tempdimb \hbox{\box1}%
      \fi
      \egroup%
}%
%
\def\DFRAME#1#2#3#4#5{%
 \begin{center}
     \let\QCTOptA\empty
     \let\QCTOptB\empty
     \let\QCBOptA\empty
     \let\QCBOptB\empty
     \ifOverFrame 
        #5\QCTOptA\par
     \fi
     \GRAPHIC{#4}{#3}{#1}{#2}{\z@}
     \ifUnderFrame 
        \nobreak\par\nobreak#5\QCBOptA
     \fi
 \end{center}%
 }%
%
\def\FFRAME#1#2#3#4#5#6#7{%
 \begin{figure}[#1]%
  \let\QCTOptA\empty
  \let\QCTOptB\empty
  \let\QCBOptA\empty
  \let\QCBOptB\empty
  \ifOverFrame
    #4
    \ifx\QCTOptA\empty
    \else
      \ifx\QCTOptB\empty
        \caption{\QCTOptA}%
      \else
        \caption[\QCTOptB]{\QCTOptA}%
      \fi
    \fi
    \ifUnderFrame\else
      \label{#5}%
    \fi
  \else
    \UnderFrametrue%
  \fi
  \begin{center}\GRAPHIC{#7}{#6}{#2}{#3}{\z@}\end{center}%
  \ifUnderFrame
    #4
    \ifx\QCBOptA\empty
      \caption{}%
    \else
      \ifx\QCBOptB\empty
        \caption{\QCBOptA}%
      \else
        \caption[\QCBOptB]{\QCBOptA}%
      \fi
    \fi
    \label{#5}%
  \fi
  \end{figure}%
 }%
%
%
%
%
%
\newcount\dispkind%

\def\makeactives{
  \catcode`\"=\active
  \catcode`\;=\active
  \catcode`\:=\active
  \catcode`\'=\active
  \catcode`\~=\active
}
\bgroup
   \makeactives
   \gdef\activesoff{%
      \def"{\string"}
      \def;{\string;}
      \def:{\string:}
      \def'{\string'}
      \def~{\string~}
    }
\egroup

\def\FRAME#1#2#3#4#5#6#7#8{%
 \bgroup
 \ifnum\draft=\@ne
   \wasdrafttrue
 \else
   \wasdraftfalse%
 \fi
 \def\LaTeXparams{}%
 \dispkind=\z@
 \def\LaTeXparams{}%
 \doFRAMEparams{#1}%
 \ifnum\dispkind=\z@\IFRAME{#2}{#3}{#4}{#7}{#8}{#5}\else
  \ifnum\dispkind=\@ne\DFRAME{#2}{#3}{#7}{#8}{#5}\else
   \ifnum\dispkind=\tw@
    \edef\@tempa{\noexpand\FFRAME{\LaTeXparams}}%
    \@tempa{#2}{#3}{#5}{#6}{#7}{#8}%
    \fi
   \fi
  \fi
  \ifwasdraft\draft=1\else\draft=0\fi{}%
  \egroup
 }%
%

\def\TEXUX#1{"texux"}

%
%
%
%
%
%
%
%
%
%

%
\long\def\QQQ#1#2{%
     \long\expandafter\def\csname#1\endcsname{#2}}%
\@ifundefined{QTP}{\def\QTP#1{}}{}
\@ifundefined{QEXCLUDE}{\def\QEXCLUDE#1{}}{}
\@ifundefined{Qlb}{}{}
\@ifundefined{Qlt}{}{}
\long\def\QQA#1#2{}%
\def\QTR#1#2{{\csname#1\endcsname #2}}
\def\EXPAND#1[#2]#3{}%
\def\NOEXPAND#1[#2]#3{}%
\def\LaTeXparent#1{}%
\def\ChildStyles#1{}%
\def\ChildDefaults#1{}%
\def\QTagDef#1#2#3{}%

\@ifundefined{correctchoice}{}{}
\@ifundefined{HTML}{\def\HTML#1{\relax}}{}
\@ifundefined{TCIIcon}{\def\TCIIcon#1#2#3#4{\relax}}{}
\if@compatibility
  \typeout{Not defining UNICODE or CustomNote commands for LaTeX 2.09.}
\else
  \providecommand{\UNICODE}[2][]{}
  
\fi

%
\@ifundefined{StyleEditBeginDoc}{}{}
%
\def\QQfnmark#1{\footnotemark}

%
%
\@ifundefined{TCIMAKEINDEX}{}{\makeindex}%
%
\@ifundefined{abstract}{%
 \def\abstract{%
  \if@twocolumn
   \section*{Abstract (Not appropriate in this style!)}%
   \else \small 
   \begin{center}{\bf Abstract\vspace{-.5em}\vspace{\z@}}\end{center}%
   \quotation 
   \fi
  }%
 }{%
 }%
\@ifundefined{endabstract}{\def\endabstract
  {\if@twocolumn\else\endquotation\fi}}{}%
\@ifundefined{maketitle}{\def\maketitle#1{}}{}%
\@ifundefined{affiliation}{\def\affiliation#1{}}{}%
\@ifundefined{proof}{}{}%
\@ifundefined{endproof}{}{}%
\@ifundefined{newfield}{\def\newfield#1#2{}}{}%
\@ifundefined{chapter}{\def\chapter#1{\par(Chapter head:)#1\par }%
 \newcount\c@chapter}{}%
\@ifundefined{part}{\def\part#1{\par(Part head:)#1\par }}{}%
\@ifundefined{section}{\def\section#1{\par(Section head:)#1\par }}{}%
\@ifundefined{subsection}{\def\subsection#1%
 {\par(Subsection head:)#1\par }}{}%
\@ifundefined{subsubsection}{\def\subsubsection#1%
 {\par(Subsubsection head:)#1\par }}{}%
\@ifundefined{paragraph}{\def\paragraph#1%
 {\par(Subsubsubsection head:)#1\par }}{}%
\@ifundefined{subparagraph}{\def\subparagraph#1%
 {\par(Subsubsubsubsection head:)#1\par }}{}%
\@ifundefined{therefore}{}{}%
\@ifundefined{backepsilon}{}{}%
\@ifundefined{yen}{}{}%
\@ifundefined{registered}{%
   \def\registered{\relax\ifmmode{}\r@gistered
                    \else$\m@th\r@gistered$\fi}%
 \def\r@gistered{^{\ooalign
  {\hfil\raise.07ex\hbox{$\scriptstyle\rm\text{R}$}\hfil\crcr
  \mathhexbox20D}}}}{}%
\@ifundefined{Eth}{}{}%
\@ifundefined{eth}{}{}%
\@ifundefined{Thorn}{}{}%
\@ifundefined{thorn}{}{}%
%
\@ifundefined{degree}{}{}%
%
\newdimen\theight
\def\Column{%
 \vadjust{\setbox\z@=\hbox{\scriptsize\quad\quad tcol}%
  \theight=\ht\z@\advance\theight by \dp\z@\advance\theight by \lineskip
  \kern -\theight \vbox to \theight{%
   \rightline{\rlap{\box\z@}}%
   \vss
   }%
  }%
 }%
\def\qed{%
 \ifhmode\unskip\nobreak\fi\ifmmode\ifinner\else\hskip5\p@\fi\fi
 \hbox{\hskip5\p@\vrule width4\p@ height6\p@ depth1.5\p@\hskip\p@}%
 }%
\def\miss{\hbox{\vrule height2\p@ width 2\p@ depth\z@}}%
%
%
\def\tcol#1{{\baselineskip=6\p@ \vcenter{#1}} \Column}  %
%
%
\@ifundefined{note}{}{}%

\def\newfmtname{LaTeX2e}

\ifx\fmtname\newfmtname
  \DeclareOldFontCommand{\rm}{\normalfont\rmfamily}{\mathrm}
  \DeclareOldFontCommand{\sf}{\normalfont\sffamily}{\mathsf}
  \DeclareOldFontCommand{\tt}{\normalfont\ttfamily}{\mathtt}
  \DeclareOldFontCommand{\bf}{\normalfont\bfseries}{\mathbf}
  \DeclareOldFontCommand{\it}{\normalfont\itshape}{\mathit}
  \DeclareOldFontCommand{\sl}{\normalfont\slshape}{\@nomath\sl}
  \DeclareOldFontCommand{\sc}{\normalfont\scshape}{\@nomath\sc}
\fi

%

\def\alpha{{\Greekmath 010B}}%
\def\beta{{\Greekmath 010C}}%
\def\gamma{{\Greekmath 010D}}%
\def\delta{{\Greekmath 010E}}%
\def\epsilon{{\Greekmath 010F}}%
\def\zeta{{\Greekmath 0110}}%
\def\eta{{\Greekmath 0111}}%
\def\theta{{\Greekmath 0112}}%
\def\iota{{\Greekmath 0113}}%
\def\kappa{{\Greekmath 0114}}%
\def\lambda{{\Greekmath 0115}}%
\def\mu{{\Greekmath 0116}}%
\def\nu{{\Greekmath 0117}}%
\def\xi{{\Greekmath 0118}}%
\def\pi{{\Greekmath 0119}}%
\def\rho{{\Greekmath 011A}}%
\def\sigma{{\Greekmath 011B}}%
\def\tau{{\Greekmath 011C}}%
\def\upsilon{{\Greekmath 011D}}%
\def\phi{{\Greekmath 011E}}%
\def\chi{{\Greekmath 011F}}%
\def\psi{{\Greekmath 0120}}%
\def\omega{{\Greekmath 0121}}%
\def\varepsilon{{\Greekmath 0122}}%
\def\vartheta{{\Greekmath 0123}}%
\def\varpi{{\Greekmath 0124}}%
\def\varrho{{\Greekmath 0125}}%
\def\varsigma{{\Greekmath 0126}}%
\def\varphi{{\Greekmath 0127}}%

\def\nabla{{\Greekmath 0272}}
\def\FindBoldGroup{%
   {\setbox0=\hbox{$\mathbf{x\global\edef\theboldgroup{\the\mathgroup}}$}}%
}

\def\Greekmath#1#2#3#4{%
    \if@compatibility
        \ifnum\mathgroup=\symbold
           \mathchoice{\mbox{\boldmath$\displaystyle\mathchar"#1#2#3#4$}}%
                      {\mbox{\boldmath$\textstyle\mathchar"#1#2#3#4$}}%
                      {\mbox{\boldmath$\scriptstyle\mathchar"#1#2#3#4$}}%
                      {\mbox{\boldmath$\scriptscriptstyle\mathchar"#1#2#3#4$}}%
        \else
           \mathchar"#1#2#3#4%
        \fi 
    \else 
        \FindBoldGroup
        \ifnum\mathgroup=\theboldgroup 
           \mathchoice{\mbox{\boldmath$\displaystyle\mathchar"#1#2#3#4$}}%
                      {\mbox{\boldmath$\textstyle\mathchar"#1#2#3#4$}}%
                      {\mbox{\boldmath$\scriptstyle\mathchar"#1#2#3#4$}}%
                      {\mbox{\boldmath$\scriptscriptstyle\mathchar"#1#2#3#4$}}%
        \else
           \mathchar"#1#2#3#4%
        \fi     	    
	  \fi}

\newif\ifGreekBold  \GreekBoldfalse
\let\SAVEPBF=\pbf
\def\pbf{\GreekBoldtrue\SAVEPBF}%

\@ifundefined{theorem}{\newtheorem{theorem}{Theorem}}{}
\@ifundefined{lemma}{\newtheorem{lemma}[theorem]{Lemma}}{}
\@ifundefined{corollary}{\newtheorem{corollary}[theorem]{Corollary}}{}
\@ifundefined{conjecture}{}{}
\@ifundefined{proposition}{}{}
\@ifundefined{axiom}{}{}
\@ifundefined{remark}{\newtheorem{remark}{Remark}}{}
\@ifundefined{example}{}{}
\@ifundefined{exercise}{}{}
\@ifundefined{definition}{}{}

\@ifundefined{mathletters}{%
  \newcounter{equationnumber}  
  \def\mathletters{%
     \addtocounter{equation}{1}
     \edef\@currentlabel{\theequation}%
     \setcounter{equationnumber}{\c@equation}
     \setcounter{equation}{0}%
     \edef\theequation{\@currentlabel\noexpand\alph{equation}}%
  }
  
}{}

\@ifundefined{BibTeX}{%
    \def\BibTeX{{\rm B\kern-.05em{\sc i\kern-.025em b}\kern-.08em
                 T\kern-.1667em\lower.7ex\hbox{E}\kern-.125emX}}}{}%
\@ifundefined{AmS}%
    {\def\AmS{{\protect\usefont{OMS}{cmsy}{m}{n}%
                A\kern-.1667em\lower.5ex\hbox{M}\kern-.125emS}}}{}%
\@ifundefined{AmSTeX}{}{}%
%

\def\@@eqncr{\let\@tempa\relax
    \ifcase\@eqcnt \def\@tempa{& & &}\or \def\@tempa{& &}%
      \else \def\@tempa{&}\fi
     \@tempa
     \if@eqnsw
        \iftag@
           \@taggnum
        \else
           \@eqnnum\stepcounter{equation}%
        \fi
     \fi
     \global\tag@false
     \global\@eqnswtrue
     \global\@eqcnt\z@\cr}

\def\TCItag{\@ifnextchar*{\@TCItagstar}{\@TCItag}}
\def\@TCItag#1{%
    \global\tag@true
    \global\def\@taggnum{(#1)}}
\def\@TCItagstar*#1{%
    \global\tag@true
    \global\def\@taggnum{#1}}
%
%
%
%
%
%
%
%
%
%
%
%
%
%
%
%
%
%
%
%
%
%
%
%
%
%
%
%
%
%
%
%
%
%
%
%
%
%
%
%
%
%
%
%
%
%
%
%
%
%
%
%
%
%
%
%
%
%
%
%
%
%
%

%
%
\ifx\ds@amstex\relax
   \message{amstex already loaded}\makeatother 
\else
   \@ifpackageloaded{amsmath}%
      {\message{amsmath already loaded}\makeatother }
      {}
   \@ifpackageloaded{amstex}%
      {\message{amstex already loaded}\makeatother }
      {}
   \@ifpackageloaded{amsgen}%
      {\message{amsgen already loaded}\makeatother }
      {}
\fi
%
%
%
%
\let\DOTSI\relax
\def\RIfM@{\relax\ifmmode}%
\def\FN@{\futurelet\next}%
\newcount\intno@
\def\iint{\DOTSI\intno@\tw@\FN@\ints@}%
\def\iiint{\DOTSI\intno@\thr@@\FN@\ints@}%
\def\iiiint{\DOTSI\intno@4 \FN@\ints@}%
\def\idotsint{\DOTSI\intno@\z@\FN@\ints@}%
\def\ints@{\findlimits@\ints@@}%
\newif\iflimtoken@
\newif\iflimits@
\def\findlimits@{\limtoken@true\ifx\next\limits\limits@true
 \else\ifx\next\nolimits\limits@false\else
 \limtoken@false\ifx\ilimits@\nolimits\limits@false\else
 \ifinner\limits@false\else\limits@true\fi\fi\fi\fi}%
\def\multint@{\int\ifnum\intno@=\z@\intdots@                          
 \else\intkern@\fi                                                    
 \ifnum\intno@>\tw@\int\intkern@\fi                                   
 \ifnum\intno@>\thr@@\int\intkern@\fi                                 
 \int}
\def\multintlimits@{\intop\ifnum\intno@=\z@\intdots@\else\intkern@\fi
 \ifnum\intno@>\tw@\intop\intkern@\fi
 \ifnum\intno@>\thr@@\intop\intkern@\fi\intop}%
\def\intic@{%
    \mathchoice{\hskip.5em}{\hskip.4em}{\hskip.4em}{\hskip.4em}}%
\def\negintic@{\mathchoice
 {\hskip-.5em}{\hskip-.4em}{\hskip-.4em}{\hskip-.4em}}%
\def\ints@@{\iflimtoken@                                              
 \def\ints@@@{\iflimits@\negintic@
   \mathop{\intic@\multintlimits@}\limits                             
  \else\multint@\nolimits\fi                                          
  \eat@}
 \else                                                                
 \def\ints@@@{\iflimits@\negintic@
  \mathop{\intic@\multintlimits@}\limits\else
  \multint@\nolimits\fi}\fi\ints@@@}%
\def\intkern@{\mathchoice{\!\!\!}{\!\!}{\!\!}{\!\!}}%
\def\plaincdots@{\mathinner{\cdotp\cdotp\cdotp}}%
\def\intdots@{\mathchoice{\plaincdots@}%
 {{\cdotp}\mkern1.5mu{\cdotp}\mkern1.5mu{\cdotp}}%
 {{\cdotp}\mkern1mu{\cdotp}\mkern1mu{\cdotp}}%
 {{\cdotp}\mkern1mu{\cdotp}\mkern1mu{\cdotp}}}%
%
%
%
\def\RIfM@{\relax\protect\ifmmode}
\def\text{\RIfM@\expandafter\text@\else\expandafter\mbox\fi}
\let\nfss@text\text
\def\text@#1{\mathchoice
   {\textdef@\displaystyle\f@size{#1}}%
   {\textdef@\textstyle\tf@size{\firstchoice@false #1}}%
   {\textdef@\textstyle\sf@size{\firstchoice@false #1}}%
   {\textdef@\textstyle \ssf@size{\firstchoice@false #1}}%
   \glb@settings}

\def\textdef@#1#2#3{\hbox{{%
                    \everymath{#1}%
                    \let\f@size#2\selectfont
                    #3}}}
\newif\iffirstchoice@
\firstchoice@true
%
%
\def\Let@{\relax\iffalse{\fi\let\\=\cr\iffalse}\fi}%
\def\vspace@{\def\vspace##1{\crcr\noalign{\vskip##1\relax}}}%
\def\multilimits@{\bgroup\vspace@\Let@
 \baselineskip\fontdimen10 \scriptfont\tw@
 \advance\baselineskip\fontdimen12 \scriptfont\tw@
 \lineskip\thr@@\fontdimen8 \scriptfont\thr@@
 \lineskiplimit\lineskip
 \vbox\bgroup\ialign\bgroup\hfil$\m@th\scriptstyle{##}$\hfil\crcr}%
\def\Sb{_\multilimits@}%
\def\endSb{\crcr\egroup\egroup\egroup}%
\def\Sp{^\multilimits@}%

%
%
%
\newdimen\ex@
\ex@.2326ex
\def\rightarrowfill@#1{$#1\m@th\mathord-\mkern-6mu\cleaders
 \hbox{$#1\mkern-2mu\mathord-\mkern-2mu$}\hfill
 \mkern-6mu\mathord\rightarrow$}%
\def\leftarrowfill@#1{$#1\m@th\mathord\leftarrow\mkern-6mu\cleaders
 \hbox{$#1\mkern-2mu\mathord-\mkern-2mu$}\hfill\mkern-6mu\mathord-$}%
\def\leftrightarrowfill@#1{$#1\m@th\mathord\leftarrow
\mkern-6mu\cleaders
 \hbox{$#1\mkern-2mu\mathord-\mkern-2mu$}\hfill
 \mkern-6mu\mathord\rightarrow$}%
\def\overrightarrow{\mathpalette\overrightarrow@}%
\def\overrightarrow@#1#2{\vbox{\ialign{##\crcr\rightarrowfill@#1\crcr
 \noalign{\kern-\ex@\nointerlineskip}$\m@th\hfil#1#2\hfil$\crcr}}}%

\def\overleftarrow{\mathpalette\overleftarrow@}%
\def\overleftarrow@#1#2{\vbox{\ialign{##\crcr\leftarrowfill@#1\crcr
 \noalign{\kern-\ex@\nointerlineskip}$\m@th\hfil#1#2\hfil$\crcr}}}%
\def\overleftrightarrow{\mathpalette\overleftrightarrow@}%
\def\overleftrightarrow@#1#2{\vbox{\ialign{##\crcr
   \leftrightarrowfill@#1\crcr
 \noalign{\kern-\ex@\nointerlineskip}$\m@th\hfil#1#2\hfil$\crcr}}}%
\def\underrightarrow{\mathpalette\underrightarrow@}%
\def\underrightarrow@#1#2{\vtop{\ialign{##\crcr$\m@th\hfil#1#2\hfil
  $\crcr\noalign{\nointerlineskip}\rightarrowfill@#1\crcr}}}%

\def\underleftarrow{\mathpalette\underleftarrow@}%
\def\underleftarrow@#1#2{\vtop{\ialign{##\crcr$\m@th\hfil#1#2\hfil
  $\crcr\noalign{\nointerlineskip}\leftarrowfill@#1\crcr}}}%
\def\underleftrightarrow{\mathpalette\underleftrightarrow@}%
\def\underleftrightarrow@#1#2{\vtop{\ialign{##\crcr$\m@th
  \hfil#1#2\hfil$\crcr
 \noalign{\nointerlineskip}\leftrightarrowfill@#1\crcr}}}%

\def\qopnamewl@#1{\mathop{\operator@font#1}\nlimits@}
\let\nlimits@\displaylimits
\def\setboxz@h{\setbox\z@\hbox}

\def\varlim@#1#2{\mathop{\vtop{\ialign{##\crcr
 \hfil$#1\m@th\operator@font lim$\hfil\crcr
 \noalign{\nointerlineskip}#2#1\crcr
 \noalign{\nointerlineskip\kern-\ex@}\crcr}}}}

 \def\rightarrowfill@#1{\m@th\setboxz@h{$#1-$}\ht\z@\z@
  $#1\copy\z@\mkern-6mu\cleaders
  \hbox{$#1\mkern-2mu\box\z@\mkern-2mu$}\hfill
  \mkern-6mu\mathord\rightarrow$}
\def\leftarrowfill@#1{\m@th\setboxz@h{$#1-$}\ht\z@\z@
  $#1\mathord\leftarrow\mkern-6mu\cleaders
  \hbox{$#1\mkern-2mu\copy\z@\mkern-2mu$}\hfill
  \mkern-6mu\box\z@$}

\def\projlim{\qopnamewl@{proj\,lim}}
\def\injlim{\qopnamewl@{inj\,lim}}
\def\varinjlim{\mathpalette\varlim@\rightarrowfill@}
\def\varprojlim{\mathpalette\varlim@\leftarrowfill@}
\def\varliminf{\mathpalette\varliminf@{}}
\def\varliminf@#1{\mathop{\underline{\vrule\@depth.2\ex@\@width\z@
   \hbox{$#1\m@th\operator@font lim$}}}}
\def\varlimsup{\mathpalette\varlimsup@{}}
\def\varlimsup@#1{\mathop{\overline
  {\hbox{$#1\m@th\operator@font lim$}}}}

%
%
%
%
%
%
\begingroup \catcode `|=0 \catcode `[= 1
\catcode`]=2 \catcode `\{=12 \catcode `\}=12
\catcode`\\=12 
|gdef|@alignverbatim#1\end{align}[#1|end[align]]
|gdef|@salignverbatim#1\end{align*}[#1|end[align*]]

|gdef|@alignatverbatim#1\end{alignat}[#1|end[alignat]]
|gdef|@salignatverbatim#1\end{alignat*}[#1|end[alignat*]]

|gdef|@xalignatverbatim#1\end{xalignat}[#1|end[xalignat]]
|gdef|@sxalignatverbatim#1\end{xalignat*}[#1|end[xalignat*]]

|gdef|@gatherverbatim#1\end{gather}[#1|end[gather]]
|gdef|@sgatherverbatim#1\end{gather*}[#1|end[gather*]]

|gdef|@gatherverbatim#1\end{gather}[#1|end[gather]]
|gdef|@sgatherverbatim#1\end{gather*}[#1|end[gather*]]

|gdef|@multilineverbatim#1\end{multiline}[#1|end[multiline]]
|gdef|@smultilineverbatim#1\end{multiline*}[#1|end[multiline*]]

|gdef|@arraxverbatim#1\end{arrax}[#1|end[arrax]]
|gdef|@sarraxverbatim#1\end{arrax*}[#1|end[arrax*]]

|gdef|@tabulaxverbatim#1\end{tabulax}[#1|end[tabulax]]
|gdef|@stabulaxverbatim#1\end{tabulax*}[#1|end[tabulax*]]

|endgroup

\def\align{\@verbatim \frenchspacing\@vobeyspaces \@alignverbatim
You are using the "align" environment in a style in which it is not defined.}

\@namedef{align*}{\@verbatim\@salignverbatim
You are using the "align*" environment in a style in which it is not defined.}
\expandafter\let\csname endalign*\endcsname =\endtrivlist

\def\alignat{\@verbatim \frenchspacing\@vobeyspaces \@alignatverbatim
You are using the "alignat" environment in a style in which it is not defined.}

\@namedef{alignat*}{\@verbatim\@salignatverbatim
You are using the "alignat*" environment in a style in which it is not defined.}
\expandafter\let\csname endalignat*\endcsname =\endtrivlist

\def\xalignat{\@verbatim \frenchspacing\@vobeyspaces \@xalignatverbatim
You are using the "xalignat" environment in a style in which it is not defined.}

\@namedef{xalignat*}{\@verbatim\@sxalignatverbatim
You are using the "xalignat*" environment in a style in which it is not defined.}
\expandafter\let\csname endxalignat*\endcsname =\endtrivlist

\def\gather{\@verbatim \frenchspacing\@vobeyspaces \@gatherverbatim
You are using the "gather" environment in a style in which it is not defined.}

\@namedef{gather*}{\@verbatim\@sgatherverbatim
You are using the "gather*" environment in a style in which it is not defined.}
\expandafter\let\csname endgather*\endcsname =\endtrivlist

\def\multiline{\@verbatim \frenchspacing\@vobeyspaces \@multilineverbatim
You are using the "multiline" environment in a style in which it is not defined.}

\@namedef{multiline*}{\@verbatim\@smultilineverbatim
You are using the "multiline*" environment in a style in which it is not defined.}
\expandafter\let\csname endmultiline*\endcsname =\endtrivlist

\def\arrax{\@verbatim \frenchspacing\@vobeyspaces \@arraxverbatim
You are using a type of "array" construct that is only allowed in AmS-LaTeX.}

\def\tabulax{\@verbatim \frenchspacing\@vobeyspaces \@tabulaxverbatim
You are using a type of "tabular" construct that is only allowed in AmS-LaTeX.}

\@namedef{arrax*}{\@verbatim\@sarraxverbatim
You are using a type of "array*" construct that is only allowed in AmS-LaTeX.}
\expandafter\let\csname endarrax*\endcsname =\endtrivlist

\@namedef{tabulax*}{\@verbatim\@stabulaxverbatim
You are using a type of "tabular*" construct that is only allowed in AmS-LaTeX.}
\expandafter\let\csname endtabulax*\endcsname =\endtrivlist


 \def\endequation{%
     \ifmmode\ifinner 
      \iftag@
        \addtocounter{equation}{-1} 
        $\hfil
           \displaywidth\linewidth\@taggnum\egroup \endtrivlist
        \global\tag@false
        \global\@ignoretrue   
      \else
        $\hfil
           \displaywidth\linewidth\@eqnnum\egroup \endtrivlist
        \global\tag@false
        \global\@ignoretrue 
      \fi
     \else   
      \iftag@
        \addtocounter{equation}{-1} 
        \eqno \hbox{\@taggnum}
        \global\tag@false%
        $$\global\@ignoretrue
      \else
        \eqno \hbox{\@eqnnum}
        $$\global\@ignoretrue
      \fi
     \fi\fi
 } 

 \newif\iftag@ \tag@false
 
 \def\TCItag{\@ifnextchar*{\@TCItagstar}{\@TCItag}}
 \def\@TCItag#1{%
     \global\tag@true
     \global\def\@taggnum{(#1)}}
 \def\@TCItagstar*#1{%
     \global\tag@true
     \global\def\@taggnum{#1}}

  \@ifundefined{tag}{
     \def\tag{\@ifnextchar*{\@tagstar}{\@tag}}
     \def\@tag#1{%
         \global\tag@true
         \global\def\@taggnum{(#1)}}
     \def\@tagstar*#1{%
         \global\tag@true
         \global\def\@taggnum{#1}}
  }{}

\makeatother

\begin{document}

\title{Heteroclinic for a 6-dimensional reversible system occurring in
orthogonal domain walls in convection}
\date{}
\author{G\'{e}rard Iooss \\
{\footnotesize Laboratoire J.A.Dieudonn\'e, I.U.F., Universit\'e C\^ote
d'Azur, CNRS,}\\
{\footnotesize Parc Valrose, 06108 Nice cedex 2, France} \\
{\footnotesize iooss.gerard@orange.fr}}
\maketitle

\begin{abstract}
A six-dimensional reversible normal form system occurs in B\'{e}%
nard-Rayleigh convection between parallel planes, when we look for domain
walls intersecting orthogonally (see Buffoni et al \cite{BHI}). This leads
to \ study analytically the system%
\begin{eqnarray*}
\frac{d^{4}A}{dx^{4}} &=&A(1-A^{2}-gB^{2}) \\
\frac{d^{2}B}{dx^{2}} &=&\varepsilon ^{2}B(-1+gA^{2}+B^{2}),
\end{eqnarray*} for $x\in \mathbb{R}
,$ and looking for a heteroclinic connection between the two equilibria $%
M_{-}:(A,B)=(1,0)$ and $M_{+}:(A,B)=(0,1)$, each corresponding to a system
of convective rolls. \ In \cite{BHI} such a heteroclinic is shown to exist,
on which $0\leq B\leq 1,$ with no uniqueness result and no possibility to
use it for a persistence result under a reversible perturbation. The lack of
normal hyperbolicity in $(A,B)=(0,1/\sqrt{g})$ of equilibria obtained at the
limit $\varepsilon =0,$ is the main problem. The 3-dimensional unstable
manifold of $M_{-}$ is built for $0\leq B\leq \frac{1-c\varepsilon ^{4/5}}{%
\sqrt{g}},$ while, in solving a certain 4th-order differential equation
independent of $\varepsilon $ (occuring in \cite{Man-Pom}, \cite{Buff}), we
overcome the lack of hyperbolicity in building the stable manifold of $M_{+}$
for $\frac{1-c\varepsilon ^{4/5}}{\sqrt{g}}\leq B\leq 1.$ We use \cite{BHI}
\ for proving that the two manifold intersect. Then the two 3-dimensional
manifolds intersect transversally, leading to the existence, uniqueness and
analyticity in $(\varepsilon ,g)$ of the heteroclinic, for which we give
estimates of $A(x),B(x)$ and their derivatives. We finally study the
properties of the linearized operator along the heteroclinic, allowing to
prove (in \cite{Io24}) the persistence of the heteroclinic under
perturbation, corresponding to the existence of orthogonal domain walls in
the B\'{e}nard-Rayleigh convection problem.
\end{abstract}

Key words: Reversible dynamical systems, Invariant manifolds, Bifurcations,
Heteroclinic connection, Domain walls in convection

\bigskip

\section{Introduction and Results}

In this work we study the following 6th order reversible system%
\begin{eqnarray}
A^{(4)} &=&A(1-A^{2}-gB^{2})  \label{truncated syst} \\
B^{\prime \prime } &=&\varepsilon ^{2}B(-1+gA^{2}+B^{2}),  \notag
\end{eqnarray}%
where $A$ and $B$ are real functions of $x\in 
\mathbb{R}
.$ This system occurs in the search for domain walls intersecting
orthogonally, in a fluid dynamic problem such as the B\'{e}nard-Rayleigh
convection between parallel horizontal plates (see subsection \ref{sect
origin} and all details in \cite{BHI}). The heteroclinic we are looking for,
corresponds to the connection between rolls on one side and rolls oriented
orthogonally on the other side. The system (\ref{truncated syst}) has been
also introduced by Manneville and Pomeau in \cite{Man-Pom}, obtained after
formal physical considerations using symmetries.

We would like to find analytically a heteroclinic connection ($g>1,$ $%
\varepsilon $ small) such that 
\begin{eqnarray*}
A_{\ast }(x),B_{\ast }(x) &>&0, \\
(A_{\ast }(x),B_{\ast }(x)) &\rightarrow &\left\{ 
\begin{array}{c}
(1,0)\text{ as }x\rightarrow -\infty \\ 
(0,1)\text{ as }x\rightarrow +\infty%
\end{array}%
\right. .
\end{eqnarray*}%
By a variational argument Boris Buffoni et al \cite{BHI} prove the existence
of such an heteroclinic orbit, for any $g>1,$ and $\varepsilon $ small
enough. This type of elegant proof does not unfortunately allow to prove the
persistence of such heteroclinic curve under reversible perturbations of the
vector field. This is our motivation for producing analytic arguments,
proving such an existence, uniqueness and smoothness in parameters $%
(\varepsilon ,g)$ of this orbit (in particular analyticity in $g)$, however
for limited values $10/9<g\leq 2,$ fortunately including physical
interesting ones. Then we study the linearized operator along the
heteroclinic curve, allowing to attack the problem of existence of
orthogonal domain walls in convection (see \cite{Io24} and Remark \ref{rmk
persist}).

\subsection{Origin of system (\protect\ref{truncated syst})\label{sect
origin}}

The B\'{e}nard-Rayleigh convection problem is a classical problem in fluid
mechanics. It concerns the flow of a three-dimensional viscous fluid layer
situated between two horizontal parallel plates and heated from below. Upon
increasing the difference of temperature between the two plates, the simple
conduction state looses stability at a critical value of the temperature
difference corresponding to a critical value $\mathcal{R}_{c}$ of the
Rayleigh number. Beyond the instability threshold, a convective regime
develops in which patterns are formed, such as convective rolls, hexagons,
or squares. Observed patterns are often accompanied by defects.

We start with the Navier-Stokes-Boussinesq (N-S-B) \emph{steady} system of
PDE's, applying spatial dynamics with $x$ as "time" (as introduced by
K.Kirchg\"{a}ssner in \cite{Kirch82}, adapted for N-S equations in \cite%
{IoMiDe}, and more generally in \cite{HIbook}) and considering solutions $%
2\pi /k$ periodic in $y$ (coordinate parallel to the wall). We show in \cite%
{BHI} that near criticality a 12-dimensional center manifold reduction to a
reversible system applies for $(\mathcal{R},k)$ close to $(\mathcal{R}%
_{c},k_{c}),$ $\mathcal{R}$ being the Rayleigh number, and $k_{c}$ the
critical wave number. This high dimension of the center manifold may be
explained as follows. Due to the equivariance of the system under horizontal
shifts, the eigenvectors of the linearized problem are of the form $\exp
i(\pm k_{1}x\pm k_{2}y),$ the factor being only function of $%
k^{2}=k_{1}^{2}+k_{2}^{2}$ (invariance under rotations). It results that,
for eigenvectors independent of $x$ corresponding to a $0$ eigenvalue in the
spatial dynamics formulation, the eigenvalue is double in general (make $\pm
k_{1}\rightarrow 0).$ Now, at criticality, $k=k_{c}$ corresponds to two
different values of $k_{2}$ merging towards $k_{c},$ which doubles the
dimension, making a quadruple $0$ eigenvalue with complex and
complex-conjugate eigenvectors. Hence we already have a dimension 8
invariant subspace for the $0$ eigenvalue, with two $4\times 4$ Jordan
blocks. This corresponds to convective rolls of amplitude $A$ and $\overline{%
A}$ at $x=-\infty .$ Now for eigenvectors independent of $y$ corresponding
to eigenvalues $\pm ik$ in the spatial dynamics formulation it is shown in 
\cite{HI Arma} that they are simple in general, and give double eigenvalues $%
\pm ik_{c}$ for $k=k_{c}$ with amplitudes $B$ and $\overline{B}$ respectively%
$.$ Hence this adds 4 dimensions to the central space, so finally obtaining
a 12-dimensional central space. Now we restrict the study to solutions
invariant under reflection $y\rightarrow -y$ (the change $y$ into $-y$
changing $A$ in $\overline{A}$ and not changing $B)$, which constitutes an 
\emph{invariant subspace for the full system}. This restricts the study to 
\emph{real} amplitudes $A$ and the full system reduces to a 8-dimensional
sub-center manifold, such that $A\in 
\mathbb{R}
$ and $B\in 
\mathbb{C}
$ are the amplitudes of the rolls respectively at $x=-\infty ,$ and $%
x=+\infty .$ Moreover, for the full system, we keep

i) the reversibility symmetry: $(x,A,B)\rightarrow (-x,A,\overline{B}),$

ii) the equivariance under shifts by half of a period in $y$ direction,
leading to the symmetry: $(A,B)\rightarrow (-A,B).$

Now, in \cite{BHI} we use a normal form reduction up to cubic order, and
rewrite the system as one real 4th order differential equation for $A$, and
a second order complex differential equation for $B.$ In addition to the
above symmetries, the normal form commutes in particular with the symmetry: $%
(A,B)\rightarrow (A,Be^{i\phi }),$ for any $\phi \in 
\mathbb{R}
.$

Handling the full N-S-B equations, in \cite{BHI} the authors show that the
study leads to a small perturbation of the reduced system of amplitude
equations (\ref{truncated syst}). More precisely, after a suitable scaling
(see \cite{BHI} and more details in \cite{Io24}), and denoting by $%
(\varepsilon ^{2}A_{0},\varepsilon ^{2}B_{0})$ rescaled amplitudes $%
(A,Be^{-ik_{c}x}),$ and after a rescaling of the coordinate $x,$ we obtain
the system%
\begin{eqnarray}
A_{0}^{(4)} &=&k_{-}A_{0}^{\prime \prime }+A_{0}(1-\frac{k_{-}^{2}}{4}%
-A_{0}^{2}-g|B_{0}|^{2})+\widehat{f},  \notag \\
B_{0}^{\prime \prime } &=&\varepsilon ^{2}B_{0}(-1+gA_{0}^{2}+|B_{0}|^{2})+%
\widehat{g},  \label{new reduced syst}
\end{eqnarray}%
where $\varepsilon ^{4}$ is proportional to $\mathcal{R}-\mathcal{R}_{c},$
the coefficient $g>1$ is function of the Prandtl number and is the same as
introduced and computed in \cite{HI Arma}, $k_{-}$ comes from the freedom
left to the wave number of the rolls at $-\infty ,$ defined as%
\begin{equation*}
k=k_{c}(1+\varepsilon ^{2}k_{-}),
\end{equation*}%
and $\widehat{f}$ and $\widehat{g}$ are perturbation terms, smooth functions
of their arguments, coming

i) from the rest of the cubic normal form, at least of order $\varepsilon
^{2}$ for $\widehat{f},$ and at least of order $\varepsilon ^{3}$ for $%
\widehat{g};$

ii) from higher order terms not in normal form, and not autonomous (because
of the introduction of $Be^{-ik_{c}x}$ rescaled as $\varepsilon ^{2}B_{0}$
in (\ref{new reduced syst})), and of order $\varepsilon ^{4}$ for $\widehat{f%
},$ and of order $\varepsilon ^{6}$ for $\widehat{g}.$ Without $k_{-},$ $%
\widehat{f},$ and $\widehat{g},$ this is the system (\ref{truncated syst}),
with $B_{0}\in 
\mathbb{C}
$ replacing $B,$ and $|B_{0}|^{2}$ replacing $B^{2}$. The truncation leading
to (\ref{truncated syst}) allows to take $B$ real, since the phase of $B_{0}$
does not play any role in the dynamics for (\ref{truncated syst}). The two
different wave numbers of the rolls, close to the critical value $k_{c}$ are
left free for the full problem, however they do not appear in the present
proof of the heteroclinic, even though they are important for the final
proof of existence of the orthogonal domain walls (see Remark \ref{rmk
persist} in section \ref{studylinop}). It should be noticed that the system (%
\ref{new reduced syst}), without $\widehat{f}$ and $\widehat{g},$ was
obtained a long time ago by Pomeau-Manneville in \cite{Man-Pom}, however
they did not deal with the full N-S-B system, and only considered cases with
identical wave numbers at infinities, while it is shown in \cite{Io24} that
some cubic terms, not existent in \cite{Man-Pom}, as $\varepsilon
^{2}(A_{0}^{2}A_{0}^{\prime \prime }-A_{0}A_{0}^{\prime 2})$ in $\widehat{f}$
and $i\varepsilon ^{3}B_{0}A_{0}A_{0}^{\prime }$ in $\widehat{g}$ \ are
crucial for the determination of the solutions of the full problem, with
different wave numbers at infinities (see Remark \ref{rmk persist}).

\subsection{Sketch of the method and results}

From now on let us consider the system (\ref{truncated syst}). The
equilibrium $(A,B)=(0,1)$ of the system (\ref{truncated syst}) gives an
approximation of convection rolls parallel to the wall (periodic in the $x$
direction, with fixed phase) bifurcating for Rayleigh numbers $\mathcal{R}>%
\mathcal{R}_{c}$ close to $\mathcal{R}_{c}$, whereas the equilibrium $%
(A,B)=(1,0)$ of the system (\ref{truncated syst}) gives the same convection
rolls (periodic in the $y$ direction) rotated by an angle $\pi /2$ with the
phase fixed by the imposed reflection symmetry. A heteroclinic orbit
connecting these two equilibria provides then an approximation of orthogonal
domain walls (see Figure \ref{fig-wall}).

We set $\delta =(g-1)^{1/2}$. The idea here might be to use the arc of
equilibria $A^{2}+B^{2}=1,$ which exists for $\delta =0,$ connecting end
points $M_{-}=(1,0)$ and $M_{+}=(0,1),$ and to prove that for suitable
values of $\delta $ ($>0$ but close to 0), the 3-dimensional unstable
manifold of $M_{-}$ intersects transversally the 3-dimensional stable
manifold of $M_{+},$ both staying on a 5 dimensional invariant manifold $%
\mathcal{W}_{\varepsilon ,\delta }.$ However, for $\delta =0$ the situation
in $M_{+}$ is very degenerated, with a quadruple $0$ eigenvalue for the
linearized operator, while it is a double eigenvalue for $M_{-}.$ Then for $%
\delta $ close to 0, a 5-dimensional center-stable invariant manifold
starting from $M_{+}$ needs to intersect a four-dimensional center-unstable
manifold starting from $M_{-}$. We are not able to prove this. Moreover, for 
$\delta \neq 0$ but close to 0, we cannot prove that the 3-dimensional
unstable manifold of $M_{-}$ exists from $B=0$ until $B$ reaches a value
close enough to 1. In fact, we may fortunately notice that the physically
interesting values of $\delta $ are not close to 0 (see Remark \ref{Physval}%
). So that we prefer to play with $\varepsilon .$

We may observe that, after changing the coordinate $x$ in $\overline{x}%
=\varepsilon x,$ we obtain the new system%
\begin{eqnarray}
\varepsilon ^{4}\frac{d^{4}A}{d\overline{x}^{4}} &=&A(1-A^{2}-gB^{2})
\label{singular syst} \\
\frac{d^{2}B}{d\overline{x}^{2}} &=&B(B^{2}+gA^{2}-1),  \notag
\end{eqnarray}%
where the limit $\varepsilon \rightarrow 0$ is singular, and gives indeed a
non smooth heteroclinic solution such that

(i) for $x$ running from $-\infty $ to $0$, then $(A,B)$ varies from $(1,0)$
to $(0,\frac{1}{\sqrt{g}})$ on the ellipse $A^{2}+gB^{2}=1,$ while

(ii) for $x$ running from $0$ to $+\infty $, then $(A,B)$ varies from $(0,%
\frac{1}{\sqrt{g}})$ to $(0,1),$ satisfying, in the original coordinate $x,$
the differential equation%
\begin{equation*}
\frac{dB}{dx}=\frac{\varepsilon }{\sqrt{2}}(1-B^{2}).
\end{equation*}%
\begin{figure}[th]
\begin{center}
\includegraphics[width=6cm]{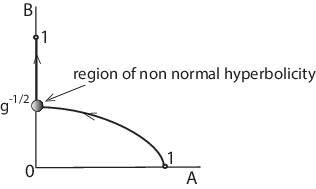}
\end{center}
\caption{Critical manifold}
\label{fig-wall}
\end{figure}
The two manifolds $A=\widetilde{A_{\ast }}=(1-gB^{2})^{1/2},$ and $A=0$ are
named "slow manifolds" in litterature (see \cite{Feni},\cite{Krupa}). We
might then think to use Fenichel's theorems \cite{Feni} on the system (\ref%
{truncated syst}) for $\varepsilon $ close to 0. For the part (i) of the
curve, where $x\in (-\infty ,0]$, the set of equilibria, here $A^{2}+gB^{2}=1
$, is not normally hyperbolic at the end point $(A,B)=(0,1/\sqrt{g})$ (see
in section \ref{sect unstable manif new} eigenvalues of the linear operator $%
\boldsymbol{L}_{\delta }$ corresponding to $\widetilde{A_{\ast }}=0$). For
the second part (ii) of the curve, where $x\in \lbrack 0,+\infty ),$ the set
of equilibria $(A,B)=(0,B)$ is also not normally hyperbolic for $B=1/\sqrt{g}
$ (the 4 remaining eigenvalues are such that $\lambda ^{4}=1-gB^{2}$ which
cancel for $B=1/\sqrt{g}).$ The normal hyperbolicity is essential in
Fenichel's theorems, so we cannot use them directly. However we may use
normal hyperbolicity up to a small neighborhood of $(A,B)=(0,1/\sqrt{g}),$
as this is done in sections \ref{sect unstable manif new} for finding the
unstable manifold of $(A,B)=(1,0)$ in a neighborhoof of the slow manifold $A=%
\widetilde{A_{\ast }},$ and in section \ref{sec 4.4} for finding the stable
manifold of $(A,B)=(0,1)$ in a neighborhood of the slow manifold $A=0.$

The neighborhood of $(A,B)=(0,1/\sqrt{g})$ not reached by the method above
has a size of order $\mathcal{O}(\varepsilon ^{4/5}).$ We could think to use
a geometric analysis, as Krupa et al did in \cite{Krupa}, where a blow-up
method is used for getting a system independent of $\varepsilon $. Indeed
the scaling%
\begin{eqnarray*}
A &=&K^{2}\varepsilon ^{2/5}\overline{A}, \\
B &=&\frac{1}{\sqrt{g}}(1+\frac{K^{4}}{2}\varepsilon ^{4/5}\overline{B}), \\
x &=&\frac{z}{K\varepsilon ^{1/5}},\text{ }
\end{eqnarray*}%
with $\overline{B}=z,$ since at main order%
\begin{equation*}
\overline{B}^{\prime \prime }=0,\text{ }\overline{B}(0)=0,
\end{equation*}%
leads, at main order, to%
\begin{equation}
\frac{d^{4}\overline{A}}{dz^{4}}=-\overline{A}(\overline{A}^{2}+z),\text{ }%
z\in \lbrack -a_{-},+a_{+}],  \label{intermediate}
\end{equation}%
which is independent of $\varepsilon .$ However, the work of \cite{Krupa} is
made in 2 dimensions, while we have here the 6-dimensional system (\ref%
{truncated syst}). It results that the nice pictures of \cite{Krupa} would
be very hard to transpose here. In addition, we need to satisfy boundary
values (also independent of $\varepsilon )$ coming on the left side from the
connection with the unstable manifold, and from the right side from the
connection with the stable manifold.

Moreover we need to provide \emph{precise estimates} (in function of $%
\varepsilon )$ of the interval of values for $B,$ between the value reached
by the unstable manifold, via the standard method, and the value reached
(backwards) by $B$ for the stable manifold. The critical value $1/\sqrt{g}$
is included in this finite interval, and this finiteness is essential for
extending the existence of the stable manifold on the full interval, until
it meets the unstable manifold.

In section \ref{sect linear} we see that there are 3 unstable
eigendirections starting from $M_{-}=(1,0),$ and 3 stable eigendirections in 
$M_{+}=(0,1).$ The difficulty in the proof of Theorem \ref{theorem} is to
obtain a precise estimate for the existence of the 3-dimensional unstable manifold
of $M_{-},$ where the coordinate $B$ varies from $0$ to a neighborhood of $1/%
\sqrt{g},$ and to obtain a precise estimate for the existence of the 3-dimensional
stable manifold of $M_{+}$ until $B$ varies from $1$ (backwards) to a
neighborhood of $1/\sqrt{g}=1/\sqrt{1+\delta ^{2}},$ while $A$ stays close
to $0.$ For approaching the closest possible to $B=1/\sqrt{g},$ we use the
first integral of (\ref{truncated syst}), which implies that both invariant
manifolds are included in a 5-dimensional invariant manifold. We are able to
obtain the unstable manifold of $M_{-}$ for $0\leq B\leq \frac{%
1-c\varepsilon ^{4/5}}{\sqrt{g}},$ while we first obtain the stable manifold
of $M_{+}$ for $\frac{1+c^{\prime }\varepsilon ^{4/5}}{\sqrt{g}}\leq B\leq 1.
$ For extending the existence of the stable manifold in the gap of size of
order $\varepsilon ^{4/5},$ we need to solve the 4th order differential
equation (\ref{intermediate}), independent of $\varepsilon ,$ also found in 
\cite{Man-Pom} and \cite{Buff}, after rescaling, where the boundary
conditions, also independent of $\varepsilon ,$ come from the 2 times 2
parameters introduced by each invariant manifolds arriving in $\pm a_{\pm }.$

We use a precise estimate on $a_{+}$ for being able to extend the domain of
existence of the stable manifold, for $B$ in the interval $\frac{%
1-c\varepsilon ^{4/5}}{\sqrt{g}}\leq B\leq 1.$ Using results of \cite{BHI}
the two manifolds intersect. We prove the following

\begin{theorem}
\label{theorem}Let us choose $1/3\leq \delta \leq 1$, then for $\varepsilon $
small enough, the 3-dim unstable manifold of $M_{-}$ intersects
transversally the 3-dim stable manifold of $M_{+},$ except maybe for a
finite set of values of $\delta .$ The connecting curve which is obtained is
unique (see Remark \ref{rmk sym}). Moreover its dependency in parameters $%
(\varepsilon ,\delta )$ is analytic. In addition we have $B(x)>0$ and $%
B^{\prime }(x)>0$ on $(-\infty ,+\infty ).$ For $x\rightarrow -\infty $ we
have $(A-1,A^{\prime },A^{\prime \prime },A^{\prime \prime \prime
},B,B^{\prime })\rightarrow 0$ at least as $e^{\varepsilon \delta x},$ while
for $x\rightarrow +\infty ,$ $(A,A^{\prime },A^{\prime \prime },A^{\prime
\prime \prime })\rightarrow 0$ at least as $e^{-\sqrt{\frac{\delta }{2}}x},$
and $(B-1,B^{\prime })\rightarrow 0$ at least as\ $e^{-\sqrt{2}\varepsilon
x}.$
\end{theorem}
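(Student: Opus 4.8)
The plan is to realize the heteroclinic as a transverse intersection of two $3$-dimensional invariant manifolds lying inside a common $5$-dimensional invariant manifold, keeping the construction quantitative enough that uniqueness, analyticity and the decay rates fall out. First I would exhibit the first integral of (\ref{truncated syst}): multiplying the $A$-equation by $A'$, the $B$-equation by $B'/\varepsilon^{2}$, integrating by parts and combining with the cross term $\tfrac g2A^{2}B^{2}$, one checks that
\begin{equation*}
\mathcal{E}:=A'''A'-\tfrac12(A'')^{2}-\tfrac12A^{2}+\tfrac14A^{4}-\tfrac1{2\varepsilon^{2}}(B')^{2}-\tfrac12B^{2}+\tfrac14B^{4}+\tfrac g2A^{2}B^{2}
\end{equation*}
is constant along solutions, with $\mathcal{E}(M_{-})=\mathcal{E}(M_{+})=-\tfrac14$. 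Hence $W^{u}(M_{-})$ and $W^{s}(M_{+})$ both lie on the $5$-dimensional invariant manifold $\mathcal{W}_{\varepsilon,\delta}:=\{\mathcal{E}=-\tfrac14\}$; by the linearization of Section \ref{sect linear} each of these manifolds is $3$-dimensional, so $3+3-5=1$ and a transverse intersection inside $\mathcal{W}_{\varepsilon,\delta}$ is exactly a single orbit.

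Next I would build the two manifolds away from the degenerate point $(A,B)=(0,1/\sqrt g)$. Along the slow manifold $A=\widetilde{A_{\ast}}=(1-gB^{2})^{1/2}$ the transverse linearized eigenvalues satisfy $\lambda^{4}=1-gB^{2}$, which stay real and bounded away from $0$ as long as $B$ is a fixed distance below $1/\sqrt g$; a Fenichel-type argument — equivalently the contraction in rescaled variables of Section \ref{sect unstable manif new} — then produces $W^{u}(M_{-})$ as a graph over this slow manifold, with quantitative bounds on $A,B$ and their derivatives, on the whole range $0\le B\le\frac{1-c\varepsilon^{4/5}}{\sqrt g}$, the loss at scale $\varepsilon^{4/5}$ being exactly where normal hyperbolicity degenerates. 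The same argument near the other slow manifold $A=0$ (Section \ref{sec 4.4}) gives $W^{s}(M_{+})$ for $\frac{1+c'\varepsilon^{4/5}}{\sqrt g}\le B\le1$ with $A=\mathcal{O}(\varepsilon^{2/5})$ there; the constants $c,c'>0$ are fixed.

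The \emph{main obstacle} is closing the remaining window $\frac{1-c\varepsilon^{4/5}}{\sqrt g}\le B\le\frac{1+c'\varepsilon^{4/5}}{\sqrt g}$, of width $\mathcal{O}(\varepsilon^{4/5})$, around the degenerate point. Here I would pass to the blow-up coordinates $A=K^{2}\varepsilon^{2/5}\overline{A}$, $B=\tfrac1{\sqrt g}(1+\tfrac{K^{4}}2\varepsilon^{4/5}\overline{B})$, $x=z/(K\varepsilon^{1/5})$, in which, at leading order, $\overline{B}=z$ and $\overline{A}$ solves the $\varepsilon$-independent fourth-order equation (\ref{intermediate}) on an interval $[-a_{-},a_{+}]$, the boundary data at the two ends being prescribed — again $\varepsilon$-independently at leading order — by the four parameters (two at each end) carried by the unstable and stable manifolds entering the chart. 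One must then solve this two-point boundary-value problem for (\ref{intermediate}), the same equation occurring after rescaling in \cite{Man-Pom} and \cite{Buff}, and, crucially, establish a sharp lower bound on $a_{+}$ (how far the incoming stable manifold can be continued as $z$ decreases); it is this bound that lets the matching window cover $B=1/\sqrt g$ and thereby extends $W^{s}(M_{+})$ all the way down to $B=\frac{1-c\varepsilon^{4/5}}{\sqrt g}$, where the domain of $W^{u}(M_{-})$ begins. This is the delicate step precisely because $\varepsilon$ has genuinely been scaled out: one can no longer perturb off an explicit $\varepsilon$-family and must use the qualitative analysis of (\ref{intermediate}).

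Finally I would glue the three pieces. Matching is an implicit-function-theorem problem for the conditions imposed at the two gluing sections inside $\mathcal{W}_{\varepsilon,\delta}$; each piece depends analytically on its data and on $(\varepsilon,\delta)$ (the fractional powers enter only through the blow-up scaling and are absorbed by taking $\varepsilon^{1/5}$, or $\varepsilon$ once the expansions are seen to proceed in integer powers, as the parameter), so the matching system is uniquely solvable and analytic in $(\varepsilon,\delta)$ wherever its Jacobian is invertible. That a solution exists at all is guaranteed by the variational heteroclinic of \cite{BHI}, valid for every $g>1$: it lies on $\mathcal{W}_{\varepsilon,\delta}$ and passes through both manifolds, so they do intersect. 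Invertibility of the Jacobian is exactly transversality of $W^{u}(M_{-})$ and $W^{s}(M_{+})$ in $\mathcal{W}_{\varepsilon,\delta}$; in the limit $\varepsilon\to0$ it reduces to a non-degeneracy determinant for the boundary-value problem (\ref{intermediate}), which one shows to be a non-trivial analytic function of $\delta$ (e.g. by evaluation in a limiting regime), hence vanishing on at most a finite subset of $[1/3,1]$, and this persists for $\varepsilon$ small. Off that finite bad set the intersection is therefore transverse, so the connecting orbit is unique (Remark \ref{rmk sym}) and analytic in $(\varepsilon,\delta)$; the positivity $B>0$ and monotonicity $B'>0$ follow from the reduced dynamics on the slow manifolds — on $A=0$ the first integral gives $B'=\tfrac\varepsilon{\sqrt2}(1-B^{2})>0$, while on the ellipse $A^{2}+gB^{2}=1$ the reduced flow is monotone — together with the closeness estimates, which exclude an interior extremum of $B$. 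The decay rates come from the linearizations: at $M_{-}$, $A-1$ obeys $a^{(4)}=-2a$ and $B$ obeys $B''=\varepsilon^{2}\delta^{2}B$, the slowest unstable rate being $\varepsilon\delta$; at $M_{+}$, $A$ obeys $A^{(4)}=-\delta^{2}A$ with stable rate $\sqrt{\delta/2}$ and $B-1$ obeys $b''=2\varepsilon^{2}b$ with stable rate $\sqrt2\,\varepsilon$, which are exactly the stated estimates.
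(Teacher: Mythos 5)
Your proposal follows essentially the same route as the paper: the first integral cutting out the 5-dimensional manifold $\mathcal{W}_{\varepsilon ,\delta }$, the quantitative construction of $W^{u}(M_{-})$ and $W^{s}(M_{+})$ as graphs up to an $\mathcal{O}(\varepsilon ^{4/5})$ window around $B=1/\sqrt{g}$, the blow-up to the $\varepsilon $-independent fourth-order problem (\ref{intermediate}) on $[-a_{-},a_{+}]$ with the precise control of $a_{+}$ as the delicate step, the use of \cite{BHI} to guarantee that the manifolds do meet, transversality via analyticity in $\delta $ off a finite bad set, and the decay rates read off the linearizations at $M_{\pm }$. One minor slip: on the slow manifold $A=\widetilde{A_{\ast }}$ the transverse eigenvalues satisfy $\lambda ^{4}=-2(1-gB^{2})$ (two complex pairs with real parts $\pm 2^{-1/4}\widetilde{A_{\ast }}^{1/2}$), not $\lambda ^{4}=1-gB^{2}$ with real roots — the formula you quote belongs to the $A=0$ branch — but this does not affect your argument, since normal hyperbolicity still fails only as $B\rightarrow 1/\sqrt{g}$.
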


Moreover we also have important estimates as follows, extensively used in 
\cite{Io24}.

\begin{corollary}
\label{Corollary unstab} For $x\in (-\infty ,0]$ and choosing $\delta ^{\ast
}<\delta $, there exists $c>0$ independent of $\varepsilon $ small enough,
such that the heteroclinic curve satisfies 
\begin{eqnarray*}
|A(x)-\sqrt{1-(1+\delta ^{2})B(x)}| &\leq &c\varepsilon
^{2/5}B(x)e^{\varepsilon \delta ^{\ast }x} \\
|A^{(m)}(x)| &\leq &c\varepsilon ^{3/5}B(x)e^{\varepsilon \delta ^{\ast }x},%
\text{ }m=1,2,3.
\end{eqnarray*}
\end{corollary}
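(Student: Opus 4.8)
The plan is to read both estimates off the graph representation of the unstable manifold of $M_{-}$ constructed in Section \ref{sect unstable manif new}. On $x\in(-\infty,0]$ the heteroclinic of Theorem \ref{theorem} lies on that $3$-dimensional manifold, which in a neighbourhood of the slow manifold is a smooth (in fact analytic) graph $A=\widetilde{A_{\ast}}(B)+a$, with $\widetilde{A_{\ast}}(B)=(1-gB^{2})^{1/2}$ and $g=1+\delta^{2}$, so that $\widetilde{A_{\ast}}(B(x))$ is precisely the quantity in the corollary. I would first pass to the slow variable $\overline{x}=\varepsilon x$, so that the heteroclinic solves (\ref{singular syst}), and insert $A=\widetilde{A_{\ast}}+a$. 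Using $\widetilde{A_{\ast}}^{2}=1-gB^{2}$, the right-hand side of the first equation factors as $A(\widetilde{A_{\ast}}^{2}-A^{2})=-2\widetilde{A_{\ast}}^{2}a-3\widetilde{A_{\ast}}a^{2}-a^{3}$, whence
\[
2\widetilde{A_{\ast}}^{2}\,a=-\varepsilon^{4}\,\frac{d^{4}\widetilde{A_{\ast}}}{d\overline{x}^{4}}-\varepsilon^{4}\,\frac{d^{4}a}{d\overline{x}^{4}}-3\widetilde{A_{\ast}}a^{2}-a^{3}.
\]
To leading order $a\approx-\tfrac12\varepsilon^{4}\widetilde{A_{\ast}}^{-2}\,d^{4}\widetilde{A_{\ast}}/d\overline{x}^{4}$, and everything reduces to estimating the right-hand side.

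Next I would control the reduced slow dynamics of $B$ together with the $\overline{x}$-derivatives of $\widetilde{A_{\ast}}(B(\overline{x}))$. Substituting $A^{2}\simeq\widetilde{A_{\ast}}^{2}=1-gB^{2}$ into the second equation of (\ref{singular syst}) gives, at leading order, the $\varepsilon$-independent equation $d^{2}B/d\overline{x}^{2}=\delta^{2}B(1-(g+1)B^{2})$; with $B,B'>0$ and $B\to0$ at $-\infty$ (Theorem \ref{theorem}), its first integral is $dB/d\overline{x}=\delta B\sqrt{1-\tfrac{g+1}{2}B^{2}}\,(1+o(1))$, hence $d^{m}B/d\overline{x}^{m}=\mathcal{O}(B)$ for every $m$. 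Since $d\widetilde{A_{\ast}}/d\overline{x}=-gB(dB/d\overline{x})\widetilde{A_{\ast}}^{-1}=\mathcal{O}(B^{2}\widetilde{A_{\ast}}^{-1})$, iterating shows $d^{m}\widetilde{A_{\ast}}/d\overline{x}^{m}=\mathcal{O}(B^{2}\widetilde{A_{\ast}}^{1-2m})$, each extra derivative costing at most one further factor $\widetilde{A_{\ast}}^{-2}$. On the range where the unstable manifold was built, $0\le B\le(1-c\varepsilon^{4/5})/\sqrt{g}$, we have the matching lower bound $\widetilde{A_{\ast}}^{2}=1-gB^{2}\ge c'\varepsilon^{4/5}$, i.e. $\widetilde{A_{\ast}}\ge c''\varepsilon^{2/5}$. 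Putting $m=4$ into the displayed relation gives, from the main term, $|a|\le C\varepsilon^{4}B^{2}\widetilde{A_{\ast}}^{-9}\le C\varepsilon^{4}\varepsilon^{-18/5}B^{2}=C\varepsilon^{2/5}B^{2}$, while the corrections $\varepsilon^{4}d^{4}a/d\overline{x}^{4}$, $\widetilde{A_{\ast}}a^{2}$, $a^{3}$ are absorbed by a short bootstrap using $\varepsilon^{4}\widetilde{A_{\ast}}^{-2}\le\varepsilon^{16/5}\ll1$ and $|a|\lesssim\widetilde{A_{\ast}}$ near the turning point $B=1/\sqrt{g}$. This produces $|A(x)-\widetilde{A_{\ast}}(B(x))|\le c\,\varepsilon^{2/5}B(x)^{2}$.

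It remains to replace $B^{2}$ by $B\,e^{\varepsilon\delta^{\ast}x}$ and to extract the derivative bounds. To leading order $B$, seen as a function of $\overline{x}=\varepsilon x\in(-\infty,0]$, is the $\varepsilon$-independent profile above with $B(\overline{x})\sim c_{0}e^{\delta\overline{x}}$ at $-\infty$; hence $B(\overline{x})e^{-\delta^{\ast}\overline{x}}$ tends to $0$ at $-\infty$ (because $\delta^{\ast}<\delta$) and is finite at $\overline{x}=0$, so it is bounded uniformly in $\varepsilon$ and $B(x)\le Ce^{\varepsilon\delta^{\ast}x}$ on $(-\infty,0]$. Consequently $c\varepsilon^{2/5}B(x)^{2}\le cC\varepsilon^{2/5}B(x)e^{\varepsilon\delta^{\ast}x}$, the first estimate of the corollary. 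For the derivatives, $A^{(m)}=\varepsilon^{m}(d^{m}\widetilde{A_{\ast}}/d\overline{x}^{m}+d^{m}a/d\overline{x}^{m})$; using $d^{m}\widetilde{A_{\ast}}/d\overline{x}^{m}=\mathcal{O}(B^{2}\widetilde{A_{\ast}}^{1-2m})=\mathcal{O}(B^{2}\varepsilon^{2(1-2m)/5})$ and the controlled $\overline{x}$-derivatives of $a$ from the smoothness of the graph (of the same or smaller order), one gets $|A^{(m)}(x)|\le c\,\varepsilon^{m+2(1-2m)/5}B(x)^{2}=c\,\varepsilon^{(m+2)/5}B(x)^{2}\le c\,\varepsilon^{3/5}B(x)e^{\varepsilon\delta^{\ast}x}$ for $m=1,2,3$, since $(m+2)/5\ge3/5$ and $\varepsilon<1$.

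The main obstacle is the one flagged above: the slow-manifold expansion is singular at $B=1/\sqrt{g}$, where $\widetilde{A_{\ast}}\to0$ sits in every denominator, and the estimate survives only because the inverse powers of $\widetilde{A_{\ast}}$ produced by differentiation are precisely balanced by the sharp lower bound $\widetilde{A_{\ast}}\gtrsim\varepsilon^{2/5}$ — itself a consequence of the $\mathcal{O}(\varepsilon^{4/5})$ size of the gap not reached by the construction in Section \ref{sect unstable manif new} — so that the residual power of $\varepsilon$ stays nonnegative. Two secondary points need care: the reduced $B$-equation is only the $\varepsilon\to0$ limit, so the estimate for $a$ and the control of the $B$-profile must be carried out together as a bootstrap (the feedback of $a$ into the $B$-equation being of order $\varepsilon^{4}d^{4}a/d\overline{x}^{4}$), and one must identify which term of $d^{4}\widetilde{A_{\ast}}/d\overline{x}^{4}$ dominates near the turning point; both are routine once the scalings above are in place.
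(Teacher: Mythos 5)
Your plan replaces the paper's argument (which simply reads the Corollary off the estimates of Lemma \ref{unstablemanifM-} for $x\in(-\infty,-x_{\ast}]$, and off the rescaled $\varepsilon$-independent analysis of (\ref{limit interm equ}) together with the remark after Lemma \ref{intersection of manifolds} for $-x_{\ast}\leq x\leq 0$) by a direct slow-manifold expansion $A=\widetilde{A_{\ast}}+a$, and as written it has two genuine gaps. First, your key lower bound $\widetilde{A_{\ast}}\gtrsim\varepsilon^{2/5}$ holds only on the range $0\leq B\leq(1-\alpha_{-}^{2}\delta^{2})^{1/2}/\sqrt{g}$ where the unstable manifold is built, i.e.\ only for $x\leq-x_{\ast}$. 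The Corollary is stated on $(-\infty,0]$, and on $[-x_{\ast},0]$ one has $B(0)=1/\sqrt{g}$, so $\widetilde{A_{\ast}}(B(x))\downarrow 0$ there; every term of your expansion carries inverse powers of $\widetilde{A_{\ast}}$ and blows up on that subinterval, so your argument simply does not cover it. The paper covers it by a different mechanism: after the blow-up scaling the dynamics on $[-x_{\ast},x_{\ast}^{+}]$ is governed by the $\varepsilon$-independent equation (\ref{limit interm equ}), and Lemma \ref{intersection of manifolds} gives $A_{j}=\mathcal{O}(\varepsilon^{(2+j)/5})$ there, which is what makes the stated bounds (with $B$ and $e^{\varepsilon\delta^{\ast}x}$ bounded below on that interval) hold up to $x=0$.

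Second, even on $(-\infty,-x_{\ast}]$ the step ``to leading order $a\approx-\tfrac12\varepsilon^{4}\widetilde{A_{\ast}}^{-2}d^{4}\widetilde{A_{\ast}}/d\overline{x}^{4}$, corrections absorbed by a short bootstrap'' is not a proof. The identity $2\widetilde{A_{\ast}}^{2}a=-\varepsilon^{4}d^{4}\widetilde{A_{\ast}}/d\overline{x}^{4}-\varepsilon^{4}d^{4}a/d\overline{x}^{4}-3\widetilde{A_{\ast}}a^{2}-a^{3}$ is a fourth-order ODE whose linearization has homogeneous modes with rates $\pm\lambda_{r}\pm i\lambda_{i}$, $\lambda_{r},\lambda_{i}\sim\widetilde{A_{\ast}}^{1/2}$ (equation (\ref{equ lambda})); these oscillatory components are genuinely present along the heteroclinic, with amplitude $|X(-x_{\ast})|=\mathcal{O}(\varepsilon^{3/5})$ fixed only by the matching at $-x_{\ast}$ (Lemma \ref{intersection of manifolds}), and they, not the particular slow correction, saturate the bounds $c\varepsilon^{2/5}B$ and $c\varepsilon^{3/5}B$ of the Corollary. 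A sup-norm bootstrap on $a$ cannot bound $\varepsilon^{4}d^{4}a/d\overline{x}^{4}$ (four derivatives are not controlled by $\|a\|_{\infty}$) nor exclude or size these homogeneous modes; likewise the exponential weight $e^{\varepsilon\delta^{\ast}x}$ and the bound $d^{m}B/d\overline{x}^{m}=\mathcal{O}(B)$ are properties of the actual solution, not of the formal reduced equation. Controlling all of this uniformly as $\widetilde{A_{\ast}}$ decreases to $\mathcal{O}(\varepsilon^{2/5})$, where normal hyperbolicity degenerates, is exactly the content of Lemma \ref{unstablemanifM-} (the $B_{0}$-dependent coordinates, the first integral eliminating $z_{1}$, the monodromy estimates and the implicit function theorem in $C_{\kappa}^{0}$ with $\kappa=\varepsilon\delta^{\ast}$), which your proposal in effect assumes rather than proves. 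So the route is not a shortcut: to make it rigorous you would have to reconstruct essentially the same machinery the paper already invokes for this Corollary.
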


\begin{corollary}
\label{Corollary stab} For $x\in \lbrack 0,+\infty )$ and $\delta _{\ast }=%
\frac{1}{10}\delta ^{2/5}$, there exists $c>0$ independent of $\varepsilon $
small enough, such that the heteroclinic curve satisfies%
\begin{equation*}
|A^{(m)}(x)|\leq c\varepsilon ^{2/5}e^{-\delta _{\ast }\varepsilon ^{1/5}x},%
\text{ }m=0,1,2,3.
\end{equation*}
\end{corollary}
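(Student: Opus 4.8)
The plan is to split $[0,+\infty )$ at the right end $x_{1}:=a_{+}/(K\varepsilon ^{1/5})$ of the intermediate region (where $z=K\varepsilon ^{1/5}x$), and to bound $A$ on $[0,x_{1}]$ and on $[x_{1},+\infty )$ by different arguments. Throughout one uses the exact identity $A^{(4)}=-p(x)A$ with $p(x):=gB(x)^{2}+A(x)^{2}-1$, the facts that $B(0)=1/\sqrt{g}$ (origin placed at the critical value) and $B^{\prime }>0$ on $\mathbb{R}$ (Theorem~\ref{theorem}), and the relation $K\asymp \delta ^{2/5}$, which is forced by matching the inner expansion $gB^{2}-1\sim K^{4}\varepsilon ^{4/5}\overline{B}$ to the slow equation $B^{\prime }=\tfrac{\varepsilon }{\sqrt{2}}(1-B^{2})$; this is exactly why the decay exponent carries the power $\delta ^{2/5}$.

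\emph{Step 1: the intermediate region $[0,x_{1}]$.} Here I would invoke the solution $\overline{A}$ of (\ref{intermediate}) already built for the patching of the stable manifold: $\overline{A},\overline{A}^{\prime },\overline{A}^{\prime \prime },\overline{A}^{\prime \prime \prime }$ are bounded, uniformly in $\varepsilon $, on the compact interval $[0,a_{+}]$. Undoing the scaling $A=K^{2}\varepsilon ^{2/5}\overline{A}(z)$ yields $|A^{(m)}(x)|\le CK^{2+m}\varepsilon ^{(2+m)/5}\le C^{\prime }\varepsilon ^{2/5}$ for $m=0,1,2,3$ and $x\in [0,x_{1}]$. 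Since $\delta _{\ast }\varepsilon ^{1/5}x_{1}=\tfrac{1}{10}\delta ^{2/5}a_{+}/K$ is bounded, $e^{-\delta _{\ast }\varepsilon ^{1/5}x}\ge c_{0}>0$ on $[0,x_{1}]$, so the asserted inequality holds there after enlarging $c$.

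\emph{Step 2: the stable region $[x_{1},+\infty )$.} The quantitative input is the lower bound, valid for $x\ge x_{1}$ by monotonicity of $B$ and $\overline{B}(a_{+})=a_{+}>0$,
\begin{equation*}
p(x)\ \ge\ gB(x)^{2}-1\ \ge\ gB(x_{1})^{2}-1\ \ge\ c_{1}K^{4}\varepsilon ^{4/5}a_{+}\ \ge\ c_{2}\,\delta ^{8/5}\varepsilon ^{4/5},
\end{equation*}
together with $p(x)\le \delta ^{2}$ (as $B\uparrow 1$). For the frozen equation $A^{(4)}=-p_{0}A$ the decaying solutions behave like $p_{0}^{-3/8}\exp (-\tfrac{1}{\sqrt{2}}p_{0}^{1/4}x)$; since $|p^{\prime }|/p^{5/4}=O(1)$ at $x_{1}$ and $o(1)$ for $x>x_{1}$, a WKB / variation-of-constants estimate for $A^{(4)}=-p(x)A$ --- equivalently, re-reading the fixed-point construction of the stable manifold in section~\ref{sec 4.4} --- gives $|A(x)|\le C|A(x_{1})|\exp \big(-\tfrac{1}{\sqrt{2}}\int_{x_{1}}^{x}(p(s)/4)^{1/4}ds\big)$. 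As $(p(s)/4)^{1/4}\ge (c_{2}/4)^{1/4}\delta ^{2/5}\varepsilon ^{1/5}\ge 3\sqrt{2}\,\delta _{\ast }\varepsilon ^{1/5}$ (the $\tfrac{1}{10}$ in $\delta _{\ast }$ absorbing the constants) and $|A(x_{1})|\le C^{\prime }\varepsilon ^{2/5}$ from Step~1, recombining exponents gives $|A(x)|\le c\,\varepsilon ^{2/5}e^{-\delta _{\ast }\varepsilon ^{1/5}x}$ on $[x_{1},+\infty )$; for large $x$ this is far from sharp --- Theorem~\ref{theorem} gives $e^{-\sqrt{\delta /2}\,x}$ --- but it is the uniform rate we need. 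For $m=1,2,3$, Landau--Kolmogorov interpolation on unit intervals gives $\sup_{[x,x+1]}|A^{(m)}|\le C_{m}\big(\sup_{[x-1,x+2]}|A|+\sup_{[x-1,x+2]}|A^{(4)}|\big)$, and $|A^{(4)}|=p|A|\le |A|$, so the bound on $A$ transfers to $A^{(m)}$ (the short overlap near $x_{1}$ being absorbed into the stable-manifold estimates of section~\ref{sec 4.4}).

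\emph{Main obstacle.} The delicate point is the decay in Step~2: the equation $A^{(4)}=-p(x)A$ is only borderline adiabatic exactly at the left end $x_{1}$ of the stable region (there $|p^{\prime }|/p^{5/4}\asymp 1$), so one must place $x_{1}$ precisely at the end of the inner region \emph{and} keep track of constants to ensure that genuine exponential decay switches on there with rate at least $\delta _{\ast }\varepsilon ^{1/5}$ --- this is the source of the powers $\delta ^{2/5}$ and $\varepsilon ^{1/5}$ in the statement. A secondary technical matter is to make the Lyapunov functional (or integral equation) for the fourth-order system keep selecting the two-dimensional stable subspace uniformly for $x\ge x_{1}$, i.e. not to leak decay into the two growing companion solutions; here the first integral of (\ref{truncated syst}), which yields $A^{2}\le gB^{2}-1$ along the orbit, together with the estimates of section~\ref{sec 4.4}, does the job.
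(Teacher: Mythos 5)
Your decomposition is exactly the paper's: split at the right end of the intermediate region ($x_{1}=a_{+}/(K\varepsilon ^{1/5})$, i.e.\ $x_{\ast }^{+}$), use the $\varepsilon$-independent bounds for the rescaled equation (\ref{intermediate}) on $[0,x_{\ast }^{+}]$ together with the fact that $e^{-\delta _{\ast }\varepsilon ^{1/5}x}$ stays of order one there, and use the stable-manifold construction on $[x_{\ast }^{+},+\infty )$. The only genuine divergence is your Step 2: the paper does not run a WKB/adiabatic argument on $A^{(4)}=-p(x)A$; it simply quotes Lemma \ref{Lemma stable manifM+}, whose fixed-point construction in the weighted space $C_{\kappa }^{0}$ with $\kappa =\frac{1}{10}\delta ^{2/5}\varepsilon ^{1/5}=\delta _{\ast }\varepsilon ^{1/5}$ already yields $|A_{j}(x)|=\mathcal{O}(|X_{0}|e^{-\delta _{\ast }\varepsilon ^{1/5}x})$ for $j=0,1,2,3$ with $|X_{0}|\leq k_{1}\alpha _{+}\delta =\mathcal{O}(\varepsilon ^{2/5})$ — so neither the frozen-coefficient/variation-of-constants estimate nor the Landau--Kolmogorov interpolation is needed. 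As a standalone argument your WKB step is not complete: the essential point is precisely that the orbit carries no component on the two growing modes, and that is what the integral formulation (\ref{stabmanifXY}) with the monodromy operators (\ref{S0+})--(\ref{S1+}) encodes; your proposed substitute, "the first integral yields $A^{2}\leq gB^{2}-1$ along the orbit", is not a consequence of (\ref{Wg}) as stated (the inequality does hold for $x\geq x_{\ast }^{+}$, but only because of the stable-manifold estimates $|A_{j}|\leq \gamma |v|$ themselves, so the argument would be circular). Since you also say "equivalently, re-reading the fixed-point construction of section \ref{sec 4.4}", the cleanest fix is to drop the WKB gloss and cite Lemma \ref{Lemma stable manifM+} (plus Lemma \ref{connection for stablemanif} and the remark after Lemma \ref{intersection of manifolds} for $[0,x_{\ast }^{+}]$), which is the paper's proof.
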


\begin{remark}
\label{rmk intersec}It should be noticed that we show at Lemma \ref%
{intersection of manifolds} that, in the middle of the heteroclinic, $A(0)=%
\mathcal{O}(\varepsilon ^{2/5})$ and for $x\in (0,+\infty ),$ $A(x)$
oscillates, staying of order $\mathcal{O}(\varepsilon ^{2/5})$, while $%
B(0)=1/\sqrt{g}$ and $B(x)$ grows monotonically until $1.$
\end{remark}

\begin{remark}
\label{rmk sym}Using symmetries of the system: $A\mapsto \pm A,$ $B\mapsto
\pm B$ and reversibility symmetry: $(A(x),B(x))\mapsto (A(-x),B(-x)),$ we
find 8 heteroclinics. Two are connecting $M_{-}$ to $M_{+}$ with opposite
dynamics, two others connect $-M_{-}$ to $M_{+},$ two connect $M_{-}$ to $%
-M_{+},$ and two connect $-M_{-}$ to $-M_{+}.$ The one which interests us is
the only one connecting $M_{-}$ to $M_{+}$ with the dynamics running from $%
M_{-}$ to $M_{+}.$
\end{remark}

\begin{remark}
It should be noticed that the study made in \cite{Man-Pom} on the
heteroclinic solution for the system (\ref{truncated syst}) uses asymptotic
analysis, suggesting the existence of the heteroclinic, later proved
mathematically in \cite{BHI}. Contrary to these previous works, using
asymptotic analysis on the full real line, the precise estimate which is
obtained for $a_{+}$(see (\ref{intermediate})) is essential here, for
getting a rigorous result.
\end{remark}

\begin{remark}
\label{Physval} Values of $\delta $ such that $0.476\leq \delta $ include
values obtained for $\delta $ in the B\'{e}nard-Rayleigh convection problem
where $g=1+\delta ^{2}$ is function of the Prandtl number $\mathcal{P}$ (as
computed in \cite{HI Arma}). With rigid-rigid, rigid-free, or free-free
boundaries the minimum values of $g$ are respectively $(g_{\min }=1.227,$ $%
1.332,$ $1.423)$ corresponding to $\delta _{\min }=0.476,$ $0.576,$ $0.650.$
The restriction in Theorem \ref{theorem} corresponds to $1<g\leq 2.$ Then,
the eligible values for the Prandtl number are respectively $\mathcal{P}%
>0.5308,>0.6222,>0.8078$. 
\begin{figure}[th]
\begin{center}
\includegraphics[width=4cm]{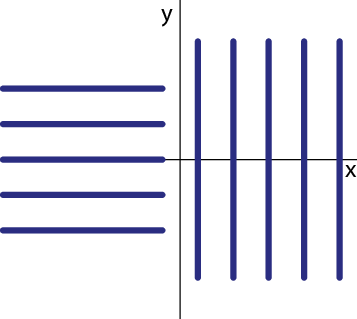}
\end{center}
\caption{Orthogonal domain wall}
\label{fig-wall}
\end{figure}
\end{remark}

The Schedule of the paper is as follows: in section \ref{sect unstable manif
new} we prove at Lemma \ref{unstablemanifM-} the existence of the
3-dimensional unstable manifold of $M_{-}=(1,0)$ for $B\in \lbrack
0,(1-\alpha _{-}^{2}\delta ^{2})/\sqrt{1+\delta ^{2}}]$ with $\varepsilon
=\nu _{-}\alpha _{-}^{5/2},$ $\nu _{-}$ being independent of $\varepsilon ,$
for $x\in (-\infty ,-x_{\ast }].$

In section \ref{sect stabmanifold} we prove at Lemma \ref{Lemma stable
manifM+} the existence of the 3-dimensional stable manifold of $M_{+}=(0,1)$
for $B\in \lbrack (1+\alpha _{+}^{2}\delta ^{2})/\sqrt{1+\delta ^{2}},1],$
and $x\in \lbrack x_{\ast }^{+},+\infty ).$

In section \ref{sect intersection}, we solve a certain 4th-order
differential equation, on the finite interval $[-x_{\ast },x_{\ast }^{+}]$
which, once rescaled, is independent of $\varepsilon ,$ so that we are able
to extend the existence of the stable manifold for $B\in \lbrack (1-\alpha
_{-}^{2}\delta ^{2})/\sqrt{1+\delta ^{2}}),(1+\alpha _{+}^{2}\delta ^{2})/%
\sqrt{1+\delta ^{2}})],$ $A_{0}$ still of order $\mathcal{O}(\alpha _{+}).$
Then we use results of \cite{BHI} to control the existence of the
intersection, and then prove the transverse intersection of the two
manifolds. This ends the proof of Theorem \ref{theorem}.

In section \ref{studylinop} we give, in Lemma \ref{Lem Linearoperator},
properties of the linearized operator along the heteroclinic, which are
necessary to prove a persistence result under a reversible perturbation for
the heteroclinic in the 8-dimensional space (with $B\in \mathbb{C}$). This
allows to prove the existence of orthogonal domain walls in convection as
made in \cite{Io24}.

In summary, what is new in this paper?

i) Existence of the unstable manifold of $M_{-},$ analytic in $\delta ,$
while coordinate $B$ is an increasing function for $x\in (-\infty ,-x_{\ast
}]$, varying from $0$ to a value $\mathcal{O}(\varepsilon ^{4/5})-$ close to 
$1/\sqrt{g}.$ Existence of the stable manifold of $M_{+}$, analytic in $%
\delta ,$ while coordinate $B$ is an increasing function for $x\in \lbrack
x_{\ast }^{+},+\infty )$, varying from a value $\mathcal{O}(\varepsilon
^{4/5})-$ close to $1/\sqrt{g},$ to 1.

ii) Justification and resolution backwards of the intermediate 4th order
differential equation (\ref{intermediate}) independent of $\varepsilon ,$
already introduced in \cite{Man-Pom} and \cite{Buff}, but now on a bounded
interval with boundary conditions independent of $\varepsilon $, the
solution being analytic in $\delta .$

iii) On the heteroclinic, $B_{0}(x)$ satisfies $B_{0}^{\prime }(x)>0$. 
Estimates for coordinates in $\mathbb{R}^{6}$ are established, which are essential for a further study on the
persistence under perturbations of the heteroclinic, as for the B\'{e}nard-Rayleigh convection
problem.

iv) Study of the conditions for the invertibility of the linearized
operator, along the heteroclinic, useful for any perturbation result.

\begin{remark}
There are many lengthy awful calculations in this work. However, they are necessary for getting precise estimates.
\end{remark}

\textbf{Acknowledgement} The author warmly thanks Mariana Haragus for her
help in section \ref{studylinop}, and her constant encouragements.  Warm thanks also to the referees who gave the author additional references and urged him to clarify some points of the proofs.

\section{General properties of the system}

\subsection{Global invariant manifold $\mathcal{W}_{\protect\varepsilon ,%
\protect\delta }$}

Let us define coordinates in $%
\mathbb{R}
^{6}$ as 
\begin{equation*}
(A_{0},A_{1},A_{2},A_{3},B_{0},B_{1})=(A,A^{\prime },A^{\prime \prime
},A^{\prime \prime \prime },B,B^{\prime }).
\end{equation*}%
The first observation is that we have the first integral 
\begin{equation}
W=\varepsilon ^{2}(A^{\prime 2})^{\prime \prime }-3\varepsilon ^{2}A^{\prime
\prime 2}-B^{\prime 2}+\frac{\varepsilon ^{2}}{2}(A^{2}+B^{2}-1)^{2}+%
\varepsilon ^{2}\delta ^{2}A^{2}B^{2},  \label{first integ}
\end{equation}%
as noticed in \cite{Man-Pom}, where $W$ is used in an energy functional,
used later in \cite{BHI}. Then, for containing the end points $M_{\pm }$,
our heteroclinic should satisfy 
\begin{equation}
2\varepsilon ^{2}A_{1}A_{3}-\varepsilon ^{2}A_{2}^{2}-B_{1}^{2}+\frac{%
\varepsilon ^{2}}{2}(A_{0}^{2}+B_{0}^{2}-1)^{2}+\varepsilon ^{2}\delta
^{2}A_{0}^{2}B_{0}^{2}=0.  \label{Wg}
\end{equation}%
Since our purpose is to find $B_{0}$ growing from 0 to 1, we extract the
positive square root (needs to be justified later):%
\begin{equation*}
B_{1}=\{2\varepsilon ^{2}A_{1}A_{3}-\varepsilon ^{2}A_{2}^{2}+\frac{%
\varepsilon ^{2}}{2}(A_{0}^{2}+B_{0}^{2}-1)^{2}+\varepsilon ^{2}\delta
^{2}A_{0}^{2}B_{0}^{2}\}^{1/2},
\end{equation*}%
which defines a \emph{5-dimensional invariant maniford }$\mathcal{W}%
_{\varepsilon ,\delta }$ valid for any $\delta >0,$ which should contain the
heteroclinic curve that we are looking for.

For $\delta >0,$ we find the singular points (where a tangent hyperplane is
not defined)%
\begin{eqnarray}
(A_{0},B_{0}) &=&(\pm 1,0),\text{ }A_{1}=A_{2}=A_{3}=B_{1}=0
\label{singular points} \\
(A_{0},B_{0}) &=&(0,\pm 1),\text{ }A_{1}=A_{2}=A_{3}=B_{1}=0.  \notag
\end{eqnarray}%
For $\delta =0,$ singular points constitute the circle 
\begin{equation}
A_{0}^{2}+B_{0}^{2}=1,\text{ }A_{1}=A_{2}=A_{3}=B_{1}=0.  \label{sing g=1}
\end{equation}

\begin{remark}
We do not emphasize here on the hamiltonian structure of system (\ref%
{truncated syst}) since this does not help our understanding. On the
contrary, the reversibility property is inherited from the original physical
problem and is still valid for the perturbed system (\ref{new reduced syst}%
). Moreover, if we consider perturbation terms as $\varepsilon
^{2}(A_{0}^{2}A_{0}^{\prime \prime }-A_{0}A_{0}^{\prime 2})$ in $\widehat{f}$
and $i\varepsilon ^{3}B_{0}A_{0}A_{0}^{\prime }$ in $\widehat{g},$ we cannot
find a new first integral analogue to (\ref{Wg}), while the system is still
reversible.
\end{remark}

\subsection{Linear study of the dynamics\label{sect linear}}

\subsubsection{Neighborhood of $M_{-}=(1,0)$}

The eigenvalues of the linearized operator at $M_{-}$ are such that $\lambda
^{4}=-2$ or $\lambda ^{2}=\varepsilon ^{2}\delta ^{2},$ hence they are $\pm
2^{-1/4}(1\pm i)$ and $\pm \varepsilon \delta .$ This gives a 3-dimensional
unstable manifold, and a 3-dimensional stable manifold, originating from $%
M_{-}.$

\subsubsection{Neighborhood of $M_{+}=(0,1)$}

The eigenvalues of the linearized operator at $M_{+}$ are such that $\lambda
^{4}=-\delta ^{2}$ or $\lambda ^{2}=2\varepsilon ^{2},$ hence defining $%
\delta ^{\prime }=\sqrt{\delta },$ the eigenvalues are $\pm 2^{-1/2}(1\pm
i)\delta ^{\prime },$ and $\pm \varepsilon \sqrt{2}.$ This gives again a
3-dimensional unstable manifold and a 3-dimensional stable manifold
originating from $M_{+}.$

All this implies that the 3-dimensional unstable manifold starting at $M_{-}$
and the 3-dimensional stable manifold starting at $M_{+}$ which are both
included into the $5-$dimensional manifold $\mathcal{W}_{\varepsilon ,\delta
},$ give a good hope for these two manifolds to intersect along a
heteroclinic curve...provided that they still exist as graphs with respect
to $B,$ "far" from the end points $M_{+}$ and $M_{-}.$ The idea is to show
that this occurs when $\delta $ is not too small and at most 1.

\begin{remark}
The limit points $M_{-}=(1,0)$ and $M_{+}=(0,1)$ have a degenerate situation
for $\delta =0,$ because of the multiple $0$ eigenvalue for the linearized
operator. For $\delta =0,$ it is possible to build a family of 2-dim
unstable invariant manifolds and a family of 2-dim stable manifolds along
the arc of equilibria $A^{2}+B^{2}=1.$ For $\delta >0$ and small, the
perturbation gives two new 3-dim invariant manifolds, however their
transversality is weaker and weaker as $B\rightarrow 1$ (so that Fenichel's
theorem cannot apply). A more "serious" study would then be needed. However
the physical interest is for values of $\delta >0$ not too small, which
cancels the physical interest of such a difficult question (see Remark \ref%
{Physval}).
\end{remark}

\section{Unstable manifold of $M_{\_}\label{sect unstable manif new}$}

\subsection{Choice of coordinates\label{sec changecoord}}

Let us assume in this section $1/3\leq \delta \leq 1$ and define $\eta _{0}$
and $\alpha _{-}$ such that%
\begin{eqnarray*}
0 &\leq &B_{0}\leq \sqrt{1-\eta _{0}^{2}\delta ^{2}}=\left( \frac{1-\alpha
^{2}\delta ^{2}}{1+\delta ^{2}}\right) ^{1/2},\text{ \ }\eta _{0}^{2}=\frac{%
1+\alpha ^{2}}{1+\delta ^{2}}, \\
\frac{1}{\sqrt{g}} &=&\frac{1}{\sqrt{1+\delta ^{2}}}<\eta _{0}<\frac{1}{%
\delta },\text{ }\alpha \delta <1.
\end{eqnarray*}%
$\alpha $ will be determined later, as a power of $\varepsilon .$ Now, we
define%
\begin{equation}
\widetilde{A_{\ast }}^{2}\overset{def}{=}1-(1+\delta ^{2})B_{0}^{2},
\label{def A*}
\end{equation}%
then%
\begin{equation}
\alpha \delta \leq \widetilde{A_{\ast }}\leq 1.  \label{lower bound A*}
\end{equation}%
Let us define the following coordinates in $%
\mathbb{R}
^{6}:$%
\begin{equation}
Z=(\widetilde{A_{\ast }}+\widetilde{A_{0}}%
,A_{1},A_{2},A_{3},B_{0},B_{1})^{t}.  \label{new coord 1}
\end{equation}

\begin{remark}
The assumption $B_{0}\geq 0$ comes from the result of \cite{BHI} where $%
0\leq B_{0}(x)\leq 1$ along the heteroclinic; $\widetilde{A_{\ast }}$ is
just the first part of the "singular" heteroclinic found for the system
singular for $\varepsilon =0$ (\ref{singular syst}). The occurence of $%
\widetilde{A_{\ast }}$ is also linked with a formal computation of an
expansion of the heteroclinic in powers of $\varepsilon ,$ which gives $%
\widetilde{A_{\ast }}$ as the principal part of $A_{0},$ valid for $%
B_{0}<(1+\delta ^{2})^{-1/2}=1/\sqrt{g}.$ We expect to build the unstable
manifold until this limit value.
\end{remark}

\begin{remark}
We put a condition $1/3\leq \delta $ in the purpose to have not too small
values for $\delta ,$ and to include known computed values of the
coefficient $g=1+\delta ^{2},$ in the convection problems, with different
boundary conditions (see Remark \ref{Physval} and \cite{HI Arma}). The
restriction $\delta \leq 1$ made in this section simplifies few estimates,
and is not really useful.
\end{remark}

We prove below the main result of this section:

\begin{lemma}
\label{unstablemanifM-} For $1/3\leq \delta \leq 1,$ and $\varepsilon $
small enough, the 3-dimensional unstable manifold $\mathcal{W}_{\varepsilon
,\delta }^{(u)}$ of $M_{-}$ exists for%
\begin{equation*}
0\leq B_{0}(x)\leq (1-\eta _{0}^{2}\delta ^{2})^{1/2}=\left( \frac{1-\alpha
_{-}^{2}\delta ^{2}}{1+\delta ^{2}}\right) ^{1/2},\text{ }x\in (-\infty
,-x_{\ast }],
\end{equation*}%
where $x_{\ast }\geq 0$ is arbitrary, and there exists $\nu _{-}>0$ such
that $\eta _{0}$ and $\alpha _{-}$ satisfy%
\begin{equation*}
(1+\delta ^{2})\eta _{0}^{2}=1+\alpha _{-}^{2},\text{ }\varepsilon =\nu
_{-}\alpha _{-}^{5/2}.
\end{equation*}%
The manifold $\mathcal{W}_{\varepsilon ,\delta }^{(u)}$ sits in $\mathcal{W}%
_{\varepsilon ,\delta }$, is analytic in $(\varepsilon ,\delta )$,
parametrized by ($X(-x_{\ast }),B_{0}(-x_{\ast })$) where $X(x)$ is a
2-dimensional coordinate on the $B_{0}-$dependent unstable directions
defined by (\ref{new linop 1}). Moreover, for any $\delta ^{\ast }<\delta $,
there exist a number $k_{0}>0$ independent of $\varepsilon ,\nu _{-}$ such
that for%
\begin{equation*}
B_{0}(-x_{\ast })=\sqrt{1-\eta _{0}^{2}\delta ^{2}},\text{ }\widetilde{%
A_{\ast }}(x)\geq \alpha \delta ,\text{ }|X(-x_{\ast })|\leq k_{0}\delta
\alpha _{-}^{3/2}=\frac{k_{0}}{\nu _{-}^{3/5}}\delta \varepsilon ^{3/5},
\end{equation*}%
we have%
\begin{eqnarray*}
A_{0}(x) &=&\widetilde{A_{\ast }}(x)+B_{0}(x)\mathcal{O}\left( \frac{%
|X(-x_{\ast })|}{\widetilde{A_{\ast }}(x)^{1/2}}e^{\varepsilon \delta ^{\ast
}x}\right) \\
A_{1}(x) &=&B_{0}(x)e^{\varepsilon \delta ^{\ast }x}\mathcal{O}(\nu
^{2/5}\varepsilon ^{3/5}+|X(-x_{\ast })|) \\
A_{2}(x) &=&B_{0}(x)\widetilde{A_{\ast }}(x)^{1/2}\mathcal{O}(|X(-x_{\ast
})|e^{\varepsilon \delta ^{\ast }x}) \\
A_{3}(x) &=&B_{0}(x)\widetilde{A_{\ast }}(x)\mathcal{O}(|X(-x_{\ast
})|e^{\varepsilon \delta ^{\ast }x}),
\end{eqnarray*}%
\begin{equation*}
0\leq 1-\widetilde{A_{\ast }}\leq cB_{0}^{2},\text{ }\widetilde{A_{\ast }}%
|_{B_{0}=0}=1.
\end{equation*}
\end{lemma}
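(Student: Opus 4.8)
The plan is to build $\mathcal{W}_{\varepsilon,\delta}^{(u)}$ as a perturbation of the singular slow manifold $A_0=\widetilde{A_\ast}$, using normal hyperbolicity away from the degenerate endpoint $B_0=1/\sqrt{g}$, and to stop the construction at the largest value of $B_0$ for which that hyperbolicity (measured in the rescaled variables) is still strong enough. First I would pass to the coordinates $Z=(\widetilde{A_\ast}+\widetilde{A_0},A_1,A_2,A_3,B_0,B_1)^t$ of \eqref{new coord 1}, substitute into \eqref{truncated syst}, and use the defining relation \eqref{def A*} to express $\widetilde{A_\ast}'$, $\widetilde{A_\ast}''$, etc., in terms of $B_0,B_1$; this turns the $A$-equation into a perturbed linear system $\widetilde{A}^{(4)}=\boldsymbol{L}_\delta\widetilde{A}+(\text{quadratic and higher in }\widetilde{A}_0,\ldots)+(\text{terms driven by }B_0,B_1)$, where $\boldsymbol{L}_\delta$ is the operator with $\lambda^4=1-gB_0^2=\widetilde{A_\ast}^2\cdot(\text{something})$ — more precisely its fourth powers are $\widetilde{A_\ast}^2$ up to the $B_0$-dependent coefficient, so the two unstable eigenvalues scale like $\widetilde{A_\ast}^{1/2}$. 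I would diagonalize $\boldsymbol{L}_\delta$ on its $B_0$-dependent unstable subspace, calling $X$ the 2-dimensional coordinate there (this is the $X(x)$ referenced via \eqref{new linop 1}), so that the linear part of the $X$-flow contracts backwards at rate $\sim\widetilde{A_\ast}^{1/2}$.

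Next I would rescale: because $\varepsilon=\nu_-\alpha_-^{5/2}$ and near the endpoint $\widetilde{A_\ast}\sim\alpha_-\delta$, I introduce the blow-up variables analogous to those in the Introduction ($A\sim\varepsilon^{2/5}\overline A$, $B_0=\tfrac{1}{\sqrt g}(1+\tfrac{K^4}{2}\varepsilon^{4/5}\overline B)$, $x=z/(K\varepsilon^{1/5})$), so that the slow drift of $B_0$ becomes an $O(1)$ quantity and the $\widetilde{A_\ast}$-dependence of the eigenvalues is uniformly comparable to $z^{1/4}$ on the relevant $z$-interval. The construction is then a fixed-point argument: writing the $\widetilde{A}$-component as a solution of the variation-of-constants integral equation with the backward-stable Green's function of $\boldsymbol{L}_\delta$ (which is where the factors $e^{\varepsilon\delta^\ast x}$ and the powers of $\widetilde{A_\ast}(x)^{1/2}$, $\widetilde{A_\ast}(x)$ enter the weighted norms), and treating $B_0$ by its own (essentially scalar, monotone) equation with $B_1$ recovered from the invariant-manifold relation \eqref{Wg} via the positive square root, I would set up a contraction on a ball of radius $\sim|X(-x_\ast)|$ in a space of functions on $(-\infty,-x_\ast]$ with weights $e^{\varepsilon\delta^\ast x}$ and the indicated powers of $\widetilde{A_\ast}$. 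Analyticity in $(\varepsilon,\delta)$ follows because all right-hand sides are analytic and the contraction is uniform; the parametrization by $(X(-x_\ast),B_0(-x_\ast))$ and the constraint $|X(-x_\ast)|\le k_0\delta\alpha_-^{3/2}$ is exactly the size of ball on which the fixed point stays inside the region where $\widetilde{A_\ast}\ge\alpha\delta$ and the quadratic terms remain subordinate. The estimates $0\le1-\widetilde{A_\ast}\le cB_0^2$ and $\widetilde{A_\ast}|_{B_0=0}=1$ are immediate from \eqref{def A*} and $\delta\le1$, and the bounds on $A_0,\ldots,A_3$ drop out of the weighted-norm estimate of the fixed point, differentiating the integral equation three times and reading off the corresponding powers of $\widetilde{A_\ast}$ from the Green's function.

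The main obstacle is making the hyperbolicity estimate uniform all the way down to $B_0$ of order $\varepsilon^{4/5}$-close to $1/\sqrt g$, i.e. controlling the integral operator when the spectral gap of $\boldsymbol{L}_\delta$ degenerates like $\widetilde{A_\ast}^{1/2}\to(\alpha_-\delta)^{1/2}$. The point of the scaling $\varepsilon=\nu_-\alpha_-^{5/2}$ is precisely to balance this: the time the orbit spends in the near-degenerate zone is $O(\alpha_-^{-1/2})$ in $x$ (equivalently $O(1)$ in $z$), so the product of the shrinking rate and the lengthening time stays bounded, and the Green's function estimate closes with constants independent of $\varepsilon$ and $\nu_-$ — this is where the exponent $5/2$ (and hence $\alpha_-\sim\varepsilon^{2/5}$, the $O(\varepsilon^{4/5})$ gap) is forced. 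A secondary technical point is justifying the positive-square-root extraction of $B_1$ from \eqref{Wg} throughout the construction, which holds because $B_0$ is increasing and bounded away from $1$ on $(-\infty,-x_\ast]$ so the bracket in \eqref{Wg} stays strictly positive; and one must check that the quadratic self-interaction terms in the $\widetilde{A}$-equation, which carry no smallness of their own, are beaten by the $|X(-x_\ast)|^2$ versus $|X(-x_\ast)|$ comparison on the chosen ball, which is why the radius is taken proportional to $\delta\alpha_-^{3/2}$ rather than merely $o(1)$.
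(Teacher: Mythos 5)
Your overall skeleton coincides with the paper's: coordinates adapted to the $B_{0}$-dependent operator $\boldsymbol{L}_{\delta }$, elimination of $B_{1}$ (equivalently $z_{1}$) through the positive square root of the first integral (\ref{Wg}), a variation-of-constants formulation on $(-\infty ,-x_{\ast }]$ solved by a uniform contraction/implicit-function argument in an exponentially weighted space, a separate scalar equation for $B_{0}$, and a ball of radius $O(k_{0}\delta \alpha _{-}^{3/2})$ for $X(-x_{\ast })$. The genuine gap is the device you invoke to neutralize the loss of hyperbolicity: the blow-up $A_{0}\sim \varepsilon ^{2/5}\overline{A}$, $B_{0}=\tfrac{1}{\sqrt{g}}(1+\tfrac{K^{4}}{2}\varepsilon ^{4/5}\overline{B})$, $x=z/(K\varepsilon ^{1/5})$ is meaningful only in an $O(\varepsilon ^{4/5})$-neighborhood of $B_{0}=1/\sqrt{g}$ (i.e. $|z|=O(1)$), whereas this lemma requires a construction on all of $(-\infty ,-x_{\ast }]$, where $B_{0}$ runs from $0$ up to $B_{00}$ and $\widetilde{A_{\ast }}$ from $1$ down to $\alpha _{-}\delta $. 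On most of that range $\overline{A}\sim \varepsilon ^{-2/5}$ is large, the expansion of $B_{0}$ about $1/\sqrt{g}$ is meaningless, and your claims that the drift of $B_{0}$ becomes $O(1)$ and that the eigenvalues are ``uniformly comparable to $z^{1/4}$'' fail; the inner equation (\ref{intermediate}) is used only later, in section \ref{sect intersection}, to carry the \emph{stable} manifold across the gap $[-x_{\ast },x_{\ast }^{+}]$, precisely because the unstable-manifold construction of this lemma is stopped at $\widetilde{A_{\ast }}=\alpha _{-}\delta $ before the blow-up region is entered.

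Once the blow-up is removed, your proposal does not supply the quantitative mechanism that actually closes the fixed point in the original variable $x$: (i) the monodromy bounds (\ref{monodrom1})--(\ref{monodrom2}) with rate $\sigma =2^{-1/4}(\alpha _{-}\delta )^{1/2}$, uniform because $\widetilde{A_{\ast }}\geq \alpha _{-}\delta $; (ii) the amplitude scalings $(X,Y)=\alpha _{-}^{3/2}\delta (\overline{X},\overline{Y})$, $z_{1}=\varepsilon \delta \overline{z_{1}}$ of (\ref{scaling a}), after which the forcing is $O(\varepsilon ^{2}/\alpha _{-}^{9/2})$ and the linear perturbations are $O(\varepsilon /\alpha _{-}^{2})$ (see (\ref{estim const a}), (\ref{estim linXY a})); and (iii) the choice of weight $\kappa \approx \varepsilon \delta \leq \sigma /2$, dictated by the decay of $B_{0}(x)$ (the eigenvalue $\varepsilon \delta $ at $M_{-}$), not by the hyperbolic rates of $\boldsymbol{L}_{\delta }$ -- this, together with the explicit factor $B_{0}$ placed in front of the new coordinates, is what produces the factors $B_{0}(x)e^{\varepsilon \delta ^{\ast }x}$ in the statement. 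The relation $\varepsilon =\nu _{-}\alpha _{-}^{5/2}$ is then forced by requiring $\sigma ^{-1}\varepsilon /\alpha _{-}^{2}=O(\nu _{-})$ and $\sigma ^{-1}\varepsilon ^{2}/\alpha _{-}^{9/2}=O(\nu _{-}^{2})$ to be small, i.e. by the size of the perturbation terms relative to the weakest contraction rate, rather than by your ``time spent in the degenerate zone'' heuristic; and the powers of $\widetilde{A_{\ast }}$ in the estimates for $A_{1},A_{2},A_{3}$ come from the eigenvector components in the change of variables (\ref{new var 1}), not from weighting the Green's function. Without these ingredients (or an equivalent bookkeeping), your contraction does not close uniformly down to $\widetilde{A_{\ast }}=\alpha _{-}\delta $, which is the whole difficulty of the lemma.
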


\begin{remark}
We observe that when $x\rightarrow -x_{\ast },$ $A_{0}$ reaches a value
close to $0$ since $\widetilde{A_{\ast }}$ reaches $\mathcal{O}(\varepsilon
^{2/5})$ which is close to $0$, while $B_{0}$ reaches $(1-\eta
_{0}^{2}\delta ^{2})^{1/2}$ which is expected as close as possible to $%
1/(1+\delta ^{2})^{1/2}=1/\sqrt{g}$. The numbers $k_{0}$ and $\nu _{-}$ will
later be imposed, small enough, independently of $\varepsilon .$ Later $%
x_{\ast }$ will be chosen of order $\varepsilon ^{-1/5}$ (see section \ref%
{sect x*}).
\end{remark}

\textbf{Strategy in section \ref{sect unstable manif new}}.

The strategy is first in sections \ref{sec changecoord}, \ref{sec 3.2}, \ref%
{sec 3.3} to write the system (\ref{truncated syst}) in adapted coordinates,
in a neighborhood of $(\widetilde{A_{0}},A_{1},A_{2},A_{3},B_{1})=0,$ which
are $B_{0}-$ dependent. Then, in sections \ref{sec 3.4}, \ref{sec 3.5} we
eliminate $B_{1}$ (coordinate $z_{1})$ in using the first integral (\ref{Wg}%
).We now look for a 3-dimensional unstable manifold lying in the
5-dimensional manifold $\mathcal{W}_{\varepsilon ,\delta }.$ In sections \ref%
{sec 3.6}, \ref{sec 3.7}, \ref{sec 3.8} we solve the system for the unstable
manifold, the function $B_{0}(x)$ taken as a parameter, still unknown. Then,
in section \ref{sec 3.9} we solve the remaining differential equation for $%
B_{0}(x).$ In section \ref{sec 3.10} we give a quite explicit form for the
2-dimensional intersection of the 3-dimensional plane tangent to the
unstable manifold $\mathcal{W}_{\varepsilon ,\delta }^{(u)}$ with the
hyperplane $H_{0}$ defined by $B_{0}=\sqrt{1-\eta _{0}^{2}\delta ^{2}}.$

With the system of coordinates (\ref{new coord 1}), the system (\ref%
{truncated syst}) becomes%
\begin{eqnarray}
\widetilde{A_{0}}^{\prime } &=&A_{1}+\frac{(1+\delta ^{2})B_{0}}{\widetilde{%
A_{\ast }}}B_{1}  \notag \\
A_{1}^{\prime } &=&A_{2}  \notag \\
A_{2}^{\prime } &=&A_{3}  \label{newsyst 1} \\
A_{3}^{\prime } &=&-2\widetilde{A_{\ast }}^{2}\widetilde{A_{0}}-3\widetilde{%
A_{\ast }}\widetilde{A_{0}}^{2}-\widetilde{A_{0}}^{3}  \notag \\
B_{0}^{\prime } &=&B_{1}  \notag \\
B_{1}^{\prime } &=&\varepsilon ^{2}\delta ^{2}B_{0}(\widetilde{A_{\ast }}%
^{2}-B_{0}^{2})+2\varepsilon ^{2}(1+\delta ^{2})\widetilde{A_{\ast }}B_{0}%
\widetilde{A_{0}}+\varepsilon ^{2}(1+\delta ^{2})B_{0}\widetilde{A_{0}}^{2}.
\notag
\end{eqnarray}%
We expect that $\widetilde{A_{0}},A_{1},A_{2},A_{3},B_{1}$ stay small enough
for $x\in (-\infty ,0],$ so we introduce the $B_{0}-$dependent linear
operator (not really a linearization at some equilibrium) 
\begin{equation}
\boldsymbol{L}_{\delta }=\left( 
\begin{array}{cccccc}
0 & 1 & 0 & 0 & 0 & \frac{(1+\delta ^{2})B_{0}}{\widetilde{A_{\ast }}} \\ 
0 & 0 & 1 & 0 & 0 & 0 \\ 
0 & 0 & 0 & 1 & 0 & 0 \\ 
-2\widetilde{A_{\ast }}^{2} & 0 & 0 & 0 & 0 & 0 \\ 
0 & 0 & 0 & 0 & 0 & 1 \\ 
2\varepsilon ^{2}(1+\delta ^{2})\widetilde{A_{\ast }}B_{0} & 0 & 0 & 0 & 0 & 
0%
\end{array}%
\right) .  \label{new linop 1}
\end{equation}%
The idea is to find new coordinates such that we are able to give nice
estimates of the monodromy operator not forgetting that the coefficients of $%
\boldsymbol{L}_{\delta }$ are functions of $B_{0}.$

The operator $\boldsymbol{L}_{\delta }$ has a double eigenvalue $0$, and is
such that the non zero eigenvalues satisfy%
\begin{equation}
\lambda ^{4}-2\varepsilon ^{2}B_{0}^{2}(1+\delta ^{2})^{2}\lambda ^{2}+2%
\widetilde{A_{\ast }}^{2}=0,  \label{equ lambda}
\end{equation}%
with a discriminant such as 
\begin{equation*}
\Delta ^{\prime }=\varepsilon ^{4}B_{0}^{4}(1+\delta ^{2})^{4}-2\widetilde{%
A_{\ast }}^{2}.
\end{equation*}%
We have the following

\begin{lemma}
\label{Lem delta}For $B_{0}\leq \sqrt{1-\eta _{0}^{2}\delta ^{2}},$ $\alpha
\geq \frac{10}{3}\varepsilon ^{2},$ and $\varepsilon $ small enough, we have%
\begin{equation*}
-\Delta ^{\prime }\geq \widetilde{A_{\ast }}^{2}.
\end{equation*}
\end{lemma}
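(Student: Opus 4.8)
The plan is to reduce the claimed inequality to an elementary one-variable estimate, using the relation $(1+\delta^{2})B_{0}^{2}=1-\widetilde{A_{\ast}}^{2}$ that is built into the definition (\ref{def A*}). Since the discriminant is $\Delta^{\prime}=\varepsilon^{4}B_{0}^{4}(1+\delta^{2})^{4}-2\widetilde{A_{\ast}}^{2}$, the assertion $-\Delta^{\prime}\geq\widetilde{A_{\ast}}^{2}$ is simply
\[
\widetilde{A_{\ast}}^{2}\geq\varepsilon^{4}B_{0}^{4}(1+\delta^{2})^{4},
\]
so the whole lemma amounts to proving this last inequality under the stated hypotheses.

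For the right-hand side I would write $B_{0}^{4}(1+\delta^{2})^{4}=\bigl[(1+\delta^{2})B_{0}^{2}\bigr]^{2}(1+\delta^{2})^{2}=(1-\widetilde{A_{\ast}}^{2})^{2}(1+\delta^{2})^{2}$ and note that $\widetilde{A_{\ast}}^{2}=1-(1+\delta^{2})B_{0}^{2}\in[0,1]$ because $B_{0}\geq 0$, whence $(1-\widetilde{A_{\ast}}^{2})^{2}\leq 1$ and $\varepsilon^{4}B_{0}^{4}(1+\delta^{2})^{4}\leq\varepsilon^{4}(1+\delta^{2})^{2}$. For the left-hand side, the hypothesis $B_{0}\leq\sqrt{1-\eta_{0}^{2}\delta^{2}}$ combined with $\eta_{0}^{2}=(1+\alpha^{2})/(1+\delta^{2})$ gives $(1+\delta^{2})B_{0}^{2}\leq 1-\alpha^{2}\delta^{2}$, i.e. $\widetilde{A_{\ast}}^{2}\geq\alpha^{2}\delta^{2}$, which is precisely (\ref{lower bound A*}).

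It therefore suffices to check $\alpha^{2}\delta^{2}\geq\varepsilon^{4}(1+\delta^{2})^{2}$, equivalently $\alpha\delta\geq\varepsilon^{2}(1+\delta^{2})$. Inserting $\alpha\geq\tfrac{10}{3}\varepsilon^{2}$, this follows once $\tfrac{10}{3}\delta\geq 1+\delta^{2}$, i.e. $(3\delta-1)(\delta-3)\leq 0$, which holds for every $\delta\in[1/3,3]$ and in particular on the range $1/3\leq\delta\leq 1$ fixed in this section. This completes the argument.

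The only subtle point is the numerology: the constant $10/3$ is exactly $\max_{\delta\in[1/3,1]}\bigl((1+\delta^{2})/\delta\bigr)$, attained at the endpoint $\delta=1/3$, so it cannot be improved under the sole condition $1/3\leq\delta$. Note that smallness of $\varepsilon$ is not genuinely needed for the inequality itself; it only ensures consistency of the ambient running assumptions (such as $\alpha\delta<1$ with $\alpha$ a fixed power of $\varepsilon$), and is harmless to keep in the statement.
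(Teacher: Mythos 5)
Your proof is correct and shares the paper's skeleton: both arguments reduce $-\Delta^{\prime}\geq\widetilde{A_{\ast}}^{2}$ to $\widetilde{A_{\ast}}\geq\varepsilon^{2}B_{0}^{2}(1+\delta^{2})^{2}$ and close with the same comparison $\alpha\delta\geq\varepsilon^{2}(1+\delta^{2})$, which follows from $\alpha\geq\frac{10}{3}\varepsilon^{2}$ because $(1+\delta^{2})/\delta\leq 10/3$ on the section's range $1/3\leq\delta\leq 1$. The difference is the intermediate step. The paper substitutes $\widetilde{A_{\ast}}^{2}=1-(1+\delta^{2})B_{0}^{2}$ and solves the resulting quadratic inequality in $(1+\delta^{2})B_{0}^{2}$, producing the sufficient condition $(1+\delta^{2})B_{0}^{2}\leq 1-\varepsilon^{4}(1+\delta^{2})^{2}$; this is exactly where ``$\varepsilon$ small enough'' enters (to guarantee $\varepsilon^{4}(1+\delta^{2})^{2}<2$). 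You instead decouple the two sides: $\varepsilon^{4}B_{0}^{4}(1+\delta^{2})^{4}\leq\varepsilon^{4}(1+\delta^{2})^{2}$ since $(1+\delta^{2})B_{0}^{2}\leq 1$, and $\widetilde{A_{\ast}}^{2}\geq\alpha^{2}\delta^{2}$ by (\ref{lower bound A*}). This is more elementary, avoids the quadratic, and shows --- as you correctly observe --- that the smallness of $\varepsilon$ is not genuinely needed for this particular lemma; your remark on the sharpness of the constant $10/3$, dictated by the endpoint $\delta=1/3$, is also accurate. The only hypothesis you rely on beyond the lemma's statement is the section's standing assumption $1/3\leq\delta\leq 1$, which the paper's proof uses in exactly the same way at its last step.
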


\begin{proof}
$-\Delta ^{\prime }\geq \widetilde{A_{\ast }}^{2}$ is equivalent to%
\begin{equation}
\varepsilon ^{2}B_{0}^{2}(1+\delta ^{2})^{2}\leq \widetilde{A_{\ast }},
\label{basic ineq a}
\end{equation}%
hence

\begin{equation*}
B_{0}^{2}(1+\delta ^{2})\leq \frac{\lbrack 1+4\varepsilon ^{4}(1+\delta
^{2})^{2}]^{1/2}-1}{2\varepsilon ^{4}(1+\delta ^{2})^{2}},
\end{equation*}%
which is satisfied when%
\begin{equation*}
B_{0}^{2}(1+\delta ^{2})\leq 1-\varepsilon ^{4}(1+\delta ^{2})^{2},
\end{equation*}%
provided that $\varepsilon ^{4}(1+\delta ^{2})^{2}<2$ which is true for $%
\varepsilon $ small enough. Now, since $B_{0}\leq \sqrt{1-\eta
_{0}^{2}\delta ^{2}},$ the above inequality is satisfied a soon as we have%
\begin{equation*}
1-\delta ^{2}\alpha ^{2}<1-\varepsilon ^{4}(1+\delta ^{2})^{2},
\end{equation*}%
which is realized when $\alpha \geq \frac{10}{3}\varepsilon ^{2}.$
\end{proof}

Then we have two pairs of complex eigenvalues 
\begin{equation*}
\lambda _{\pm }^{2}=\varepsilon ^{2}B_{0}^{2}(1+\delta ^{2})^{2}\pm i\sqrt{%
-\Delta ^{\prime }}.
\end{equation*}%
We intend to find new coordinates able to manage a new linear operator in
the form of two independent blocs%
\begin{equation}
\left( 
\begin{array}{cc}
\pm \lambda _{r} & \lambda _{i} \\ 
-\lambda _{i} & \pm \lambda _{r}%
\end{array}%
\right)  \label{block standard}
\end{equation}%
for which the eigenvalues are%
\begin{equation*}
\pm \lambda _{r}\pm i\lambda _{i},
\end{equation*}%
where%
\begin{eqnarray}
2\lambda _{r}^{2} &=&\sqrt{2}\widetilde{A_{\ast }}+\varepsilon
^{2}B_{0}^{2}(1+\delta ^{2})^{2}  \label{def lambda} \\
2\lambda _{i}^{2} &=&\sqrt{2}\widetilde{A_{\ast }}-\varepsilon
^{2}B_{0}^{2}(1+\delta ^{2})^{2}  \notag \\
\lambda _{r}^{2}-\lambda _{i}^{2} &=&\varepsilon ^{2}B_{0}^{2}(1+\delta
^{2})^{2}  \notag \\
\lambda _{r}^{2}+\lambda _{i}^{2} &=&\sqrt{2}\widetilde{A_{\ast }}  \notag \\
4\lambda _{r}^{2}\lambda _{i}^{2} &=&-\Delta ^{\prime }.  \notag
\end{eqnarray}%
A form of the linear operator as (\ref{block standard}) is such that we are
able to have good estimates for the monodromy operator associated with the
linear operator $\boldsymbol{L}_{\delta }$, the coefficients of which are
functions of $B_{0}\in \lbrack 0,\sqrt{1-\eta _{0}^{2}\delta ^{2}}]$ (see
Appendix \ref{App1}). Then we have the following

\begin{lemma}
\label{Lem estim lambda}For $B_{0}\leq \sqrt{1-\eta _{0}^{2}\delta ^{2}},$ $%
\alpha \geq \frac{10}{3}\varepsilon ^{2},$ and for $\varepsilon $ small
enough, we have%
\begin{equation*}
\lambda _{r}\lambda _{i}\geq \frac{\widetilde{A_{\ast }}}{2},
\end{equation*}%
\begin{equation}
2^{1/4}\widetilde{A_{\ast }}^{1/2}\geq \lambda _{r}\geq \frac{\widetilde{%
A_{\ast }}^{1/2}}{2^{1/4}},  \label{est lambda r}
\end{equation}%
\begin{equation}
\frac{1}{2^{5/4}}\widetilde{A_{\ast }}^{1/2}\leq \lambda _{i}\leq \frac{%
\widetilde{A_{\ast }}^{1/2}}{2^{1/4}},  \label{est lambda i}
\end{equation}%
while $\widetilde{A_{\ast }}$ varies from $1$ to $\alpha \delta $ and $B_{0}$
varies from $0$ to $\sqrt{1-\eta _{0}^{2}\delta ^{2}}.$
\end{lemma}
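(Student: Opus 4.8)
The plan is to read off all five estimates directly from the algebraic relations (\ref{def lambda}), feeding in as the only substantive ingredient the inequality $-\Delta^{\prime}\geq\widetilde{A_{\ast}}^{2}$ of Lemma \ref{Lem delta}, together with its equivalent restatement (\ref{basic ineq a}), namely $\varepsilon^{2}B_{0}^{2}(1+\delta^{2})^{2}\leq\widetilde{A_{\ast}}$, which is valid on the whole range $B_{0}\leq\sqrt{1-\eta_{0}^{2}\delta^{2}}$, $\alpha\geq\frac{10}{3}\varepsilon^{2}$. With those two facts the lemma becomes bookkeeping of explicit constants.

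First I would handle $\lambda_{r}$. From $2\lambda_{r}^{2}=\sqrt{2}\,\widetilde{A_{\ast}}+\varepsilon^{2}B_{0}^{2}(1+\delta^{2})^{2}$ and the nonnegativity of the last term one gets $\lambda_{r}^{2}\geq\widetilde{A_{\ast}}/\sqrt{2}$, i.e. the lower bound in (\ref{est lambda r}). For the upper bound I would substitute (\ref{basic ineq a}) to obtain $2\lambda_{r}^{2}\leq(1+\sqrt{2})\widetilde{A_{\ast}}$; since $1+\sqrt{2}<2\sqrt{2}$ this gives $\lambda_{r}^{2}\leq\sqrt{2}\,\widetilde{A_{\ast}}$, completing (\ref{est lambda r}).

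Next, $\lambda_{i}$. From $2\lambda_{i}^{2}=\sqrt{2}\,\widetilde{A_{\ast}}-\varepsilon^{2}B_{0}^{2}(1+\delta^{2})^{2}$, dropping the (nonpositive) subtracted term gives $\lambda_{i}^{2}\leq\widetilde{A_{\ast}}/\sqrt{2}$, the upper bound in (\ref{est lambda i}); note also $\lambda_{i}^{2}\geq 0$ by (\ref{basic ineq a}), so $\lambda_{i}$ is real, as required for the block form (\ref{block standard}). For the product, Lemma \ref{Lem delta} with $4\lambda_{r}^{2}\lambda_{i}^{2}=-\Delta^{\prime}$ yields $\lambda_{r}\lambda_{i}\geq\widetilde{A_{\ast}}/2$. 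Combining this with the already-proved bound $\lambda_{r}\leq 2^{1/4}\widetilde{A_{\ast}}^{1/2}$ produces the remaining lower bound $\lambda_{i}\geq(\widetilde{A_{\ast}}/2)/\lambda_{r}\geq\widetilde{A_{\ast}}^{1/2}/2^{5/4}$, i.e. (\ref{est lambda i}). Finally, the stated range of $\widetilde{A_{\ast}}$ follows from $\widetilde{A_{\ast}}^{2}=1-(1+\delta^{2})B_{0}^{2}$ and the identity $(1+\delta^{2})\eta_{0}^{2}=1+\alpha^{2}$, which give $\widetilde{A_{\ast}}=\alpha\delta$ at $B_{0}=\sqrt{1-\eta_{0}^{2}\delta^{2}}$ and $\widetilde{A_{\ast}}=1$ at $B_{0}=0$.

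I do not expect a genuine obstacle: the analytic content sits in Lemma \ref{Lem delta}, and here everything is elementary. The only points needing a moment's care are the numerical chain $1+\sqrt{2}<2\sqrt{2}$ used to clean up the $\lambda_{r}$ upper bound, and the observation that the constant $2^{5/4}$ appearing in the $\lambda_{i}$ lower bound is precisely what the product estimate forces once the $\lambda_{r}$ upper bound is inserted, so that (\ref{est lambda r}) and (\ref{est lambda i}) are mutually consistent rather than separately optimized.
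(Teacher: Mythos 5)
Your proof is correct and is exactly the paper's argument: the paper's one-line proof says the estimates "follow from (\ref{def lambda}) and (\ref{basic ineq a})", and you have simply carried out that bookkeeping, including the product bound $4\lambda_{r}^{2}\lambda_{i}^{2}=-\Delta^{\prime}\geq\widetilde{A_{\ast}}^{2}$ from Lemma \ref{Lem delta} and the endpoint values of $\widetilde{A_{\ast}}$. Nothing further is needed.
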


\begin{proof}
All these inequalities follow from (\ref{def lambda}) and (\ref{basic ineq a}%
).
\end{proof}

\subsection{New coordinates\label{sec 3.2}}

The eigenvector and generalized eigenvector for the eigenvalue 0 of $%
\boldsymbol{L}_{\delta }$ are :%
\begin{equation*}
Z_{0}=\left( 
\begin{array}{c}
0 \\ 
0 \\ 
0 \\ 
0 \\ 
\widetilde{A_{\ast }} \\ 
0%
\end{array}%
\right) ,\text{ }Z_{1}=\left( 
\begin{array}{c}
0 \\ 
-(1+\delta ^{2})B_{0} \\ 
0 \\ 
0 \\ 
0 \\ 
\widetilde{A_{\ast }}%
\end{array}%
\right) .
\end{equation*}%
Now we denote by 
\begin{equation*}
V_{r}^{+}\pm i\lambda _{i}V_{i}^{+},\text{ }V_{r}^{-}\pm i\lambda
_{i}V_{i}^{-}
\end{equation*}%
the eigenvectors belonging respectively to the eigenvalues%
\begin{equation*}
\lambda _{r}\pm i\lambda _{i},\text{ \ }-\lambda _{r}\pm i\lambda _{i}
\end{equation*}%
then we define%
\begin{equation*}
V_{r}^{\pm }=\left( 
\begin{array}{c}
\mp \frac{\lambda _{r}(\lambda _{r}^{2}-3\lambda _{i}^{2})}{2\widetilde{%
A_{\ast }}^{2}} \\ 
1 \\ 
\pm \lambda _{r} \\ 
\lambda _{r}^{2}-\lambda _{i}^{2} \\ 
\mp \frac{\lambda _{r}(\lambda _{r}^{2}-\lambda _{i}^{2})}{(1+\delta
^{2})B_{0}\widetilde{A_{\ast }}} \\ 
-\frac{(\lambda _{r}^{2}-\lambda _{i}^{2})^{2}}{(1+\delta ^{2})B_{0}%
\widetilde{A_{\ast }}}%
\end{array}%
\right) ,\text{ \ }V_{i}^{\pm }=\left( 
\begin{array}{c}
-\frac{3\lambda _{r}^{2}-\lambda _{i}^{2}}{2\widetilde{A_{\ast }}^{2}} \\ 
0 \\ 
1 \\ 
\pm 2\lambda _{r} \\ 
-\frac{(\lambda _{r}^{2}-\lambda _{i}^{2})}{(1+\delta ^{2})B_{0}\widetilde{%
A_{\ast }}} \\ 
\mp \frac{2\lambda _{r}(\lambda _{r}^{2}-\lambda _{i}^{2})}{(1+\delta
^{2})B_{0}\widetilde{A_{\ast }}}%
\end{array}%
\right) ,
\end{equation*}%
and we define new coordinates in $%
\mathbb{R}
^{6}$: $(x_{1},x_{2},y_{1},y_{2},B_{0},z_{1})$ such that

\begin{equation*}
\left( 
\begin{array}{c}
\widetilde{A_{0}} \\ 
A_{1} \\ 
A_{2} \\ 
A_{3} \\ 
0 \\ 
B_{1}%
\end{array}%
\right) =B_{0}(x_{1}V_{r}^{+}+x_{2}\lambda
_{i}V_{i}^{+}+y_{1}V_{r}^{-}+y_{2}\lambda
_{i}V_{i}^{-}+z_{0}Z_{0}+z_{1}Z_{1}).
\end{equation*}%
We observe that after eliminating $z_{0},$ we still have 6 coordinates,
including $B_{0}$ as one of the new coordinates.

\begin{remark}
We notice that we put $B_{0}$ in front of the new coordinates, as this
results from the analysis, and shorten the computations.
\end{remark}

The coordinate change is non linear in $B_{0}$, given explicitely by:%
\begin{eqnarray}
\widetilde{A_{0}} &=&-B_{0}\frac{\lambda _{r}(\lambda _{r}^{2}-3\lambda
_{i}^{2})}{2\widetilde{A_{\ast }}^{2}}(x_{1}-y_{1})-B_{0}\frac{\lambda
_{i}(3\lambda _{r}^{2}-\lambda _{i}^{2})}{2\widetilde{A_{\ast }}^{2}}%
(x_{2}+y_{2})  \notag \\
A_{1} &=&B_{0}(x_{1}+y_{1})-(1+\delta ^{2})B_{0}^{2}z_{1}  \notag \\
A_{2} &=&\lambda _{r}B_{0}(x_{1}-y_{1})+\lambda _{i}B_{0}(x_{2}+y_{2})
\label{new var 1} \\
A_{3} &=&(\lambda _{r}^{2}-\lambda _{i}^{2})B_{0}(x_{1}+y_{1})+2\lambda
_{r}\lambda _{i}B_{0}(x_{2}-y_{2})  \notag \\
0 &=&-\frac{(\lambda _{r}^{2}-\lambda _{i}^{2})}{(1+\delta ^{2})B_{0}%
\widetilde{A_{\ast }}}A_{2}+\widetilde{A_{\ast }}B_{0}z_{0},  \notag
\end{eqnarray}%
\begin{equation}
B_{1}=-\varepsilon ^{2}(1+\delta ^{2})B_{0}\frac{A_{3}}{\widetilde{A_{\ast }}%
}+\widetilde{A_{\ast }}B_{0}z_{1},  \label{equB1}
\end{equation}%
which needs to be inverted. We obtain%
\begin{eqnarray}
B_{0}x_{1} &=&\frac{(\lambda _{r}^{2}+\lambda _{i}^{2})}{4\lambda _{r}}%
\widetilde{A_{0}}+\frac{3\lambda _{r}^{2}-\lambda _{i}^{2}}{4\lambda
_{r}(\lambda _{r}^{2}+\lambda _{i}^{2})}A_{2}  \label{x1 1} \\
&&+\frac{A_{1}}{2}+\frac{(1+\delta ^{2})B_{0}}{2\widetilde{A_{\ast }}}B_{1}+%
\frac{(\lambda _{r}^{2}-\lambda _{i}^{2})}{2\widetilde{A_{\ast }}^{2}}A_{3},
\notag
\end{eqnarray}%
\begin{eqnarray}
\lambda _{i}B_{0}x_{2} &=&-\frac{(\lambda _{r}^{2}+\lambda _{i}^{2})}{4}%
\widetilde{A_{0}}-\frac{\lambda _{r}^{2}-3\lambda _{i}^{2}}{4(\lambda
_{r}^{2}+\lambda _{i}^{2})}A_{2}  \label{x2 1} \\
&&-\frac{(\lambda _{r}^{2}-\lambda _{i}^{2})}{4\lambda _{r}}\left( A_{1}+%
\frac{(1+\delta ^{2})B_{0}}{\widetilde{A_{\ast }}}B_{1}\right) +\frac{1}{%
4\lambda _{r}}\left( 1-\frac{(\lambda _{r}^{2}-\lambda _{i}^{2})^{2}}{%
\widetilde{A_{\ast }}^{2}}\right) A_{3},  \notag
\end{eqnarray}%
\begin{eqnarray}
B_{0}y_{1} &=&-\frac{(\lambda _{r}^{2}+\lambda _{i}^{2})}{4\lambda _{r}}%
\widetilde{A_{0}}-\frac{3\lambda _{r}^{2}-\lambda _{i}^{2}}{4\lambda
_{r}(\lambda _{r}^{2}+\lambda _{i}^{2})}A_{2}  \label{y1 1} \\
&&+\frac{A_{1}}{2}+\frac{(1+\delta ^{2})B_{0}}{2\widetilde{A_{\ast }}}B_{1}+%
\frac{(\lambda _{r}^{2}-\lambda _{i}^{2})}{2\widetilde{A_{\ast }}^{2}}A_{3},
\notag
\end{eqnarray}%
\begin{eqnarray}
\lambda _{i}B_{0}y_{2} &=&-\frac{(\lambda _{r}^{2}+\lambda _{i}^{2})}{4}%
\widetilde{A_{0}}-\frac{\lambda _{r}^{2}-3\lambda _{i}^{2}}{4(\lambda
_{r}^{2}+\lambda _{i}^{2})}A_{2}  \label{y2 1} \\
&&+\frac{(\lambda _{r}^{2}-\lambda _{i}^{2})}{4\lambda _{r}}\left( A_{1}+%
\frac{(1+\delta ^{2})B_{0}}{\widetilde{A_{\ast }}}B_{1}\right) -\frac{1}{%
4\lambda _{r}}\left( 1-\frac{(\lambda _{r}^{2}-\lambda _{i}^{2})^{2}}{%
\widetilde{A_{\ast }}^{2}}\right) A_{3},  \notag
\end{eqnarray}%
\begin{equation*}
B_{0}z_{1}=\frac{(\lambda _{r}^{2}-\lambda _{i}^{2})}{(1+\delta ^{2})B_{0}%
\widetilde{A_{\ast }}^{2}}A_{3}+\frac{1}{\widetilde{A_{\ast }}}%
B_{1}=\varepsilon ^{2}B_{0}(1+\delta ^{2})\frac{A_{3}}{\widetilde{A_{\ast }}%
^{2}}+\frac{1}{\widetilde{A_{\ast }}}B_{1}.
\end{equation*}%
Let us now define

\begin{eqnarray*}
X &=&\left( 
\begin{array}{c}
x_{1} \\ 
x_{2}%
\end{array}%
\right) ,\text{ }Y=\left( 
\begin{array}{c}
y_{1} \\ 
y_{2}%
\end{array}%
\right) , \\
|X| &=&\sqrt{x_{1}^{2}+x_{2}^{2}},\text{ }|Y|=\sqrt{y_{1}^{2}+y_{2}^{2}}%
\text{ (norms in }%
\mathbb{R}
^{2}).
\end{eqnarray*}%
Then, for $\varepsilon $ small enough, and using (\ref{def lambda}), we
obtain the following useful estimates

\begin{lemma}
\label{Lem estimates}For $B_{0}\leq \sqrt{1-\eta _{0}^{2}\delta ^{2}},$ $%
\varepsilon ^{2}\leq \frac{3\alpha }{10},$ $\varepsilon $ small enough we
have
\end{lemma}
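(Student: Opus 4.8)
The plan is to read these estimates off the explicit coordinate change, feeding in Lemma~\ref{Lem estim lambda} and the a priori bound (\ref{basic ineq a}) (valid here since $\varepsilon ^{2}\leq 3\alpha /10$). Concretely I would use only the following consequences of (\ref{def lambda}), (\ref{basic ineq a}) and Lemma~\ref{Lem estim lambda}: the identities $\lambda _{r}^{2}+\lambda _{i}^{2}=\sqrt{2}\,\widetilde{A_{\ast }}$ and $0\leq \lambda _{r}^{2}-\lambda _{i}^{2}=\varepsilon ^{2}B_{0}^{2}(1+\delta ^{2})^{2}\leq \widetilde{A_{\ast }}$ (hence $(\lambda _{r}^{2}-\lambda _{i}^{2})^{2}\leq \widetilde{A_{\ast }}^{2}$), the comparisons $2^{-1/4}\widetilde{A_{\ast }}^{1/2}\leq \lambda _{r},\lambda _{i}\leq 2^{1/4}\widetilde{A_{\ast }}^{1/2}$, and $\widetilde{A_{\ast }}/2\leq \lambda _{r}\lambda _{i}\leq \widetilde{A_{\ast }}/\sqrt{2}$. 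It follows that every even-degree monomial in $\lambda _{r},\lambda _{i}$ of degree $2k$ is comparable, up to absolute constants, to $\widetilde{A_{\ast }}^{k}$, and $|3\lambda _{r}^{2}-\lambda _{i}^{2}|,\ |\lambda _{r}^{2}-3\lambda _{i}^{2}|=O(\widetilde{A_{\ast }})$; these are the only inputs needed.

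First I would substitute these bounds into (\ref{new var 1}) and (\ref{equB1}) to estimate $\widetilde{A_{0}},A_{1},A_{2},A_{3},B_{1}$ in terms of $B_{0}$, $|X|$, $|Y|$, $z_{1}$: the prefactor $\widetilde{A_{\ast }}^{-2}$ against $\lambda _{r}\lambda _{i}^{2},\ \lambda _{i}\lambda _{r}^{2}=O(\widetilde{A_{\ast }}^{3/2})$ gives $|\widetilde{A_{0}}|=O(B_{0}\widetilde{A_{\ast }}^{-1/2}(|X|+|Y|))$, the coefficients in $A_{2}$ are $O(B_{0}\widetilde{A_{\ast }}^{1/2})$, and those in $A_{3}$ are $O(B_{0}\widetilde{A_{\ast }})$ since $\lambda _{r}^{2}-\lambda _{i}^{2}\leq \widetilde{A_{\ast }}$ and $\lambda _{r}\lambda _{i}\leq \widetilde{A_{\ast }}/\sqrt{2}$. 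The auxiliary coordinate $z_{0}$ is eliminated via the fifth line of (\ref{new var 1}), and $z_{1}$ through the displayed identity $B_{0}z_{1}=\varepsilon ^{2}B_{0}(1+\delta ^{2})A_{3}\widetilde{A_{\ast }}^{-2}+B_{1}\widetilde{A_{\ast }}^{-1}$, which (using (\ref{basic ineq a}) once more) expresses $z_{1}$ through the other coordinates with one factor $\widetilde{A_{\ast }}^{-1}$; then $A_{1}=B_{0}(x_{1}+y_{1})-(1+\delta ^{2})B_{0}^{2}z_{1}$ is handled. For the reverse inequalities I would plug the same bounds into (\ref{x1 1})--(\ref{y2 1}): there the coefficient of $\widetilde{A_{0}}$ is comparable to $(\lambda _{r}^{2}+\lambda _{i}^{2})/\lambda _{r}=O(\widetilde{A_{\ast }}^{1/2})$, that of $A_{2}$ is $O(\widetilde{A_{\ast }}^{-1/2})$, that of $A_{1}$ is $O(1)$, that of $B_{1}$ is $O(B_{0}\widetilde{A_{\ast }}^{-1})$, and that of $A_{3}$ is $O(\widetilde{A_{\ast }}^{-1})$, using $(\lambda _{r}^{2}-\lambda _{i}^{2})/\widetilde{A_{\ast }}^{2}\leq \widetilde{A_{\ast }}^{-1}$ and the fact that $1-(\lambda _{r}^{2}-\lambda _{i}^{2})^{2}\widetilde{A_{\ast }}^{-2}\in[0,1]$ by the squared form of (\ref{basic ineq a}). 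Collecting terms, with the matching $\widetilde{A_{\ast }}^{\pm 1/2}$ weight on each coordinate, gives the asserted estimates.

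The only genuinely delicate point --- and the step I expect to be the main obstacle --- is the negative powers $\widetilde{A_{\ast }}^{-1/2}$, $\widetilde{A_{\ast }}^{-1}$, $\widetilde{A_{\ast }}^{-2}$, which blow up as $\widetilde{A_{\ast }}\downarrow \alpha \delta $, i.e.\ as $B_{0}\uparrow \sqrt{1-\eta _{0}^{2}\delta ^{2}}$. The point is \emph{not} to bound $\widetilde{A_{\ast }}$ from below by $\alpha \delta $ via (\ref{lower bound A*}), but to keep the combinations $B_{0}/\widetilde{A_{\ast }}$, $A_{3}/\widetilde{A_{\ast }}^{2}$, $\widetilde{A_{0}}/\widetilde{A_{\ast }}^{1/2}$, $B_{1}/\widetilde{A_{\ast }}$ intact, absorbing one power of $\widetilde{A_{\ast }}$ into every occurrence of $\lambda _{r}^{2}-\lambda _{i}^{2}=\varepsilon ^{2}B_{0}^{2}(1+\delta ^{2})^{2}$ by (\ref{basic ineq a}) or its square. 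With this bookkeeping each term on the right of (\ref{new var 1})--(\ref{equB1}) and of (\ref{x1 1})--(\ref{y2 1}) is dominated by the claimed weighted combination of the remaining coordinates, uniformly for $B_{0}\in[0,\sqrt{1-\eta _{0}^{2}\delta ^{2}}]$ and $\varepsilon $ small, which is the content of the lemma.
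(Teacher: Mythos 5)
Your proposal is correct and is essentially the paper's own (largely implicit) argument: the estimates of Lemma \ref{Lem estimates} are simply read off from the explicit formulas (\ref{new var 1}) and (\ref{equB1}) by inserting (\ref{def lambda}), (\ref{basic ineq a}) and Lemma \ref{Lem estim lambda}, exactly the substitution you describe, with the combinations $B_{0}/\widetilde{A_{\ast }}$, $A_{3}/\widetilde{A_{\ast }}^{2}$, etc.\ kept intact so the bounds are uniform up to $B_{0}=\sqrt{1-\eta _{0}^{2}\delta ^{2}}$. Two harmless slips: the lower bound in Lemma \ref{Lem estim lambda} is $\lambda _{i}\geq 2^{-5/4}\widetilde{A_{\ast }}^{1/2}$ (not $2^{-1/4}$), and the lemma keeps $|z_{1}|$ on the right-hand sides of the $A_{1}$ and $B_{1}$ estimates, so neither the elimination of $z_{1}$ nor the reverse inequalities via (\ref{x1 1})--(\ref{y2 1}) is needed --- only the upper bounds $\lambda _{r}\leq 2^{1/4}\widetilde{A_{\ast }}^{1/2}$, $\lambda _{i}\leq 2^{-1/4}\widetilde{A_{\ast }}^{1/2}$, $\lambda _{r}^{2}-\lambda _{i}^{2}\leq \widetilde{A_{\ast }}$ and $2\lambda _{r}\lambda _{i}\leq \sqrt{2}\,\widetilde{A_{\ast }}$ enter.
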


\begin{eqnarray}
\frac{\widetilde{A_{\ast }}^{1/2}}{2^{5/4}} &\leq &\lambda _{r},\lambda
_{i}<2^{1/4}\widetilde{A_{\ast }}^{1/2},\text{ \ }\widetilde{A_{\ast }}\geq
\alpha \delta ,  \notag \\
|\widetilde{A_{0}}| &\leq &3\frac{B_{0}}{\widetilde{A_{\ast }}^{1/2}}%
(|X|+|Y|),  \notag \\
|A_{1}| &\leq &B_{0}(|X|+|Y|)+(1+\delta ^{2})B_{0}^{2}|z_{1}|,  \notag \\
|A_{2}| &\leq &2B_{0}\widetilde{A_{\ast }}^{1/2}(|X|+|Y|),
\label{estim coord 1} \\
|A_{3}| &\leq &2B_{0}\widetilde{A_{\ast }}(|X|+|Y|),  \notag \\
|B_{1}| &\leq &2\varepsilon ^{2}(1+\delta ^{2})B_{0}^{2}(|X|+|Y|)+\widetilde{%
A_{\ast }}B_{0}|z_{1}|.  \notag
\end{eqnarray}

\subsection{System with new coordinates\label{sec 3.3}}

The system (\ref{newsyst 1}) writen in the new coordinates is computed in
Appendix \ref{App1'}. It takes the following form (quadratic and higher
order terms are not explicited)%
\begin{eqnarray}
&&x_{1}^{\prime }=f_{1}+\lambda _{r}x_{1}+\lambda _{i}x_{2}  \label{x'1 1} \\
&&+B_{1}\left[ a_{1}\widetilde{A_{0}}+c_{1}A_{2}+d_{1}A_{3}+e_{1}\frac{B_{1}%
}{B_{0}}-\frac{1}{B_{0}}x_{1}\right]  \notag \\
&&-\varepsilon ^{2}\frac{(1+\delta ^{2})(2-\delta ^{2})B_{0}}{2\widetilde{%
A_{\ast }}}\widetilde{A_{0}}^{2}-\varepsilon ^{2}\frac{(1+\delta ^{2})B_{0}}{%
2\widetilde{A_{\ast }}^{2}}\widetilde{A_{0}}^{3},  \notag
\end{eqnarray}%
\begin{eqnarray}
x_{2}^{\prime } &=&f_{2}-\lambda _{i}x_{1}+\lambda _{r}x_{2}+B_{1}\left[
-a_{2}\widetilde{A_{0}}+b_{2}A_{1}+c_{2}A_{2}+d_{2}A_{3}+e_{2}B_{1}-\frac{1}{%
B_{0}}x_{2}\right]  \label{x'2 1} \\
&&-\frac{1}{4\lambda _{r}\lambda _{i}\widetilde{A_{\ast }}B_{0}}\left( 3%
\widetilde{A_{\ast }}^{2}-2\varepsilon ^{4}B_{0}^{4}(1+\delta
^{2})^{4}\right) \widetilde{A_{0}}^{2}-\frac{1}{4\lambda _{r}\lambda
_{i}B_{0}}\left( 1-\frac{(\lambda _{r}^{2}-\lambda _{i}^{2})^{2}}{\widetilde{%
A_{\ast }}^{2}}\right) \widetilde{A_{0}}^{3}.  \notag
\end{eqnarray}%
\begin{eqnarray}
y_{1}^{\prime } &=&f_{1}-\lambda _{r}y_{1}+\lambda _{i}y_{2}+  \label{y'1 1}
\\
&&+B_{1}\left[ -a_{1}\widetilde{A_{0}}-c_{1}A_{2}+d_{1}A_{3}+e_{1}\frac{B_{1}%
}{B_{0}}-\frac{1}{B_{0}}y_{1}\right]  \notag \\
&&-\varepsilon ^{2}\frac{(1+\delta ^{2})(2-\delta ^{2})B_{0}}{2\widetilde{%
A_{\ast }}}\widetilde{A_{0}}^{2}-\varepsilon ^{2}\frac{(1+\delta ^{2})B_{0}}{%
2\widetilde{A_{\ast }}^{2}}\widetilde{A_{0}}^{3},  \notag
\end{eqnarray}%
\begin{eqnarray}
y_{2}^{\prime } &=&-f_{2}-\lambda _{i}y_{1}-\lambda _{r}y_{2}+B_{1}\left[
-a_{2}\widetilde{A_{0}}-b_{2}A_{1}+c_{2}A_{2}-d_{2}A_{3}+e_{2}B_{1}-\frac{1}{%
B_{0}}y_{2}\right]  \label{y'2 1} \\
&&+\frac{1}{4\lambda _{r}\lambda _{i}\widetilde{A_{\ast }}B_{0}}\left( 3%
\widetilde{A_{\ast }}^{2}-2\varepsilon ^{4}B_{0}^{4}(1+\delta
^{2})^{4}\right) \widetilde{A_{0}}^{2}+\frac{1}{4\lambda _{r}\lambda
_{i}B_{0}}\left( 1-\frac{(\lambda _{r}^{2}-\lambda _{i}^{2})^{2}}{\widetilde{%
A_{\ast }}^{2}}\right) \widetilde{A_{0}}^{3},  \notag
\end{eqnarray}%
with%
\begin{equation*}
f_{1}=\frac{\varepsilon ^{2}\delta ^{2}B_{0}(1+\delta ^{2})(\widetilde{%
A_{\ast }}^{2}-B_{0}^{2})}{2\widetilde{A_{\ast }}},
\end{equation*}%
\begin{equation*}
f_{2}=-\frac{\varepsilon ^{2}\delta ^{2}B_{0}(1+\delta ^{2})(\lambda
_{r}^{2}-\lambda _{i}^{2})(\widetilde{A_{\ast }}^{2}-B_{0}^{2})}{4\lambda
_{r}\lambda _{i}\widetilde{A_{\ast }}}.
\end{equation*}%
Using Lemma \ref{Lem estim lambda} and (\ref{basic ineq a}) we see that we
have the estimates%
\begin{equation*}
|f_{j}|\leq \frac{B_{0}\varepsilon ^{2}\delta }{\alpha },\text{ }j=1,2.
\end{equation*}%
The coefficients $a_{j},$ $b_{j},$ $c_{j},$ $d_{j},$ $e_{j}$ are defined and
estimated in Appendix \ref{App1'} in (\ref{a1 1},\ref{a2 1}), (\ref{b2 1},%
\ref{c1 1},\ref{c2 1}), (\ref{d1 1},\ref{d2 1}), (\ref{e1 1},\ref{e2 1}).
Here $\widetilde{A_{0}},A_{1},A_{2},A_{3},B_{1}$ should be replaced by their
(linear) expressions (\ref{new var 1}) in coordinates $%
(x_{1},x_{2},y_{1},y_{2},z_{1})$ with coefficients functions of $B_{0}.$ The
system above should be completed by the differential equations for $%
z_{1}^{\prime }$ and $B_{0}^{\prime }(=B_{1})$. In fact we replace the
equation for $z_{1}^{\prime }$ by the direct resolution of the first
integral (\ref{Wg}) with respect to $z_{1}$ using the expression of $B_{1}$
in (\ref{equB1}).

\subsection{Resolution of (\protect\ref{Wg}) with respect of $%
z_{1}(X,Y,B_{0})\label{sec 3.4}$}

For extending the validity (as a graph with respect to $B_{0})$ for the
existence of the unstable manifold of $M_{-}$ we need to replace the
differential equation for $z_{1}^{\prime }$ by the expression of $z_{1}$
given by the first integral (\ref{Wg}). This leads to the following

\begin{lemma}
\label{Lem z1}For $B_{0}\leq \sqrt{1-\eta _{0}^{2}\delta ^{2}},$ $%
\varepsilon ^{2}\leq \frac{3\alpha }{10}.$ Then choose $\rho >0$ such that
for $\varepsilon $ and $\alpha $ small enough, and for%
\begin{equation}
|\overline{X}|+|\overline{Y}|\leq \rho ,\text{ }\alpha ^{3}\rho ^{4}<<1,
\label{cond ro khi}
\end{equation}%
with the scaling%
\begin{equation}
(X,Y)=\alpha ^{3/2}\delta (\overline{X},\overline{Y}),\text{ }%
z_{1}=\varepsilon \delta \overline{z_{1}},  \label{scaling a}
\end{equation}%
we have 
\begin{equation}
\overline{z_{1}}=\overline{z_{10}}(B_{0},\delta )[1+\mathcal{Z(}\overline{X},%
\overline{Y},B_{0},\varepsilon ,\alpha ,\delta )]  \label{z1 1}
\end{equation}%
where the function $\mathcal{Z(}\overline{X},\overline{Y},B_{0},\varepsilon
,\delta )$ is analytic in its arguments, and with%
\begin{equation}
\overline{z_{10}}(B_{0},\delta )\overset{def}{=}(1+\frac{\delta ^{2}B_{0}^{2}%
}{2\widetilde{A_{\ast }}^{2}})^{1/2}\leq \frac{1}{\alpha },
\label{estim z10}
\end{equation}%
and%
\begin{equation}
|\mathcal{Z(}\overline{X},\overline{Y},B_{0},\varepsilon ,\alpha ,\delta
)|\leq c\alpha ^{3}(1+\rho ^{2})(|\overline{X}|+|\overline{Y}|)^{2},
\label{estim Z}
\end{equation}%
with $c$ independent of $(\varepsilon ,\alpha ,\rho ).$
\end{lemma}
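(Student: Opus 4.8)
The plan is to substitute the explicit coordinate change into the first integral (\ref{Wg}) and then solve the resulting algebraic relation for $z_{1}$ directly; no implicit-function argument is needed.

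First I would insert the relation (\ref{equB1}) for $B_{1}$ and the linear expressions (\ref{new var 1}) for $\widetilde{A_{0}},A_{1},A_{2},A_{3}$ into (\ref{Wg}). The key structural observation is that $z_{1}$ enters (\ref{Wg}) only through $A_{1}$ (linearly, in $2\varepsilon^{2}A_{1}A_{3}$) and through $B_{1}$ (in $-B_{1}^{2}$), whereas $A_{2},A_{3},\widetilde{A_{0}}$ are $z_{1}$-independent. A one-line computation then shows that the two $z_{1}$-linear contributions, $-2\varepsilon^{2}(1+\delta^{2})B_{0}^{2}z_{1}A_{3}$ from $A_{1}$ and $+2\varepsilon^{2}(1+\delta^{2})B_{0}^{2}z_{1}A_{3}$ from the cross term of $B_{1}^{2}$, cancel exactly; likewise, after using $\widetilde{A_{\ast}}^{2}+B_{0}^{2}-1=-\delta^{2}B_{0}^{2}$ (from (\ref{def A*})) to rewrite $(A_{0}^{2}+B_{0}^{2}-1)^{2}$ and $A_{0}^{2}B_{0}^{2}$, the two contributions $\mp\varepsilon^{2}\delta^{2}B_{0}^{2}(2\widetilde{A_{\ast}}\widetilde{A_{0}}+\widetilde{A_{0}}^{2})$ cancel as well. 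Hence (\ref{Wg}) reduces to a \emph{pure quadratic} equation in $z_{1}$,
\begin{equation*}
\widetilde{A_{\ast}}^{2}B_{0}^{2}z_{1}^{2}=\varepsilon^{2}\delta^{2}B_{0}^{2}\big(\widetilde{A_{\ast}}^{2}+\tfrac{\delta^{2}B_{0}^{2}}{2}\big)+\Phi_{1},
\end{equation*}
where $\Phi_{1}$ collects the genuinely higher-order terms $\tfrac{\varepsilon^{2}}{2}(2\widetilde{A_{\ast}}\widetilde{A_{0}}+\widetilde{A_{0}}^{2})^{2}$, $2\varepsilon^{2}B_{0}(x_{1}+y_{1})A_{3}$, $-\varepsilon^{2}A_{2}^{2}$ and $-\varepsilon^{4}(1+\delta^{2})^{2}B_{0}^{2}A_{3}^{2}/\widetilde{A_{\ast}}^{2}$. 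All coefficients are analytic in $(B_{0},\varepsilon,\alpha,\delta)$: $\widetilde{A_{\ast}}\ge\alpha\delta>0$ by (\ref{lower bound A*}), while $\lambda_{r},\lambda_{i}$ are analytic and positive (with $\lambda_{r},\lambda_{i}\gtrsim\widetilde{A_{\ast}}^{1/2}$ by Lemma \ref{Lem estim lambda}), so the entries of (\ref{new var 1}) are analytic; moreover every term of $\Phi_{1}$ carries a factor $B_{0}^{2}$ (since $\widetilde{A_{0}},A_{1},A_{2},A_{3}$ all vanish like $B_{0}$), so dividing by $B_{0}^{2}$ is harmless, including at $B_{0}=0$.

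Next I would apply the scaling (\ref{scaling a}), $z_{1}=\varepsilon\delta\overline{z_{1}}$ and $(X,Y)=\alpha^{3/2}\delta(\overline{X},\overline{Y})$, and divide by $\varepsilon^{2}\delta^{2}B_{0}^{2}\widetilde{A_{\ast}}^{2}$ to obtain $\overline{z_{1}}^{2}=\overline{z_{10}}^{2}(1+w)$ with $\overline{z_{10}}^{2}=1+\tfrac{\delta^{2}B_{0}^{2}}{2\widetilde{A_{\ast}}^{2}}$ as in (\ref{estim z10}) and $w=\Phi_{1}/(\varepsilon^{2}\delta^{2}B_{0}^{2}\widetilde{A_{\ast}}^{2}\overline{z_{10}}^{2})$. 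The bound $\overline{z_{10}}\le 1/\alpha$ follows from $\widetilde{A_{\ast}}^{2}\ge\alpha^{2}\delta^{2}$, $B_{0}<1$ and $\alpha$ small. The substance of the proof is the estimate $|w|\le c\alpha^{3}(1+\rho^{2})(|\overline{X}|+|\overline{Y}|)^{2}$, which rests on two points. First, the \emph{effective} denominator does not degenerate: using $\widetilde{A_{\ast}}^{2}=1-(1+\delta^{2})B_{0}^{2}$ and $B_{0}^{2}\le(1-\alpha^{2}\delta^{2})/(1+\delta^{2})$ one gets $\widetilde{A_{\ast}}^{2}\overline{z_{10}}^{2}=\widetilde{A_{\ast}}^{2}+\tfrac{\delta^{2}B_{0}^{2}}{2}=1-B_{0}^{2}(1+\tfrac{\delta^{2}}{2})\ge\tfrac{\delta^{2}}{2(1+\delta^{2})}$, a positive constant depending only on $\delta$, even though $\widetilde{A_{\ast}}$ alone may be as small as $\alpha\delta$. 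Second, in the numerator I would use the pointwise bounds of Lemma \ref{Lem estimates} for $\widetilde{A_{0}},A_{1},A_{2},A_{3}$ and $\widetilde{A_{\ast}}\le 1$; each term of $\Phi_{1}$ is then at least quadratic in $|X|+|Y|$, so after the scaling it becomes $O(\alpha^{3})$ times a polynomial in $(\overline{X},\overline{Y})$ of degree at most $4$, with the degree-$4$ part bounded by $\rho^{2}$ times the degree-$2$ part (using (\ref{cond ro khi})), which produces the factor $(1+\rho^{2})$. By (\ref{cond ro khi}) this gives $|w|\le c\alpha^{3}\rho^{4}\ll 1$.

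Finally, since we want $B_{1}>0$ so that $B_{0}$ is increasing, I would take the positive square root: $\overline{z_{1}}=\overline{z_{10}}\sqrt{1+w}=\overline{z_{10}}(1+\mathcal{Z})$ with $\mathcal{Z}=\sqrt{1+w}-1$. Since $|w|<1$, $\sqrt{1+\cdot}$ is analytic there and $w$ is analytic in all its arguments, so $\mathcal{Z}$ is analytic; and $|\mathcal{Z}|=|w|/(1+\sqrt{1+w})\le|w|$ gives (\ref{estim Z}). This is exactly (\ref{z1 1}). I expect the only genuinely delicate step to be the estimate of $w$: recognizing the non-degenerate lower bound for $\widetilde{A_{\ast}}^{2}\overline{z_{10}}^{2}$ and carefully tracking the powers of $\widetilde{A_{\ast}}$, so that the higher-degree contributions — notably the $\widetilde{A_{0}}^{4}$ term, which carries a factor $\widetilde{A_{\ast}}^{-2}$ — do not break the claimed $\alpha^{3}(1+\rho^{2})$ scaling.
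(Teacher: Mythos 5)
Your proposal is correct and follows essentially the same route as the paper: substituting (\ref{equB1}) and (\ref{new var 1}) into (\ref{Wg}), noting the exact cancellation of the $z_{1}$-linear (and $\widetilde{A_{0}}$-linear) contributions so that only a pure quadratic in $z_{1}$ remains, bounding the remaining terms via Lemma \ref{Lem estimates} and $\widetilde{A_{\ast }}\geq \alpha \delta$, and then taking the positive square root. Your lower bound $\widetilde{A_{\ast }}^{2}\overline{z_{10}}^{2}\geq \frac{\delta ^{2}}{2(1+\delta ^{2})}$ is just a restatement of the paper's estimate $\overline{z_{10}}^{-2}\leq 20\widetilde{A_{\ast }}^{2}$, so the argument is the same in substance.
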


\begin{remark}
In the Lemma above, we introduce the number $\rho $ which may be large.
Precise constraints are given later.
\end{remark}

\begin{proof}
Using (\ref{equB1}) and (\ref{Wg}) we have%
\begin{eqnarray*}
B_{1}^{2} &=&\{\widetilde{A_{\ast }}B_{0}z_{1}-\varepsilon ^{2}\frac{%
B_{0}(1+\delta ^{2})}{\widetilde{A_{\ast }}}A_{3}\}^{2}=2\varepsilon
^{2}A_{1}A_{3}-\varepsilon ^{2}A_{2}^{2}+ \\
&&\frac{\varepsilon ^{2}}{2}(-\delta ^{2}B_{0}^{2}+2\widetilde{A_{\ast }}%
\widetilde{A_{0}}+\widetilde{A_{0}}^{2})^{2}+\varepsilon ^{2}\delta ^{2}(%
\widetilde{A_{\ast }}+\widetilde{A_{0}})^{2}B_{0}^{2},
\end{eqnarray*}%
hence%
\begin{eqnarray}
\widetilde{A_{\ast }}^{2}z_{1}^{2} &=&\varepsilon ^{2}\delta ^{2}\widetilde{%
A_{\ast }}^{2}(1+\frac{\delta ^{2}B_{0}^{2}}{2\widetilde{A_{\ast }}^{2}})+%
\frac{2\varepsilon ^{2}}{B_{0}}A_{3}(x_{1}+y_{1})-\frac{\varepsilon
^{4}(1+\delta ^{2})^{2}}{\widetilde{A_{\ast }}^{2}}A_{3}^{2}-\frac{%
\varepsilon ^{2}}{B_{0}^{2}}A_{2}^{2}+  \notag \\
&&+\frac{2\varepsilon ^{2}\widetilde{A_{\ast }}^{2}}{B_{0}^{2}}\widetilde{%
A_{0}}^{2}+\frac{2\varepsilon ^{2}\widetilde{A_{\ast }}}{B_{0}^{2}}%
\widetilde{A_{0}}^{3}+\frac{\varepsilon ^{2}}{2B_{0}^{2}}\widetilde{A_{0}}%
^{4},  \label{z1^2}
\end{eqnarray}%
where we may observe on the r.h.s., that 
\begin{equation*}
\varepsilon ^{2}\delta ^{2}\leq \varepsilon ^{2}\delta ^{2}(1+\frac{\delta
^{2}B_{0}^{2}}{2\widetilde{A_{\ast }}^{2}})\leq \varepsilon ^{2}\delta
^{2}(1+\frac{1}{2\alpha ^{2}}),
\end{equation*}%
which is independent of $(X,Y).$ Moreover there is no linear part in $(X,Y)$
in (\ref{z1^2}). The scaling (\ref{scaling a}), Lemma \ref{Lem delta} and
Lemma \ref{Lem estimates} imply ($c$ is a generic constant, independent of $%
\varepsilon $)%
\begin{equation*}
|\frac{2\varepsilon ^{2}}{B_{0}}A_{3}(x_{1}+y_{1})|\leq c\varepsilon
^{2}\alpha ^{3}\widetilde{A_{\ast }}(|\overline{X}|+|\overline{Y}|)^{2}
\end{equation*}%
\begin{equation*}
|\frac{\varepsilon ^{4}(1+\delta ^{2})^{2}}{\widetilde{A_{\ast }}^{2}}%
A_{3}^{2}|\leq c\varepsilon ^{4}\alpha ^{3}(|\overline{X}|+|\overline{Y}%
|)^{2}\leq c\varepsilon ^{2}\alpha ^{3}\widetilde{A_{\ast }}(|\overline{X}|+|%
\overline{Y}|)^{2},
\end{equation*}%
\begin{equation*}
\frac{\varepsilon ^{2}}{B_{0}^{2}}A_{2}^{2}\leq c\varepsilon ^{2}\alpha ^{3}%
\widetilde{A_{\ast }}(|\overline{X}|+|\overline{Y}|)^{2})
\end{equation*}%
\begin{equation*}
\frac{2\varepsilon ^{2}\widetilde{A_{\ast }}^{2}}{B_{0}^{2}}\widetilde{A_{0}}%
^{2}\leq c\varepsilon ^{2}\alpha ^{3}\widetilde{A_{\ast }}(|\overline{X}|+|%
\overline{Y}|)^{2}
\end{equation*}%
\begin{equation*}
|\frac{2\varepsilon ^{2}\widetilde{A_{\ast }}}{B_{0}^{2}}\widetilde{A_{0}}%
^{3}|\leq c\varepsilon ^{2}\alpha ^{3}\widetilde{A_{\ast }}(|\overline{X}|+|%
\overline{Y}|)^{3}
\end{equation*}%
\begin{equation*}
\frac{\varepsilon ^{2}}{2B_{0}^{2}}\widetilde{A_{0}}^{4}\leq c\varepsilon
^{2}\alpha ^{3}\widetilde{A_{\ast }}(|\overline{X}|+|\overline{Y}|)^{4},
\end{equation*}%
so that the factors in the estimates are such that%
\begin{equation*}
\frac{c\varepsilon ^{2}\alpha ^{3}\widetilde{A_{\ast }}}{\varepsilon
^{2}\delta ^{2}\widetilde{A_{\ast }}^{2}}\leq c\frac{\alpha ^{3}}{\widetilde{%
A_{\ast }}},\text{ }
\end{equation*}%
$c$ being independent of $\varepsilon ,\alpha $ and $\delta \in \lbrack
1/3,1].$ Now defining $\overline{z_{10}}$ such that%
\begin{equation}
1\leq \overline{z_{10}}(B_{0},\delta )\overset{def}{=}(1+\frac{\delta
^{2}B_{0}^{2}}{2\widetilde{A_{\ast }}^{2}})^{1/2}\leq \frac{1}{\alpha },%
\text{ for }\alpha \leq 1/\sqrt{2},  \label{defz10}
\end{equation}%
we notice that we have%
\begin{equation*}
\frac{1}{\overline{z_{10}}^{2}}=\frac{2\widetilde{A_{\ast }}^{2}}{2%
\widetilde{A_{\ast }}^{2}+\delta ^{2}B_{0}^{2}}\leq \frac{2(1+\delta ^{2})%
\widetilde{A_{\ast }}^{2}}{\delta ^{2}}\leq 20\widetilde{A_{\ast }}^{2}.
\end{equation*}%
It results that, for $|\overline{X}|+|\overline{Y}|\leq \rho $ 
\begin{equation*}
\overline{z_{1}}^{2}=\overline{z_{10}}^{2}+\mathcal{O}\left( \frac{\alpha
^{3}(1+\rho ^{2})}{\widetilde{A_{\ast }}}(|\overline{X}|+|\overline{Y}%
|)^{2}\right) .
\end{equation*}%
It is shown in \cite{BHI} that $0\leq B_{0}\leq 1$ on the heteroclinic, so
that $B_{1}(x)=B_{0}^{\prime }(x)$ is positive at least at the starting
point $B_{0}=0.$ It results that we need to take the positive square root
for $B_{1}$ (as for (\ref{Wg}))$,$ hence also for $z_{1}$ (see (\ref{equB1}%
)):%
\begin{equation*}
\overline{z_{1}}=\overline{z_{10}}(B_{0})\left\{ 1+\mathcal{O}[\alpha
^{3}(1+\rho ^{2})(|\overline{X}|+|\overline{Y}|)^{2}]\right\} ^{1/2},\text{
for }|\overline{X}|+|\overline{Y}|\leq \rho ,
\end{equation*}%
and taking the square root, we obtain (\ref{z1 1}) with estimate (\ref{estim
Z}), $\mathcal{Z(}\overline{X},\overline{Y},B_{0},\varepsilon ,\alpha
,\delta )$ being defined in the ball $|\overline{X}|+|\overline{Y}|\leq \rho 
$, $c$ independent of $\varepsilon ,\alpha $ provided that $\varepsilon
,\,\alpha $ are small enough and $\rho $ satisfies (\ref{cond ro khi}).
Moreover $\mathcal{Z}$ is analytic in its arguments and is at least
quadratic in $(\overline{X},\overline{Y})$. Notice that, in using (\ref%
{defz10}), we also have%
\begin{equation}
\overline{z_{10}}|\mathcal{Z(}\overline{X},\overline{Y},B_{0},\varepsilon
,\delta )|\leq c(1+\rho ^{2})\alpha ^{2}(|\overline{X}|+|\overline{Y}|)^{2}.
\label{estim z10Z}
\end{equation}
\end{proof}

Since $z_{1}$ contains $\overline{z_{10}}$ which is independent of $(%
\overline{X},\overline{Y}),$ the new system for $(\overline{X},\overline{Y})$
has new "constant terms" and "linear terms", appearing as perturbations of
the former ones.

\subsection{System where $z_{1}$ is eliminated\label{sec 3.5}}

Now we stay on the 5-dimensional invariant manifold (\ref{Wg}) and we need
to express the new differential system in terms of the 5 coordinates $(%
\overline{X},\overline{Y},B_{0}).$ The new system is computed in Appendix %
\ref{app eliminationz1}. We obtain (notice that $B_{0}$ is in factor of the
"constant" terms, and all operators are $B_{0}-$ dependent)%
\begin{eqnarray}
\overline{X}^{\prime } &=&\mathbf{L}_{0}\overline{X}+B_{0}\mathcal{F}_{0}+%
\mathcal{L}_{01}(\overline{X},\overline{Y})+\mathcal{B}_{01}(\overline{X},%
\overline{Y}),  \label{newsyst XY a} \\
\overline{Y}^{\prime } &=&\mathbf{L}_{1}\overline{Y}+B_{0}\mathcal{F}_{1}+%
\mathcal{L}_{11}(\overline{X},\overline{Y})+\mathcal{B}_{11}(\overline{X},%
\overline{Y}),  \notag
\end{eqnarray}%
which should be completed by an equation for $B_{0}^{\prime }$ (see (\ref%
{equB1}) in terms of $(\overline{X},\overline{Y},B_{0})$), and where%
\begin{equation*}
\mathbf{L}_{0}=\left( 
\begin{array}{cc}
\lambda _{r} & \lambda _{i} \\ 
-\lambda _{i} & \lambda _{r}%
\end{array}%
\right) ,\text{ \ }\mathbf{L}_{1}=\left( 
\begin{array}{cc}
-\lambda _{r} & \lambda _{i} \\ 
-\lambda _{i} & -\lambda _{r}%
\end{array}%
\right) ,
\end{equation*}%
with the following estimates, for terms independent of $(\overline{X},%
\overline{Y})$%
\begin{equation}
|\mathcal{F}_{0}|+|\mathcal{F}_{1}|\leq \frac{c\varepsilon ^{2}}{\alpha
^{9/2}},  \label{estim const a}
\end{equation}%
for terms which are linear in $(\overline{X},\overline{Y})$%
\begin{equation}
|\mathcal{L}_{01}(\overline{X},\overline{Y})|+|\mathcal{L}_{11}(\overline{X},%
\overline{Y})|\leq c\frac{\varepsilon }{\alpha ^{2}}(|\overline{X}|+|%
\overline{Y}|),  \label{estim linXY a}
\end{equation}%
and for terms at least quadratic in $(\overline{X},\overline{Y}),$ and for 
\begin{equation*}
|\overline{X}|+|\overline{Y}|\leq \rho ,\text{ }\rho <<\alpha ^{-3/4},
\end{equation*}%
we obtain 
\begin{eqnarray}
|\mathcal{B}_{01}(\overline{X},\overline{Y})|+|\mathcal{B}_{11}(\overline{X},%
\overline{Y})| &\leq &\alpha ^{1/2}(9/2+c\frac{\varepsilon ^{2}}{\alpha
^{7/2}})(|\overline{X}|+|\overline{Y}|)^{2}  \label{estim quad a} \\
&&+\alpha ^{1/2}(\frac{27}{2}+c\frac{\varepsilon }{\alpha })(|\overline{X}|+|%
\overline{Y}|)^{3}+\alpha ^{1/2}(c\frac{\varepsilon ^{2}}{\alpha ^{2}})(|%
\overline{X}|+|\overline{Y}|)^{4}.  \notag
\end{eqnarray}%
We are now ready to formulate the search for the unstable manifold of $%
M_{-}. $

\subsection{Integral formulation for solutions bounded as $x\rightarrow
-\infty \label{sec 3.6}$}

Let us introduce the monodromy operators associated with the linear
operators $\mathbf{L}_{0},\mathbf{L}_{1}$ which have non constant
coefficients:%
\begin{eqnarray*}
\frac{\partial }{\partial x}S_{0}(x,s) &=&\mathbf{L}_{0}S_{0}(x,s),\text{ }%
S_{0}(x,s_{1})S_{0}(s_{1},s_{2})=S_{0}(x,s_{2}),\text{ }S_{0}(x,x)=\mathbb{I}%
, \\
\frac{\partial }{\partial x}S_{1}(x,s) &=&\mathbf{L}_{1}S_{1}(x,s),\text{ }%
S_{1}(x,s_{1})S_{1}(s_{1},s_{2})=S_{1}(x,s_{2}),\text{ }S_{1}(x,x)=\mathbb{I}%
.
\end{eqnarray*}%
The coefficients of operators $\mathbf{L}_{0},\mathbf{L}_{1}$ are functions
of $B_{0},$ so we need Lemma \ref{LemMonodromy} in Appendix \ref{App1}, with
the following estimates, valid for $0\leq B_{0}\leq \sqrt{1-\eta
_{0}^{2}\delta ^{2}},$ $\varepsilon ^{2}\leq \frac{3\alpha }{10},$ $%
\varepsilon $ small enough$:$%
\begin{eqnarray}
||\boldsymbol{S}_{0}(x,s)|| &\leq &e^{\sigma (x-s)},\text{ }-\infty <x<s\leq
-x_{\ast }\leq 0,  \label{monodrom1} \\
||\boldsymbol{S}_{1}(x,s)|| &\leq &e^{-\sigma (x-s)},\text{ }-\infty
<s<x\leq -x_{\ast }\leq 0,  \label{monodrom2}
\end{eqnarray}%
with%
\begin{equation*}
\sigma =\frac{\alpha ^{1/2}\delta ^{1/2}}{2^{1/4}}.
\end{equation*}%
We are looking for solutions of (\ref{newsyst XY a}) which stay bounded for $%
x\rightarrow -\infty .$ Then, thanks to estimates (\ref{monodrom1}) (\ref%
{monodrom2}), the system (\ref{newsyst XY a}) may be formulated for $-\infty
<x\leq -x_{\ast }\leq 0$ as%
\begin{eqnarray}
\overline{X}(x) &=&\mathbf{S}_{0}(x,-x_{\ast })\overline{X}%
_{0}-\int_{x}^{-x_{\ast }}\mathbf{S}_{0}(x,s)G_{0}(s)ds
\label{integ formulation} \\
\overline{Y}(x) &=&\int_{-\infty }^{x}\mathbf{S}_{1}(x,s)G_{1}(s)ds  \notag
\end{eqnarray}

\begin{eqnarray*}
&&G_{0}(s)\overset{def}{=}B_{0 }\mathcal{F}_{0}+\mathcal{L}_{01}(\overline{X}%
,\overline{Y})+\mathcal{B}_{01}(\overline{X},\overline{Y}), \\
&&G_{1}(s)\overset{def}{=}B_{0 }\mathcal{F}_{1}+\mathcal{L}_{11}(\overline{X}%
,\overline{Y})+\mathcal{B}_{11}(\overline{X},\overline{Y})
\end{eqnarray*}%
where $\overline{X},\overline{Y}$ and $B_{0}$ are bounded and continuous
functions of $s,$ $B_{0}$ tending towards 0 as $s\rightarrow -\infty .$

\subsection{Strategy (continued)\label{sec 3.7}}

The 3-dimensional unstable manifold of $M_{-}$ is such that $(\overline{X}%
(x),\overline{Y}(x),B_{0}(x))$ should be expressed in terms of $\overline{X}%
_{0},B_{0}(-x_{\ast }).$ The idea is then

i) solve (\ref{integ formulation}) with respect to $(\overline{X},\overline{Y%
})$ in function of $(\overline{X}_{0},B_{0})$;

ii) solve the differential equation for $B_{0}(x)$ satisfying $%
B_{0}(-x_{\ast })=B_{00},$ $B_{0}(-\infty )=0$.

The result will be valid for $B_{0}(x)$, and $B_{00}$ in the interval $[0,%
\sqrt{1-\eta _{0}^{2}\delta ^{2}}]$ and it appears that $A_{0}(x)$ is then
very close to $0$ at the end point $x=-x_{\ast }$. The hope is that this
should allow to obtain an intersection with the 3-dim stable manifold of $%
M_{+}$ which computation should be valid for $B_{0}(x)$ in the interval $[%
\sqrt{1-\eta _{0}^{2}\delta ^{2}},1]$.

\subsection{Resolution with respect to $(\overline{X},\overline{Y})\label%
{subsect resol XY}\label{sec 3.8}$}

Let us define, for $\kappa >0$ the function space 
\begin{equation*}
C_{\kappa }^{0}=\{\overline{X}\in C^{0}(-\infty ,-x_{\ast }];\overline{X}%
(x)e^{-\kappa x}\text{ is bounded}\}
\end{equation*}%
equiped with the norm%
\begin{equation*}
||\overline{X}||_{\kappa }=\sup_{(-\infty ,-x_{\ast })}|\overline{X}%
(x)e^{-\kappa x}|.
\end{equation*}

In this subsection we prove the following

\begin{lemma}
\label{Lem X,Y} Given $M>0$, for $\varepsilon =\nu \alpha ^{5/2},$ with $\nu
\ $small enough, there exists $k_{0}>0,$ independent of $\nu ,$ such that
for $||\overline{X}_{0}||\leq k_{0}$, there is a unique solution $(\overline{%
X},\overline{Y})$ in $(C_{\kappa }^{0})^{2}$ such that, for $||B||_{\kappa
}\leq M,$ we have 
\begin{eqnarray*}
||\overline{X}||_{\kappa } &\leq &|\overline{X}_{0}|(1+c\nu )+c\nu ,\text{ }%
\overline{X}(-x_{\ast })=\overline{X}_{0}, \\
||\overline{Y}||_{\kappa } &\leq &c\nu +c(\nu |\overline{X}_{0}|+|\overline{X%
}_{0}|^{2}),
\end{eqnarray*}%
where $c$ is independent of $\varepsilon ,\nu ,k_{0}.$
\end{lemma}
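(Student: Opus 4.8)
The plan is to turn the search for the unstable manifold into a contraction‑mapping problem for the operator $\Phi=(\Phi_{0},\Phi_{1})$ on $(C_{\kappa}^{0})^{2}$ defined by the right‑hand side of (\ref{integ formulation}),
\begin{eqnarray*}
\Phi_{0}(\overline{X},\overline{Y})(x) &=&\mathbf{S}_{0}(x,-x_{\ast})\overline{X}_{0}-\int_{x}^{-x_{\ast}}\mathbf{S}_{0}(x,s)G_{0}(s)\,ds, \\
\Phi_{1}(\overline{X},\overline{Y})(x) &=&\int_{-\infty}^{x}\mathbf{S}_{1}(x,s)G_{1}(s)\,ds,
\end{eqnarray*}
with $G_{0},G_{1}$ as in section \ref{sec 3.6}, and with $\overline{X}_{0}$ and $B_{0}(\cdot)$ (satisfying $0\le B_{0}\le\sqrt{1-\eta_{0}^{2}\delta^{2}}$, $\Vert B\Vert_{\kappa}\le M$, $B_{0}\to 0$ at $-\infty$) treated as frozen data. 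I fix the weight $\kappa=\varepsilon\delta$; since $\varepsilon=\nu\alpha^{5/2}$ this gives $\kappa=\nu\alpha^{5/2}\delta\ll\alpha^{1/2}\delta^{1/2}=2^{1/4}\sigma$, hence $0<\kappa<\sigma/2$ for $\varepsilon$ small, and it is this gap that makes the weighted convolutions converge. By construction $\Phi_{0}(\overline{X},\overline{Y})(-x_{\ast})=\overline{X}_{0}$, while $\Phi_{1}$ picks out the unique solution of the $\overline{Y}$‑equation that stays bounded at $-\infty$, so $\overline{Y}$ carries no free parameter and is slaved to the sources and to $\overline{X}$.

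The quantitative heart of the argument is a pair of elementary convolution estimates: from (\ref{monodrom1})--(\ref{monodrom2}), for $G\in C_{\kappa}^{0}$ with $\sup_{s\le-x_{\ast}}|G(s)|e^{-\kappa s}\le\gamma$,
\begin{equation*}
\Bigl\Vert\int_{\cdot}^{-x_{\ast}}\mathbf{S}_{0}(\cdot,s)G(s)\,ds\Bigr\Vert_{\kappa}\le\frac{\gamma}{\sigma-\kappa},\qquad\Bigl\Vert\int_{-\infty}^{\cdot}\mathbf{S}_{1}(\cdot,s)G(s)\,ds\Bigr\Vert_{\kappa}\le\frac{\gamma}{\sigma+\kappa},
\end{equation*}
and $\Vert\mathbf{S}_{0}(\cdot,-x_{\ast})\overline{X}_{0}\Vert_{\kappa}\le e^{\kappa x_{\ast}}|\overline{X}_{0}|$, which is harmless since $\kappa x_{\ast}$ is small. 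Because $(\sigma\mp\kappa)^{-1}\le c\,\alpha^{-1/2}\delta^{-1/2}$, these bounds precisely absorb the factors $\alpha^{1/2}$ occurring in (\ref{estim const a})--(\ref{estim quad a}); combined with $\varepsilon=\nu\alpha^{5/2}$ and $e^{2\kappa s}\le e^{\kappa s}$ for $s\le 0$, the contributions of the three pieces of $G_{j}$ to $\Vert\Phi_{j}(\overline{X},\overline{Y})\Vert_{\kappa}$ are bounded respectively by $c\,M\nu^{2}$ (from $B_{0}\mathcal{F}_{j}$), by $c\,\nu(\Vert\overline{X}\Vert_{\kappa}+\Vert\overline{Y}\Vert_{\kappa})$ (from $\mathcal{L}_{j1}$), and by $c\bigl((\Vert\overline{X}\Vert_{\kappa}+\Vert\overline{Y}\Vert_{\kappa})^{2}+(\cdots)^{3}+(\cdots)^{4}\bigr)$ (from $\mathcal{B}_{j1}$), with $c$ independent of $\varepsilon,\nu$ and of $\delta\in[1/3,1]$. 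These same estimates show that $\Phi$ sends $(C_{\kappa}^{0})^{2}$ into itself, the integrands being continuous with $e^{\kappa s}$‑decay.

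With these bounds in hand I would run the standard scheme on the ball $\mathcal{B}=\{\Vert\overline{X}\Vert_{\kappa}\le R_{X},\ \Vert\overline{Y}\Vert_{\kappa}\le R_{Y}\}$, with $R_{X}$ of order $|\overline{X}_{0}|+\nu$ and $R_{Y}$ of order $\nu+|\overline{X}_{0}|^{2}$. For self‑mapping, the previous paragraph gives $\Vert\Phi_{0}(\overline{X},\overline{Y})\Vert_{\kappa}\le(1+c\nu)|\overline{X}_{0}|+cM\nu^{2}+c\nu(R_{X}+R_{Y})+c(R_{X}+R_{Y})^{2}+\cdots$, and since $R_{X}+R_{Y}\lesssim|\overline{X}_{0}|+\nu\le k_{0}+\nu$ is small the genuinely quadratic term obeys $c(R_{X}+R_{Y})^{2}\le c\,k_{0}(R_{X}+R_{Y})\le\tfrac12 R_{X}$ once $k_{0}$ is a small enough universal constant, the remaining $\nu$‑terms being absorbed for $\nu$ small depending on $M$; the same works for $\Phi_{1}$ with target $R_{Y}$. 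For contraction, on $\mathcal{B}$ the maps $G_{j}$ are Lipschitz in $(\overline{X},\overline{Y})$ with constant $c\nu\alpha^{1/2}+c\alpha^{1/2}(R_{X}+R_{Y})$ (from $\mathcal{L}_{j1}$ and from the polynomial/analytic $\mathcal{B}_{j1}$), so after the convolution bounds $\Phi$ has Lipschitz constant $\le c\nu+c(R_{X}+R_{Y})\le c\nu+ck_{0}<1$. Banach's fixed‑point theorem then yields the unique $(\overline{X},\overline{Y})\in\mathcal{B}$, with $\overline{X}(-x_{\ast})=\overline{X}_{0}$ built in; feeding the fixed point back into the estimates of the second paragraph gives the stated bounds on $\Vert\overline{X}\Vert_{\kappa}$ and $\Vert\overline{Y}\Vert_{\kappa}$.

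The one place needing real care is the bookkeeping of the second paragraph: one must check that, after integration against $\mathbf{S}_{0}$ or $\mathbf{S}_{1}$, every term of (\ref{estim const a})--(\ref{estim quad a}) lands at the announced order — $M\nu^{2}$ for the constant sources, $\mathcal{O}(\nu)$ for the linear coupling, $\mathcal{O}(1)$ for the quadratic part — which hinges precisely on the matching of exponents ($\varepsilon=\nu\alpha^{5/2}$, $\sigma\asymp\alpha^{1/2}\delta^{1/2}$, $\kappa\le\varepsilon\delta$, so the weight does not eat into the monodromy decay). Everything downstream (self‑mapping, contraction, reading off the norms) is then routine; one only records that $k_{0}$ is fixed first, as a small universal constant controlling the $\mathcal{O}(1)$‑coefficient quadratic term, and $\nu$ is chosen small afterwards depending on $M$ and $k_{0}$, consistently with $k_{0}$ being independent of $\nu$.
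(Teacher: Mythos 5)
Your proposal is correct and follows essentially the paper's own argument: the same integral formulation (\ref{integ formulation}), the same monodromy/convolution estimates with a weight $\kappa\le\sigma/2$ (the paper later takes $\kappa$ comparable to $\varepsilon\delta$ as you do), and the same bookkeeping under $\varepsilon=\nu\alpha^{5/2}$ turning (\ref{estim const a})--(\ref{estim quad a}) into contributions of size $cM\nu^{2}$, $c\nu S$ and $cS^{2}+\dots$, with $k_{0}$ fixed as a small constant before $\nu$ is chosen small depending on $M$. The only deviation is that you close with an explicit Banach fixed point on a ball, whereas the paper invokes the analytic implicit function theorem; these are equivalent for the stated bounds, the paper's choice merely delivering directly the analytic dependence on $(\overline{X}_{0},B_{0})$ that is exploited later but is not part of this lemma's statement.
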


\begin{remark}
The choice of $\kappa $ will be in agreement with the behavior of $B_{0}(x)$
as $x\rightarrow -\infty ,$ which is studied at next subsection.
\end{remark}

\begin{proof}
First we observe that, provided that $\kappa <\sigma $%
\begin{equation*}
|\int_{-\infty }^{x}\boldsymbol{S}_{1}(x,s)e^{\kappa s}ds|\leq \frac{%
e^{\kappa x}}{\kappa +\sigma },\text{ \ }x\leq -x_{\ast },
\end{equation*}%
\begin{equation*}
|\boldsymbol{S}_{0}(x,-x_{\ast })e^{-\kappa (x+x_{\ast }})|\leq e^{(\sigma
-\kappa )(x+x_{\ast })},\text{ \ }x\leq -x_{\ast },
\end{equation*}%
\begin{equation*}
|\int_{-x_{\ast }}^{x}\mathbf{S}_{0}(x,s)e^{\kappa s}ds|\leq \frac{e^{\kappa
x}}{\sigma -\kappa },\text{ \ }x\leq -x_{\ast }.
\end{equation*}%
Let us choose%
\begin{equation*}
\kappa \leq \frac{\sigma }{2},
\end{equation*}%
then%
\begin{equation*}
|\int_{-\infty }^{x}\boldsymbol{S}_{1}(x,s)e^{\kappa s}ds|\leq \frac{%
e^{\kappa x}}{\sigma }=2^{1/4}\frac{e^{\kappa x}}{\alpha ^{1/2}\delta ^{1/2}}%
,
\end{equation*}%
\begin{equation*}
|\int_{-x_{\ast }}^{x}\mathbf{S}_{0}(x,s)e^{\kappa s}ds|\leq 2^{5/4}\frac{%
e^{\kappa x}}{\alpha ^{1/2}\delta ^{1/2}},\text{ \ }x\leq -x_{\ast }.
\end{equation*}%
Let us assume that%
\begin{equation}
||B_{0}||_{\kappa }\leq M  \label{estim normB0}
\end{equation}%
holds (needs to be proved at next subsection). We wish to use the analytic
implicit function theorem (see \cite{Dieudo} section X.2) for $(\overline{X},%
\overline{Y})$ in a neighborhood of $(\overline{X}_{0},0)$ in the function
space $(C_{\kappa }^{0})^{2}$, provided that we can choose $\kappa \leq 
\frac{\sigma }{2}$ and $||\overline{X}||_{\kappa }+||\overline{Y}||_{\kappa
}\leq \rho $. Indeed, using the above estimates for coefficients, we obtain
for $x\in (-\infty ,-x_{\ast }]$%
\begin{equation*}
|\overline{X}(x)e^{-\kappa x}|\leq |\overline{X}(-x_{\ast })|e^{\kappa
x_{\ast }}+\frac{2^{5/4}}{\alpha ^{1/2}\delta ^{1/2}}||B_{0}\mathcal{F}_{0}+%
\mathcal{L}_{01}(\overline{X},\overline{Y})+\mathcal{B}_{01}(\overline{X},%
\overline{Y})||_{\kappa },
\end{equation*}%
hence%
\begin{equation}
||\overline{X}||_{\kappa }\leq |\overline{X}(-x_{\ast })|e^{\kappa x_{\ast
}}+\frac{2^{5/4}}{\alpha ^{1/2}\delta ^{1/2}}||B_{0}\mathcal{F}_{0}+\mathcal{%
L}_{01}(\overline{X},\overline{Y})+\mathcal{B}_{01}(\overline{X},\overline{Y}%
)||_{\kappa },  \label{estimX}
\end{equation}%
and in the same way%
\begin{equation}
||\overline{Y}||_{\kappa }\leq \frac{2^{1/4}}{\alpha ^{1/2}\delta ^{1/2}}%
||B_{0}\mathcal{F}_{1}+\mathcal{L}_{11}(\overline{X},\overline{Y})+\mathcal{B%
}_{11}(\overline{X},\overline{Y})||_{\kappa }.  \label{estimY}
\end{equation}%
Using estimates (\ref{estim const a}) for $\mathcal{F}_{j},$ (\ref{estim
linXY a}) for $\mathcal{L}_{j1},$ (\ref{estim quad a}) for $\mathcal{B}%
_{01}, $ $\mathcal{B}_{11}$, (\ref{estimX}), (\ref{estimY}), we obtain, for $%
S\overset{def}{=}||\overline{X}||_{\kappa }+||\overline{Y}||_{\kappa }\leq
\rho $%
\begin{eqnarray*}
S &\leq &|\overline{X}(-x_{\ast })|e^{\kappa x_{\ast }}+c\frac{\varepsilon
^{2}M}{\alpha ^{5}}+c\frac{S\varepsilon }{\alpha ^{5/2}}+\frac{27}{%
2^{3/4}\delta ^{1/2}}(1+c\frac{\varepsilon ^{2}}{\alpha ^{7/2}})S^{2} \\
&&+(\frac{81}{2^{3/4}\delta ^{1/2}}+c\frac{\varepsilon }{\alpha })S^{3}+c%
\frac{\varepsilon ^{2}}{\alpha ^{2}}S^{4},
\end{eqnarray*}%
so that choosing 
\begin{equation}
\varepsilon =\nu \alpha ^{5/2},\text{ }\nu <1/M,  \label{cond alpha}
\end{equation}%
\begin{eqnarray*}
S(1-c\nu ) &\leq &|\overline{X}(-x_{\ast })|+c\nu +\frac{27}{2^{3/4}\delta
^{1/2}}(1+c\nu ^{2}\alpha ^{3/2})S^{2} \\
&&+\frac{81}{2^{3/4}\delta ^{1/2}}(1+c\nu \alpha ^{3/2})S^{3}+c\nu
^{2}\alpha ^{3}S^{4}.
\end{eqnarray*}%
Let us choose $k_{0}$ such that%
\begin{equation*}
27k_{0}+81k_{0}^{2}<2^{3/4}\delta ^{1/2}(1-c\nu ),
\end{equation*}%
which is satisfied for%
\begin{equation*}
k_{0}<0.13\delta ^{1/2}(1-c\nu ).
\end{equation*}%
Then the estimate above shows that for $0<k_{0}$ small enough, we can apply
the implicit function theorem (its analytic version) with respect to$(%
\overline{X},\overline{Y})$ (for $|\overline{X}(-x_{\ast })|\leq
k_{0}(1-c_{1}\nu )$, and for $\varepsilon $ and $\nu $ small enough). We
find a unique $(\overline{X},\overline{Y})\in (C_{\kappa }^{0})^{2}$ such
that $||\overline{X}||_{\kappa }+||\overline{Y}||_{\kappa }$ is close to $|%
\overline{X}(-x_{\ast })|$. Moreover for $\varepsilon $ and $\nu $ small
enough%
\begin{equation*}
S\leq |\overline{X}(-x_{\ast })|(1+c\nu )+c|\overline{X}(-x_{\ast
})|^{2}+c\nu ,
\end{equation*}%
with $c$ independent of $(\varepsilon ,\nu ,k_{0}).$ This leads finally to $%
\overline{X}$ and $\overline{Y}$ in $C_{\kappa }^{0},$ depending
analytically on $(\overline{X}_{0},B_{0})\in 
\mathbb{R}
^{2}\times C_{\kappa }^{0},$ and such that 
\begin{equation}
||\overline{Y}||_{\kappa }\leq c\nu +c(\nu |\overline{X}(-x_{\ast })|+|%
\overline{X}(-x_{\ast })|^{2}),  \label{estim Ybar}
\end{equation}%
\begin{equation}
||\overline{X}||_{\kappa }\leq |\overline{X}(-x_{\ast })|(1+c\nu )+c\nu ,
\label{estim Xbar}
\end{equation}%
where $c$ is a number independent of $(\varepsilon ,\nu ,k_{0}),$ $%
\varepsilon $ small enough, $S\leq k_{0},$ and $k_{0}<\rho ,$ which is
compatible with (\ref{cond ro khi}). Lemma \ref{Lem X,Y} is proved.
\end{proof}

\subsubsection{Estimate of $\protect\nu $}

From the proof of Lemma \ref{Lem X,Y}, the constraint on $\nu $ is 
\begin{equation*}
\nu <\min \{1/||B_{0}||_{\kappa },1/c\},
\end{equation*}%
where 
\begin{equation*}
c=\frac{3.2^{1/4}}{\sqrt{\delta }}c^{\prime }
\end{equation*}%
and $c^{\prime }$ comes from the estimate (\ref{estim linXY a}) appearing on
the linear term in Appendix \ref{app eliminationz1}. From Appendix \ref{app
eliminationz1} and Lemma \ref{Lem estimates}, we see that $c^{\prime }$ is
given at main order by%
\begin{equation*}
c^{\prime }=3c(a_{j})+2c(c_{j})+2c(d_{j})\text{ , }j=1,2,
\end{equation*}%
where $c(a_{j})$ is defined in (\ref{a1 1},\ref{a2 1}), $c(c_{j})$ defined
in (\ref{c1 1},\ref{c2 1}), $c(d_{j})$ defined in (\ref{d1 1},\ref{d2 1}). A
careful checking leads to%
\begin{eqnarray*}
c(a_{j}) &=&3.2^{-1/4}(1+\delta ^{2}), \\
c(c_{j}) &=&2.2^{1/4}(1+\delta ^{2}), \\
c(d_{j}) &=&4\sqrt{2}(1+\delta ^{2}),
\end{eqnarray*}%
so that%
\begin{equation*}
2^{1/4}c^{\prime }=28.11(1+\delta ^{2}).
\end{equation*}%
Hence 
\begin{equation*}
c=84.33\frac{1+\delta ^{2}}{\sqrt{\delta }},
\end{equation*}%
which needs to be compared with $||B_{0}||_{\kappa }.$

We see at subsection \ref{sol integrodiff B*} (see (\ref{estimateB*a}))that%
\begin{equation*}
||B_{0}||_{\kappa }=\underset{(-\infty ,-x_{\ast }]}{\sup }%
B_{0}(x)e^{-\kappa x}\leq B_{0}(-x_{\ast })=\left( \frac{1-\alpha ^{2}\delta
^{2}}{1+\delta ^{2}}\right) ^{1/2}<(1+\delta ^{2})^{-1/2}.
\end{equation*}%
Finally the restriction on $(\nu )_{\max }$ is%
\begin{equation}
\frac{\nu }{\sqrt{\delta }}\leq \frac{(1+\delta ^{2})^{-1}}{84.33}.
\label{restrict nu-}
\end{equation}

\subsection{Resolution for $B_{0}\label{sol integrodiff B*}\label{sec 3.9}$}

In this subsection we finish the proof of Lemma \ref{unstablemanifM-}. It
remains to solve the last part of the system (\ref{newsyst XY a}) for $B_{0}$
with $B_{0}(-x_{\ast })=B_{00}.$

We notice from (\ref{equB1}), (\ref{z1 1}) and (\ref{estim coord 1}) that%
\begin{eqnarray*}
B_{1} &=&\varepsilon \delta \widetilde{A_{\ast }}B_{0}\overline{z_{10}}%
(B_{0})\left( 1+\mathcal{Z(}\overline{X},\overline{Y},B_{0},\varepsilon
,\delta )-\frac{\varepsilon ^{3/2}}{\overline{z_{10}}}\frac{(1+\delta ^{2})}{%
\widetilde{A_{\ast }}^{2}}\overline{A_{3}}\right) \\
\overline{A_{3}} &=&B_{0}[\varepsilon ^{2}B_{0}^{2}(1+\delta ^{2})^{2}(%
\overline{x_{1}}+\overline{y_{1}})+2\lambda _{r}\lambda _{i}(\overline{x_{2}}%
-\overline{y_{2}})], \\
\frac{\varepsilon ^{3/2}}{\overline{z_{10}}}(1+\delta ^{2})\frac{\overline{%
A_{3}}}{\widetilde{A_{\ast }}^{2}} &\leq &4\varepsilon ^{7/6}(|\overline{X}%
|+|\overline{Y}|),
\end{eqnarray*}%
so that it is clear that (see above estimates for $\mathcal{Z}$)%
\begin{equation}
B_{1}>0\text{ for }B_{0}\in (0,\sqrt{1-\eta _{0}^{2}\delta ^{2}}),|\overline{%
X}|+|\overline{Y}|\leq \rho ,  \label{B1>0}
\end{equation}%
This is coherent with the study of the linearized system near $M_{-}:$
Indeed the principal part of the differential equation for $B_{0}$ is%
\begin{equation*}
B_{0}^{\prime }=\varepsilon \delta B_{0}\widetilde{A_{\ast }}\overline{z_{10}%
}(B_{0})
\end{equation*}%
which may be integrated as%
\begin{eqnarray}
B_{0}^{2}(x) &=&\frac{1}{(1+\frac{\delta ^{2}}{2})\cosh
^{2}(x_{0}-\varepsilon \delta (x+x_{\ast }))},  \label{B*principal} \\
\cosh x_{0} &=&\frac{1}{B_{0}(-x_{\ast })(1+\frac{\delta ^{2}}{2})^{1/2}}, 
\notag
\end{eqnarray}%
which satisfies $B_{0}=0$ for $x=-\infty .$ More precisely the differential
equation for $B_{0}$ is now, after replacing $(\overline{X},\overline{Y})$
by the expression found at previous subsection,%
\begin{equation}
B_{0}^{\prime }=\varepsilon \delta \widetilde{A_{\ast }}B_{0}\overline{z_{10}%
}(B_{0})[1+f(B_{0})]  \label{equB*}
\end{equation}%
where $f(B_{0})$ is a non local analytic function of $B_{0}$ in $C_{\kappa
}^{0},$ such that, for $\varepsilon $ small enough, and using $||\overline{X}%
||_{\kappa }+||\overline{Y}||_{\kappa }\leq k_{0},$ (\ref{cond ro khi}) and (%
\ref{cond alpha}),%
\begin{equation*}
||f(B_{0})||_{\kappa }\leq c(\alpha ^{3}(1+\rho ^{2})S^{2}+\varepsilon
^{7/6}S)\leq c\varepsilon k_{0}.
\end{equation*}

\begin{remark}
We may notice that we might replace $c\varepsilon k_{0}$ in the estimate
above, by%
\begin{equation*}
\varepsilon k_{0}e^{\kappa x}\rightarrow 0\text{ as }x\rightarrow -\infty ,
\end{equation*}%
since $X$ and $Y\in C_{\kappa }^{0}.$
\end{remark}

We are looking for the solution such that $B_{0}=0$ for $x=-\infty ,$ and $%
B_{0}(-x_{\ast })\leq \sqrt{1-\eta _{0}^{2}\delta ^{2}}$ for $x=0.$ We can
rewrite (\ref{equB*}) as%
\begin{equation}
\frac{2B_{0}B_{0}^{\prime }}{B_{0}^{2}\widetilde{A_{\ast }}\overline{z_{10}}%
(B_{0})}=2\varepsilon \delta \lbrack 1+f(B_{0})].  \label{integrodiffB*}
\end{equation}%
We now introduce the variable $v:$%
\begin{equation*}
v=\frac{1-\sqrt{1-(1+\frac{\delta ^{2}}{2})B_{0}^{2}}}{1+\sqrt{1-(1+\frac{%
\delta ^{2}}{2})B_{0}^{2}}},\text{ }B_{0}^{2}=\frac{1}{1+\frac{\delta ^{2}}{2%
}}\frac{4v}{(1+v)^{2}},
\end{equation*}%
so that%
\begin{equation*}
(\ln v)^{\prime }=2\varepsilon \delta \lbrack 1+f(B_{0})].
\end{equation*}%
We observe that for $x$ running from $-\infty $ to $0,$ 
\begin{equation*}
w=\ln v\text{ is increasing from }-\infty \text{ to }w_{0}=\ln v_{0}<0.
\end{equation*}%
Now let us define $h$ continuous in its argument and such that%
\begin{eqnarray*}
h(w) &=&f(B_{0}), \\
B_{0} &=&\frac{1}{\left( 1+\frac{\delta ^{2}}{2}\right) ^{1/2}}\frac{2e^{w/2}%
}{(1+e^{w})},
\end{eqnarray*}%
and let us find an a priori estimate for the solution $B_{0}(x),$ for $x\in
(-\infty ,-x_{\ast }].$ We obtain by simple integration%
\begin{equation*}
\int_{-x_{\ast }}^{x}\frac{w^{\prime }(s)}{1+h(w)(s)}ds=2\varepsilon \delta
(x+x_{\ast }).
\end{equation*}%
For $\alpha $ small enough we have%
\begin{equation*}
1-c\varepsilon k_{0}\leq \frac{1}{1+h(w)}\leq 1+c\varepsilon k_{0},
\end{equation*}%
hence (since $w<w_{0},$ and $x<0)$%
\begin{equation*}
(w_{0}-w)(1-c\varepsilon k_{0})\leq -2\varepsilon \delta (x+x_{\ast })\leq
(w_{0}-w)(1+c\varepsilon k_{0})
\end{equation*}%
so that%
\begin{equation*}
\exp (\frac{-2\varepsilon \delta (x+x_{\ast })}{1+c\varepsilon k_{0}})\leq
e^{w_{0}-w}\leq \exp (\frac{-2\varepsilon \delta (x+x_{\ast })}{%
1-c\varepsilon k_{0}})
\end{equation*}%
and%
\begin{equation*}
v_{0}\exp (\frac{2\varepsilon \delta }{1-c\varepsilon k_{0}}(x+x_{\ast
}))\leq v(x)\leq v_{0}\exp (\frac{2\varepsilon \delta }{1+c\varepsilon k_{0}}%
(x+x_{\ast })).
\end{equation*}%
It finally results that we obtain an a priori estimate for%
\begin{equation}
B_{0}(x)=\mathcal{B}_{0}(\overline{X}_{0},B_{0}(-x_{\ast }))(x)\in C_{\kappa
}^{0},  \label{solu B*}
\end{equation}%
\begin{eqnarray}
\mathcal{B}_{0}(\overline{X}_{0},B_{0}(-x_{\ast }))(x) &=&\frac{1}{\left( 1+%
\frac{\delta ^{2}}{2}\right) ^{1/2}}\frac{2\sqrt{v(x)}}{(1+v(x))},\text{ }%
x\in (-\infty ,-x_{\ast }),  \notag \\
\frac{2\sqrt{v_{0}}\exp (\frac{\varepsilon \delta }{1-c\varepsilon k_{0}}%
(x+x_{\ast }))}{1+v_{0}\exp (\frac{2\varepsilon \delta }{1-c\varepsilon k_{0}%
}(x+x_{\ast }))} &\leq &\left( 1+\frac{\delta ^{2}}{2}\right) ^{1/2}\mathcal{%
B}_{0}  \label{estimateB*a} \\
&\leq &\frac{2\sqrt{v_{0}}\exp (\frac{\varepsilon \delta }{1+c\varepsilon
k_{0}}(x+x_{\ast }))}{1+v_{0}\exp (\frac{2\varepsilon \delta }{%
1+c\varepsilon k_{0}}(x+x_{\ast }))},  \notag
\end{eqnarray}%
\begin{equation*}
v_{0}=\frac{1-\sqrt{1-(1+\frac{\delta ^{2}}{2})B_{0}^{2}(-x_{\ast })}}{1+%
\sqrt{1-(1+\frac{\delta ^{2}}{2})B_{0}^{2}(-x_{\ast })}}<1,\text{ }\left( 1+%
\frac{\delta ^{2}}{2}\right) ^{1/2}B_{0}(-x_{\ast })=\frac{2\sqrt{v_{0}}}{%
1+v_{0}}.
\end{equation*}%
It remains to notice that we can choose in the proof for $(\overline{X},%
\overline{Y})$%
\begin{equation}
\kappa =\frac{\varepsilon \delta }{1+c\varepsilon k_{0}},
\label{choice of kappa}
\end{equation}%
which needs to satisfy%
\begin{equation}
\kappa \leq \frac{\sigma }{2}=\frac{\alpha ^{1/2}\sqrt{\delta }}{2^{5/4}}.
\label{cond monodrom}
\end{equation}%
We have already chosen $\varepsilon =\nu \alpha ^{\frac{5}{2}}$ hence, for $%
\alpha $ small enough, the choice (\ref{choice of kappa}) leads to%
\begin{equation*}
\kappa <\varepsilon \delta =\delta \nu \alpha ^{\frac{5}{2}}\leq \frac{%
\alpha ^{1/2}\sqrt{\delta }}{2^{5/4}}
\end{equation*}%
and (\ref{cond monodrom}) is satisfied. The \emph{a priori estimate} (\ref%
{estimateB*a}) for $B_{0}$ allows to prove that there is a unique solution
of the differential equation (\ref{integrodiffB*}) which may be extended on
the whole interval $(-\infty ,-x_{\ast }]$, and which satisfies the estimate
(\ref{estimateB*a}) (see for example \cite{Hale}). Since $B_{0}$ is in
factor in $\widetilde{A_{0}},A_{1},A_{2},A_{3},B_{1}$ the behavior for $%
x\rightarrow -\infty $ of the coordinates of the unstable manifold, is
governed by the behavior of $B_{0}.$ The estimates indicated in Lemma \ref%
{unstablemanifM-} results from (\ref{estim coord 1}), (\ref{scaling a}), (%
\ref{defz10}), (\ref{estim Xbar}), (\ref{estim Ybar}), (\ref{cond monodrom})
with $\kappa =\varepsilon \delta ^{\ast }$. This ends the proof of Lemma \ref%
{unstablemanifM-}. The part of Corollary \ref{Corollary unstab}
corresponding to $x\in (-\infty ,-x_{\ast }]$ follows from the estimates
found at Lemma \ref{unstablemanifM-}.

For making a difference with the side treated for the stable manifold, from
now on we write $\alpha _{-}$ and $\nu _{-}$ in place of $\alpha ,\nu .$ Let
us define the hyperplane $H_{0}$ 
\begin{equation*}
B_{0}=B_{00}=\left( \frac{1-\alpha _{-}^{2}\delta ^{2}}{1+\delta ^{2}}%
\right) ^{1/2}.
\end{equation*}

\subsection{Intersection of the unstable manifold with $H_{0}\label{sec 3.10}
$}

We need to give precisely the intersection of the unstable manifold with the
hyperplane $B_{0}=B_{00}.$ This gives a two-dimensional manifold lying in
the 4-dimensional manifold $\mathcal{W}_{\varepsilon ,\delta }\cap H_{0}.$
Taking into account of 
\begin{eqnarray*}
\widetilde{A_{\ast }} &=&\delta \alpha _{-} \\
\lambda _{r},\lambda _{i} &\sim &\frac{\delta ^{1/2}\alpha _{-}^{1/2}}{%
2^{1/4}},\text{ }\varepsilon =\nu _{-}\alpha _{-}^{\frac{5}{2}},\text{ }\nu
_{-}<\frac{1}{B_{00}}, \\
\overline{z_{10}} &\sim &\frac{B_{00}}{\alpha _{-}\sqrt{2}},\text{ }%
B_{00}\sim \frac{1}{\sqrt{1+\delta ^{2}}}, \\
|\overline{Y}(-x_{\ast })| &=&\mathcal{O}(|\overline{X_{0}}|^{2}+\nu _{-}|%
\overline{X_{0}}|+\nu _{-}^{2}B_{00}),\text{ \ }|\overline{X_{0}}|\leq k_{0},%
\text{, }\overline{X_{0}}=\overline{X(-x_{\ast })},
\end{eqnarray*}%
we obtain a two-dimensional intersection which is tangent to a plane
(parameters $\overline{x_{1}},\overline{x_{2}}$), given by the following
(using (\ref{new var 1}) and Lemma \ref{Lem z1})

\begin{lemma}
\label{Lem tgunstabmanif} For $1/3\leq \delta \leq 1,$ $\varepsilon $ small
enough, $\varepsilon =\nu _{-}\alpha _{-}^{5/2},$ the 2-dimensional
intersection of the 3-dimensional plane tangent to the unstable manifold,
with the 5-dimensional hyperplane $H_{0},$ satisfies the system (rescaled
parameters are $(\overline{x_{1}},\overline{x_{2}})=\frac{\delta ^{1/2}}{%
B_{00}}\overline{X}_{0}$ with $|\overline{X}_{0}|\leq k_{0})$
\end{lemma}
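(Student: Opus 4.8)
The statement is an explicit computation, not a new existence argument: it reads off the tangent plane directly from the description of $\mathcal{W}_{\varepsilon ,\delta }^{(u)}$ already produced in Lemma \ref{unstablemanifM-}. By that Lemma the unstable manifold is, near its intersection with $H_{0}$, an analytic graph parametrized by $(\overline{X}_{0},B_{0}(-x_{\ast }))=(\overline{X}(-x_{\ast }),B_{0}(-x_{\ast }))$; fixing $B_{0}(-x_{\ast })=B_{00}=\left( (1-\alpha _{-}^{2}\delta ^{2})/(1+\delta ^{2})\right) ^{1/2}$ cuts it down to the $2$-dimensional manifold $\mathcal{W}_{\varepsilon ,\delta }^{(u)}\cap H_{0}$, parametrized by $\overline{X}_{0}$ with $|\overline{X}_{0}|\leq k_{0}$, which is transverse to $H_{0}$. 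Hence its tangent plane at the base point $\overline{X}_{0}=0$ (the value corresponding to the singular heteroclinic $\widetilde{A_{0}}=A_{1}=A_{2}=A_{3}=0$) coincides with $T\mathcal{W}_{\varepsilon ,\delta }^{(u)}\cap TH_{0}$, and it suffices to compute the differential, at $\overline{X}_{0}=0$, of the coordinates $(\widetilde{A_{0}},A_{1},A_{2},A_{3},B_{1})$ as functions of $\overline{X}_{0}$ along the manifold, then re-express it in the rescaled parameters $(\overline{x_{1}},\overline{x_{2}})=\frac{\delta ^{1/2}}{B_{00}}\overline{X}_{0}$.

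First I would assemble, at $x=-x_{\ast }$ and on $H_{0}$, the data entering the coordinate change (\ref{new var 1})--(\ref{equB1}): by construction $\overline{X}(-x_{\ast })=\overline{X}_{0}$; by Lemma \ref{Lem X,Y} the value $\overline{Y}(-x_{\ast })$ is the analytic function of $(\overline{X}_{0},B_{00})$ with $|\overline{Y}(-x_{\ast })|=\mathcal{O}(|\overline{X}_{0}|^{2}+\nu _{-}|\overline{X}_{0}|+\nu _{-}^{2}B_{00})$; and by Lemma \ref{Lem z1} the coordinate $z_{1}=\varepsilon \delta \overline{z_{1}}$ with $\overline{z_{1}}=\overline{z_{10}}(B_{00},\delta )[1+\mathcal{Z}]$, $\mathcal{Z}$ at least quadratic in $(\overline{X}_{0},\overline{Y}_{0})$ and estimated by (\ref{estim z10}). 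I then substitute these into (\ref{new var 1}) and (\ref{equB1}), using on $H_{0}$ the asymptotics $\widetilde{A_{\ast }}=\delta \alpha _{-}$, $\lambda _{r}^{2},\lambda _{i}^{2}\sim \frac{\delta \alpha _{-}}{\sqrt{2}}$ with $\lambda _{r}^{2}-\lambda _{i}^{2}=\varepsilon ^{2}B_{00}^{2}(1+\delta ^{2})^{2}=\mathcal{O}(\varepsilon ^{2})$ from (\ref{def lambda}), $\overline{z_{10}}\sim \frac{B_{00}}{\alpha _{-}\sqrt{2}}$, $B_{00}\sim (1+\delta ^{2})^{-1/2}$, $\varepsilon =\nu _{-}\alpha _{-}^{5/2}$, together with the scaling (\ref{scaling a}), $(X,Y)=\alpha _{-}^{3/2}\delta (\overline{X},\overline{Y})$. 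Differentiating at $\overline{X}_{0}=0$: the contribution of $\overline{Y}(-x_{\ast })$ has derivative $\mathcal{O}(\nu _{-})$ there (its quadratic part contributes nothing, the $\nu _{-}|\overline{X}_{0}|$ part contributes $\mathcal{O}(\nu _{-})$), and $\mathcal{Z}$ being quadratic its derivative vanishes; so the differential equals, up to a relative error $\mathcal{O}(\nu _{-})+\mathcal{O}(\alpha _{-})+\mathcal{O}(|\overline{X}_{0}|)$, the linear map in $\overline{X}_{0}$ obtained from the columns $V_{r}^{\pm },V_{i}^{\pm }$ evaluated with the above asymptotics (these are the same substitutions already carried out in passing from (\ref{new var 1}) to (\ref{x1 1})--(\ref{y2 1})). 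Rescaling by $\frac{\delta ^{1/2}}{B_{00}}$ then normalizes the coefficients and yields the announced system.

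\textbf{Main obstacle.} The delicate point is the bookkeeping of the blow-up factors $\widetilde{A_{\ast }}^{-1}=(\delta \alpha _{-})^{-1}$ and $\widetilde{A_{\ast }}^{-2}$ occurring in $V_{r}^{\pm },V_{i}^{\pm }$ and in (\ref{equB1}): on $H_{0}$ the quantity $\widetilde{A_{\ast }}$ is of order $\alpha _{-}$ and small, so these negative powers must be combined correctly with the explicit powers of $\alpha _{-}$ coming from $\lambda _{r},\lambda _{i}\sim \alpha _{-}^{1/2}$, from $\overline{z_{10}}\sim \alpha _{-}^{-1}$, from $\varepsilon ^{2}=\nu _{-}^{2}\alpha _{-}^{5}$, and from the two nested scalings, so that (i) the map from $(\overline{x_{1}},\overline{x_{2}})$ to $(\widetilde{A_{0}},A_{1},A_{2},A_{3},B_{1})$ has a well-defined nonsingular leading part (consistently with $A_{0}=\mathcal{O}(\alpha _{-})=\mathcal{O}(\varepsilon ^{2/5})$ near the base point), and (ii) the $\overline{Y}$- and $\mathcal{Z}$-contributions, even after multiplication by $\widetilde{A_{\ast }}^{-2}$, stay of higher order in $\nu _{-}$, $\alpha _{-}$ or $|\overline{X}_{0}|$ than that leading part. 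One must in particular keep track of the near-degeneracies $\lambda _{r}^{2}-\lambda _{i}^{2}=\mathcal{O}(\varepsilon ^{2})$ and $3\lambda _{r}^{2}-\lambda _{i}^{2},\ \lambda _{r}^{2}-3\lambda _{i}^{2}\sim \pm \sqrt{2}\,\delta \alpha _{-}$; once these are accounted for, the remaining steps are the routine (if lengthy) substitutions.
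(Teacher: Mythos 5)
Your proposal is correct and follows essentially the paper's own route: the paper likewise obtains (\ref{trace tg unstmanif}) by direct substitution of $\overline{X}(-x_{\ast })=\overline{X}_{0}$, the estimate $|\overline{Y}(-x_{\ast })|=\mathcal{O}(|\overline{X}_{0}|^{2}+\nu _{-}|\overline{X}_{0}|+\nu _{-}^{2}B_{00})$ from Lemma \ref{Lem X,Y}, and $z_{1}=\varepsilon \delta \overline{z_{10}}[1+\mathcal{Z}]$ from Lemma \ref{Lem z1} into (\ref{new var 1})--(\ref{equB1}), using the on-$H_{0}$ asymptotics $\widetilde{A_{\ast }}=\delta \alpha _{-}$, $\lambda _{r},\lambda _{i}\sim 2^{-1/4}(\delta \alpha _{-})^{1/2}$, $\overline{z_{10}}\sim B_{00}/(\alpha _{-}\sqrt{2})$, $\varepsilon =\nu _{-}\alpha _{-}^{5/2}$ and the rescaling $(\overline{x_{1}},\overline{x_{2}})=\frac{\delta ^{1/2}}{B_{00}}\overline{X}_{0}$. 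Only note that the lemma describes an affine trace, so besides the differential at $\overline{X}_{0}=0$ you must retain the constant offsets produced by your substitution step, namely the $\widetilde{A_{\ast }}$ contribution $\delta \alpha _{-}$ in $A_{0}$ and the $\overline{z_{10}}$ contribution $-\frac{\alpha _{-}^{2}\delta }{\sqrt{2}}B_{00}$ in $A_{1}$, as in the paper's formulas.
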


\begin{eqnarray}
A_{0} &=&\delta \alpha _{-}+\frac{\delta \alpha _{-}}{2^{3/4}}(\overline{%
x_{1}}-\overline{x_{2}})+\mathcal{O}(\alpha _{-}\nu _{-}|(\overline{x_{1}},%
\overline{x_{2}})|+\alpha _{-}\nu _{-}^{2})  \notag \\
A_{1} &=&(\delta \alpha _{-})^{3/2}\overline{x_{1}}-\frac{\alpha
_{-}^{2}\delta }{\sqrt{2}}B_{00}+\mathcal{O}(\alpha _{-}^{3/2}\nu _{-}|(%
\overline{x_{1}},\overline{x_{2}})|+\alpha _{-}^{3/2}\nu_{-} ^{2})  \notag \\
A_{2} &=&\frac{(\delta \alpha _{-})^{2}}{2^{1/4}}(\overline{x_{1}}+\overline{%
x_{2}})+\mathcal{O}(\alpha _{-}^{2}\nu _{-}|(\overline{x_{1}},\overline{x_{2}%
})|+\alpha _{-}^{2}\nu _{-}^{2})  \label{trace tg unstmanif} \\
A_{3} &=&\sqrt{2}(\delta \alpha _{-})^{5/2}\overline{x_{2}}+\mathcal{O}%
(\alpha _{-}^{5/2}\nu _{-}|(\overline{x_{1}},\overline{x_{2}})|+\alpha
_{-}^{5/2}\nu _{-}^{2})  \notag \\
B_{00} &\sim &(1+\delta ^{2})^{-1/2},  \notag
\end{eqnarray}%
with%
\begin{eqnarray*}
(|\overline{x_{1}}|^{2}+|\overline{x_{2}}|^{2})^{1/2} &\leq &k_{0}[\delta
(1+\delta ^{2})]^{1/2},\text{ \ }1/3\leq \delta \leq 1,\text{ }\varepsilon
=\nu _{-}\alpha _{-}^{5/2}, \\
\alpha _{-}^{2}\delta ^{2} &=&1-B_{00}^{2}(1+\delta ^{2})>0,
\end{eqnarray*}%
and where we do not write $B_{1}$ since we know that we sit on the 5
dimensional manifold $\mathcal{W}_{\varepsilon ,\delta }.$

\section{Stable manifold of $M_{+}\label{sect stabmanifold}$}

Assuming some estimates which need to be checked at the end, the strategy
here is to first solve with respect to $B_{0}$ in using the first integral (%
\ref{Wg}), and an implicit function argument. Hence $B_{0}$ becomes function
of $(A_{j}(x),B_{0}(-x_{\ast })),$ $j=0,1,2,3$ defined on $(-x_{\ast
},+\infty ).$ Afterwards for the search of the stable manifold of $M_{+}$
(with only 2 remaining dimensions), we need to solve a 4-dimensional system,
with variable coefficients on $(-x_{\ast },+\infty ).$

\subsection{Using the first integral (\protect\ref{Wg})\label{sec 4.1}}

The first integral (\ref{Wg}) is solved with respect to $B_{0}(x).$ Since it
is shown in \cite{BHI} that $B_{0}(x)\in \lbrack 0,1]$ and $B_{0}(+\infty
)=1 $ for the heteroclinic solution of (\ref{truncated syst}) we take the
positive square-root:%
\begin{equation*}
B_{0}^{\prime }=\frac{\varepsilon }{\sqrt{2}}%
[(1-B_{0}^{2})^{2}+A_{0}^{2}(A_{0}^{2}+2\delta
^{2}B_{0}^{2}+2(B_{0}^{2}-1))-2A_{2}^{2}+4A_{1}A_{3}]^{1/2},
\end{equation*}%
which shows that $B_{0}$ is growing as soon as the bracket does not cancel.
Let us define%
\begin{equation}
-1\leq v=B_{0}-1\leq 0,  \label{def v}
\end{equation}%
then we obtain%
\begin{equation}
v^{\prime }=\frac{-\varepsilon v}{\sqrt{2}}(2+v)\left( 1+\frac{%
A_{0}^{2}(A_{0}^{2}+2\delta
^{2}B_{0}^{2}+2(B_{0}^{2}-1))-2A_{2}^{2}+4A_{1}A_{3}}{(1-B_{0}^{2})^{2}}%
\right) ^{1/2}.  \label{equa v'}
\end{equation}%
For any $\kappa \geq 0$ let us introduce a function space adapted for this
section%
\begin{equation*}
C_{\kappa }^{0}=\{X\in C^{0}(-x_{\ast },+\infty );X(x)e^{\kappa x}\text{
bounded}\},
\end{equation*}%
equiped with the norm%
\begin{equation*}
||X||_{\kappa }=\underset{(-x_{\ast },\infty )}{\sup }|X(x)e^{\kappa x}|.
\end{equation*}%
For the connection with previous section, we take%
\begin{equation}
B_{0}^{2}(-x_{\ast })=\frac{1-\alpha _{-}^{2}\delta ^{2}}{1+\delta ^{2}},%
\text{ }\varepsilon =\nu _{-}\alpha _{-}^{5/2}.  \label{hyp B0+}
\end{equation}%
Then we prove the following

\begin{lemma}
\label{Lem v(x)}For $1/3\leq \delta \leq 1,$ $\kappa >\varepsilon ,$ assume (%
\ref{hyp B0+}) holds and assume that there exists $\gamma $ such that%
\begin{eqnarray}
|A_{j}(x)| &\leq &\gamma |v(x)|,\text{ }j=0,1,2,3,  \label{Hyp Aj} \\
x &\in &(-x_{\ast },+\infty ),\text{ }\gamma \leq 1/5.  \notag
\end{eqnarray}%
Then there exists a unique solution $v=\mathcal{V}(A_{0},A_{1},A_{2},A_{3})%
\in C^{1}(-x_{\ast },+\infty )$ of (\ref{equa v'}). $\mathcal{V}$ depends
analytically on $A_{j}\in C_{\kappa }^{0},$ $j=0,1,2,3,$ with $B_{0}=1+%
\mathcal{V},$%
\begin{eqnarray}
\mathcal{V}(A_{0},A_{1},A_{2},A_{3}) &=&\mathcal{V}_{0}+\mathcal{H}%
(A_{0},A_{1},A_{2},A_{3}),\text{ }  \notag \\
||\mathcal{H}(A_{0},A_{1},A_{2},A_{3})||_{\kappa } &\leq &c\frac{\varepsilon 
}{\kappa }||(A_{0},A_{1},A_{2},A_{3})||_{\kappa }^{2},
\label{prop solu v(x)} \\
\mathcal{V}_{0}(x) &=&\frac{(1-\sqrt{1+\delta ^{2}})[1-\tanh (\varepsilon 
\sqrt{2}x)]}{\sqrt{1+\delta ^{2}}+\tanh (\varepsilon \sqrt{2}x)}<0,\text{ } 
\notag \\
1+\mathcal{V}_{0}(0) &=&\frac{1}{\sqrt{1+\delta ^{2}}},\text{ }\mathcal{V}%
_{0}(-x_{\ast })=B_{0}(-x_{\ast })-1<0,  \notag
\end{eqnarray}%
and $x_{\ast }$ defined in (\ref{def x*}), is such that $B_{0}|_{x=0}=(1+%
\delta ^{2})^{-1/2}$. Moreover we have%
\begin{equation}
\frac{v(-x_{\ast })[1-\tanh (\frac{3\varepsilon (x+x_{\ast })}{4\sqrt{2}})]}{%
1+B_{0}(-x_{\ast })\tanh (\frac{3\varepsilon (x+x_{\ast })}{4\sqrt{2}})}\leq
v(x)\leq \frac{v(-x_{\ast })[1-\tanh (\frac{5\varepsilon (x+x_{\ast })}{4%
\sqrt{2}})]}{1+B_{0}(-x_{\ast })\tanh (\frac{5\varepsilon (x+x_{\ast })}{4%
\sqrt{2}})}.  \label{apriori estim v(x)}
\end{equation}
\end{lemma}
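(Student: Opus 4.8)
The plan is to recast (\ref{equa v'}) as a scalar logistic equation carrying a small, fully controlled relative perturbation, and then to extract every assertion by a comparison argument together with an analytic implicit‑function argument in $C_\kappa^0$. Starting from the first integral (\ref{Wg}) one extracts the positive square root for $B_0^\prime$ — legitimate near $x=-x_\ast$ since \cite{BHI} gives $0\le B_0\le1$ and $B_0^\prime=B_1>0$ there — and substitutes $v=B_0-1\in[-1,0]$ as in (\ref{def v}). Using $1-B_0^2=-v(2+v)\ge0$ and $(1-B_0^2)^2=v^2(2+v)^2$ one writes (\ref{equa v'}) as $v^\prime=\Phi(v)\sqrt{1+p(x)}$, with $\Phi(v):=-\dfrac{\varepsilon v(2+v)}{\sqrt2}=\dfrac{\varepsilon}{\sqrt2}(1-B_0^2)\ge0$ and $p(x):=\bigl[A_0^2(A_0^2+2\delta^2B_0^2+2(B_0^2-1))-2A_2^2+4A_1A_3\bigr]/(1-B_0^2)^2$. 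The point of hypothesis (\ref{Hyp Aj}) is that it kills the apparent singularity of $p$: since $|A_j|\le\gamma|v|$ and, along the trajectory, $B_0\ge B_0(-x_\ast)>0$ stays bounded away from $0$, the numerator is $\le9\gamma^2v^2$, hence $|p(x)|\le9\gamma^2/(2+v)^2\le\tfrac{1}{4}$ for $\gamma$ as in (\ref{Hyp Aj}); so $1+p>0$ (hence $B_0$ strictly increasing) and $\tfrac{3}{4}\Phi(v)\le v^\prime\le\tfrac{5}{4}\Phi(v)$ wherever $v$ exists.

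Next I carry out the comparison. Let $\hat v$, $\tilde v$ solve $\hat v^\prime=\tfrac{3}{4}\Phi(\hat v)$, $\tilde v^\prime=\tfrac{5}{4}\Phi(\tilde v)$ with the common initial value $v_\ast:=B_0(-x_\ast)-1$ fixed by (\ref{hyp B0+}); since each of $1+\hat v$, $1+\tilde v$ obeys $B^\prime=\text{const}\cdot(1-B^2)$, they integrate explicitly into the two $\tanh$‑profiles bracketing $v(x)$ in (\ref{apriori estim v(x)}). The scalar comparison principle gives $\hat v(x)\le v(x)\le\tilde v(x)<0$ on the maximal interval of existence; as $\hat v,\tilde v$ stay in $(-1,0)$ for all $x\ge-x_\ast$, so does $v$, whence global existence on $(-x_\ast,+\infty)$ by continuation (e.g.\ \cite{Hale}), the sign statement $B_0^\prime>0$, the limit $v(x)\to0$, and (\ref{apriori estim v(x)}). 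The distinguished solution $\mathcal V_0$ is the one with $p\equiv0$ and $\mathcal V_0(-x_\ast)=v_\ast$, i.e.\ $1+\mathcal V_0=\tanh\bigl(\tfrac{\varepsilon}{\sqrt2}x+c\bigr)$ with $c$ such that $1+\mathcal V_0(-x_\ast)=B_0(-x_\ast)$; the addition formula for $\tanh$ puts it in the closed form of (\ref{prop solu v(x)}). Finally $x_\ast$ is \emph{defined} — this is (\ref{def x*}) — as the unique time elapsed along this monotone trajectory between $B_0=B_0(-x_\ast)$ and $B_0=(1+\delta^2)^{-1/2}$, so that $1+\mathcal V_0(0)=(1+\delta^2)^{-1/2}$; a one‑line quadrature gives $x_\ast=\mathcal O(\varepsilon^{-1/5})$.

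Now set $\mathcal H:=v-\mathcal V_0$, so $\mathcal H(-x_\ast)=0$ and $\mathcal H\to0$ at $+\infty$. Writing $\Phi(v)-\Phi(\mathcal V_0)=a(x)\mathcal H$ with $a(x)=\int_0^1\Phi^\prime(\mathcal V_0+t\mathcal H)\,dt$, $\Phi^\prime(w)=-\sqrt2\,\varepsilon(1+w)$, so $-\sqrt2\,\varepsilon\le a(x)\le-b<0$ with $b$ of order $\varepsilon B_0(-x_\ast)$, one gets $\mathcal H^\prime=a(x)\mathcal H+\Phi(v)\bigl(\sqrt{1+p(x)}-1\bigr)$ and hence
\begin{equation*}
\mathcal H(x)=\int_{-x_\ast}^{x}\exp\!\Bigl(\int_s^x a(\tau)\,d\tau\Bigr)\,\Phi\bigl(v(s)\bigr)\bigl(\sqrt{1+p(s)}-1\bigr)\,ds .
\end{equation*}
Since $|\sqrt{1+p}-1|\le|p|\le c\,v^{-2}(2+v)^{-2}\sum_jA_j^2$ and $\Phi(v)\le\sqrt2\,\varepsilon|v|$, the integrand is $\le c\,\varepsilon\,|v(s)|^{-1}\sum_jA_j(s)^2$; using the lower bound on $|v|$ from (\ref{apriori estim v(x)}) to absorb the leftover $|v|^{-1}$ against $\sum_jA_j^2=\mathcal O(e^{-2\kappa s})$, together with $\exp(\int_s^x a)\le e^{-b(x-s)}$, leaves an elementary exponential integral which — for $\kappa$ just above $\varepsilon$ as in the hypotheses, making the rate arithmetic close — yields $\|\mathcal H\|_\kappa\le c\,\tfrac{\varepsilon}{\kappa}\,\|(A_0,A_1,A_2,A_3)\|_\kappa^2$, i.e.\ (\ref{prop solu v(x)}). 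Uniqueness and analyticity of $\mathcal V:(A_0,A_1,A_2,A_3)\mapsto v$ on a small ball of $(C_\kappa^0)^4$ follow from the analytic implicit function theorem (see \cite{Dieudo}, section X.2) applied to $G(v,(A_j)):=v^\prime-\Phi(v)\sqrt{1+p}$ with boundary condition $v(-x_\ast)=v_\ast$: at $(\mathcal V_0,0)$ the partial differential $\partial_vG$ is $h\mapsto h^\prime-a(x)h$ with zero data at $-x_\ast$, invertible on $C_\kappa^0$ by the same kernel estimate, while $p$ is quadratic in the $A_j$, so $D\mathcal V(0)=0$ and $\mathcal V(A)-\mathcal V_0=\mathcal O(\|A\|_\kappa^2)$, consistent with (\ref{prop solu v(x)}).

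The one genuine difficulty is the $1/v$‑type singularity of $p$ as $x\to+\infty$ (where $B_0\to1$): it is precisely here that (\ref{Hyp Aj}) is indispensable, since it is what makes the relative perturbation of the logistic equation uniformly $\mathcal O(\gamma^2)$ — the price being that (\ref{Hyp Aj}) must then be closed a posteriori when the remaining $4$‑dimensional stable system is solved. The secondary nuisance is the exponential bookkeeping: reconciling the decay rate of $v$ (pinned from both sides by (\ref{apriori estim v(x)})) and the rate $b$ of the homogeneous operator $\exp(\int a)$ with the weight $e^{\kappa x}$ is exactly what forces $\kappa>\varepsilon$.
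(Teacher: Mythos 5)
Up to and including the a priori control of $v$, your argument is essentially the paper's proof: the same rewriting of (\ref{equa v'}) as a logistic equation with a relative perturbation bounded through (\ref{Hyp Aj}) by $9\gamma^{2}$, the same bracketing $\tfrac34\Phi(v)\le v^{\prime}\le\tfrac54\Phi(v)$ (the paper integrates via $w=\ln\bigl(-v/(1+v/2)\bigr)$ where you invoke a comparison with the two explicit $\tanh$-profiles, which is the same computation and yields (\ref{apriori estim v(x)}) exactly), global existence and uniqueness by continuation from the a priori bound (Hale), and the same normalization of $\mathcal V_{0}$ and of $x_{\ast}$.

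The genuine soft spot is the weighted estimate in (\ref{prop solu v(x)}), i.e.\ precisely the step the paper dispatches in one line. You anchor $\mathcal H=v-\mathcal V_{0}$ at $x=-x_{\ast}$ and integrate forward against the kernel $e^{\int_{s}^{x}a}\le e^{-b(x-s)}$ with $b=\sqrt2\,\varepsilon B_{0}(-x_{\ast})$. With a forcing decaying like $e^{-2\kappa s}$ (up to the $\mathcal O(\varepsilon)$ loss coming from your $|v|^{-1}$ factor), this forward Duhamel formula excites the slow homogeneous mode: generically $\mathcal H(x)\sim C e^{-\sqrt2\,\varepsilon x}$ as $x\to+\infty$, so $\sup_{x}e^{\kappa x}|\mathcal H(x)|$ is controlled by your integral only when $\kappa\le b$, and your bound carries an uncontrolled factor $e^{(\kappa-b)x}$ otherwise. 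Note that $b=\sqrt2\,\varepsilon B_{0}(-x_{\ast})\le\varepsilon$ already for $\delta$ near $1$ (where $B_{0}(-x_{\ast})<1/\sqrt2$), so even ``$\kappa$ just above $\varepsilon$'' is not safe uniformly in $\delta\in[1/3,1]$; and the lemma is invoked later with $\kappa=\mathcal O(\varepsilon^{1/5})$ (Lemma \ref{Lemma stable manifM+}, Section \ref{sect intersection}), far outside the window in which your rate arithmetic closes. The same objection applies to your claim that $h\mapsto h^{\prime}-a(x)h$ with zero data at $-x_{\ast}$ is invertible on $C_{\kappa}^{0}$. The repair is to anchor the deviation at $+\infty$ rather than at $-x_{\ast}$: represent it as $\mathcal H(x)=-\int_{x}^{\infty}e^{\int_{s}^{x}a(\tau)d\tau}\,\Phi\bigl(v(s)\bigr)\bigl(\sqrt{1+p(s)}-1\bigr)ds$ (equivalently, take for $\mathcal V_{0}$ the translate of the logistic profile matched to $v$ at $+\infty$, the normalizations at $x=0$ and $x=-x_{\ast}$ then holding up to the same quadratic error). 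Since $e^{\int_{s}^{x}a}\le e^{\sqrt2\,\varepsilon(s-x)}$ for $s\ge x$ while the forcing decays essentially like $e^{-2\kappa s}$, the backward integral converges and gives $\|\mathcal H\|_{\kappa}\le c\,\tfrac{\varepsilon}{\kappa}\|(A_{0},A_{1},A_{2},A_{3})\|_{\kappa}^{2}$ once $\kappa$ dominates a fixed multiple of $\varepsilon$ — in particular in the regime $\kappa\sim\varepsilon^{1/5}$ where the estimate is actually used — which is what (\ref{prop solu v(x)}) requires and what your forward anchoring cannot deliver.
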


\begin{remark}
Later we need to check that (\ref{Hyp Aj}) is indeed satisfied for the
stable manifold.
\end{remark}

\begin{proof}
Let us assume that (\ref{Hyp Aj}) holds, and using (\ref{hyp B0+}) for $%
\varepsilon $ small enough, we have 
\begin{equation*}
A_{0}^{2}+2\delta ^{2}B_{0}^{2}+2(B_{0}^{2}-1)<3,
\end{equation*}%
so that 
\begin{equation*}
|A_{0}^{2}(A_{0}^{2}+2\delta
^{2}B_{0}^{2}+2(B_{0}^{2}-1))-2A_{2}^{2}+4A_{1}A_{3}|\leq 9\gamma
^{2}|v|^{2}.
\end{equation*}%
Now, by (\ref{def v})%
\begin{equation*}
(1-B_{0}^{2})=|v|(2+v)>|v|,
\end{equation*}%
hence 
\begin{equation*}
\left\vert \frac{A_{0}^{2}(A_{0}^{2}+2\delta
^{2}B_{0}^{2}+2(B_{0}^{2}-1))-2A_{2}^{2}+4A_{1}A_{3}}{(1-B_{0}^{2})^{2}}%
\right\vert \leq 9\gamma ^{2}<\frac{1}{2},
\end{equation*}%
and the square root is analytic in $(v,A_{0},A_{1},A_{2},A_{3})$ leading to%
\begin{equation}
v^{\prime }=-\varepsilon \sqrt{2}v(1+\frac{1}{2}v)[1+\mathcal{Z}%
(v,A_{0},A_{1},A_{2},A_{3})],\text{ \ }||\mathcal{Z}||_{\kappa }\leq 1/4,
\label{diffequ v}
\end{equation}%
with%
\begin{equation}
||\mathcal{Z}(v,A_{0},A_{1},A_{2},A_{3})||_{\kappa }\leq
c||(A_{0},A_{1},A_{2},A_{3})||_{\kappa }^{2}.  \label{estim Z+}
\end{equation}%
Then we can integrate the differential equation, as in section \ref{sol
integrodiff B*}. We introduce the new variable $w$ as%
\begin{equation*}
w^{\prime } =\frac{2v^{\prime }}{v(2+v)}, \text{  }
w =\ln \left( \frac{-v}{1+v/2}\right) , \text{  }
v =-\frac{e^{w}}{1+\frac{1}{2}e^{w}},
\end{equation*}%
$w$ decreases from $w_{0}$ to $-\infty $ for $x\in (-x_{\ast },\infty ),$
while $v$ grows from $v_{0}=v(-x_{\ast })<0$ to $0.$ Then, defining $%
h(w,A_{0},A_{1},A_{2},A_{3})\overset{def}{=}\mathcal{Z}%
(v,A_{0},A_{1},A_{2},A_{3})$, we obtain, by simple integration%
\begin{equation*}
\varepsilon \sqrt{2}(x+x_{\ast })(1-1/4)\leq w_{0}-w(x)\leq \varepsilon 
\sqrt{2}(x+x_{\ast })(1+1/4),
\end{equation*}%
from which we deduce the estimate (\ref{apriori estim v(x)}). This a priori
estimate for $v$ allows to prove (see for example \cite{Hale}) the existence
and uniqueness of a solution for (\ref{diffequ v}) on the whole interval $%
x\in \lbrack -x_{\ast },\infty )$. We define $x_{\ast }$ in choosing to
satisfy%
\begin{equation*}
B_{0}(0)=\frac{1}{\sqrt{1+\delta ^{2}}},
\end{equation*}%
which gives the expression of $\mathcal{V}_{0}(x)$ given in (\ref{prop solu
v(x)}) and%
\begin{equation}
x_{\ast }\sim \frac{\sqrt{1+\delta ^{2}}}{2\sqrt{2}}\frac{\alpha _{-}^{2}}{%
\varepsilon }=\frac{\sqrt{1+\delta ^{2}}}{2\nu _{-}^{4/5}\sqrt{2}}%
\varepsilon ^{-1/5}.  \label{def x*}
\end{equation}%
The estimate in (\ref{prop solu v(x)}) results from (\ref{estim Z+}). Lemma %
\ref{Lem v(x)} is proved.
\end{proof}

\subsection{About the interval $(-x_{\ast },x_{\ast }^{+})$\label{sect x*}}

For the first part of the proof for the stable manifold, we do not start
from $x=-x_{\ast },$ but from $x=x_{\ast }^{+}$ for which $B_{0}(x_{\ast
}^{+})$ satisfies%
\begin{equation}
B_{0}^{2}(x_{\ast }^{+})=\frac{1+\alpha _{+}^{2}\delta ^{2}}{1+\delta ^{2}},
\label{initial cond B0+}
\end{equation}%
where $\alpha _{+}$ is defined at next section, with $\nu _{+}\neq \nu _{-}$
and%
\begin{equation*}
\varepsilon =\nu _{+}\alpha _{+}^{5/2}.
\end{equation*}%
In fact we obtain below an estimate on $\nu _{+}$ which is much better than
for $\nu _{-}$ and this will allow to extend later the existence of the
stable manifold on the interval 
\begin{equation*}
\left( \frac{1-\alpha _{-}^{2}\delta ^{2}}{1+\delta ^{2}}\right) ^{1/2}\leq
B_{0}\leq \left( \frac{1+\alpha _{+}^{2}\delta ^{2}}{1+\delta ^{2}}\right)
^{1/2}.
\end{equation*}%
For finding the heteroclinic we need to connect $B_{0}(x)$ found for the
unstable manifold until its upper limit 
\begin{equation*}
B_{0}(-x_{\ast })=\left( \frac{1-\alpha _{-}^{2}\delta ^{2}}{1+\delta ^{2}}%
\right) ^{1/2},
\end{equation*}%
with $B_{0}(x)$ found at Lemma \ref{Lem v(x)} valid until the same limit of $%
B_{0}(-x_{\ast })$ (now the lower limit)$.$ Then we have%
\begin{equation*}
B_{0}(x_{\ast }^{+})\sim \frac{1+\sqrt{1+\delta ^{2}}\tanh (\varepsilon 
\sqrt{2}x_{\ast }^{+})}{\sqrt{1+\delta ^{2}}+\tanh (\varepsilon \sqrt{2}%
x_{\ast }^{+})}
\end{equation*}%
which gives%
\begin{equation}
x_{\ast }^{+}\sim \frac{\sqrt{(1+\delta ^{2})}}{2\sqrt{2}}\frac{\alpha
_{+}^{2}}{\varepsilon }\sim \frac{\sqrt{1+\delta ^{2}}}{2\nu _{+}^{4/5}\sqrt{%
2}}\varepsilon ^{-1/5},  \label{x*+}
\end{equation}%
where $x_{\ast }^{+}$ is a priori different from $x_{\ast }$ since $\nu
_{+}\neq \nu _{-}$ and $B_{0}(x)$ is not invariant under the change $%
x\mapsto -x.$

\subsection{Formulation with new coordinates\label{coord stab manif}}

Here again, the strategy is first in section \ref{coord stab manif}, to
write the system (\ref{truncated syst}) in adapted coordinates, still $%
B_{0}- $ dependent, where $B_{0}$ satisfies Lemma \ref{Lem v(x)}. The
remaining two stable directions lead, for the search of the stable manifold,
to a system solved for $x\in \lbrack x_{\ast }^{+},+\infty )$ in section \ref%
{sec 4.4}. Finally we give nearly explicitely the 2-dimensional intersection
of the tangent plane of the stable manifold built on $x\in \lbrack x_{\ast
}^{+},+\infty ),$ with the hyperplane $B_{0}=\left( \frac{1+\alpha
_{+}^{2}\delta ^{2}}{1+\delta ^{2}}\right) ^{1/2}.$

As in section \ref{sect unstable manif new}, the new coordinates are such
that we are able to use monodromy operators with easy estimates in the
formulation of the search for the 3-dimensional stable manifold of $M_{+}.$

For 
\begin{equation*}
(1+\delta ^{2})B_{0}^{2}-1\geq \alpha _{+}^{2}\delta ^{2},\text{ }%
\varepsilon =\nu _{+}\alpha _{+}^{5/2},\text{ }\nu _{+}>0,
\end{equation*}%
let us define 
\begin{equation}
\widetilde{\delta }=\{(1+\delta ^{2})B_{0}^{2}-1\}^{1/4}\geq (\alpha
_{+}\delta )^{1/2},  \label{ineg deltatilde}
\end{equation}%
and choose a new basis (function of $B_{0})$%
\begin{equation*}
V_{r}^{\pm }=\left( 
\begin{array}{c}
1 \\ 
\pm \frac{\widetilde{\delta }}{\sqrt{2}} \\ 
0 \\ 
\mp \frac{\widetilde{\delta }^{3}}{\sqrt{2}} \\ 
0 \\ 
0%
\end{array}%
\right) ,\text{ }V_{i}^{\pm }=\left( 
\begin{array}{c}
0 \\ 
\pm \frac{\widetilde{\delta }}{\sqrt{2}} \\ 
\widetilde{\delta }^{2} \\ 
\pm \frac{\widetilde{\delta }^{3}}{\sqrt{2}} \\ 
0 \\ 
0%
\end{array}%
\right) ,
\end{equation*}%
\begin{equation*}
W_{1}^{-}=\left( 
\begin{array}{c}
0 \\ 
0 \\ 
0 \\ 
0 \\ 
1 \\ 
-\varepsilon \sqrt{2}%
\end{array}%
\right) ,\text{ }W_{1}^{+}=\left( 
\begin{array}{c}
0 \\ 
0 \\ 
0 \\ 
0 \\ 
1 \\ 
\varepsilon \sqrt{2}%
\end{array}%
\right) ,
\end{equation*}%
for defining new coordinates $(x_{1},x_{2},y_{1},y_{2},z_{0},z_{1})$ such
that%
\begin{eqnarray}
Z &=&(0,0,0,0,1,0)^{t}  \label{newcoord+} \\
&&+x_{1}V_{r}^{-}+x_{2}V_{i}^{-}+y_{1}V_{r}^{+}+y_{2}V_{i}^{+}+z_{0}W_{1}^{-}+z_{1}W_{1}^{+}
\notag
\end{eqnarray}%
hence 
\begin{eqnarray}
A_{0} &=&x_{1}+y_{1}  \notag \\
A_{1} &=&-\frac{\widetilde{\delta }}{\sqrt{2}}(x_{1}-y_{1}+x_{2}-y_{2}) 
\notag \\
A_{2} &=&\widetilde{\delta }^{2}(x_{2}+y_{2})  \label{variables M+} \\
A_{3} &=&\frac{\widetilde{\delta }^{3}}{\sqrt{2}}(x_{1}-y_{1}-x_{2}+y_{2}) 
\notag \\
B_{0} &=&1+v  \notag
\end{eqnarray}%
which is easy to invert, where $v$ is given by Lemma \ref{Lem v(x)}, and
coordinates $z_{0},z_{1}$ are not used, replaced by the use of $v$. Now the
system (\ref{truncated syst}) reads as%
\begin{eqnarray}
A_{0}^{\prime } &=&A_{1},  \notag \\
A_{1}^{\prime } &=&A_{2},  \notag \\
A_{2}^{\prime } &=&A_{3},  \label{syst stabmanifa} \\
A_{3}^{\prime } &=&-A_{0}[A_{0}^{2}+\widetilde{\delta }^{4}(v)],  \notag
\end{eqnarray}%
where we know (and use) that 
\begin{equation*}
v^{\prime }=-\varepsilon \sqrt{2}v(1+\frac{1}{2}v)[1+\mathcal{Z}%
(v,A_{0},A_{1},A_{2},A_{3})].
\end{equation*}%
With new variables defined in (\ref{variables M+}) this leads to the new
4-dimensional system%
\begin{eqnarray*}
X^{\prime } &=&-\boldsymbol{L}X+G_{0}(\varepsilon ,v,X,Y), \\
Y^{\prime } &=&\boldsymbol{L}Y+G_{1}(\varepsilon ,v,X,Y),
\end{eqnarray*}%
\begin{equation*}
X=\left( 
\begin{array}{c}
x_{1} \\ 
x_{2}%
\end{array}%
\right) ,\text{ }Y=\left( 
\begin{array}{c}
y_{1} \\ 
y_{2}%
\end{array}%
\right) ,\text{ }\boldsymbol{L}=\frac{\widetilde{\delta }(v)}{\sqrt{2}}%
\left( 
\begin{array}{cc}
1 & 1 \\ 
-1 & 1%
\end{array}%
\right) ,
\end{equation*}%
\begin{eqnarray*}
G_{0}(\varepsilon ,v,X,Y) &=&\frac{(x_{1}+y_{1})^{3}}{2\sqrt{2}\widetilde{%
\delta }^{3}}V_{0}+\frac{\varepsilon }{\widetilde{\delta }^{4}}\frac{%
(1+v)(\delta ^{2}-\widetilde{\delta }^{4})}{4\sqrt{2}}[1+\mathcal{Z}(v,X,Y)](%
\boldsymbol{M}_{1}X+\boldsymbol{M}_{2}Y), \\
G_{1}(\varepsilon ,v,X,Y) &=&-\frac{(x_{1}+y_{1})^{3}}{2\sqrt{2}\widetilde{%
\delta }^{3}}V_{0}+\frac{\varepsilon }{\widetilde{\delta }^{4}}\frac{%
(1+v)(\delta ^{2}-\widetilde{\delta }^{4})}{4\sqrt{2}}[1+\mathcal{Z}(v,X,Y)](%
\boldsymbol{M}_{1}Y+\boldsymbol{M}_{2}X),
\end{eqnarray*}%
\begin{equation*}
V_{0}=\left( 
\begin{array}{c}
-1 \\ 
1%
\end{array}%
\right) ,\text{ }\boldsymbol{M}_{1}=\left( 
\begin{array}{cc}
-2 & 1 \\ 
1 & -4%
\end{array}%
\right) ,\text{ }\boldsymbol{M}_{2}=\left( 
\begin{array}{cc}
2 & -1 \\ 
-1 & 0%
\end{array}%
\right) ,
\end{equation*}%
where $\mathcal{Z}(v,X,Y)$ is defined in (\ref{diffequ v}), and where we used%
\begin{equation*}
\widetilde{\delta }^{\prime }=\frac{(1+v)(\delta ^{2}-\widetilde{\delta }%
^{4})}{2\widetilde{\delta }^{3}}\frac{\varepsilon }{\sqrt{2}}(1+\mathcal{Z}).
\end{equation*}%
The above system is completed by the expression of $v$ given by Lemma \ref%
{Lem v(x)}. Notice that the coefficients of the linear part in $(X,Y)$ are
functions of $v$, where the expected part, which has the factor $\frac{%
\widetilde{\delta }(v)}{\sqrt{2}},$ is perturbed by a linear part bounded by 
$\mathcal{O}(\frac{\varepsilon }{\alpha _{+}^{2}})=\mathcal{O}(\nu
_{+}\alpha _{+}^{1/2}).$ Below, we choose $\nu _{+}$ such that the
perturbed part is really a perturbation of the first part.

For finding the stable manifold of $M_{+}$ we put the system in an integral
form, looking for solutions tending to $0$ as $x\rightarrow +\infty .$ With $%
x\geq x_{\ast }^{+}$, we obtain the system 
\begin{eqnarray}
X(x) &=&S_{0}(x,x_{\ast }^{+})X_{0}+\int_{x_{\ast
}^{+}}^{x}S_{0}(x,s)G_{0}[\varepsilon ,v(s),X(s),Y(s)]ds,  \notag \\
Y(x) &=&-\int_{x}^{+\infty }S_{1}(x,s)G_{1}[\varepsilon ,v(s),X(s),Y(s)]ds,
\label{stabmanifXY}
\end{eqnarray}%
where we notice that%
\begin{equation}
S_{0}(x,s)=e^{-\int_{s}^{x}\frac{\widetilde{\delta }(v(\tau ))d\tau }{\sqrt{2%
}}}\left( 
\begin{array}{cc}
\cos \int_{s}^{x}\frac{\widetilde{\delta }d\tau }{\sqrt{2}} & -\sin
\int_{s}^{x}\frac{\widetilde{\delta }d\tau }{\sqrt{2}} \\ 
\sin \int_{s}^{x}\frac{\widetilde{\delta }d\tau }{\sqrt{2}} & \cos
\int_{s}^{x}\frac{\widetilde{\delta }d\tau }{\sqrt{2}}%
\end{array}%
\right) ,  \label{S0+}
\end{equation}%
\begin{equation}
S_{1}(x,s)=e^{\int_{s}^{x}\frac{\widetilde{\delta }(v(\tau ))d\tau }{\sqrt{2}%
}}\left( 
\begin{array}{cc}
\cos \int_{s}^{x}\frac{\widetilde{\delta }d\tau }{\sqrt{2}} & \sin
\int_{s}^{x}\frac{\widetilde{\delta }d\tau }{\sqrt{2}} \\ 
-\sin \int_{s}^{x}\frac{\widetilde{\delta }d\tau }{\sqrt{2}} & \cos
\int_{s}^{x}\frac{\widetilde{\delta }d\tau }{\sqrt{2}}%
\end{array}%
\right) ,  \label{S1+}
\end{equation}%
and, using (\ref{ineg deltatilde})%
\begin{eqnarray*}
||S_{0}(x,s)|| &\leq &e^{-\sqrt{\frac{\alpha _{+}\delta }{2}}(x-s)},\text{ }%
x_{\ast }^{+}\leq s\leq x<\infty , \\
||S_{1}(x,s)|| &\leq &e^{-\sqrt{\frac{\alpha _{+}\delta }{2}}(s-x)},\text{ \ 
}x_{\ast }^{+}\leq x\leq s<\infty .
\end{eqnarray*}%
The 3-dimensional stable manifold is obtained in expressing $%
(X(x),Y(x),B_{0}(x))$ as a function of $(X_{0},B_{0}(x_{\ast }^{+}))$
solving (\ref{stabmanifXY}), where $B_{0}=1+v$ is given by Lemma \ref{Lem
v(x)} and $B_{0}(x_{\ast }^{+})$ has the lower bound (\ref{initial cond B0+}%
).

\subsection{The stable manifold for $x\in \lbrack x_{\ast }^{+},+\infty ) 
\label{sec 4.4}$}

We show the following

\begin{lemma}
\label{Lemma stable manifM+}For $1/3\leq \delta \leq 1,$ $0<k_{1}<\frac{1}{%
\sqrt{2}}$, $\nu _{+}>0$ small enough, and for $\varepsilon $ small enough,
the 3-dimensional stable manifold $\mathcal{W}_{\varepsilon ,\delta }^{(s)}$
of $M_{+}$ exists for $x\in \lbrack x_{\ast }^{+},+\infty ),$ is included in
the 5-dimensional manifold $\mathcal{W}_{\varepsilon ,\delta },$ is analytic
in $(\varepsilon ,\delta )$, parameterized by $(X_{0},B_{0}(x_{\ast }^{+}))$
where $X(x)$ is a 2-dimensional coordinate defined in (\ref{newcoord+}), and 
$X_{0}=X(x_{\ast }^{+})$. Moreover choosing $\delta _{\ast }$such that 
\begin{equation*}
\delta _{\ast }=\frac{1}{10}\delta ^{2/5}
\end{equation*}%
we have 
\begin{eqnarray*}
\left( \frac{1+\alpha _{+}^{2}\delta ^{2}}{1+\delta ^{2}}\right) ^{1/2}
&\leq &B_{0}(x_{\ast }^{+})\leq B_{0}(x)\leq 1,\text{ }B_{0}^{\prime }(x)>0,%
\text{ }x\in \lbrack x_{\ast }^{+},+\infty ), \\
A_{j}(x) &=&\mathcal{O}(|X_{0}|e^{-\delta _{\ast }\varepsilon ^{1/5}x}),%
\text{ }j=0,1,2,3,\text{ }\varepsilon =\nu _{+}\alpha _{+}^{5/2},
\end{eqnarray*}%
where $|X_{0}|\leq k_{1}\alpha _{+}\delta .$ As $x\rightarrow +\infty ,$ $%
(A_{0},A_{1},A_{2},A_{3})\rightarrow 0$ as $exp(-\sqrt{\frac{\delta }{2}}x),$
$(1-B_{0},B_{1})\rightarrow 0$ as $\exp (-\sqrt{2}\varepsilon x).$
\end{lemma}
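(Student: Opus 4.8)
The plan is to solve the integral formulation (\ref{stabmanifXY}) on $[x_\ast^+,+\infty)$ by the analytic implicit function theorem (see \cite{Dieudo} \S X.2), in the spirit of Section \ref{sect unstable manif new} but on the forward ray. I would fix $(X_0,B_0(x_\ast^+))$ with $|X_0|\leq k_1\alpha_+\delta$ and $B_0(x_\ast^+)$ given by (\ref{initial cond B0+}), and work in $(C^0_\kappa)^2$ with $\kappa=\delta_\ast\varepsilon^{1/5}=\tfrac1{10}\delta^{2/5}\varepsilon^{1/5}$, looking for $(X,Y)$ as a perturbation of $(S_0(\cdot,x_\ast^+)X_0,\,0)$ — note that $G_0,G_1$ carry no term of order zero in $(X,Y)$, so the natural ball has radius $\mathcal{O}(|X_0|)$. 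The key preliminary is that $\kappa$ is strictly below the exponential dichotomy rate of $S_0,S_1$: since $\varepsilon=\nu_+\alpha_+^{5/2}$ gives $\sqrt{\alpha_+\delta/2}=\sqrt{\delta/2}\,\nu_+^{-1/5}\varepsilon^{1/5}$, one has $\kappa/\sqrt{\alpha_+\delta/2}=\sqrt2\,\nu_+^{1/5}/(10\,\delta^{1/10})<\tfrac13$ for $\nu_+$ small — this is exactly why the exponent $\delta_\ast=\tfrac1{10}\delta^{2/5}$ is chosen. One also notes $\kappa x_\ast^+\sim\tfrac{\delta^{2/5}\sqrt{1+\delta^2}}{20\sqrt2}\nu_+^{-4/5}$ is independent of $\varepsilon$, so once $\nu_+$ is fixed the factor $e^{\kappa x_\ast^+}$ attached to the homogeneous term is a bounded constant; this is what makes $||X||_\kappa=\mathcal{O}(|X_0|)$ hold with constant independent of $\varepsilon$, hence $A_j(x)=\mathcal{O}(|X_0|e^{-\delta_\ast\varepsilon^{1/5}x})$ on the whole ray.

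Next I would remove the coupling with $v$ using Lemma \ref{Lem v(x)}, which writes $v=\mathcal{V}_0+\mathcal{H}$ with $\mathcal{V}_0$ depending only on $(\varepsilon,\delta,x)$ and $||\mathcal{H}||_\kappa\leq c\tfrac{\varepsilon}{\kappa}||(A_0,A_1,A_2,A_3)||_\kappa^2$ quadratically small; then $\widetilde{\delta}(v)=\widetilde{\delta}(\mathcal{V}_0)$ plus a correction controlled by $\widetilde{\delta}'(\mathcal{V}_0)=\mathcal{O}((\alpha_+\delta)^{-3/2})$ times $\mathcal{H}$, so the coordinate change (\ref{variables M+}) and the coefficients of $G_0,G_1$ are explicit up to a correction which, on the ball $|X_0|\leq k_1\alpha_+\delta$, is $\mathcal{O}(k_1^2)$ relative to the $A_j$ themselves. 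Before this is legitimate the a priori hypothesis (\ref{Hyp Aj}) must be checked: $A_j=\mathcal{O}(|X_0|e^{-\kappa x})$ decays at rate $\kappa\sim\varepsilon^{1/5}$, whereas $v$ decays only at rate $\sim\varepsilon$ and $|v(x_\ast^+)|\sim1-(1+\delta^2)^{-1/2}$ is of order one while $|A_j(x_\ast^+)|=\mathcal{O}(\alpha_+\delta)$; hence $|A_j(x)|/|v(x)|=\mathcal{O}(\alpha_+)$ uniformly on $[x_\ast^+,+\infty)$, which is $\leq1/5$ for $\varepsilon$ small, so Lemma \ref{Lem v(x)} applies self-consistently inside the fixed-point ball.

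The contraction then rests on the two nonlinear blocks of $G_0,G_1$. The cubic term $\tfrac{(x_1+y_1)^3}{2\sqrt2\,\widetilde{\delta}^3}V_0$ is bounded by $c(|X|+|Y|)^3(\alpha_+\delta)^{-3/2}$ and, after convolution with the dichotomy exponential, gains a factor $\mathcal{O}((\alpha_+\delta)^{-1/2})$, hence contributes $\mathcal{O}((|X|+|Y|)^3(\alpha_+\delta)^{-2})=\mathcal{O}(k_1^3\alpha_+\delta)$ on the ball of radius $k_1\alpha_+\delta$, i.e. it is $k_1^2$-small relative to $|X_0|$; the $\varepsilon$-linear block has coefficient $\tfrac{\varepsilon}{\widetilde{\delta}^4}|\delta^2-\widetilde{\delta}^4|=\mathcal{O}(\varepsilon/(\alpha_+\delta)^2)$, which after the same convolution contributes $\mathcal{O}(\varepsilon\,\alpha_+^{-5/2})=\mathcal{O}(\nu_+)$ to the Lipschitz constant. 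So for $k_1<1/\sqrt2$ and $\nu_+$ small the right-hand side of (\ref{stabmanifXY}) maps the ball $\{||X||_\kappa+||Y||_\kappa\leq C|X_0|\}$ into itself and contracts there ($C$ depending on $\nu_+,\delta$ through $e^{\kappa x_\ast^+}$), giving a unique solution, analytic in $(\varepsilon,\delta)$ and in $(X_0,B_0(x_\ast^+))$ by the analytic version of the theorem; $B_0=1+v$ is recovered from Lemma \ref{Lem v(x)} with (\ref{initial cond B0+}), and the whole orbit lies on $\mathcal{W}_{\varepsilon,\delta}$ by construction. The stated properties follow at once: $||X||_\kappa,||Y||_\kappa=\mathcal{O}(|X_0|)$ and (\ref{variables M+}) give $A_j(x)=\mathcal{O}(|X_0|e^{-\delta_\ast\varepsilon^{1/5}x})$; $B_0'=v'=-\sqrt2\,\varepsilon v(1+\tfrac12 v)(1+\mathcal{Z})>0$ since $v<0$, $1+\tfrac12 v\geq\tfrac12$, $1+\mathcal{Z}\geq\tfrac34$, while $v\leq0$ in (\ref{apriori estim v(x)}) gives $B_0(x_\ast^+)\leq B_0(x)\leq1$; and as $x\to+\infty$, $v\to0$, $\widetilde{\delta}\to\sqrt\delta$, $\boldsymbol{L}\to\tfrac{\sqrt\delta}{\sqrt2}\left(\begin{array}{cc}1 & 1\\ -1 & 1\end{array}\right)$ whose eigenvalues $\tfrac{\sqrt\delta}{\sqrt2}(1\pm i)$ have real part $\sqrt{\delta/2}$, so $(A_0,A_1,A_2,A_3)\to0$ like $e^{-\sqrt{\delta/2}\,x}$, while linearizing the $v$-equation at $v=0$ gives $1-B_0=-v$ and $B_1$ decaying like $e^{-\sqrt2\,\varepsilon x}$.

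\textbf{Main obstacle.} The genuinely delicate point is that the adapted coordinates (\ref{variables M+}) and the coefficients of $G_0,G_1$ nearly degenerate at the starting point $x_\ast^+$, where $\widetilde{\delta}\downarrow(\alpha_+\delta)^{1/2}\to0$: one must verify that every negative power of $\widetilde{\delta}$ occurring (up to $\widetilde{\delta}^{-4}$) is absorbed by the smallness of $\varepsilon=\nu_+\alpha_+^{5/2}$ and of $|X_0|\leq k_1\alpha_+\delta$, and that the weight $\kappa=\delta_\ast\varepsilon^{1/5}$ simultaneously stays large enough that $e^{\kappa x_\ast^+}$ is bounded and small enough to remain below the dichotomy rate $\sqrt{\alpha_+\delta/2}$ — which is precisely why $\nu_+$ must be chosen small once and for all before letting $\varepsilon\to0$, and why the constants in the lemma are allowed to depend on $\nu_+$ and $\delta$.
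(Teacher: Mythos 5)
Your proposal is correct and follows essentially the same route as the paper: the same weighted space $C^0_\kappa$ with $\kappa=\tfrac1{10}\delta^{2/5}\varepsilon^{1/5}=\overline{\kappa}\sqrt{\alpha_+\delta/2}$, the same dichotomy estimates on $S_0,S_1$, the analytic implicit-function/contraction argument on (\ref{stabmanifXY}) with smallness coming from $\nu_+$ (linear block) and $k_1$ (cubic block), elimination of $B_0$ through Lemma \ref{Lem v(x)} with the a posteriori verification of (\ref{Hyp Aj}) by comparing the decay rate $\kappa\sim\varepsilon^{1/5}$ of the $A_j$ with the slower rate $\sim\varepsilon$ of $v$, and the linear study at $+\infty$ for the final exponential rates. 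The only cosmetic differences are that you work with the unscaled ball $|X_0|\leq k_1\alpha_+\delta$ instead of the paper's rescaling $(X,Y)=\alpha_+(\overline X,\overline Y)$, and that you track the $\varepsilon$-independent constant $e^{\kappa x_\ast^+}$ explicitly, which the paper leaves implicit.
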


\begin{proof}
Let us solve (\ref{stabmanifXY}) with respect to $(X,Y)\in C_{\kappa }^{0}$
for $|X_{0}|$ small enough and choose 
\begin{equation*}
\kappa =\frac{1}{10}\delta ^{2/5}\varepsilon ^{1/5}=\overline{\kappa }\sqrt{%
\frac{\alpha _{+}\delta }{2}},\text{ \ }\overline{\kappa }=\frac{\sqrt{2}}{10%
}\left( \frac{\nu _{+}}{\sqrt{\delta }}\right) ^{1/5}
\end{equation*}%
then $\nu _{+}$ is choosen such that $\overline{\kappa }<1.$ Then (\ref%
{stabmanifXY}) implies (below, the space $C_{\kappa }^{0}$ is built on $%
[x_{\ast }^{+},+\infty )$ instead of $[-x_{\ast },+\infty )$) 
\begin{eqnarray*}
||X||_{\kappa } &\leq &|X_{0}|+\frac{1}{1-\overline{\kappa }}\sqrt{\frac{2}{%
\alpha _{+}\delta }}||G_{0}||_{\kappa }, \\
||Y||_{\kappa } &\leq &\frac{1}{1+\overline{\kappa }}\sqrt{\frac{2}{\alpha
_{+}\delta }}||G_{1}||_{\kappa }.
\end{eqnarray*}%
Moreover we have for $j=0,1$ (see the expressions of $G_{j}),$ using $%
||M_{j}||\leq \sqrt{18},$%
\begin{equation*}
||G_{j}||_{\kappa }\leq \frac{(1+c\alpha_{+} ^{2})}{2\sqrt{2}\delta ^{3/2}\alpha_{+}
^{3/2}}(||X||_{\kappa }+||Y||_{\kappa })^{3}+\frac{3\nu_{+}\delta^2 \alpha_{+} ^{1/2}}{4}%
(||X||_{\kappa }+||Y||_{\kappa })
\end{equation*}%
with $c$ independent of $\varepsilon .$ Now making the scaling%
\begin{equation*}
(X,Y)=\alpha_{+} (\overline{X},\overline{Y}),
\end{equation*}%
we obtain%
\begin{eqnarray*}
||\overline{X}||_{\kappa } &\leq &|\overline{X_{0}}|+\frac{3}{2\sqrt{2}(1-%
\overline{\kappa })}\frac{\nu _{+}\delta^2}{\sqrt{\delta }}(||\overline{X}||_{\kappa
}+||\overline{Y}||_{\kappa })+\frac{(1+c\alpha _{+}^{2})}{(1-\overline{%
\kappa })\delta ^{2}}(||\overline{X}||_{\kappa }+||\overline{Y}||_{\kappa
})^{3}, \\
||\overline{Y}||_{\kappa } &\leq &\frac{3}{2\sqrt{2}(1+\overline{\kappa })}%
\frac{\nu _{+}\delta^2}{\sqrt{\delta }}(||\overline{X}||_{\kappa }+||\overline{Y}%
||_{\kappa })+\frac{(1+c\alpha _{+}^{2})}{(1+\overline{\kappa })\delta ^{2}}%
(||\overline{X}||_{\kappa }+||\overline{Y}||_{\kappa })^{3}.
\end{eqnarray*}%
It is then clear that, for 
\begin{equation*}
\frac{\nu _{+}\delta^2}{\sqrt{\delta }}<\frac{\sqrt{2}}{3}(1-\overline{\kappa }^{2}),
\end{equation*}%
i.e. (using the definition of $\overline{\kappa })$ 
\begin{equation}
\nu _{+}\delta^{3/2} \leq 0.46,\text{ }\overline{\kappa }\leq
0.1645  \label{cond nu+}
\end{equation}%
we can appy the implicit function theorem (in the analytic frame as in \cite%
{Dieudo} section X.2) for $|\overline{X_{0}}|\leq k_{1}$ with $0<k_{1}$
small enough and for $\varepsilon $ small enough, so that we obtain a unique
solution $(\overline{X},\overline{Y})$ in $C_{\kappa }^{0}$ satisfying%
\begin{eqnarray*}
||\overline{X}||_{\kappa } &\leq &(1+c\nu _{+})|\overline{X_{0}}|, \\
||\overline{Y}||_{\kappa } &\leq &c\nu _{+}|\overline{X_{0}}|+c|\overline{%
X_{0}}|^{3},
\end{eqnarray*}%
with $c$ independent of $\varepsilon ,\nu _{+}$ small enough, and provided
that%
\begin{equation*}
|\overline{X_{0}}|\leq k_{1},\text{ \ }k_{1}\text{ small enough.}
\end{equation*}%
By construction (see (\ref{variables M+})) of $(X,Y)$ we obtain the
estimates on $A_{j}(x)$ indicated at Lemma \ref{Lemma stable manifM+}. It
then remains to check the validity of condition (\ref{Hyp Aj}). Indeed
estimates (\ref{apriori estim v(x)}) of $v$ imply 
\begin{equation*}
|v_{0}|e^{\frac{-5\varepsilon \sqrt{2}x}{4}}\leq |v(x)|\leq \frac{|v_{0}|}{1-%
\frac{|v_{0}|}{2}}e^{\frac{-3\varepsilon \sqrt{2}x}{4}},
\end{equation*}%
where%
\begin{equation*}
|v_{0}|=1-B_{0}(-x_{\ast })\sim 1-\frac{1}{\sqrt{1+\delta ^{2}}}.
\end{equation*}%
Moreover, for $j=0,1,2,3$ and $\varepsilon $ small enough%
\begin{equation*}
|A_{j}(x)|\leq 2|X_{0}|e^{-\kappa x}\leq 2k_{1}\alpha _{+}e^{-\kappa x},%
\text{ }x>x_{\ast }^{+}.
\end{equation*}%
Since we have $\kappa =\mathcal{O}(\varepsilon ^{1/5})$, then for $%
\varepsilon $ small enough%
\begin{equation*}
e^{-\kappa x}\leq e^{\frac{-5\varepsilon \sqrt{2}x}{4}},\text{ }x>x_{\ast
}^{+}>0,
\end{equation*}%
the required condition (\ref{Hyp Aj}) is realized as soon as%
\begin{equation*}
2k_{1}\alpha _{+}\leq 1-B_{0}(-x_{\ast })
\end{equation*}%
which holds true for $\varepsilon $ small enough. The exponential behavior
declared in Lemma \ref{Lemma stable manifM+} follows from the linear study
of section \ref{sect linear} as $x\rightarrow +\infty .$ This ends the proof
of Lemma \ref{Lemma stable manifM+}, and of Corollary \ref{Corollary stab}
for the part $x\in \lbrack x_{\ast }^{+},+\infty ).$
\end{proof}

\subsubsection{Estimate of $\protect\nu _{+}$}

From the proof of Lemma \ref{Lemma stable manifM+}, and from the fact that $%
\overline{\kappa }$ may be stated as small as we need, the restriction on $%
\nu _{+}$ is%
\begin{equation}
\nu _{+}\delta^{3/2} \leq \frac{\sqrt{2}}{3}.  \label{estim nu+}
\end{equation}

\subsection{Intersection of the stable manifold with $H_{1}$}

We need to compute the intersection of the 3-dimensional stable manifold $%
\mathcal{W}_{\varepsilon ,\delta }^{(s)}$ of $M_{+}$ with the hyperplane $%
H_{1}$ defined by%
\begin{equation}
B_{0}=B_{01}\overset{def}{=}\sqrt{\frac{1+\alpha _{+}^{2}\delta ^{2}}{%
1+\delta ^{2}}}.  \label{hyperpl B0}
\end{equation}%
We then obtain a 2-dimensional sub-manifold living in the 4-dimensional
manifold $\mathcal{W}_{\varepsilon ,\delta }\cap H_{1}.$ We have the
following

\begin{lemma}
\label{intersect H1}For $1/3\leq \delta \leq 1,$ $\varepsilon =\nu
_{+}\alpha _{+}^{5/2},$ $\varepsilon $ and $\nu _{+}$ small enough, the
two-dimensional intersection of the 3-dimensional plane, tangent to the
stable manifold of $M_{+},$ with the 5-dimensional hyperplane $H_{1},$
satisfies a linear system with rescaled parameters $(\overline{x_{10}},%
\overline{x_{20}})=\frac{1}{\delta }\overline{X_{0}}$ and, by construction $%
\widetilde{\delta }|_{B_{01}}=(\alpha _{+}\delta )^{1/2},$ 
\begin{eqnarray}
A_{0} &=&\delta \alpha _{+}(\overline{x_{10}}+\overline{y_{10}}),\text{ } 
\notag \\
A_{1} &=&-\frac{(\delta \alpha _{+})^{3/2}}{\sqrt{2}}(\overline{x_{10}}+%
\overline{x_{20}}-\overline{y_{10}}-\overline{y_{20}})  \label{tr manif M+}
\\
A_{2} &=&(\alpha _{+}\delta )^{2}(\overline{x_{20}}+\overline{y_{20}}) 
\notag \\
A_{3} &=&\frac{(\alpha _{+}\delta )^{5/2}}{\sqrt{2}}(\overline{x_{10}}-%
\overline{x_{20}}-\overline{y_{10}}+\overline{y_{20}}),  \notag
\end{eqnarray}%
where $\overline{Y_{0}}$ is a linear function of $\overline{X_{0}}$ such
that $|\overline{Y_{0}}|\leq c\nu _{+}|\overline{X_{0}}|$, with the
restriction%
\begin{equation*}
(|\overline{x_{10}}|^{2}+|\overline{x_{20}}|^{2})^{1/2}\leq k_{1}.
\end{equation*}
\end{lemma}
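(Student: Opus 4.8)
The plan is to read off the intersection directly from the construction carried out in Lemma \ref{Lemma stable manifM+}, exploiting the fact that at $x=x_{\ast}^{+}$ the hyperplane $H_{1}$ is met \emph{by construction}: the initial condition (\ref{initial cond B0+}) is precisely $B_{0}(x_{\ast}^{+})^{2}=(1+\alpha_{+}^{2}\delta^{2})/(1+\delta^{2})$, i.e. $B_{0}(x_{\ast}^{+})=B_{01}$ with $B_{01}$ as in (\ref{hyperpl B0}). Hence $\mathcal{W}_{\varepsilon,\delta}^{(s)}\cap H_{1}$ is nothing but the slice $\{x=x_{\ast}^{+}\}$ of the stable manifold; and since $B_{1}$ is already determined on $\mathcal{W}_{\varepsilon,\delta}$ through the first integral (\ref{Wg}), this slice is a $2$-dimensional submanifold of the $4$-dimensional manifold $\mathcal{W}_{\varepsilon,\delta}\cap H_{1}$, naturally parametrized by $X_{0}=X(x_{\ast}^{+})$ with $|X_{0}|\leq k_{1}\alpha_{+}\delta$. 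The computation is the mirror image of the one for the unstable side (Lemma \ref{Lem tgunstabmanif}), but cleaner, because in the stable-manifold coordinates (\ref{variables M+}) the quantity $B_{1}$ does not enter the $A_{j}$, so no "constant" term appears.

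First I would identify the relevant tangent plane. The stable manifold is parametrized by $(X_{0},B_{0}(x_{\ast}^{+}))$; fixing $B_{0}(x_{\ast}^{+})=B_{01}$ is exactly the constraint defining $H_{1}$, so the intersection of the $3$-dimensional tangent plane of $\mathcal{W}_{\varepsilon,\delta}^{(s)}$ with $H_{1}$ is spanned by the two directions $\partial/\partial X_{0}$ evaluated at the central point $X_{0}=0$. Thus it suffices to compute the linear-in-$X_{0}$ part of the map $X_{0}\mapsto(A_{0},A_{1},A_{2},A_{3})$ at $x=x_{\ast}^{+}$.

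Next I would substitute the change of coordinates (\ref{variables M+}) evaluated at $x=x_{\ast}^{+}$, where, by (\ref{ineg deltatilde}) and (\ref{hyperpl B0}), $\widetilde{\delta}|_{B_{01}}=(\alpha_{+}\delta)^{1/2}$. This expresses $(A_{0},\dots,A_{3})$ linearly in $(x_{1},x_{2},y_{1},y_{2})|_{x_{\ast}^{+}}=(X_{0},Y_{0})$. From the integral formulation (\ref{stabmanifXY}) one has $X(x_{\ast}^{+})=X_{0}$ while $Y_{0}=Y(x_{\ast}^{+})$ is obtained as an analytic function of $X_{0}$; linearizing at $X_{0}=0$ and using the bound $\|\overline{Y}\|_{\kappa}\leq c\nu_{+}|\overline{X_{0}}|+c|\overline{X_{0}}|^{3}$ from the proof of Lemma \ref{Lemma stable manifM+} (the cubic part drops out of the linearization), $\overline{Y_{0}}$ is a linear function of $\overline{X_{0}}$ with $|\overline{Y_{0}}|\leq c\nu_{+}|\overline{X_{0}}|$. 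Finally I would introduce the rescaling $X_{0}=\alpha_{+}\delta\,(\overline{x_{10}},\overline{x_{20}})$ and $Y_{0}=\alpha_{+}\delta\,(\overline{y_{10}},\overline{y_{20}})$ --- i.e. $(\overline{x_{10}},\overline{x_{20}})=\tfrac1\delta\overline{X_{0}}$ in the variables already scaled by $\alpha_{+}$ in Lemma \ref{Lemma stable manifM+} --- so that the factors $\alpha_{+}\delta$, $(\alpha_{+}\delta)^{3/2}$, $(\alpha_{+}\delta)^{2}$, $(\alpha_{+}\delta)^{5/2}$ come out exactly as in (\ref{tr manif M+}), the bound on $\overline{Y_{0}}$ is inherited, and the constraint $(|\overline{x_{10}}|^{2}+|\overline{x_{20}}|^{2})^{1/2}\leq k_{1}$ is merely $|X_{0}|\leq k_{1}\alpha_{+}\delta$.

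Since this is essentially a bookkeeping statement, I do not expect any genuine analytic obstacle. The one point that needs a little care is confirming that differentiating the flow map at $x=x_{\ast}^{+}$ yields precisely the identity into the $(x_{1},x_{2})$-slots plus the $\nu_{+}$-small contribution of $Y_{0}(X_{0})$; this is immediate because the integral operators in (\ref{stabmanifXY}) carry the small prefactor $\varepsilon/\widetilde{\delta}^{4}=\mathcal{O}(\nu_{+}^{1/2}\alpha_{+}^{1/2})$ (that is $\mathcal{O}(\nu_{+}/\sqrt{\delta})$ after scaling) in front of their linear-in-$(X,Y)$ terms, the remaining nonlinearity being cubic. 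One should also record that, $\overline{Y_{0}}$ being small relative to $\overline{X_{0}}$, the four linear forms in (\ref{tr manif M+}) are independent for $\nu_{+}$ small, so the intersection is genuinely a $2$-plane in $(A_{0},A_{1},A_{2},A_{3})$, as asserted.
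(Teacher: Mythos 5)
Your proposal is correct and follows essentially the same route as the paper, which states Lemma \ref{intersect H1} without a separate proof precisely because it is read off from the construction: the slice $x=x_{\ast}^{+}$ of the stable manifold lies in $H_{1}$ by the choice (\ref{initial cond B0+}), the coordinates (\ref{variables M+}) with $\widetilde{\delta}|_{B_{01}}=(\alpha_{+}\delta)^{1/2}$ and the rescaling give exactly (\ref{tr manif M+}), and the bound $|\overline{Y_{0}}|\leq c\nu_{+}|\overline{X_{0}}|$ is the linearized form of the estimate on $\overline{Y}$ from the proof of Lemma \ref{Lemma stable manifM+}. Your additional remarks (the cubic term dropping out of the linearization, and the independence of the four linear forms for $\nu_{+}$ small) are consistent with the paper's intent and do not change the argument.
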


\section{Extension and Intersection of the two manifolds \label{sect
intersection}}

\subsection{Extension of the stable manifold of $M_{+}$ for $x\in \lbrack
-x_{\ast },x_{\ast }^{+}]$}

We need to extend the definition of the stable manifold of $M_{+}$ to the
region where%
\begin{equation*}
\frac{1-\alpha _{-}^{2}\delta ^{2}}{1+\delta ^{2}}\leq B_{0}^{2}(x)\leq 
\frac{1+\alpha _{+}^{2}\delta ^{2}}{1+\delta ^{2}},
\end{equation*}%
i.e where%
\begin{equation*}
-x_{\ast }\leq x\leq x_{\ast }^{+},
\end{equation*}%
and where%
\begin{equation}
A_{0}^{(4)}=-A_{0}[A_{0}^{2}+(1+\delta ^{2})B_{0}^{2}-1],
\label{4thorderODE}
\end{equation}%
$B_{0}$ satisfying Lemma \ref{Lem v(x)}. We prove the following

\begin{lemma}
\label{connection for stablemanif} For $1/3\leq \delta \leq 1,$ and $%
\varepsilon $ small enough, the 3-dimensional stable manifold $\mathcal{W}%
_{\varepsilon ,\delta }^{(s)}$ of $M_{+}$ still exists for $B_{0}^{2}\in
\lbrack \frac{1-\alpha _{-}^{2}\delta ^{2}}{1+\delta ^{2}},\frac{1+\alpha
_{+}^{2}\delta ^{2}}{1+\delta ^{2}}],$ is analytic in $\varepsilon ,\delta ,$
parameterized by $(X_{0},B_{0}(x_{\ast }^{+})),$ where $|X_{0}|\leq
k_{1}\alpha _{+}$ with $k_{1}$ small enough, as in Lemma \ref{Lemma stable
manifM+}. Moreover $B_{0}(-x_{\ast })\leq B_{0}(x)\leq B_{0}(x_{\ast }^{+}),$
$B_{0}^{\prime }(x)>0$ where $B_{0}$ is given by Lemma \ref{Lem v(x)}, and $%
A_{0}$ and it derivatives solve the differential equation (\ref{4thorderODE}%
) on the interval $-x_{\ast }\leq x\leq x_{\ast }^{+}.$
\end{lemma}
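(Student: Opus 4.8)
The plan is to extend the stable manifold $\mathcal{W}_{\varepsilon,\delta}^{(s)}$ of Lemma~\ref{Lemma stable manifM+}, already constructed on $[x_{\ast}^{+},+\infty)$, backwards across the bounded interval $[-x_{\ast},x_{\ast}^{+}]$, on which $A_{0}$ obeys the fourth-order equation (\ref{4thorderODE}) coupled to the nonlocal relation $B_{0}=1+\mathcal{V}(A_{0},A_{1},A_{2},A_{3})$ of Lemma~\ref{Lem v(x)}. First I would fix Cauchy data at $x=x_{\ast}^{+}$: for parameters $(X_{0},B_{0}(x_{\ast}^{+}))$ with $|X_{0}|\leq k_{1}\alpha_{+}\delta$ and $B_{0}(x_{\ast}^{+})$ in a small interval around $B_{01}$, Lemma~\ref{Lemma stable manifM+} and the integral equation (\ref{stabmanifXY}) determine $Y(x_{\ast}^{+})$ analytically, hence $(A_{0},A_{1},A_{2},A_{3})(x_{\ast}^{+})$ through (\ref{variables M+})--(\ref{newcoord+}), together with $v(x_{\ast}^{+})=B_{0}(x_{\ast}^{+})-1$. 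This provides the initial data for the system (\ref{syst stabmanifa}) supplemented by (\ref{equa v'}); once the two conditions that cut out the stable manifold at $+\infty$ have been absorbed, the free parameters are exactly the three parameters $(X_{0},B_{0}(x_{\ast}^{+}))$.

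Next I would pass to the rescaled variables of the introduction, $A_{0}=K^{2}\varepsilon^{2/5}\overline{A}$, $B_{0}=(1+\delta^{2})^{-1/2}(1+\tfrac{K^{4}}{2}\varepsilon^{4/5}\overline{B})$, $x=z/(K\varepsilon^{1/5})$, with a fixed $\varepsilon$-independent $K$. Since $B_{0}(0)=(1+\delta^{2})^{-1/2}$ by the definition of $x_{\ast}$ in (\ref{def x*}), one has $\overline{B}(0)=0$, and Lemma~\ref{Lem v(x)} gives $\overline{B}=z$ up to corrections small with $\varepsilon$; substituting into (\ref{4thorderODE}) reduces it, modulo an explicitly small analytic remainder, to the $\varepsilon$-free equation (\ref{intermediate}), i.e. $\tfrac{d^{4}\overline{A}}{dz^{4}}=-\overline{A}(\overline{A}^{2}+z)$ on the interval $z\in[-a_{-},a_{+}]$ with $a_{-}=\delta^{2}/(K^{4}\nu_{-}^{4/5})$ and $a_{+}=\delta^{2}/(K^{4}\nu_{+}^{4/5})$, both $\varepsilon$-independent constants. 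Here the good bound (\ref{estim nu+}) on $\nu_{+}$ keeps $a_{+}$ small, which is the estimate on which the whole argument hinges.

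Then I would solve the perturbed equation (\ref{intermediate}) on the fixed bounded interval $[-a_{-},a_{+}]$ with the Cauchy data transported from $z=a_{+}$. On a bounded interval and with a small remainder the only thing to exclude is finite-time blow-up of the cubic nonlinearity before $z=-a_{-}$ is reached; I would rule it out using the first integral (\ref{Wg}), which on $\mathcal{W}_{\varepsilon,\delta}$ controls the ``energy'' $\overline{A}'''\overline{A}'-\tfrac12(\overline{A}'')^{2}+\tfrac14\overline{A}^{4}+\tfrac{z}{2}\overline{A}^{2}$ up to small terms, together with the size of the data and a continuation argument, the admissible size of $a_{+}$ being precisely what makes the estimate close; the $\varepsilon$-independent equation and its relevant solution branch are those already studied in \cite{Man-Pom} and \cite{Buff}. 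Once global existence on $[-a_{-},a_{+}]$ is secured, the analytic implicit function theorem (as in \cite{Dieudo} section X.2) applied to the fixed-point formulation for $(\overline{A},\overline{A}',\overline{A}'',\overline{A}''')$ yields a unique solution depending analytically on the three parameters and on $(\varepsilon,\delta)$.

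Finally I would translate back: undoing the scaling, $A_{0}$ and its derivatives on $[-x_{\ast},x_{\ast}^{+}]$ are of the size recorded in the statement and in Remark~\ref{rmk intersec}, the three-parameter family assembles into a $3$-dimensional manifold that continues $\mathcal{W}_{\varepsilon,\delta}^{(s)}$ to $B_{0}^{2}\in[\tfrac{1-\alpha_{-}^{2}\delta^{2}}{1+\delta^{2}},\tfrac{1+\alpha_{+}^{2}\delta^{2}}{1+\delta^{2}}]$, stays inside $\mathcal{W}_{\varepsilon,\delta}$, and is parametrized by $(X_{0},B_{0}(x_{\ast}^{+}))$; the monotonicity $B_{0}'(x)>0$ and $B_{0}(-x_{\ast})\leq B_{0}(x)\leq B_{0}(x_{\ast}^{+})$ follow from the a priori estimate (\ref{apriori estim v(x)}) of Lemma~\ref{Lem v(x)} once the hypothesis $|A_{j}|\leq\gamma|v|$ has been verified from those bounds, just as in the proof of Lemma~\ref{Lemma stable manifM+}. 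The main obstacle is the third step: global-in-$z$ control of the cubic equation (\ref{intermediate}) on $[-a_{-},a_{+}]$, which is exactly why the sharp estimate on $a_{+}$, hence on $\nu_{+}$, is indispensable.
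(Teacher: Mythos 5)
Your overall architecture (rescale to the $\varepsilon$-independent equation (\ref{limit interm equ}) on $[-a_{-},a_{+}]$, transport the stable-manifold data from $z=a_{+}$ backwards, treat (\ref{4thorderODE}) as an $\mathcal{O}(\varepsilon ^{4/5})$ regular perturbation, and recover the monotonicity of $B_{0}$ from Lemma \ref{Lem v(x)} after checking (\ref{Hyp Aj})) is the same as the paper's. The genuine gap is your third step. The quantity $\overline{A}'''\overline{A}'-\tfrac{1}{2}(\overline{A}'')^{2}+\tfrac{1}{4}\overline{A}^{4}+\tfrac{z}{2}\overline{A}^{2}$ is not a usable control: along (\ref{limit interm equ}) its derivative is $\tfrac{1}{2}\overline{A}^{2}$ (so it is not conserved), and, more importantly, it is indefinite --- the terms $\overline{A}'''\overline{A}'$, $-\tfrac{1}{2}(\overline{A}'')^{2}$ and, for $z<0$, $\tfrac{z}{2}\overline{A}^{2}$ have no sign --- so it bounds nothing. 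Moreover the first integral (\ref{Wg}) has already been spent: it defines $\mathcal{W}_{\varepsilon ,\delta }$ and produces $B_{0}=1+\mathcal{V}$ in Lemma \ref{Lem v(x)}, and after that elimination there is no further conserved quantity available for the $A$-subsystem. Finally, ``finite-time blow-up'' is not the real enemy: for $z<0$ the linearization $\overline{A}^{(4)}=-z\,\overline{A}$ has exponentially growing modes of rate $|z|^{1/4}$, and what must be controlled is the growth of a small solution across the whole interval (note $(a_{-})_{\min }\approx 34.7$ by (\ref{a- min})), since the $A_{j}$ must remain small both for hypothesis (\ref{Hyp Aj}) and for the later matching with the unstable manifold.

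What the paper does instead (Lemma \ref{conjecture}, proved in Appendix \ref{app lem}) is purely quantitative: the backward Cauchy problem from $z=a_{+}$ with data (\ref{bound cond z=1}) is written as a Volterra-type integral equation and solved by the fixed point theorem in $C^{0}$ for small data $|(\overline{x_{10}}^{s},\overline{x_{20}}^{s})|\leq k_{1}$; the contraction constant contains the factor $\tfrac{2a_{+}^{5}}{3}$ coming from the linear term $z\overline{A_{0}}$, so one needs $a_{+}^{5}<3/2$, i.e. a \emph{lower} bound on $\nu _{+}$ (condition (\ref{cond nu delta})), checked to be compatible with the upper bounds (\ref{cond nu+})--(\ref{estim nu+}); your remark that the bound (\ref{estim nu+}) ``keeps $a_{+}$ small'' is backwards, since by (\ref{def a+-}) $a_{+}\propto \nu _{+}^{-4/5}$ and (\ref{estim nu+}) only yields $(a_{+})_{\min }$. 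Because $a_{-}$ may be much larger than $a_{+}$, the paper then iterates the same fixed point argument over finitely many subintervals (each step multiplying the reached endpoint by $(3.69)^{4/5}$), the number of steps being independent of $\varepsilon$, with $k_{1}$ taken small enough at the outset. Replacing your energy step by this contraction-plus-iteration scheme is what is needed to close the proof; the rest of your outline then goes through as in the paper.
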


We observe that Lemma \ref{Lem v(x)} is valid, provided that (\ref{Hyp Aj})
holds, hence%
\begin{eqnarray*}
(1+\delta ^{2})B_{0}^{2}-1 &=&\frac{2\sqrt{2}\delta ^{2}}{\sqrt{1+\delta ^{2}%
}}[\varepsilon x+\mathcal{O}(\varepsilon x)^{2}+\frac{\varepsilon }{\kappa }%
\mathcal{O}(|(A_{0},A_{1},A_{2},A_{3})|^{2})], \\
-\alpha _{-}^{2}\frac{\sqrt{1+\delta ^{2}}}{2\sqrt{2}} &\sim &-\varepsilon
x_{\ast }\leq \varepsilon x\leq \varepsilon x_{\ast }^{+}\sim \alpha _{+}^{2}%
\frac{\sqrt{1+\delta ^{2}}}{2\sqrt{2}},
\end{eqnarray*}%
where, from lemma \ref{Lemma stable manifM+} 
\begin{equation*}
|A_{j}(x_{\ast }^{+})|\leq ck_{1}\alpha _{+}^{(1+j/2)},\text{ }j=0,1,2,3,
\end{equation*}%
with $k_{1}$ small enough and $\frac{\varepsilon }{\kappa }=\frac{%
\varepsilon }{\delta _{\ast }\varepsilon ^{1/5}}=\mathcal{O}(\varepsilon
^{4/5}).$

The strategy here consists to solve backwards the principal part of the
differential equation (\ref{4thorderODE}) on $[-x_{\ast },x_{\ast }^{+}]$,
where $A_{0}$ satisfies the boundary conditions in $x_{\ast }^{+}.$ In fact
we only consider the boundary conditions coming from the tangent plane of
the stable manifold for $x=x_{\ast }^{+}$ (role of hyperplanr $H_{1})$,
since, for $A_{0}$ we stay in a neighborhoof of $0$ and we only need at the
end the trace of the tangent plane to the manifold at the extreme point for$%
x=-x_{\ast }.$ It is a sort of managing the transport of the tangent plane
to the manifold on the interval $[-x_{\ast },x_{\ast }^{+}].$ Then we need
to study the intersection of the 2-dimensional tangent plane along the
intersection $\mathcal{W}_{\varepsilon ,\delta }^{(u)}\cap H_{0}$ with the
2-dimensional tangent plane along the intersection $\mathcal{W}_{\varepsilon
,\delta }^{(s)}\cap H_{0},$ all this lying in the 4-dimensional tangent
plane to the manifold $H_{0}\cap \mathcal{W}_{\varepsilon ,\delta }.$

Let us first consider the principal part of (\ref{4thorderODE}) and rescale
as%
\begin{eqnarray*}
z &=&K\varepsilon ^{1/5}x,\text{ }A_{0}(x)=K^{2}\varepsilon ^{2/5}\overline{%
A_{0}}(z) \\
K &=&\left( \frac{2\sqrt{2}\delta ^{2}}{\sqrt{1+\delta ^{2}}}\right) ^{1/5},
\end{eqnarray*}%
Now%
\begin{equation}
z\in \lbrack -a_{-},a_{+}]\text{ \ with }a_{\pm }=\left( \frac{(1+\delta
^{2})\delta }{8\nu _{\pm }^{2}}\right) ^{2/5}\text{ independent of }%
\varepsilon ,  \label{def a+-}
\end{equation}%
with 
\begin{equation*}
\delta ^{2}=(K\nu _{\pm }^{1/5})^{4}a_{\pm }.
\end{equation*}%
The principal part of the differential equation for $A_{0}$ reads now as%
\begin{equation}
\frac{d^{4}\overline{A_{0}}}{dz^{4}}=-\overline{A_{0}}(\overline{A_{0}}%
^{2}+z),  \label{limit interm equ}
\end{equation}%
with boundary conditions coming from the intersection of the unstable
manifold of $M_{-}$ with $H_{0}$ for $z=-a_{-},$ and from the intersection
of the stable manifold of $M_{+}$ with $H_{1}$ for $z=+a_{+}.$

\begin{remark}
From the definition of $a_{\pm }$ and from estimates (\ref{restrict nu-})
for $\nu _{-}$ and (\ref{estim nu+}) for $\nu _{+},$ we obtain%
\begin{equation}
(a_{-})_{\min }=34.74\left( \frac{1+\delta ^{2}}{2}\right) ^{6/5},
\label{a- min}
\end{equation}%
\begin{equation}
(a_{+})_{\min }=1.05\left( \frac{1+\delta ^{2}}{2}\right) ^{2/5}\delta^{8/5}.
\label{a+ min}
\end{equation}
\end{remark}

For the principal part of the intersection with $H_{0}$, for $z=-a_{-}$ (see
(\ref{trace tg unstmanif}))%
\begin{eqnarray}
\overline{A_{0}} &=&a_{-}^{1/2}[1+\frac{1}{2^{3/4}}(\overline{x_{1}}^{u}-%
\overline{x_{2}}^{u})]  \notag \\
\frac{d\overline{A_{0}}}{dz} &=&a_{-}^{3/4}\overline{x_{1}}^{u}  \notag \\
\frac{d^{2}\overline{A_{0}}}{dz^{2}} &=&\frac{a_{-}}{2^{1/4}}(\overline{x_{1}%
}^{u}+\overline{x_{2}}^{u})  \label{bound cond z=-1} \\
\frac{d^{3}\overline{A_{0}}}{dz^{3}} &=&\sqrt{2}a_{-}^{5/4}\overline{x_{2}}%
^{u}  \notag
\end{eqnarray}%
where $(\overline{x_{1}}^{u},\overline{x_{2}}^{u})$ is a 2-dimensional
parameter of size $k_{0}$ assumed to be small enough and $\nu _{-}$ is also
small enough, independent of $\varepsilon ,k_{0}.$

We also obtain the principal part of the intersection with $H_{1}$, for $%
z=+a $ (see (\ref{tr manif M+})) as%
\begin{eqnarray}
\overline{A_{0}} &=&a_{+}^{1/2}\overline{x_{10}}^{s},\text{ }  \notag \\
\frac{d\overline{A_{0}}}{dz} &=&-\frac{a_{+}^{3/4}}{\sqrt{2}}(\overline{%
x_{10}}^{s}+\overline{x_{20}}^{s})  \label{bound cond z=1} \\
\frac{d^{2}\overline{A_{0}}}{dz^{2}} &=&a_{+}\overline{x_{20}}^{s}  \notag \\
\frac{d^{3}\overline{A_{0}}}{dz^{3}} &=&\frac{a_{+}^{5/4}}{\sqrt{2}}(%
\overline{x_{10}}^{s}-\overline{x_{20}}^{s}),  \notag
\end{eqnarray}%
where $(\overline{x_{10}}^{s},\overline{x_{20}}^{s})$ is a 2-dimensional
parameter assumed to be bounded in $%
\mathbb{R}
^{2}$ by $k_{1}$. For (\ref{limit interm equ}), the heteroclinic curve is
obtained when we find a solution $\overline{A_{0}}$ satisfying the boundary
conditions with principal parts given above above in $z=\pm a_{\pm }.$ We
observe that $\varepsilon $ has completely disappeared from this
formulation, and that we have 4 parameters for this 4th order differential
equation on the interval $[-a_{-},+a_{+}].$ It is clear that in satisfying
the boundary conditions at $z=+a_{+},$ we obtain a two-parameter family of
solutions of (\ref{limit interm equ}) which needs to exist until $z=-a_{-},$
where $\varepsilon $ does not play any role. In Appendix \ref{app lem} we
prove the following

\begin{lemma}
\label{conjecture} For the initial conditions (\ref{bound cond z=1}) in $%
z=+a_{+},$ and for $k_{1}$ small enough, and $|(\overline{x_{10}}^{s},%
\overline{x_{20}}^{s})|\leq k_{1}$ the differential equation (\ref{limit
interm equ}) has a 2-dimensional (parameter $(\overline{x_{10}}^{s},%
\overline{x_{20}}^{s})$) family of solutions for $z\in \lbrack
-a_{-},+a_{+}].$ These solutions depend analytically of $\delta \in \lbrack
1/3,1].$
\end{lemma}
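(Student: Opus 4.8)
The statement is an initial‑value problem in disguise: the four conditions (\ref{bound cond z=1}) prescribe $\overline{A_0},\overline{A_0}',\overline{A_0}'',\overline{A_0}'''$ at the single point $z=a_+$, so all that has to be shown is that the solution, continued \emph{backwards} from $z=a_+$, reaches $z=-a_-$ without blowing up, together with analyticity of the resulting solution in the shooting parameters $(\overline{x_{10}}^s,\overline{x_{20}}^s)$ and in $\delta$. The plan is a continuation argument. First I would record the structural facts: $\overline{A_0}\equiv 0$ is an exact solution of (\ref{limit interm equ}) on all of $\mathbb{R}$ (it is the member with $(\overline{x_{10}}^s,\overline{x_{20}}^s)=0$); the vector field of (\ref{limit interm equ}) written as a first–order system on $\mathbb{R}^4$ is polynomial, hence real–analytic; the Cauchy data (\ref{bound cond z=1}) is linear and injective in $(\overline{x_{10}}^s,\overline{x_{20}}^s)$ — already $\overline{A_0}(a_+)=a_+^{1/2}\overline{x_{10}}^s$ and $\overline{A_0}''(a_+)=a_+\overline{x_{20}}^s$ recover both — with coefficients that are positive analytic functions of $\delta$ on $[1/3,1]$ (powers of $a_+(\delta)$ in (\ref{def a+-})) and of size $\mathcal{O}(k_1)$ there, $\nu_\pm$ being fixed once and for all, small enough uniformly in $\delta$, so that $a_\pm(\delta)$ stays in a fixed bounded range. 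Granting a uniform a priori bound for $k_1$ small, existence of the solution on the compact interval $[-a_-,a_+]$ and its joint analyticity in $(\overline{x_{10}}^s,\overline{x_{20}}^s,\delta)$ then follow from analytic dependence of solutions of analytic ODEs on data and parameters — after the affine change $z=-a_-(\delta)+(a_-(\delta)+a_+(\delta))t$, $t\in[0,1]$, which turns the $\delta$–dependent interval into a fixed one — and the family is two–dimensional because the data is. One should note that the estimate is not a triviality: by (\ref{bound cond z=-1}) the member relevant for the heteroclinic grows from size $\mathcal{O}(k_1)$ at $z=a_+$ to size $\mathcal{O}(1)$ at $z=-a_-$, so the backward flow does amplify strongly.

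The core of the matter is therefore the a priori bound, and I would obtain it by continuation: let $z_\ast\in[-a_-,a_+)$ be the infimum of the abscissae to which the solution extends, and suppose $z_\ast>-a_-$. The equation (\ref{limit interm equ}) has a folded ``slow manifold'' $\{\overline{A_0}=0\}\cup\{\overline{A_0}^2=-z\}$, the two branches meeting at the fold $z=0$; away from $z=0$ the linearised operator $u\mapsto u^{(4)}+(3\overline{A_0}^2+z)u$ has characteristic exponents $\mu$ with $\mu^4=-z$ (resp. $\mu^4=-2|z|$ on the branch $\overline{A_0}=\sqrt{-z}$), i.e. two contracting and two expanding directions of rate $\sim|z|^{1/4}$. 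Using the non‑autonomous monodromy estimates in the spirit of Appendix \ref{App1}, with $|z|^{1/4}$ now playing the role of $\widetilde{A_{\ast}}^{1/2}$, one bounds $\overline{A_0}$ and its derivatives by the accumulated factor $\exp\!\big(c\int|s|^{1/4}\,ds\big)=\exp\!\big(\mathcal{O}(a_-^{5/4})\big)$ — large, but crucially finite, independent of $\varepsilon$, and bounded uniformly for $\delta\in[1/3,1]$ in view of the lower bounds (\ref{a- min}) and (\ref{a+ min}) together with the fixed choice of $\nu_\pm$. Through the fold window $|z|\le\delta_0$, where this hyperbolicity degenerates, I would instead use that on a fixed bounded $z$–interval (\ref{limit interm equ}) is a tame analytic system near the zero solution, so smallness of the data is transported across the window by ordinary continuous dependence, with a constant depending only on $\delta_0$. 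Choosing $k_1$ so small that $k_1\exp(\mathcal{O}(a_-^{5/4}))$ stays inside the tube around the slow manifold on which the linearisation governs the flow — so that the cubic term $-\overline{A_0}^3$ never overtakes $-z\overline{A_0}$ on $(z_\ast,a_+]$ — yields $|\overline{A_0}^{(m)}(z)|\le C(\delta)$, $m=0,\dots,3$, uniformly in $\delta$, which contradicts the maximality of $z_\ast$; hence $z_\ast=-a_-$. As a bookkeeping aid one may use the first integral $E=\overline{A_0}'\overline{A_0}'''-\tfrac12(\overline{A_0}'')^2+\tfrac14\overline{A_0}^4+\tfrac z2\overline{A_0}^2$, which satisfies $E'=\tfrac12\overline{A_0}^2\ge 0$ and hence gives the one–sided control $E(z)\le E(a_+)=\mathcal{O}(k_1^2)$ for $z\le a_+$; this does not close the estimate by itself, since $\tfrac z2\overline{A_0}^2\le 0$ only for $z\ge 0$.

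The main obstacle is precisely this a priori bound: one must pilot the fourth–order backward flow along the folded slow manifold of (\ref{limit interm equ}) over a stretch long enough that the cumulative linear amplification $\exp(\mathcal{O}(a_-^{5/4}))$ is a genuinely large — if $\varepsilon$–free — constant, and across $z=0$ where normal hyperbolicity fails, all the while keeping the trajectory inside the region where the cubic nonlinearity is dominated; it is this balance that dictates how small $k_1$ must be chosen, and carrying the whole argument through uniformly in $\delta\in[1/3,1]$ is where the bookkeeping is heaviest. Everything else — the analyticity in $\delta$, the two–dimensionality of the family, and the reduction to a fixed interval — is routine once the bound is secured.
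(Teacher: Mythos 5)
Your argument is correct in substance, but it takes a genuinely different route from the paper's proof (Appendix \ref{app lem}). The paper recasts (\ref{limit interm equ}) with the data (\ref{bound cond z=1}) as a Volterra-type integral equation with kernel $(z-s)^{3}/6$ and applies the contraction mapping principle in a small ball of $C^{0}$: first on $[-a_{+},a_{+}]$, where the contraction condition is exactly $\tfrac{2a_{+}^{5}}{3}<1$ --- this is precisely where the quantitative control of $a_{+}$, i.e. (\ref{cond nu delta}) and (\ref{a+ min}), enters --- and then backwards over $[-a_{-},-a_{+}]$ by finitely many further contraction steps on subintervals whose admissible length comes from the same kernel estimate, the ratio $\nu_{+}/\nu_{-}$ dictating the number of steps. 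You instead treat the problem as a backward initial-value problem: a bootstrap/continuation argument keeps the solution in a small neighbourhood of the exact solution $\overline{A_{0}}\equiv 0$, the linear amplification being bounded by $\exp(\mathcal{O}(a_{-}^{5/4}))$ away from the fold plus a fixed constant through the window $|z|\le\delta_{0}$, and analyticity in $(\overline{x_{10}}^{s},\overline{x_{20}}^{s},\delta)$ follows from analytic dependence of solutions of analytic ODEs on data and parameters after rescaling $[-a_{-}(\delta),a_{+}(\delta)]$ to a fixed interval; two-dimensionality is immediate since $\overline{A_{0}}(a_{+})$ and $\overline{A_{0}}''(a_{+})$ already determine the two parameters. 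This is sound, and it even dispenses with any smallness condition on $a_{+}$; on the other hand the slow-manifold/monodromy machinery you invoke is heavier than needed --- on a compact, $\varepsilon$-free interval with a polynomial vector field, a crude Gronwall bound on the first-order system, valid as long as the solution stays below a fixed size, yields the same finite amplification and closes the same bootstrap (also, for $z<0$ near the zero branch the exponents are $\pm|z|^{1/4},\pm i|z|^{1/4}$, so the structure is hyperbolic-plus-centre rather than fully hyperbolic, but only the upper bound on the real parts matters). The price of your route is an amplification constant exponentially large in $a_{-}$, hence a far more stringent (though still $\varepsilon$-independent and uniform in $\delta\in[1/3,1]$) smallness requirement on $k_{1}$; this suffices for the lemma as stated, which only asks for $k_{1}$ small enough, whereas the paper's stepwise contraction keeps the constants explicit and moderate and makes visible where the precise estimate of $a_{+}$ and the bounds on $\nu_{\pm}$ are actually used.
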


\begin{proof}
The proof made in Appendix \ref{app lem} uses the standard way starting from 
$a_{+}$, with the fixed point theorem in a function space of bounded
continuous functions. The only restriction for its use is that 
\begin{equation*}
a_{+}^{5}<3/2.
\end{equation*}%
This is verified with our suitable choice of $a_{+}$ subjected to the
restriction (\ref{a+ min}).
\end{proof}

Now, the complete system (\ref{4thorderODE}), which depends on $\varepsilon
, $ is a small regular perturbation of size $\ \mathcal{O}(\varepsilon
^{4/5})$ (after scaling) of (\ref{limit interm equ}), as well for the
perturbed boundary conditions, so that the solution of (\ref{4thorderODE})
with its given boundary conditions at $x=x_{\ast }^{+},$ exists until $%
x=-x_{\ast }$, and corresponds to a small perturbation of the solution of (%
\ref{limit interm equ}) satisfying the principal part of the boundary
condition at $z=a_{+}$.

\begin{remark}
The differential equation (\ref{limit interm equ}) is mentionned in \cite%
{Man-Pom}, where a matching asymptotic method is used, it also appears in 
\cite{Buff} where a variational method is used. Both studies are for an
infinite interval. Here we have the big advantage to deal with a finite
interval, allowing to prove Lemma \ref{conjecture}.
\end{remark}

\subsection{Intersection of the two manifolds}

In this section we prove the following:

\begin{lemma}
\label{intersection of manifolds} For $\varepsilon $ small enough, and for $%
1/3\leq \delta \leq 1$, then, except maybe for a finite number of values,
the unstable manifold $\mathcal{W}_{\varepsilon ,\delta }^{(u)}$ of $M_{-}$
intersects transversally the stable manifold $\mathcal{W}_{\varepsilon
,\delta }^{(s)}$ of $M_{+}$ along the heteroclinic solution. Moreover, for $%
x=0,$ we have the estimates $A_{j}(0)=\mathcal{O}(\varepsilon ^{(2+j)/5}),$ $%
j=0,1,2,3.$
\end{lemma}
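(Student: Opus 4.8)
The plan is to combine the three preceding results---Lemma \ref{unstablemanifM-} (existence of the 3-dim unstable manifold $\mathcal{W}_{\varepsilon,\delta}^{(u)}$ of $M_-$ up to $B_0 = B_{00}$), Lemma \ref{Lemma stable manifM+} together with Lemma \ref{connection for stablemanif} (existence and backward extension of the 3-dim stable manifold $\mathcal{W}_{\varepsilon,\delta}^{(s)}$ of $M_+$ down to $B_0 = B_{00}$), and Lemma \ref{conjecture} (solvability of the $\varepsilon$-independent limit equation \eqref{limit interm equ} on the bounded interval $[-a_-,a_+]$)---with the existence result of \cite{BHI}. First I would record that both invariant manifolds live inside the same 5-dimensional manifold $\mathcal{W}_{\varepsilon,\delta}$, so that their traces on the hyperplane $H_0 = \{B_0 = B_{00}\}$ are two 2-dimensional submanifolds of the 4-dimensional manifold $\mathcal{W}_{\varepsilon,\delta}\cap H_0$; a transverse intersection of these two 2-dimensional traces in a 4-dimensional ambient manifold is a point, and such a point extends uniquely (by the flow) to a heteroclinic orbit. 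So the whole problem reduces to showing that the trace of $\mathcal{W}_{\varepsilon,\delta}^{(u)}\cap H_0$ (parametrized by $(\overline{x_1}^u,\overline{x_2}^u)$, Lemma \ref{Lem tgunstabmanif}) and the trace of $\mathcal{W}_{\varepsilon,\delta}^{(s)}\cap H_0$ (obtained by transporting the trace of Lemma \ref{intersect H1} backward along \eqref{4thorderODE} from $z = a_+$ to $z = -a_-$, using Lemma \ref{conjecture}) meet transversally.

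Next I would set up the intersection equation in the rescaled variables $z = K\varepsilon^{1/5}x$, $A_0 = K^2\varepsilon^{2/5}\overline{A_0}$. At leading order (i.e. $\varepsilon = 0$) the stable trace gives the 2-parameter family of solutions of \eqref{limit interm equ} produced by Lemma \ref{conjecture}, with prescribed Cauchy data at $z = a_+$ in terms of $(\overline{x_{10}}^s,\overline{x_{20}}^s)$; evaluated at $z = -a_-$ this family produces a 2-dimensional surface $\Sigma^s_0$ in the $(\overline{A_0},\overline{A_0}',\overline{A_0}'',\overline{A_0}''')$-space. The unstable trace \eqref{bound cond z=-1} gives a 2-dimensional surface $\Sigma^u_0$ at $z = -a_-$ parametrized by $(\overline{x_1}^u,\overline{x_2}^u)$. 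The matching condition is $\Sigma^s_0 = \Sigma^u_0$, i.e. a system of 4 equations in the 4 unknowns $(\overline{x_{10}}^s,\overline{x_{20}}^s,\overline{x_1}^u,\overline{x_2}^u)$, and the full ($\varepsilon\neq 0$) problem is an $\mathcal{O}(\varepsilon^{4/5})$-regular perturbation of it (both the boundary data \eqref{trace tg unstmanif}, \eqref{tr manif M+} and the flow \eqref{4thorderODE} perturb \eqref{limit interm equ} to that order). Here is where I invoke \cite{BHI}: their variational existence result guarantees that the reduced ($\varepsilon=0$) matching system has at least one solution $(\overline{x_{10}}^{s,*},\overline{x_{20}}^{s,*},\overline{x_1}^{u,*},\overline{x_2}^{u,*})$ inside the admissible parameter balls $|(\overline{x_1}^u,\overline{x_2}^u)|\le k_0$, $|(\overline{x_{10}}^s,\overline{x_{20}}^s)|\le k_1$ --- since such a solution corresponds exactly to the heteroclinic shown to exist in \cite{BHI}, reconstructed through the limit profile. (One must check the $\varepsilon=0$ limit of the $C^1$ glueing of the three pieces matches the singular heteroclinic of \eqref{singular syst}, which is the content of Remark \ref{rmk intersec} and the limit picture of Figure \ref{fig-wall}.)

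Then the core analytic step: show transversality of the two traces at (a perturbation of) this solution, for all but finitely many $\delta\in[1/3,1]$. I would argue that the $4\times 4$ Jacobian $J(\delta)$ of the reduced matching system, obtained by differentiating $\Sigma^s_0 - \Sigma^u_0$ with respect to the four parameters, has entries that are \emph{analytic} functions of $\delta$ (the solutions of \eqref{limit interm equ} depend analytically on $\delta$ by Lemma \ref{conjecture}, and the Cauchy data \eqref{bound cond z=-1}, \eqref{bound cond z=1} are explicit polynomials in $a_\pm$ hence analytic in $\delta$). The function $\delta\mapsto\det J(\delta)$ is therefore real-analytic on $[1/3,1]$. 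If it vanishes identically the manifolds would be tangent for every $\delta$; I would rule this out by a perturbative/limiting computation at one convenient value of $\delta$ (or at an endpoint, where $a_-$ is large and the linearization of \eqref{limit interm equ} around the leading profile $\overline{A_0}\approx\sqrt{-z}$ decouples into WKB-type oscillatory and growing modes whose Wronskian one can estimate), showing $\det J\neq 0$ there. Real-analyticity then forces $\det J(\delta)\neq 0$ except on a finite set. Away from that finite set the analytic implicit function theorem (as in \cite{Dieudo}) upgrades the reduced solution to a genuine transverse intersection of $\mathcal{W}_{\varepsilon,\delta}^{(u)}$ and $\mathcal{W}_{\varepsilon,\delta}^{(s)}$ for $\varepsilon$ small, and also gives uniqueness near it (Remark \ref{rmk sym}) together with analyticity in $(\varepsilon,\delta)$. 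Finally, unwinding the rescalings $A_0 = K^2\varepsilon^{2/5}\overline{A_0}$, $A_j = (K\varepsilon^{1/5})^j K^2\varepsilon^{2/5}\overline{A_0}^{(j)}$ and evaluating at $z = 0$ (which corresponds to $x = 0$ since $B_0(0) = 1/\sqrt{g}$) yields $A_j(0) = \mathcal{O}(\varepsilon^{(2+j)/5})$, $j = 0,1,2,3$, as claimed.

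The main obstacle I expect is the transversality/nondegeneracy step: proving that $\det J(\delta)\not\equiv 0$. The existence input from \cite{BHI} is variational and gives no control on the Jacobian, so one genuinely has to analyze the linearization of the $\varepsilon$-independent boundary-value problem \eqref{limit interm equ} on $[-a_-,a_+]$ --- essentially a Sturm--Liouville-type computation for the fourth-order operator $\partial_z^4 + (3\overline{A_0}^2 + z)$ with the specific mixed boundary data, showing that no nontrivial variation of the parameters simultaneously preserves all four Cauchy components at the opposite endpoint. This is exactly the "precise estimate on $a_+$" the introduction flags as essential, and it is where the lengthy calculations of the appendices do their real work.
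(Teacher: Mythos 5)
Your overall architecture is the same as the paper's: reduce the question to the traces on $H_0$, match the boundary data (\ref{bound cond z=-1}) at $z=-a_-$ with the data (\ref{bound cond z=1}) transported from $z=+a_+$ through the $\varepsilon$-independent equation (\ref{limit interm equ}), treat (\ref{4thorderODE}) as an $\mathcal{O}(\varepsilon^{4/5})$ regular perturbation of it, use \cite{BHI} to guarantee a solution with parameters of admissible size, exploit analyticity in $\delta$ to exclude all but finitely many degenerate values, and undo the scaling to get $A_j(0)=\mathcal{O}(\varepsilon^{(2+j)/5})$. However, the step you yourself flag as the main obstacle --- showing that the $4\times 4$ determinant does not vanish identically in $\delta$ --- is left as an unexecuted ``WKB/Sturm--Liouville computation at one convenient $\delta$'', and that is a genuine gap: without it, analyticity only yields the dichotomy ``degenerate for all $\delta$ or for finitely many $\delta$''. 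The paper closes this point without any spectral computation: it works with the \emph{tangent planes} of the two traces, so the matching condition is a genuinely linear $4\times4$ system in $(\overline{x_{10}}^{s},\overline{x_{20}}^{s},\overline{x_{1}}^{u},\overline{x_{2}}^{u})$; equating (\ref{bound cond z=-1}) and (\ref{bound cond z=1}) directly, i.e.\ ignoring the transport by (\ref{limit interm equ}), it solves the system explicitly (formula (\ref{solu without equ}), nondegenerate since $\rho=(a_-/a_+)^{1/4}>1$ makes $2\rho^4-1\neq 0$), and then argues that the transport merely ``rotates'' the plane by a quantity depending analytically on $\delta$, so degeneracy of the transported system can occur only on a discrete subset of $[1/3,1]$.

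A second divergence concerns how \cite{BHI} is used. You invoke it at the $\varepsilon=0$ level, asserting that the variational heteroclinic, ``reconstructed through the limit profile'', solves the reduced matching system inside the admissible balls; this needs a compactness/convergence argument for the \cite{BHI} heteroclinics as $\varepsilon\to 0$ which you do not supply (and \cite{BHI} provides no uniqueness or uniform estimates that would make it automatic). The paper avoids this: for each small $\varepsilon$ the \cite{BHI} heteroclinic, having $B_0\in[0,1]$ and the correct limits at $\pm\infty$, necessarily lies on the unstable manifold of $M_-$ of Lemma \ref{unstablemanifM-} and on the stable manifold of $M_+$ of Lemmas \ref{Lemma stable manifM+} and \ref{connection for stablemanif}, so the finite-$\varepsilon$ matching system in $H_0$ already has a solution whose parameters are of the expected order, and whose principal part solves the reduced system; uniqueness, hence transversality, then follows from the linear analysis above except for finitely many $\delta$. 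With these two repairs --- anchoring nondegeneracy on (\ref{solu without equ}) via the tangent-plane (linear) formulation, and applying \cite{BHI} at finite $\varepsilon$ --- your plan coincides with the paper's proof.
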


\begin{proof}
For proving the intersection it is sufficient to prove that the manifolds
intersect in the hyperplane $H_{0}.$

For the differential equation (\ref{limit interm equ}),$\ \overline{A_{0}}%
(-a_{-})$ and its 3 first derivatives should satisfy (\ref{bound cond z=-1}%
). Considering the 2-dimensional tangent plane to the manifold formed by the
2-parameter family of solutions coming from the solution satisfying the
conditions in $z=a_{+},$ we then obtain a linear system in $%
\mathbb{R}
^{4}$ for the 4 unknowns $(\overline{x_{10}}^{s},\overline{x_{20}}^{s},%
\overline{x_{1}}^{u},\overline{x_{2}}^{u}).$ Looking at the system, which is
independent of $\varepsilon ,$ we see that the solution is independent of $%
\varepsilon .$

If we ignore the relationship coming from the differential equation (\ref%
{limit interm equ}) and just equate both sides, we obtain the unique solution%
\begin{eqnarray}
\overline{x_{1}}^{u} &=&-2\rho \frac{(2^{1/4}\rho -1)}{2\rho ^{4}-1},\text{ }%
\rho =\left( \frac{a_{-}}{a_{+}}\right) ^{1/4},  \label{solu without equ} \\
\overline{x_{2}}^{u} &=&2^{3/4}\frac{(2^{1/4}\rho -1)}{2\rho ^{4}-1},  \notag
\\
\overline{x_{10}}^{s} &=&\frac{\sqrt{2}\rho (\sqrt{2}\rho ^{3}-1)}{2\rho
^{4}-1},  \notag \\
\overline{x_{20}}^{s} &=&-\rho ^{4}\frac{\sqrt{2}(2^{1/4}\rho -1)^{2}}{2\rho
^{4}-1}.  \notag
\end{eqnarray}%
Now the action of the differential equation (\ref{limit interm equ}) on $%
[-a_{-},+a_{+}]$ "rotates" the 2-dimensional tangent plane, starting from
the plane (\ref{bound cond z=1}) in $z=+a_{+}$, and arriving for $z=-a_{-}$
to a system different from the above, hence giving a solution different from
(\ref{solu without equ}). The 2-parameter family of solutions depends
analytically of $\delta $, so that if the solution is not degenerated, the
solution of the 4-dimensional linear system is unique, except maybe for a
finite number of values of $\delta .$ The solution (\ref{solu without equ})
which is obtained above, shows that the linear system should need to
"rotate" suitably for becoming degenerate, where the "rotation" is function
of $\delta $. This can only happen for a discrete set of values of $\delta .$

It remains to consider the full equation (\ref{4thorderODE}) which is a
regular perturbation of (\ref{limit interm equ}), of order $\varepsilon
^{4/5}$ including the bounds of the finite interval. It is then clear that
the same result holds true for $\varepsilon $ small enough, i.e the
existence of a unique solution, meaning the transversality of the two
manifolds, except maybe for a finite number of values of $\delta $. However
we observe that the conditions on the norms of $\overline{x_{10}}^{s},%
\overline{x_{20}}^{s}$ and on $\overline{x_{1}}^{u},\overline{x_{2}}^{u}$
cannot be checked directly, even though they all are of order 1, as
expected. Here, we need to use the result of \cite{BHI} which asserts the
existence of a heteroclinic where $B_{0}$ starts from $0$ for $x=-\infty ,$
arriving at $B_{0}=1$ for $x=+\infty ,$ and where $B_{0}(x)\in \lbrack 0,1].$
Indeed, such a heteroclinic necessarily belongs to the stable manifold of $%
M_{-}$ which is constructed here for $x\in (-\infty ,-x_{\ast }]$, and also
belongs to the stable manifold of $M_{+}$ which we have contructed for $x\in
\lbrack -x_{\ast },+\infty ).$ This means that the intersection of tangent
planes for $x=-x_{\ast }$, given, for the principal part, by the above
mentioned 4-dimensional system has a solution. Such a solution is unique,
from the proof made above, except maybe for a finite number of values of $%
\delta .$ Theorem \ref{theorem} is proved.
\end{proof}

\begin{remark}
Since the size of $A_{j}(x)$ with respect to $\varepsilon$ is not modified
on $[-x_{\ast },x_{\ast }^{+}],$ the result above implies the validity of
the parts of Corollaries \ref{Corollary unstab} and \ref{Corollary stab} for 
$-x_{\ast }\leq x\leq 0$ and $0\leq x\leq x_{\ast }^{+}$ respectively.
\end{remark}

\section{Study of the linearized operator}

\label{studylinop}

Let us redefine the heteroclinic connection we found at Theorem \ref{theorem}
as%
\begin{equation*}
(A_{\ast }(x),B_{\ast }(x))\subset 
\mathbb{R}
^{2}
\end{equation*}%
with%
\begin{equation*}
1+1/9\leq g=1+\delta ^{2}\leq 2,
\end{equation*}%
and where we know that, for $\varepsilon $ small enough%
\begin{eqnarray*}
B_{\ast }(x) &>&0,\text{ }B_{\ast }^{\prime }(x)>0 \\
(A_{\ast }(x),B_{\ast }(x)) &\rightarrow &\left\{ 
\begin{array}{c}
(1,0)\text{ as }x\rightarrow -\infty \\ 
(0,1)\text{ as }x\rightarrow +\infty%
\end{array}%
\right. ,
\end{eqnarray*}%
at least as $e^{\varepsilon \delta x}$ for $x\rightarrow -\infty ,$ and at
least as $e^{-\sqrt{2}\varepsilon x}$ \ for $x\rightarrow +\infty $.

At section \ref{sect origin} we show that the perturbed system (\ref{new
reduced syst}) leading to system (\ref{truncated syst}) is now considered
with $B_{0}$ complex valued, so in (\ref{truncated syst}) $B^{2}$ is
replaced by $|B|^{2}.$

For being able to prove any persistence result under reversible
perturbations of system (\ref{truncated syst}) in $%
\mathbb{R}
^{4}\times 
\mathbb{C}
^{2}$, as it appears in (\ref{new reduced syst}), we need to study the
linearized operator at the above heteroclinic solution. We follow the lines
of \cite{HI Arma}.

The linearized operator is given by%
\begin{eqnarray*}
A^{(4)} &=&(1-3A_{\ast }^{2}-gB_{\ast }^{2})A-gA_{\ast }B_{\ast }(B+%
\overline{B}), \\
B^{\prime \prime } &=&\varepsilon ^{2}(-1+gA_{\ast }^{2}+2B_{\ast
}^{2})B+2\varepsilon ^{2}gA_{\ast }B_{\ast }A+\varepsilon ^{2}B_{\ast }^{2}%
\overline{B}.
\end{eqnarray*}%
Taking real and imaginary parts for $B:$%
\begin{equation*}
B=C+iD,
\end{equation*}%
we then obtain the linearized system%
\begin{eqnarray*}
-A^{(4)}+(1-3A_{\ast }^{2}-gB_{\ast }^{2})A-2gA_{\ast }B_{\ast }C &=&0, \\
\frac{1}{\varepsilon ^{2}}C^{\prime \prime }+(1-gA_{\ast }^{2}-3B_{\ast
}^{2})C-2gA_{\ast }B_{\ast }A &=&0, \\
\frac{1}{\varepsilon ^{2}}D^{\prime \prime }+(1-gA_{\ast }^{2}-B_{\ast
}^{2})D &=&0.
\end{eqnarray*}%
Notice that the equation for $D$ decouples, so that we can split the linear
operator in an operator $\mathcal{M}_{g}$ acting on $(A,C)$ and an operator $%
\mathcal{L}_{g}$ acting on $D:$%
\begin{equation*}
\mathcal{M}_{g}\left( 
\begin{array}{c}
A \\ 
C%
\end{array}%
\right) =\left( 
\begin{array}{c}
-A^{(4)}+(1-3A_{\ast }^{2}-gB_{\ast }^{2})A-2gA_{\ast }B_{\ast }C \\ 
\frac{1}{\varepsilon ^{2}}C^{\prime \prime }+(1-gA_{\ast }^{2}-3B_{\ast
}^{2})C-2gA_{\ast }B_{\ast }A%
\end{array}%
\right) ,
\end{equation*}%
\begin{equation*}
\mathcal{L}_{g}D=\frac{1}{\varepsilon ^{2}}D^{\prime \prime }+(1-gA_{\ast
}^{2}-B_{\ast }^{2})D.
\end{equation*}%
Let us define the Hilbert spaces%
\begin{equation*}
L_{\eta }^{2}=\{u;u(x)e^{\eta |x|}\in L^{2}(%
\mathbb{R}
)\},
\end{equation*}%
\begin{eqnarray*}
\mathcal{D}_{0} &=&\{(A,C)\in H_{\eta }^{4}\times H_{\eta }^{2};A\in H_{\eta
}^{4},C\in \mathcal{D}_{1}\} \\
\mathcal{D}_{1} &=&\{C\in H_{\eta }^{2};\varepsilon ^{-2}||C^{\prime \prime
}||_{L_{\eta }^{2}}+\varepsilon ^{-1}||C^{\prime }||_{L_{\eta
}^{2}}+||C||_{L_{\eta }^{2}}\overset{def}{=}||C||_{\mathcal{D}_{1}}<\infty \}
\end{eqnarray*}%
equiped with natural scalar products. Below, we prove the following

\begin{lemma}
\label{Lem Linearoperator}Except maybe for a set of isolated values of $g,$
the kernel of $\mathcal{M}_{g}$ in $L_{\eta }^{2}$ is one dimensional, span
by $(A_{\ast }^{\prime },B_{\ast }^{\prime }),$ and its range has
codimension 1, $L^{2}$- orthogonal to $(A_{\ast }^{\prime },B_{\ast
}^{\prime }).$ $\mathcal{M}_{g}$ has a pseudo-inverse acting from $L_{\eta
}^{2}$ to $\mathcal{D}_{0}$ for any $\eta >0$ small enough, with bound
independent of $\varepsilon .$

The operator $\mathcal{L}_{g}$ has a trivial kernel, and its range which has
codimension 1, is $L^{2}$- orthogonal to $B_{\ast }$ ($B_{\ast }\notin
L^{2}).$ $\mathcal{L}_{g}$ has a pseudo-inverse acting from $L_{\eta }^{2}$
to $\mathcal{D}_{1}$ for $\eta >0$ small enough, with bound independent of $%
\varepsilon .$
\end{lemma}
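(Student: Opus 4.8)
The plan is to treat $\mathcal{M}_{g}$ and $\mathcal{L}_{g}$ in parallel, exploiting that both are \emph{formally self-adjoint} (on $L^{2}\times L^{2}$, resp. $L^{2}$) and that both carry an exponentially localized kernel element inherited from the heteroclinic. First the algebraic facts. Differentiating the heteroclinic equations (\ref{truncated syst}) once in $x$ shows at once that $(A_{\ast }^{\prime },B_{\ast }^{\prime })$ solves $\mathcal{M}_{g}(A,C)=0$, and rewriting the second equation of (\ref{truncated syst}) gives $\mathcal{L}_{g}B_{\ast }=0$. By Theorem \ref{theorem}, $(A_{\ast }^{\prime },B_{\ast }^{\prime })$ decays exponentially at both ends, so it lies in every $L_{\eta }^{2}$, whereas $B_{\ast }\rightarrow 1$, so $B_{\ast }\notin L^{2}$ (nor in any $L_{\eta }^{2}$). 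Formal self-adjointness is immediate: for $\mathcal{M}_{g}$ the $-d^{4}/dx^{4}$ and $\varepsilon ^{-2}d^{2}/dx^{2}$ blocks are symmetric and the coupling $-2gA_{\ast }B_{\ast }$ appears symmetrically in both off-diagonal slots; for $\mathcal{L}_{g}$ it is a scalar Schr\"odinger operator. This will identify the cokernels with the kernels of the (formal) adjoints and, via the boundary terms in the integration by parts, produce the solvability conditions.

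Next the Fredholm picture. Using the linear analysis of section \ref{sect linear}, near $M_{-}$ and $M_{+}$ the operator $\mathcal{M}_{g}$ is asymptotically constant-coefficient and \emph{hyperbolic at both ends} (no zero or purely imaginary eigenvalues: $\lambda ^{4}=-2$, $\lambda ^{2}=\varepsilon ^{2}\delta ^{2}$ at $M_{-}$; $\lambda ^{4}=-\delta ^{2}$, $\lambda ^{2}=2\varepsilon ^{2}$ at $M_{+}$), with matching dimension counts $3{+}3$; hence, by the exponential-dichotomy (Palmer) theory, $\mathcal{M}_{g}:\mathcal{D}_{0}\rightarrow L_{\eta }^{2}$ is Fredholm of index $0$ for $0<\eta <\varepsilon \delta $. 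Its kernel is the space of bounded solutions on $\mathbb{R}$ of the full $6$-dimensional linearization; since $W$ of (\ref{first integ}) is a first integral whose differential vanishes at the singular points (\ref{singular points}), in particular at $M_{\pm }$, any bounded solution $v$ satisfies $dW\cdot v\equiv 0$, i.e. is tangent to $\mathcal{W}_{\varepsilon ,\delta }$; inside the $5$-dimensional $\mathcal{W}_{\varepsilon ,\delta }$ this is the intersection of the tangent spaces of $\mathcal{W}_{\varepsilon ,\delta }^{(u)}$ and $\mathcal{W}_{\varepsilon ,\delta }^{(s)}$, one-dimensional by the transversality of Theorem \ref{theorem} (hence away from its finite exceptional set of $\delta $). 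So $\dim \ker \mathcal{M}_{g}=1=\dim \mathrm{coker}\,\mathcal{M}_{g}$, and self-adjointness gives that the range of $\mathcal{M}_{g}$ is $\{(f,h)\in L_{\eta }^{2}:\int_{\mathbb{R}}(fA_{\ast }^{\prime }+hB_{\ast }^{\prime })=0\}$. For $\mathcal{L}_{g}$, $M_{-}$ is still hyperbolic but $M_{+}$ is degenerate: with $B_{\ast }=\tanh (\varepsilon x/\sqrt{2}+c)(1+o(1))$ the limiting equation at $+\infty $ is of P\"oschl--Teller type, whose two solutions behave like $B_{\ast }\rightarrow 1$ and like $x\rightarrow \infty $, so no nonzero homogeneous solution decays at $+\infty $; combined with the one-dimensional decaying space at $-\infty $ (spanned by $B_{\ast }$, which does not decay at $+\infty $) this forces $\ker \mathcal{L}_{g}|_{L_{\eta }^{2}}=\{0\}$, while integration by parts of $\mathcal{L}_{g}D$ against $B_{\ast }$ (all boundary terms vanish) produces the single condition $\int_{\mathbb{R}}fB_{\ast }=0$, so the range of $\mathcal{L}_{g}$ has codimension $1$, $L^{2}$-orthogonal to $B_{\ast }$.

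The substantial part is the \emph{uniform-in-$\varepsilon $} bound for the pseudo-inverses in the ($\varepsilon $-dependent) norms of $\mathcal{D}_{0}$ and $\mathcal{D}_{1}$. Following \cite{HI Arma}, I would split $\mathbb{R}$ into the three regions $(-\infty ,-x_{\ast }]$, $[-x_{\ast },x_{\ast }^{+}]$, $[x_{\ast }^{+},+\infty )$ of the heteroclinic construction. On the two outer intervals the coefficients are exponentially close to the hyperbolic constant-coefficient limits, so the resolvent is built from the monodromy operators of Lemma \ref{LemMonodromy} together with the estimates of Lemmas \ref{unstablemanifM-} and \ref{Lemma stable manifM+}; the factors $\varepsilon ^{-2}$ and $\varepsilon ^{-1}$ in the definition of $\mathcal{D}_{0},\mathcal{D}_{1}$ are exactly what makes the $\varepsilon ^{-2}C^{\prime \prime }$- and $\varepsilon ^{-2}D^{\prime \prime }$-type outputs $\varepsilon $-uniformly controlled. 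On the middle interval one rescales $z=K\varepsilon ^{1/5}x$, $A_{0}=K^{2}\varepsilon ^{2/5}\overline{A_{0}}$ exactly as in section \ref{sect intersection}: the linearization of (\ref{limit interm equ}) about the solution of Lemma \ref{conjecture} is an $\varepsilon $-independent fourth-order operator on the bounded interval $[-a_{-},a_{+}]$, invertible with $\varepsilon $-independent bound for all but finitely many $\delta $ (analyticity in $\delta $, as in Lemma \ref{conjecture}). One then glues the three local right-inverses by a Neumann-series / implicit-function argument, the inter-region coupling being small (exponentially and in powers of $\varepsilon ^{1/5}$), and finally projects off $\mathrm{span}(A_{\ast }^{\prime },B_{\ast }^{\prime })$ to define the pseudo-inverse; for $\mathcal{L}_{g}$ the same scheme applies, working with $\eta $ small relative to $\varepsilon $ and using the explicit P\"oschl--Teller solutions at $+\infty $, where the constraint $\int fB_{\ast }=0$ kills the non-decaying homogeneous mode.

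The hard part is this last paragraph: making \emph{every} region-wise estimate genuinely uniform in $\varepsilon $ in the $\varepsilon $-weighted norms $\mathcal{D}_{0},\mathcal{D}_{1}$, and controlling the gluing across the boundary layer of width $\varepsilon ^{4/5}$, while checking that the exceptional $\delta $-set coming from invertibility of the middle-interval linearization can be absorbed into the one already present in Theorem \ref{theorem}. The degenerate end $+\infty $ for $\mathcal{L}_{g}$, which is genuinely non-hyperbolic and therefore falls outside the standard dichotomy machinery, is a second, independent source of technical work.
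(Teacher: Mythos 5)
Your structural analysis (Fredholmness, formal self-adjointness, $(A_{\ast }^{\prime },B_{\ast }^{\prime })\in \ker \mathcal{M}_{g}$, $\mathcal{L}_{g}B_{\ast }=0$ with $B_{\ast }\notin L^{2}$) is sound, and your identification of $\ker \mathcal{M}_{g}$ with $T\mathcal{W}_{\varepsilon ,\delta }^{(u)}\cap T\mathcal{W}_{\varepsilon ,\delta }^{(s)}$, one-dimensional by the transversality inside the $5$-dimensional level set of the first integral, is a genuinely different route from the paper: there the one-dimensionality is obtained by contradiction, showing via a Lyapunov--Schmidt/bifurcation argument and analyticity in $g$ that a two-dimensional kernel would produce a second heteroclinic near the known one, contradicting the uniqueness of Theorem \ref{theorem} (with the auxiliary fact $\mathbf{P}_{0}\partial _{g}\mathcal{F}_{0}=0$ proved in the appendix). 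Your argument is arguably more direct, provided you state the standard identification of the tangent spaces of the invariant manifolds with the spaces of variational solutions decaying at $-\infty $ (resp. $+\infty $).

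The genuine gap is the quantitative half of the lemma, which is its main content: the existence of pseudo-inverses $L_{\eta }^{2}\rightarrow \mathcal{D}_{0}$, $L_{\eta }^{2}\rightarrow \mathcal{D}_{1}$ with bounds \emph{independent of }$\varepsilon $, and, for $\mathcal{L}_{g}$, surjectivity onto the codimension-one subspace (your integration by parts only shows the range is \emph{contained} in $B_{\ast }^{\bot }$, not that it equals it). You explicitly defer this to a three-region gluing scheme (outer dichotomy estimates, a rescaled $\varepsilon $-independent fourth-order problem on $[-a_{-},a_{+}]$, Neumann-series matching), you yourself flag it as "the hard part", and as stated it has unresolved issues: the boundary conditions under which the middle-interval linearization is invertible are unspecified, its invertibility would introduce a new exceptional set of $\delta $ that you only hope to absorb into the one from Theorem \ref{theorem}, and the smallness of the inter-region coupling across the $\varepsilon ^{4/5}$-layer is asserted, not estimated. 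The paper's proof avoids all of this and is much shorter: for $\mathcal{L}_{g}$ it solves $\mathcal{L}_{g}u=f$ explicitly by variation of constants using the homogeneous solution $B_{\ast }$, chooses $\eta =\frac{1}{2}\varepsilon \delta $, and gets the $\varepsilon $-uniform bound (hence also surjectivity and the $\mathcal{D}_{1}$-estimate) from elementary convolution inequalities, exploiting $B_{\ast }^{\prime }>0$ and the exponential rates $e^{\varepsilon \delta x}$, $e^{-\sqrt{2}\varepsilon x}$; for $\mathcal{M}_{g}$ it writes $\mathcal{M}_{g}=\mathcal{M}_{\infty }+\mathcal{A}_{g}$ with the piecewise asymptotic operator $\mathcal{M}_{\infty }$ self-adjoint, negative, and invertible with $\Vert \mathcal{M}_{\infty }^{-1}\Vert \leq 1/c_{+}$ independent of $\varepsilon $, then solves $(\mathbb{I}+\mathcal{M}_{\infty }^{-1}\mathcal{A}_{g})v=\mathcal{M}_{\infty }^{-1}Q_{0}f$ on $U_{\ast }^{\bot }$, where $\mathcal{M}_{\infty }^{-1}\mathcal{A}_{g}$ is compact and $-1$ is not an eigenvalue, and recovers the $\mathcal{D}_{0}$-norm by interpolation. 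Without carrying out your matching scheme (or replacing it by such a global argument), the uniform-in-$\varepsilon $ bound — precisely what is needed for the persistence argument in \cite{Io24} — is not established, so the proposal as written does not prove the lemma.
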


\begin{remark}
For proving the above Lemma, we use the uniqueness (resulting from the
transversality of manifolds $\mathcal{W}_{\varepsilon ,\delta }^{(u)}$ and $%
\mathcal{W}_{\varepsilon ,\delta }^{(s)}$) and analyticity in $\delta $
(i.e. $g$) of the heteroclinic, proved in previous section (see subsection %
\ref{sect dim Mg}).
\end{remark}

\begin{remark}
\label{rmk persist}The above Lemma is useful for proving the persistence
under reversible perturbations, as indicated in (\ref{new reduced syst}), of
our heteroclinic. This is done in \cite{Io24} and appears to be more
difficult than the symmetric case solved in \cite{HI Arma}. Indeed, it is
needed to introduce two different wave numbers for the two systems of
convective rolls at $\pm \infty .$ In \cite{Io24} it is shown that the
component on the kernel of $\mathcal{M}_{g}$ corresponds to a sort of
adapted phase shift of rolls parallel to the wall, while the codimension 2
of the range implies that each wave number is function not only of the
amplitude of rolls but also of the above shift. This then leads to a one
parameter family of domain walls, for any fixed small amplitude $\varepsilon
^{2}$.
\end{remark}

\subsection{Asymptotic operators}

Let us define the operators obtained when $x=\pm \infty :$

\begin{equation*}
\mathcal{M}_{\infty }^{-}\left( 
\begin{array}{c}
A \\ 
C%
\end{array}%
\right) =\left( 
\begin{array}{c}
-A^{(4)}-2A \\ 
\varepsilon ^{-2}C^{\prime \prime }-(g-1)C%
\end{array}%
\right) ,
\end{equation*}%
\begin{equation*}
\mathcal{M}_{\infty }^{+}\left( 
\begin{array}{c}
A \\ 
C%
\end{array}%
\right) =\left( 
\begin{array}{c}
-A^{(4)}-(g-1)A \\ 
\varepsilon ^{-2}C^{\prime \prime }-2C%
\end{array}%
\right) ,
\end{equation*}%
\begin{eqnarray*}
\mathcal{L}_{\infty }^{-}D &=&\varepsilon ^{-2}D^{\prime \prime }-(g-1)D, \\
\mathcal{L}_{\infty }^{+}D &=&\varepsilon ^{-2}D^{\prime \prime }.
\end{eqnarray*}%
Notice that all these operators are negative. Furthermore, their spectra in $%
L^{2}(%
\mathbb{R}
)$ are such that%
\begin{eqnarray*}
\sigma (\mathcal{M}_{\infty }^{-}) &=&(-\infty ,-c_{-}],\text{ }c_{-}=\max
\{2,(g-1)\}>0, \\
\sigma (\mathcal{M}_{\infty }^{+}) &=&(-\infty ,-c_{+}],\text{ }c_{+}=c_{-},
\\
\sigma (\mathcal{L}_{\infty }^{-}) &=&(-\infty ,-(g-1)], \\
\sigma (\mathcal{L}_{\infty }^{+}) &=&(-\infty ,0].
\end{eqnarray*}%
Operators $\mathcal{M}_{g}$ and $\mathcal{L}_{g}$ are respectively
relatively compact perturbations of the corresponding asymptotic operators $%
\mathcal{M}_{\infty }$ and $\mathcal{L}_{\infty }$ defined as%
\begin{equation*}
\mathcal{M}_{\infty }=\left\{ 
\begin{array}{c}
\mathcal{M}_{\infty }^{-},\text{ \ }x<0 \\ 
\mathcal{M}_{\infty }^{+},\text{ \ }x>0%
\end{array}%
\right. ,\text{ \ }\mathcal{L}_{\infty }=\left\{ 
\begin{array}{c}
\mathcal{L}_{\infty }^{-},\text{ \ }x<0 \\ 
\mathcal{L}_{\infty }^{+},\text{ \ }x>0%
\end{array}%
\right. ,
\end{equation*}%
Their essential spectrum, i.e. the set of $\lambda \in 
\mathbb{C}
$ for which $\lambda -\mathcal{M}_{g}$ (resp. $\lambda -\mathcal{L}_{g}%
\mathcal{)}$ is not Fredholm with index 0, is equal to the essential
spectrum of $\mathcal{M}_{\infty }$ (resp. $\mathcal{L}_{\infty })$ (see 
\cite{Kato}). The latter spectra are found from the spectra of $\mathcal{M}%
_{\infty }^{\pm }$ and $\mathcal{L}_{\infty }^{\pm }:$%
\begin{eqnarray*}
\sigma _{ess}(\mathcal{M}_{\infty }) &=&(-\infty ,-c_{+}], \\
\sigma _{ess}(\mathcal{L}_{\infty }) &=&(-\infty ,0].
\end{eqnarray*}%
In particular, this implies that $0$ does not belong to the essential
spectrum of $\mathcal{M}_{g}$, so that the operator $\mathcal{M}_{g}$ is
Fredholm with index $0.$ Moreover operators $\mathcal{M}_{\infty }$ and $%
\mathcal{L}_{\infty }$ are self adjoint negative operators in $L^{2},$ and $%
\mathcal{M}_{\infty }$ has a bounded inverse \cite{Kato}.%
\begin{equation*}
||\mathcal{M}_{\infty }^{-1}||_{L^{2}}\leq \frac{1}{c_{+}}.
\end{equation*}%
This last property remains valid in exponentially weighted spaces, with
weights $e^{\eta |x|},$ and $\eta $ sufficiently small, since this acts as a
small perturbation of the differential operator (see \cite{Kap-Pro} section
3.1).

\subsection{Properties of $\mathcal{L}_{g}$}

Notice that $\mathcal{L}_{g}$ is self adjoint in $L^{2}(%
\mathbb{R}
)$ and that%
\begin{equation*}
\mathcal{L}_{g}B_{\ast }=0,\text{ \ but }B_{\ast }\notin L^{2}(%
\mathbb{R}
).
\end{equation*}%
This property allows to solve explicitely the equation $\mathcal{L}%
_{g}u=f\in L_{\eta }^{2}$ with respect to $u\in L_{\eta }^{2}$ (using
variation of constants method), and shows that it has a unique solution,
provided that 
\begin{equation*}
\int_{%
\mathbb{R}
}fB_{\ast }dx=0.
\end{equation*}%
We obtain%
\begin{eqnarray*}
u(x) &=&\int_{x}^{\infty }\frac{\varepsilon ^{2}B_{\ast }(x)}{B_{\ast
}^{2}(s)}F(s)ds \\
\text{with }F(s) &=&\int_{s}^{\infty }f(\tau )B_{\ast }(\tau )d\tau \text{
for }s\geq 0 \\
&=&-\int_{-\infty }^{s}f(\tau )B_{\ast }(\tau )d\tau \text{ for }s\leq 0.
\end{eqnarray*}%
By Fubini's theorem we can write for $x\geq 0$%
\begin{equation*}
u(x)=\varepsilon ^{2}B_{\ast }(x)\int_{x}^{\infty }f(\tau )B_{\ast }(\tau
)\left( \int_{x}^{\tau }\frac{ds}{B_{\ast }^{2}(s)}\right) d\tau
\end{equation*}%
and, for $x\leq 0$%
\begin{eqnarray*}
u(x) &=&-\varepsilon ^{2}B_{\ast }(x)\int_{-\infty }^{x}f(\tau )B_{\ast
}(\tau )\left( \int_{x}^{0}\frac{ds}{B_{\ast }^{2}(s)}\right) d\tau \\
&&-\varepsilon ^{2}B_{\ast }(x)\int_{x}^{0}f(\tau )B_{\ast }(\tau )\left(
\int_{\tau }^{0}\frac{ds}{B_{\ast }^{2}(s)}\right) d\tau .
\end{eqnarray*}%
The asymptotic properties of $B_{\ast }(x)$ at $\pm \infty $ imply, for $%
x\geq 0$ 
\begin{equation*}
|u(x)|e^{\eta x}\leq C\varepsilon ^{2}\int_{x}^{\infty }|f(\tau )e^{\eta
\tau }|(\tau -x)e^{-\eta (\tau -x)}d\tau ,
\end{equation*}%
and for $x\leq 0$%
\begin{eqnarray*}
|u(x)|e^{-\eta x} &\leq &\frac{C\varepsilon ^{2}}{2\varepsilon \delta }%
\int_{-\infty }^{x}|f(\tau )e^{-\eta \tau }|e^{-(\eta +\varepsilon \delta
)(x-\tau )}d\tau \\
&&+\frac{C\varepsilon ^{2}}{2\varepsilon \delta }\int_{x}^{0}|f(\tau
)e^{-\eta \tau }|e^{(\eta -\varepsilon \delta )(\tau -x)}d\tau .
\end{eqnarray*}%
The bound 
\begin{equation*}
||u||_{L_{\eta }^{2}}\leq c_{2}||f||_{L_{\eta }^{2}}
\end{equation*}%
follows from classical convolution results between functions in $L^{2}$ and
functions in $L^{1},$ since%
\begin{eqnarray*}
\int_{-\infty }^{0}e^{(\eta -\varepsilon \delta )\tau }d\tau &=&\frac{1}{%
\eta -\varepsilon \delta }, \\
\int_{0}^{\infty }\tau e^{-\eta \tau }d\tau &=&\frac{1}{\eta ^{2}}.
\end{eqnarray*}%
Then, we choose $\eta =\frac{1}{2}\varepsilon \delta ,$ so that the
pseudo-inverse of $\mathcal{L}_{g}$ has a bounded inverse in $L_{\eta }^{2}:$%
\begin{equation*}
||\widetilde{\mathcal{L}_{g}}^{-1}||\leq c_{2},
\end{equation*}%
where $c_{2}$ is independent of $\varepsilon .$ Using the form of $\mathcal{L%
}_{g}$ we obtain easily%
\begin{equation*}
||u||_{\mathcal{D}_{1}}\leq c_{3}||f||_{L_{\eta }^{2}}
\end{equation*}%
with $c_{3}$ independent of $\varepsilon .$

\begin{remark}
\label{Rem eta}The choice made for $\eta $ is such that 
\begin{equation*}
\eta <\varepsilon \delta ,\text{ \ }\eta <\varepsilon \sqrt{2},
\end{equation*}%
for values of $\delta $ for which Theorem \ref{theorem} is valid. This means
that as $x\rightarrow -\infty $ $(A_{\ast }-1,$ $B_{\ast }),$ and, as $%
x\rightarrow +\infty $ $(A_{\ast },B_{\ast }-1)$ tend exponentially to $0$
faster than $e^{-\eta |x|}.$
\end{remark}

In fact, $\mathcal{L}_{g}$ has the same properties as the operator $\mathcal{%
M}_{i}$ in the proof of Lemma 7.3 in \cite{HI Arma}, see also \cite{H-S 2012}%
: $\mathcal{L}_{g}$ is Fredholm with index -1, when acting in $L_{\eta
}^{2}, $ for $\eta $ small enough. $\mathcal{L}_{g}$ has a trivial kernel,
and its range is orthogonal to $B_{\ast },$ with the scalar product of $%
L^{2}(%
\mathbb{R}
).$

\subsection{Properties of $\mathcal{M}_{g}$}

We saw that $\mathcal{M}_{g}$ is Fredholm with index 0. Furthermore the
derivative of the heteroclinic solution belongs to its kernel:%
\begin{eqnarray}
\mathcal{M}_{g}\left( 
\begin{array}{c}
A_{\ast }^{\prime } \\ 
B_{\ast }^{\prime }%
\end{array}%
\right) &=&\left( 
\begin{array}{c}
-A_{\ast }^{(5)}+A_{\ast }^{\prime }-(A_{\ast }^{3})^{\prime }-gB_{\ast
}^{2}A_{\ast }^{\prime }-gA_{\ast }(B_{\ast }^{2})^{\prime } \\ 
\varepsilon ^{-2}B_{\ast }^{\prime \prime \prime }+[B_{\ast }^{\prime
}-gA_{\ast }^{2}B_{\ast }^{\prime }-(B_{\ast }^{3})^{\prime }-gB_{\ast
}(A_{\ast }^{2})^{\prime }]%
\end{array}%
\right)  \notag \\
&=&\left( 
\begin{array}{c}
0 \\ 
0%
\end{array}%
\right) .  \label{ident kern}
\end{eqnarray}%
The part of the proof which differs from the proof made in \cite{HI Arma},
where the symmetry play an essential role, consists in showing at section %
\ref{sect dim Mg} that the kernel of $\mathcal{M}_{g}$ is one-dimensional
(except for a finite set of values of $g$), spanned by $(A_{\ast }^{\prime
},B_{\ast }^{\prime })\overset{def}{=}U_{\ast }$ with a range orthogonal to $%
U_{\ast }$ in $L^{2}$. Let us admit this result for the moment, and define
the projections $Q_{0}$ on $U_{\ast }^{\bot }$ and $P_{0}$ on $U_{\ast }$ ,
which are orthogonal projections in $L^{2},$ then we need to solve in $%
L_{\eta }^{2}$%
\begin{equation*}
\mathcal{M}_{g}u=f
\end{equation*}%
in decomposing%
\begin{equation*}
u=zU_{\ast }+v,\text{ \ }v=Q_{0}u,
\end{equation*}%
\begin{equation*}
\mathcal{M}_{g}v=\mathcal{(M}_{\infty }+\mathcal{A}_{g})v=Q_{0}f
\end{equation*}%
and we need to satisfy the compatibility condition%
\begin{equation*}
\langle f,U_{\ast }\rangle =0,
\end{equation*}%
while $z$ is arbitrary and we obtain for $v:$%
\begin{equation*}
(\mathbb{I}+\mathcal{M}_{\infty }^{-1}\mathcal{A}_{g})v=\mathcal{M}_{\infty
}^{-1}Q_{0}f,
\end{equation*}%
where the operator $\mathcal{M}_{\infty }^{-1}\mathcal{A}_{g}$ is now a
compact operator for which $-1$ is not an eigenvalue, since $v\in U_{\ast
}^{\bot }.$ It results that there is a number $c$ independent of $%
\varepsilon $ such that%
\begin{equation*}
||v||_{L_{\eta }^{2}}\leq c||f||_{L_{\eta }^{2}}.
\end{equation*}%
From the form of operator $\mathcal{M}_{g}$ and using interpolation
properties, we obtain for $v=(A,C)$%
\begin{equation*}
||(A,C)||_{\mathcal{D}_{0}}\leq c||f||_{L_{\eta }^{2}}
\end{equation*}%
with a certain $c$ independent of $\varepsilon .$

We show below (see section \ref{sect dim Mg}) that the kernel of $\mathcal{M}%
_{g},$ is one dimensional, then this implies that the range of $\mathcal{M}%
_{g}$ needs satisfy the orthogonality with only one element. In fact,
because of selfadjointness in $L^{2}$, the range of $\mathcal{M}_{g}$ is
orthogonal in $L^{2}(%
\mathbb{R}
)$ to%
\begin{equation*}
(A_{\ast }^{\prime },B_{\ast }^{\prime })\in L_{\eta }^{2}\text{.}
\end{equation*}

\subsubsection{Dimension of $\ker \mathcal{M}_{g}\label{sect dim Mg}$}

Any element $\zeta (x)$ in the kernel lies, by definition, in $L_{\eta
}^{2}, $ hence $\zeta (x)$ tends towards 0 exponentially at $\pm \infty .$
Near $x=\pm \infty $ the vector $\zeta (x)\sim \zeta _{\pm }(x)$ should
verify%
\begin{equation*}
\mathcal{M}_{\infty }^{\pm }\zeta _{\pm }(x)=0
\end{equation*}%
where there are only 2 possible good dimensions (on each side). This gives a
bound $=2$ to the dimension of the kernel of $\mathcal{M}_{g}.$ Let us show
that \emph{dimension 2 of $\ker \mathcal{M}_{g}$ implies non uniqueness of
the heteroclinic}, which contradicts Theorem \ref{theorem}, hence the only
possibility is that the dimension is one.

Let us choose arbitrarily $g_{0}$ and assume that the kernel of $\mathcal{M}%
_{g_{0}}$ consists in%
\begin{equation*}
\zeta _{0}(x),\zeta _{1}(x)
\end{equation*}%
where $\zeta _{0}=(A_{\ast }^{\prime },B_{\ast }^{\prime })|_{g_{0}}$ and
let us decompose a solution of (\ref{truncated syst}) in the neighborhood of 
$g_{0}$ as%
\begin{equation}
U=\mathbf{T}_{a}(U_{\ast }^{(g_{0})}+a_{1}\zeta _{1}+Y),  \label{decomp U}
\end{equation}%
where $\mathbf{T}_{a}$ represents the shift $x\mapsto x+a,$ where $a,$ $%
a_{1}\in 
\mathbb{R}
$, and $Y$ belongs to a subspace transverse to $\ker \mathcal{M}_{g_{0}}$.
Let us denote by $\mathbf{Q}_{0}$ and $\mathbf{P}_{0}=\mathbb{I}-\mathbf{Q}%
_{0}$, projections, respectively on the range of $\mathcal{M}_{g_{0}}$, and
on a complementary subspace ($\mathbf{Q}_{0}$ may be built in using the
eigenvectors $\zeta _{0}^{\ast },\zeta _{1}^{\ast }$ of the adjoint operator 
$\mathcal{M}_{g_{0}}^{\ast })$. Let us denote by 
\begin{equation*}
\mathcal{F}(U,g)=0
\end{equation*}%
the system (\ref{truncated syst}) where we look for an heteroclinic $U$ for $%
g\neq g_{0}$. Then, we have%
\begin{eqnarray*}
\mathcal{F}(U_{\ast }^{(g_{0})},g_{0}) &=&0, \\
D_{U}\mathcal{F(}U_{\ast }^{(g_{0})},g_{0}) &=&\mathcal{M}_{g_{0}},
\end{eqnarray*}%
and since 
\begin{equation*}
\mathcal{M}_{g_{0}}\zeta _{j}=0,\text{ }j=0,1,
\end{equation*}%
using the equivariance under operator $\mathbf{T}_{a},$ we obtain (denoting $%
\mathcal{F}_{0}=\mathcal{F}(U_{\ast }^{(g_{0})},g_{0})$ and $[..]^{(2)}$ the
argument of a quadratic operator) 
\begin{eqnarray*}
0 &=&\mathcal{M}_{g_{0}}Y+(g-g_{0})\partial _{g}\mathcal{F}_{0}+\frac{1}{2}%
D_{UU}^{2}\mathcal{F}_{0}[a_{1}\zeta _{1}+Y]^{(2)}+ \\
&&+\mathcal{O}(|g-g_{0}|[|g-g_{0}|+|a_{1}|+||Y||]+||Y||^{3}).
\end{eqnarray*}%
The projection $Q_{0}$ of this equation allows to use the implicit function
theorem to solve with respect to $Y$ and then obtain a unique solution%
\begin{equation*}
Y=\mathcal{Y}(a_{1},g),
\end{equation*}%
with%
\begin{eqnarray*}
\mathcal{Y} &=&-(g-g_{0})\widetilde{\mathcal{M}_{g_{0}}}^{-1}\mathbf{Q}%
_{0}\partial _{g}\mathcal{F}_{0}-\frac{1}{2}\widetilde{\mathcal{M}_{g_{0}}}%
^{-1}\mathbf{Q}_{0}D_{UU}^{2}\mathcal{F}_{0}[a_{1}\zeta _{1}]^{(2)}+ \\
&&+\mathcal{O}(|g-g_{0}|(|g-g_{0}|+|a_{1}|)+|a_{1}|^{3})).
\end{eqnarray*}%
Then projecting on the complementary space, (only one equation since we work
in the subspace orthogonal to $\zeta _{0}^{\ast }),$ we may observe (see the
proof in Appendix \ref{App3}) that $\mathbf{P}_{0}\partial _{g_{0}}\mathcal{F%
}_{0}=0$ and then obtain the "bifurcation" equation as%
\begin{equation}
q(a_{1},g-g_{0})=\mathcal{O}((|g-g_{0}|+|a_{1}|)^{3}),  \notag
\end{equation}%
where the function $q$ is quadratic in its arguments and%
\begin{equation*}
q|_{g=g_{0}}\zeta _{1}=\frac{1}{2}\mathbf{P}_{0}D_{UU}^{2}\mathcal{F}%
_{0}[a_{1}\zeta _{1}]^{(2)}.
\end{equation*}%
This equation is just at main order a second degree equation in $a_{1}$
depending on $g-g_{0}$. Provided that the discriminant is not 0, the generic
number of solutions is 2 or 0. If the discriminant is 0 for $g=g_{0},$ we
just go a little farther in $g,$ and obtain a non zero discriminant, since
the discriminant cannot stay $=0$. Indeed the heteroclinic is analytic in $g$
and if the discriminant were identically 0, this would mean that we have a
double root for any $g$, contradicting the transversality for all g, except
a finite number, of the intersection of the two manifolds (unstable one of $%
M_{-}$, stable one of $M_{+}$). Hence, this is true except for a set of
isolated values of $g.$ We can then use the implicit function theorem for
finding corresponding solutions for the system with higher order terms. In
fact we already know a solution, corresponding to $U_{\ast }^{(g)}=U_{\ast
}^{(g_{0})}+(g-g_{0})\partial _{g}U_{\ast }^{(g_{0})}+h.o.t.$ which
corresponds to specific values for $a_{1}$ and $Y,$ of order $\mathcal{O}%
(g-g_{0}).$ It then results that there is at least another solution of order 
$\mathcal{O}(g-g_{0})$, so that there exists another heteroclinic, in the
neighborhood of the known one (then in contradiction with Theorem \ref%
{theorem}).

\begin{remark}
The above proof with only 1 dimension in the Kernel, provides $Y=-(g-g_{0})%
\widetilde{\mathcal{M}_{g_{0}}}^{-1}\partial _{g}\mathcal{F}_{0}+\mathcal{O(}%
(g-g_{0})^{2})$, which gives a unique heteroclinic. Since we found only one
heteroclinic, this shows that the kernel is of dimension 1.
\end{remark}

\appendix

\section{Appendix}

\subsection{Monodromy operator\label{App1}}

Let us prove the estimate for the monodromy operators. We prove the following

\begin{lemma}
\label{LemMonodromy} For $B_{0}\leq \sqrt{1-\eta _{0}^{2}\delta ^{2}},$ $%
\alpha \geq \frac{10}{3}\varepsilon ^{2},$ and for $\varepsilon $ small
enough, the following estimates hold%
\begin{eqnarray*}
||\boldsymbol{S}_{0}(x,s)|| &\leq &e^{\sigma (x-s)},\text{ }-\infty <x<s \\
||\boldsymbol{S}_{1}(x,s)|| &\leq &e^{-\sigma (x-s)},\text{ }-\infty <s<x
\end{eqnarray*}%
with%
\begin{equation*}
\sigma =\frac{(\alpha \delta )^{1/2}}{2^{1/4}}.
\end{equation*}
\end{lemma}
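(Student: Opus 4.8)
The plan is to exploit the special structure of the reduced $2\times2$ operators $\mathbf{L}_0,\mathbf{L}_1$. Along a trajectory $B_0=B_0(x)$ is a $C^1$ function of $x$, hence so are $\lambda_r=\lambda_r(x)$ and $\lambda_i=\lambda_i(x)$; these are real because, under the hypothesis $\alpha\geq\frac{10}{3}\varepsilon^2$ and $\varepsilon$ small, Lemma \ref{Lem delta} gives $-\Delta'\geq\widetilde{A_\ast}^2\geq0$, so the quantities in (\ref{def lambda}) are meaningful. Writing $J=\left(\begin{array}{cc}0&1\\-1&0\end{array}\right)$ one has $\mathbf{L}_0=\lambda_r(x)\,\mathbb{I}+\lambda_i(x)J$ and $\mathbf{L}_1=-\lambda_r(x)\,\mathbb{I}+\lambda_i(x)J$. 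Since $J$ is antisymmetric, $\langle\mathbf{L}_0 u,u\rangle=\lambda_r(x)|u|^2$ and $\langle\mathbf{L}_1 u,u\rangle=-\lambda_r(x)|u|^2$ for every $u\in\mathbb{R}^2$. The first step is then the energy identity: if $u(x)=\boldsymbol{S}_0(x,s)u_0$ solves $u'=\mathbf{L}_0(x)u$, then $\frac{d}{dx}|u|^2=2\lambda_r(x)|u|^2$, so $|u(x)|=\exp\!\left(\int_s^x\lambda_r(\tau)\,d\tau\right)|u_0|$ for every $u_0$, which yields the exact identity $\|\boldsymbol{S}_0(x,s)\|=\exp\!\left(\int_s^x\lambda_r(\tau)\,d\tau\right)$, and similarly $\|\boldsymbol{S}_1(x,s)\|=\exp\!\left(-\int_s^x\lambda_r(\tau)\,d\tau\right)$.

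The second step is the pointwise lower bound $\lambda_r(x)\geq\sigma$. From (\ref{def lambda}) one has $2\lambda_r^2=\sqrt2\,\widetilde{A_\ast}+\varepsilon^2B_0^2(1+\delta^2)^2\geq\sqrt2\,\widetilde{A_\ast}$, and by (\ref{lower bound A*}) $\widetilde{A_\ast}\geq\alpha\delta$ whenever $B_0\leq\sqrt{1-\eta_0^2\delta^2}$; hence $\lambda_r^2\geq\widetilde{A_\ast}/\sqrt2\geq\alpha\delta/\sqrt2=\sigma^2$, i.e. $\lambda_r(x)\geq\sigma$ all along the relevant part of the trajectory (where $B_0(x)\leq\sqrt{1-\eta_0^2\delta^2}$, i.e. $x\leq -x_\ast$).

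The last step is to assemble the two. For $-\infty<x<s$ one has $\int_s^x\lambda_r(\tau)\,d\tau=-\int_x^s\lambda_r(\tau)\,d\tau\leq-\sigma(s-x)=\sigma(x-s)$, whence $\|\boldsymbol{S}_0(x,s)\|\leq e^{\sigma(x-s)}$; for $-\infty<s<x$ one has $-\int_s^x\lambda_r(\tau)\,d\tau\leq-\sigma(x-s)$, whence $\|\boldsymbol{S}_1(x,s)\|\leq e^{-\sigma(x-s)}$. There is no real obstacle here; the one point worth flagging is the observation made in the first step — that each $\mathbf{L}_j$ is a scalar multiple of $\mathbb{I}$ plus an antisymmetric matrix — which makes the monodromy norm an \emph{exact} exponential of $\int\lambda_r$, so that the two claimed bounds become an immediate consequence of the lower bound on $\lambda_r$ together with the sign conventions $x<s$ for $\boldsymbol{S}_0$ and $s<x$ for $\boldsymbol{S}_1$.
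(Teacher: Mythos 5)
Your proof is correct and follows essentially the same route as the paper: the paper also writes the system $x_1'=\lambda_r x_1+\lambda_i x_2$, $x_2'=-\lambda_i x_1+\lambda_r x_2$, notes $(x_1^2+x_2^2)'=2\lambda_r(x_1^2+x_2^2)$ (your antisymmetry observation), integrates to get the exact exponential factor $\exp\int_s^x\lambda_r$, and concludes from the bound $\lambda_r\geq\sigma$, which it quotes from Lemma \ref{Lem estim lambda} rather than re-deriving it from (\ref{def lambda}) and (\ref{lower bound A*}) as you do. The two arguments are the same in substance.
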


\begin{proof}
We start with the system%
\begin{eqnarray*}
x_{1}^{\prime } &=&\lambda _{r}x_{1}+\lambda _{i}x_{2} \\
x_{2}^{\prime } &=&-\lambda _{i}x_{1}+\lambda _{r}x_{2}
\end{eqnarray*}%
where $\lambda _{r}$ and $\lambda _{i}$ are functions of $x$. From Lemma \ref%
{Lem estim lambda} we have, for $\varepsilon $ small enough%
\begin{equation*}
\lambda _{r}\geq \frac{(\alpha \delta )^{1/2}}{2^{1/4}}=\sigma .
\end{equation*}%
Now we obtain%
\begin{equation*}
(x_{1}^{2}+x_{2}^{2})^{\prime }=2\lambda _{r}(x_{1}^{2}+x_{2}^{2})
\end{equation*}%
hence%
\begin{equation*}
(x_{1}^{2}+x_{2}^{2})(x)=e^{\int_{s}^{x}2\lambda _{r}(\tau )d\tau
}(x_{1}^{2}+x_{2}^{2})(s),
\end{equation*}%
which, for $x<s,$ leads to 
\begin{equation*}
\sqrt{(x_{1}^{2}+x_{2}^{2})(x)}\leq e^{\sigma (x-s)}\sqrt{%
(x_{1}^{2}+x_{2}^{2})(s)}.
\end{equation*}%
The proof is then done for the operator $\boldsymbol{S}_{0}.$ The estimate
for $\boldsymbol{S}_{1}$ is obtained in the same way.
\end{proof}

\begin{remark}
We have%
\begin{equation*}
\boldsymbol{S}_{0}(x,s)=e^{\int_{s}^{x}\lambda _{r}(\tau )d\tau }\left( 
\begin{array}{cc}
\cos (\int_{s}^{x}\lambda _{i}(\tau )d\tau ) & \sin (\int_{s}^{x}\lambda
_{i}(\tau )d\tau ) \\ 
-\sin (\int_{s}^{x}\lambda _{i}(\tau )d\tau ) & \cos (\int_{s}^{x}\lambda
_{i}(\tau )d\tau )%
\end{array}%
\right) .
\end{equation*}
\end{remark}

\subsection{Computation of the system with new coordinates \label{App1'}}

Let us look for the system (\ref{newsyst 1}) writen in the new coordinates,
first in forgetting quadratic and higher orders terms%
\begin{eqnarray*}
B_{0}x_{1}^{\prime } &=&\frac{\widetilde{A_{\ast }}}{2\sqrt{2}\lambda _{r}}%
\left( A_{1}+\frac{(1+\delta ^{2})B_{0}B_{1}}{\widetilde{A_{\ast }}}\right) +%
\frac{3\lambda _{r}^{2}-\lambda _{i}^{2}}{4\sqrt{2}\lambda _{r}\widetilde{%
A_{\ast }}}A_{3} \\
&&+\frac{A_{2}}{2}+\frac{(1+\delta ^{2})}{2\widetilde{A_{\ast }}}%
B_{0}^{2}\varepsilon ^{2}\left( \delta ^{2}(\widetilde{A_{\ast }}%
^{2}-B_{0}^{2})+2(1+\delta ^{2})\widetilde{A_{\ast }}\widetilde{A_{0}}%
\right) -(\lambda _{r}^{2}-\lambda _{i}^{2})\widetilde{A_{0}} \\
&=&B_{0}f_{1}+\frac{\widetilde{A_{\ast }}}{2\sqrt{2}\lambda _{r}}%
B_{0}(x_{1}+y_{1})+\frac{A_{2}}{2}+\frac{1}{4\lambda _{r}}A_{3},
\end{eqnarray*}%
\begin{eqnarray*}
\lambda _{i}B_{0}x_{2}^{\prime } &=&-\frac{\widetilde{A_{\ast }}}{2\sqrt{2}}%
\left( A_{1}+\frac{(1+\delta ^{2})B_{0}B_{1}}{\widetilde{A_{\ast }}}\right) -%
\frac{\lambda _{r}^{2}-3\lambda _{i}^{2}}{4(\lambda _{r}^{2}-\alpha )}A_{3}
\\
&&-\frac{(\lambda _{r}^{2}-\lambda _{i}^{2})}{4\lambda _{r}}\left( A_{2}+%
\frac{(1+\delta ^{2})B_{0}^{2}\varepsilon ^{2}}{\widetilde{A_{\ast }}}\delta
^{2}(\widetilde{A_{\ast }}^{2}-B_{0}^{2})\right) \\
&&-\frac{1}{4\lambda _{r}}2\widetilde{A_{\ast }}^{2}\widetilde{A_{0}} \\
&=&\lambda _{i}B_{0}f_{2}-\frac{\widetilde{A_{\ast }}}{2\sqrt{2}}%
B_{0}(x_{1}+y_{1})-\frac{(\lambda _{r}^{2}-\lambda _{i}^{2})}{4\lambda _{r}}%
A_{2} \\
&&+\frac{1}{4}A_{3}-\frac{1}{4\lambda _{r}}2\widetilde{A_{\ast }}^{2}%
\widetilde{A_{0}},
\end{eqnarray*}%
with%
\begin{equation*}
f_{1}=\frac{\varepsilon ^{2}\delta ^{2}B_{0}(1+\delta ^{2})(\widetilde{%
A_{\ast }}^{2}-B_{0}^{2})}{2\widetilde{A_{\ast }}},
\end{equation*}%
\begin{equation*}
f_{2}=-\frac{\varepsilon ^{2}\delta ^{2}B_{0}(1+\delta ^{2})(\lambda
_{r}^{2}-\lambda _{i}^{2})(\widetilde{A_{\ast }}^{2}-B_{\ast }^{2})}{%
4\lambda _{r}\lambda _{i}\widetilde{A_{\ast }}},
\end{equation*}%
hence%
\begin{eqnarray}
x_{1}^{\prime } &=&f_{1}+\lambda _{r}x_{1}+\lambda _{i}x_{2},
\label{linsyst 1} \\
x_{2}^{\prime } &=&f_{2}-\lambda _{i}x_{1}+\lambda _{r}x_{2},  \notag
\end{eqnarray}%
as expected. In the same way we obtain%
\begin{eqnarray}
y_{1}^{\prime } &=&f_{1}-\lambda _{r}y_{1}+\lambda _{i}y_{2},  \notag \\
y_{2}^{\prime } &=&-f_{2}-\lambda _{i}y_{1}-\lambda _{r}y_{2},
\label{linsyst 2} \\
z_{1}^{\prime } &=&\frac{2\varepsilon ^{2}\delta ^{2}(\widetilde{A_{\ast }}%
^{2}-B_{0}^{2})}{\widetilde{A_{\ast }}}=\frac{2f_{1}}{(1+\delta ^{2})B_{0}},
\notag \\
B_{\ast }^{\prime } &=&-\frac{(\lambda _{r}^{2}-\lambda _{i}^{2})}{(1+\delta
^{2})B_{0}\widetilde{A_{\ast }}}A_{3}+\widetilde{A_{\ast }}B_{0}z_{1}. 
\notag
\end{eqnarray}%
We notice that the following estimates hold (using (\ref{basic ineq a}) and
Lemma \ref{Lem estim lambda})%
\begin{equation}
|f_{1}|,|f_{2}|\leq \frac{B_{0}\varepsilon ^{2}\delta ^{2}}{\widetilde{%
A_{\ast }}}\leq \frac{B_{0}\varepsilon ^{2}\delta }{\alpha },  \notag
\end{equation}

\subsubsection{Full system in new coordinates}

We intend to derive the full system (\ref{truncated syst}) with coordinates $%
(x_{1},x_{2},y_{1},y_{2},B_{0},z_{1}).$ Differentiating (\ref{x1 1}) and (%
\ref{x2 1}) we see that we respectively need to add to the previous
expressions (\ref{linsyst 1}) for $x_{1}^{\prime }$ and $x_{2}^{\prime }$%
\begin{eqnarray*}
&&\frac{1}{B_{0}}\left\{ \left( \frac{\widetilde{A_{\ast }}}{2\sqrt{2}%
\lambda _{r}}\right) ^{\prime }\widetilde{A_{0}}+\left( \frac{(3\lambda
_{r}^{2}-\lambda _{i}^{2})}{4\sqrt{2}\lambda _{r}\widetilde{A_{\ast }}}%
\right) ^{\prime }A_{2}+\varepsilon ^{2}\left( \frac{(1+\delta
^{2})^{2}B_{0}^{2}}{2\widetilde{A_{\ast }}^{2}}\right) ^{\prime
}A_{3}+\left( \frac{(1+\delta ^{2})B_{0}}{2\widetilde{A_{\ast }}}\right)
^{\prime }B_{1}\right\} \\
&&-\varepsilon ^{2}\frac{(1+\delta ^{2})^{2}B_{0}}{2\widetilde{A_{\ast }}^{2}%
}[3\widetilde{A_{\ast }}\widetilde{A_{0}}^{2}+\widetilde{A_{0}}^{3}]+\frac{%
B_{0}\varepsilon ^{2}(1+\delta ^{2})^{2}\widetilde{A_{0}}^{2}}{2\widetilde{%
A_{\ast }}}-\frac{B_{1}}{B_{0}}x_{1}.
\end{eqnarray*}%
and%
\begin{eqnarray*}
&&\frac{1}{B_{0}}\left\{ -\left( \frac{\widetilde{A_{\ast }}}{2\sqrt{2}%
\lambda _{i}}\right) ^{\prime }\widetilde{A_{0}}-\left( \frac{(\lambda
_{r}^{2}-\lambda _{i}^{2})}{4\lambda _{r}\lambda _{i}}\right) ^{\prime
}A_{1}-\left( \frac{(\lambda _{r}^{2}-3\lambda _{i}^{2})}{4\sqrt{2}\lambda
_{i}\widetilde{A_{\ast }}}\right) ^{\prime }A_{2}+\left( \frac{\varepsilon
^{2}(1+\delta ^{2})^{3}B_{0}^{3}}{4\lambda _{r}\lambda _{i}\widetilde{%
A_{\ast }}}\right) ^{\prime }B_{1}\right\} \\
&&+\frac{1}{B_{0}}\left( \frac{1}{4\lambda _{r}\lambda _{i}}\left[ 1-\frac{%
(\lambda _{r}^{2}-\lambda _{i}^{2})^{2}}{\widetilde{A_{\ast }}^{2}}\right]
\right) ^{\prime }A_{3}-\frac{1}{4\lambda _{r}\lambda _{i}B_{0}}\left( 1-%
\frac{(\lambda _{r}^{2}-\lambda _{i}^{2})^{2}}{\widetilde{A_{\ast }}^{2}}%
\right) [3\widetilde{A_{\ast }}\widetilde{A_{0}}^{2}+\widetilde{A_{0}}^{3}]
\\
&&-\frac{\varepsilon ^{4}B_{0}^{3}(1+\delta ^{2})^{4}}{4\lambda _{r}\lambda
_{i}\widetilde{A_{\ast }}}\widetilde{A_{0}}^{2}-\frac{B_{1}}{B_{0}}x_{2}.
\end{eqnarray*}

We then arrive to the system (\ref{x'1 1},\ref{x'2 1},\ref{y'1 1},\ref{y'2 1}%
).

Using Lemma \ref{Lem estim lambda} and Lemma \ref{Lem delta} we obtain%
\begin{equation*}
\widetilde{A_{\ast }}^{\prime }=-\frac{(1+\delta ^{2})B_{0}}{\widetilde{%
A_{\ast }}}B_{1}
\end{equation*}%
\begin{equation*}
(\lambda _{r}^{2})^{\prime }=-\frac{(1+\delta ^{2})B_{0}B_{1}}{\sqrt{2}%
\widetilde{A_{\ast }}}(1-\varepsilon ^{2}\sqrt{2}(1+\delta ^{2})\widetilde{%
A_{\ast }})
\end{equation*}%
\begin{equation*}
(\lambda _{i}^{2})^{\prime }=-\frac{(1+\delta ^{2})B_{0}B_{1}}{\sqrt{2}%
\widetilde{A_{\ast }}}(1+\varepsilon ^{2}\sqrt{2}(1+\delta ^{2})\widetilde{%
A_{\ast }})
\end{equation*}%
\begin{equation}
\left( \frac{\widetilde{A_{\ast }}}{2\sqrt{2}\lambda _{r}}\right) ^{\prime
}=a_{1}B_{0}B_{1},\text{ \ }|a_{1}|\leq \frac{c}{\widetilde{A_{\ast }}^{3/2}}%
,  \label{a1 1}
\end{equation}%
\begin{equation}
\left( \frac{\widetilde{A_{\ast }}}{2\sqrt{2}\lambda _{i}}\right) ^{\prime
}=a_{2}B_{0}B_{1},\text{ \ }|a_{2}|\leq \frac{c}{\widetilde{A_{\ast }}^{3/2}}%
,  \label{a2 1}
\end{equation}%
\begin{equation}
\left( -\frac{(\lambda _{r}^{2}-\lambda _{i}^{2})}{4\lambda _{r}\lambda _{i}}%
\right) ^{\prime }=b_{2}B_{0}B_{1},\text{ \ }|b_{2}|\leq \frac{c\varepsilon
^{2}}{\widetilde{A_{\ast }}^{2}},  \label{b2 1}
\end{equation}%
\begin{equation}
\left( \frac{(3\lambda _{r}^{2}-\lambda _{i}^{2})}{4\sqrt{2}\lambda _{r}%
\widetilde{A_{\ast }}}\right) ^{\prime }=c_{1}B_{0}B_{1},\text{ \ }%
|c_{1}|\leq \frac{c}{\widetilde{A_{\ast }}^{5/2}},  \label{c1 1}
\end{equation}%
\begin{equation}
\left( -\frac{(\lambda _{r}^{2}-3\lambda _{i}^{2})}{4\sqrt{2}\lambda _{i}%
\widetilde{A_{\ast }}}\right) ^{\prime }=c_{2}B_{0}B_{1},\text{ \ }%
|c_{2}|\leq \frac{c}{\widetilde{A_{\ast }}^{5/2}},  \label{c2 1}
\end{equation}%
\begin{equation}
\varepsilon ^{2}\left( \frac{(1+\delta ^{2})^{2}B_{0}^{2}}{2\widetilde{%
A_{\ast }}^{2}}\right) ^{\prime }=d_{1}B_{0}B_{1},\text{ \ }|d_{1}|\leq 
\frac{c}{\widetilde{A_{\ast }}^{3}},  \label{d1 1}
\end{equation}%
\begin{equation}
\left( \frac{1}{4\lambda _{r}\lambda _{i}}\left[ 1-\frac{(\lambda
_{r}^{2}-\lambda _{i}^{2})^{2}}{\widetilde{A_{\ast }}^{2}}\right] \right)
^{\prime }=d_{2}B_{0}B_{1},\text{ \ }|d_{2}|\leq \frac{c}{\widetilde{A_{\ast
}}^{3}},  \label{d2 1}
\end{equation}%
\begin{equation}
\left( \frac{(1+\delta ^{2})B_{0}}{2\widetilde{A_{\ast }}}\right) ^{\prime
}=e_{1}B_{1},\text{ \ }|e_{1}|\leq \frac{c}{\widetilde{A_{\ast }}^{3}}
\label{e1 1}
\end{equation}%
\begin{equation}
\left( \frac{\varepsilon ^{2}(1+\delta ^{2})^{3}B_{0}^{2}}{4\lambda
_{r}\lambda _{i}\widetilde{A_{\ast }}}\right) ^{\prime }=e_{2}B_{0}B_{1},%
\text{ \ }|e_{2}|\leq \frac{c}{\widetilde{A_{\ast }}^{3}},  \label{e2 1}
\end{equation}%
with $c$ independent of $\varepsilon ,\alpha $ and $\delta \in \lbrack
1/3,1].$

\subsection{Elimination of $z_{1}\label{app eliminationz1}$}

\subsubsection{System after scaling}

After the scaling (\ref{scaling a}) our system (\ref{x'1 1},\ref{x'2 1},\ref%
{y'1 1},\ref{y'2 1}) takes the form%
\begin{eqnarray*}
\overline{X}^{\prime } &=&\mathbf{L}_{0}\overline{X}+B_{0}\overline{F_{0}}+%
\mathbf{B}_{01}(\overline{X},\overline{Y})+\overline{z_{1}}\boldsymbol{M}%
_{01}(\overline{X},\overline{Y}) \\
&&+\overline{z_{1}}^{2}B_{0}\boldsymbol{n}_{0}+\boldsymbol{C}_{01}(\overline{%
X},\overline{Y}), \\
\overline{Y}^{\prime } &=&\mathbf{L}_{1}\overline{Y}+B_{0}\overline{F_{1}}+%
\mathbf{B}_{11}(\overline{X},\overline{Y})+\overline{z_{1}}\boldsymbol{M}%
_{11}(\overline{X},\overline{Y}) \\
&&+\overline{z_{1}}^{2}B_{0}\boldsymbol{n}_{1}+\boldsymbol{C}_{11}(\overline{%
X},\overline{Y}),
\end{eqnarray*}%
where $\overline{F_{0}},\overline{F_{1}},\boldsymbol{n}_{0},\boldsymbol{n}%
_{1}$ are two-dimensional vectors $\mathbf{M}_{01},\mathbf{M}_{11}$ are
linear operators in $(\overline{X},\overline{Y}),$ $\mathbf{B}_{01},\mathbf{B%
}_{11}$ are quadratic and $\mathbf{C}_{01},\mathbf{C}_{11}$ are cubic in $(%
\overline{X},\overline{Y}),$ all functions of $B_{0}.$ More precisely we have%
\begin{equation*}
\overline{F_{0}}=\left( 
\begin{array}{c}
\frac{f_{1}}{\alpha ^{3/2}\delta B_{0}} \\ 
\frac{f_{2}}{\alpha ^{3/2}\delta B_{0}}%
\end{array}%
\right) ,\text{ }\overline{F_{1}}=\left( 
\begin{array}{c}
\frac{f_{1}}{\alpha ^{3/2}\delta B_{0}} \\ 
-\frac{f_{2}}{\alpha ^{3/2}\delta B_{0}}%
\end{array}%
\right) ,\text{ }|\overline{F_{j}}|\leq c\frac{\varepsilon ^{2}}{\alpha
^{5/2}},
\end{equation*}%
\begin{equation*}
\boldsymbol{n}_{0}=\frac{\varepsilon ^{2}\delta }{\alpha ^{3/2}}\left( 
\begin{array}{c}
e_{1}\widetilde{A_{\ast }}^{2} \\ 
e_{2}\widetilde{A_{\ast }}^{2}B_{0}-b_{2}(1+\delta ^{2})\widetilde{A_{\ast }}%
B_{0}^{2}%
\end{array}%
\right) ,
\end{equation*}%
\begin{equation*}
\boldsymbol{M}_{01}(\overline{X},\overline{Y})=\varepsilon \delta \left( 
\begin{array}{c}
m_{01}(\overline{X},\overline{Y}) \\ 
m_{02}(\overline{X},\overline{Y})%
\end{array}%
\right) ,
\end{equation*}%
\begin{eqnarray*}
m_{01}(\overline{X},\overline{Y}) &=&\widetilde{A_{\ast }}B_{0}\left( a_{1}%
\overline{\widetilde{A_{0}}}+c_{1}\overline{A_{2}}+(d_{1}-2e_{1}(1+\delta
^{2})\varepsilon ^{2}\frac{B_{0}}{\widetilde{A_{\ast }}})\overline{A_{3}}-%
\frac{\overline{x_{1}}}{B_{0}}\right) , \\
m_{02}(\overline{X},\overline{Y}) &=&\widetilde{A_{\ast }}B_{0}\left( -a_{2}%
\overline{\widetilde{A_{0}}}+c_{2}\overline{A_{2}}+(d_{2}-2e_{2}(1+\delta
^{2})\varepsilon ^{2}\frac{B_{0}^{2}}{A_{\ast }})\overline{A_{3}}-\frac{%
\overline{x_{2}}}{B_{0}}\right) \\
&&+\widetilde{A_{\ast }}B_{0}^{2}b_{2}(\overline{x_{1}}+\overline{y_{1}}%
)+(1+\delta ^{2})^{2}\varepsilon ^{2}\frac{B_{0}^{3}}{\widetilde{A_{\ast }}}%
b_{2}\overline{A_{3}},
\end{eqnarray*}%
\begin{equation*}
\mathbf{B}_{01}(\overline{X},\overline{Y})=\alpha ^{3/2}\delta \left( 
\begin{array}{c}
b_{01}(\overline{X},\overline{Y}) \\ 
b_{02}(\overline{X},\overline{Y})%
\end{array}%
\right) ,
\end{equation*}%
\begin{eqnarray*}
b_{01}(\overline{X},\overline{Y}) &=&-\varepsilon ^{2}\frac{(1+\delta
^{2})(2-\delta ^{2})B_{0}}{2\widetilde{A_{\ast }}}\overline{\widetilde{A_{0}}%
}^{2}+e_{1}\frac{\varepsilon ^{4}(1+\delta ^{2})^{2}B_{0}}{\widetilde{%
A_{\ast }}^{2}}\overline{A_{3}}^{2} \\
&&-\mathcal{\varepsilon }^{2}\frac{(1+\delta ^{2})B_{0}}{\widetilde{A_{\ast }%
}}\overline{A_{3}}[a_{1}\overline{\widetilde{A_{0}}}+c_{1}\overline{A_{2}}%
+d_{1}\overline{A_{3}}-\frac{\overline{x_{1}}}{B_{0}}],
\end{eqnarray*}%
\begin{eqnarray*}
b_{02}(\overline{X},\overline{Y}) &=&-\frac{1}{4\lambda _{r}\lambda _{i}%
\widetilde{A_{\ast }}B_{0}}\left( 3\widetilde{A_{\ast }}^{2}-2\varepsilon
^{4}B_{0}^{4}(1+\delta ^{2})^{4}\right) \overline{\widetilde{A_{0}}}%
^{2}+e_{2}\frac{\varepsilon ^{4}(1+\delta ^{2})B_{0}^{2}}{\widetilde{A_{\ast
}}^{2}}\overline{A_{3}}^{2} \\
&&-\mathcal{\varepsilon }^{2}\frac{(1+\delta ^{2})B_{0}}{\widetilde{A_{\ast }%
}}\overline{A_{3}}[-a_{2}\overline{\widetilde{A_{0}}}+b_{2}B_{0}(\overline{%
x_{1}}+\overline{y_{1}})+c_{2}\overline{A_{2}}+d_{2}\overline{A_{3}}-\frac{%
\overline{x_{2}}}{B_{0}}],
\end{eqnarray*}%
\begin{equation*}
\boldsymbol{C}_{01}(\overline{X},\overline{Y})=\alpha ^{3}\delta ^{2}%
\overline{\widetilde{A_{0}}}^{3}\left( 
\begin{array}{c}
-\varepsilon ^{2}\frac{(1+\delta ^{2})B_{0}}{2\widetilde{A_{\ast }}^{2}} \\ 
-\frac{1}{4\lambda _{r}\lambda _{i}B_{0}}\left( 1-\frac{\varepsilon
^{4}B_{0}^{4}(1+\delta ^{2})^{4}}{\widetilde{A_{\ast }}^{2}}\right)%
\end{array}%
\right) .
\end{equation*}%
$\boldsymbol{n}_{1}$, $\boldsymbol{M}_{11},$ $\boldsymbol{B}_{11},$ $%
\boldsymbol{C}_{11}$ are deduced respectively from $\boldsymbol{n}_{0},$ $%
\boldsymbol{M}_{01},\boldsymbol{B}_{01},$ $\boldsymbol{C}_{01}$ in changing $%
(a_{1},c_{1},b_{2},d_{2},e_{2})$ into their opposite.

\subsubsection{System after elimination of $z_{1}$}

Let us replace $\overline{z_{1}}$ by $\overline{z_{10}}[1+\mathcal{Z}(%
\overline{X},\overline{Y},B_{0},\varepsilon ,\alpha ,\delta )]$ in the
differential system for $(\overline{X},\overline{Y}).$ The new system
becomes (notice that $B_{0}$ is in factor of the "constant" terms)%
\begin{eqnarray*}
\overline{X}^{\prime } &=&\mathbf{L}_{0}\overline{X}+B_{0}\mathcal{F}_{0}+%
\mathcal{L}_{01}(\overline{X},\overline{Y})+\mathcal{B}_{01}(\overline{X},%
\overline{Y}), \\
\overline{Y}^{\prime } &=&\mathbf{L}_{1}\overline{Y}+B_{0}\mathcal{F}_{1}+%
\mathcal{L}_{11}(\overline{X},\overline{Y})+\mathcal{B}_{11}(\overline{X},%
\overline{Y}),
\end{eqnarray*}%
which is (\ref{newsyst XY a}) with%
\begin{equation*}
\mathcal{F}_{0}=\overline{F_{0}}+\overline{z_{10}}^{2}\boldsymbol{n}_{0},
\end{equation*}%
\begin{equation*}
\mathcal{L}_{01}(\overline{X},\overline{Y})=\overline{z_{10}}\boldsymbol{M}%
_{01}(\overline{X},\overline{Y}),
\end{equation*}%
\begin{eqnarray*}
\mathcal{B}_{01}(\overline{X},\overline{Y}) &=&\mathbf{B}_{01}(\overline{X},%
\overline{Y})+\overline{z_{10}}\mathcal{Z}(\overline{X},\overline{Y})%
\boldsymbol{M}_{01}(\overline{X},\overline{Y})+\boldsymbol{C}_{01}(\overline{%
X},\overline{Y}) \\
&&+2\overline{z_{10}}^{2}\mathcal{Z}(\overline{X},\overline{Y})B_{0}%
\boldsymbol{n}_{0}+\overline{z_{10}}^{2}\mathcal{Z}(\overline{X},\overline{Y}%
)^{2}B_{0}\boldsymbol{n}_{0}.
\end{eqnarray*}%
In using estimates (\ref{estim coord 1}), (\ref{a1 1}) to (\ref{e2 1}), it
is straightforward to check that%
\begin{equation*}
|\mathcal{F}_{0}|+|\mathcal{F}_{1}|\leq \frac{c\varepsilon ^{2}}{\alpha
^{9/2}},
\end{equation*}%
\begin{eqnarray*}
|\boldsymbol{M}_{01}(\overline{X},\overline{Y})| &\leq &c\frac{\varepsilon
\delta }{\widetilde{A_{\ast }}}(|\overline{X}|+|\overline{Y}|), \\
|\boldsymbol{n}_{0}| &\leq &c\frac{\varepsilon ^{2}}{\alpha ^{5/2}},\text{ }%
|b_{01}|\leq c\frac{\varepsilon ^{2}}{\alpha ^{2}},\text{ }|b_{02}|\leq 
\frac{9}{2\alpha }+c\frac{\varepsilon ^{2}}{\alpha ^{2}},
\end{eqnarray*}%
hence%
\begin{equation*}
|\mathcal{L}_{01}(\overline{X},\overline{Y})|+|\mathcal{L}_{11}(\overline{X},%
\overline{Y})|\leq c\frac{\varepsilon }{\alpha ^{2}}(|\overline{X}|+|%
\overline{Y}|).
\end{equation*}%
For higher order terms we have%
\begin{eqnarray*}
|\mathbf{B}_{01}(\overline{X},\overline{Y})| &\leq &\alpha ^{1/2}[\frac{9}{2}%
+c\frac{\varepsilon ^{2}}{\alpha }](|\overline{X}|+|\overline{Y}|)^{2}, \\
|2\overline{z_{10}}^{2}\mathcal{Z}(\overline{X},\overline{Y})\boldsymbol{n}%
_{0}| &\leq &c\frac{\varepsilon ^{2}(1+\rho ^{2})}{\alpha ^{3/2}}(|\overline{%
X}|+|\overline{Y}|)^{2}, \\
|\overline{z_{10}}\mathcal{Z}(\overline{X},\overline{Y})\boldsymbol{M}_{01}(%
\overline{X},\overline{Y})| &\leq &c\alpha \varepsilon (1+\rho ^{2})(|%
\overline{X}|+|\overline{Y}|)^{3}, \\
\overline{z_{10}}^{2}|\mathcal{Z}(\overline{X},\overline{Y})^{2}\boldsymbol{n%
}_{0}| &\leq &c\alpha ^{3/2}\varepsilon ^{2}(1+\rho ^{4})(|\overline{X}|+|%
\overline{Y}|)^{4}, \\
|\boldsymbol{C}_{01}(\overline{X},\overline{Y})| &\leq &\frac{27\alpha ^{1/2}%
}{2}(|\overline{X}|+|\overline{Y}|)^{3},
\end{eqnarray*}%
hence for 
\begin{equation*}
|\overline{X}|+|\overline{Y}|\leq \rho ,
\end{equation*}%
we obtain (with $c$ independent of $\varepsilon ,\alpha ,\delta )$%
\begin{eqnarray*}
|\mathcal{B}_{01}(\overline{X},\overline{Y})|+|\mathcal{B}_{11}(\overline{X},%
\overline{Y})| &\leq &\alpha ^{1/2}[9/2+c\frac{\varepsilon ^{2}}{\alpha }+c%
\frac{\varepsilon ^{2}}{\alpha ^{2}}(1+\rho ^{2})](|\overline{X}|+|\overline{%
Y}|)^{2} \\
&&+[\frac{27\alpha ^{1/2}}{2}+c\alpha \varepsilon (1+\rho ^{2})](|\overline{X%
}|+|\overline{Y}|)^{3}+c\alpha ^{3/2}\varepsilon ^{2}(1+\rho ^{4})(|%
\overline{X}|+|\overline{Y}|)^{4},
\end{eqnarray*}%
and in using the constraint (\ref{cond ro khi})%
\begin{eqnarray*}
|\mathcal{B}_{01}(\overline{X},\overline{Y})|+|\mathcal{B}_{11}(\overline{X},%
\overline{Y})| &\leq &\alpha ^{1/2}(9/2+c\frac{\varepsilon ^{2}}{\alpha
^{7/2}})(|\overline{X}|+|\overline{Y}|)^{2} \\
&&+\alpha ^{1/2}(\frac{27}{2}+c\frac{\varepsilon }{\alpha })(|\overline{X}|+|%
\overline{Y}|)^{3}+\alpha ^{1/2}(c\frac{\varepsilon ^{2}}{\alpha ^{2}})(|%
\overline{X}|+|\overline{Y}|)^{4}.
\end{eqnarray*}

\subsection{Proof of Lemma \protect\ref{conjecture}\label{app lem}}

\subsubsection{First step: integration on $[-a_{+},a_{+}]$}

We have a clear control on $a_{+}$ while this is more complicate for $a_{-}$
which is possibly larger. Hence, we consider the 4th-order differential
equation (\ref{limit interm equ}) with boundary conditions (\ref{bound cond
z=1}) and first find the solution on the interval $[-a_{+},a_{+}].$

Integrating simply the 4th order ODE (\ref{limit interm equ}) leads to%
\begin{eqnarray*}
\overline{A_{j}}(z) &=&\overline{A_{j+}}+\int_{a_{+}}^{z}\overline{A_{j+1}}%
(s)ds,\text{ }j=0,1,2, \\
\overline{A_{3}(z)} &=&\overline{A_{3+}}-\int_{a_{+}}^{z}\overline{A_{0}}(s)(%
\overline{A_{0}}^{2}(s)+s)ds,
\end{eqnarray*}%
where%
\begin{equation*}
\overline{A_{j+}}=\frac{d^{j}\overline{A_{0}}}{dz^{j}}(a_{+}),\text{ }%
j=0,1,2,3.
\end{equation*}%
This gives for $z<a_{+}$%
\begin{eqnarray}
\overline{A_{0}}(z) &=&\overline{A_{0+}}+(z-a_{+})\overline{A_{1+}}+\frac{%
(z-a_{+})^{2}}{2}\overline{A_{2+}}+\frac{(z-a_{+})^{3}}{6}\overline{A_{3+}} 
\notag \\
&&+\int_{z}^{a_{+}}\frac{(z-s)^{3}}{6}\overline{A_{0}}(s)[\overline{A_{0}}%
^{2}(s)+s]ds.  \label{equ integ  A0+}
\end{eqnarray}%
This leads to the estimate%
\begin{eqnarray}
|\overline{A_{0}}(z)| &\leq &|\overline{A_{0+}}|+(a_{+}-z)|\overline{A_{1+}}%
|+\frac{(z+a_{+})^{2}}{2}|\overline{A_{2+}}|+\frac{(a_{+}-z)^{3}}{6}|%
\overline{A_{3-}}|  \notag \\
&&+\int_{z}^{a_{+}}\frac{(s-z)^{3}}{6}|\overline{A_{0}}(s)||\overline{A_{0}}%
^{2}(s)+s|ds,\text{ \ for }z<a_{+}.  \label{estim equ integ A0+}
\end{eqnarray}%
Let us define%
\begin{equation*}
||\overline{A_{0}}||_{0}=\underset{z\in (-a_{+},a_{+})}{\sup }|\overline{%
A_{0}}(z)|,
\end{equation*}%
and look for a solution $\overline{A_{0}}\in C^{0}[-a_{+},a_{+}]$ of (\ref%
{equ integ A0+}). The estimate (\ref{estim equ integ A0+}) leads to%
\begin{equation*}
||\overline{A_{0}}||_{0}\leq |\overline{A_{0+}}|+2a_{+}|\overline{A_{1+}}%
|+2a_{+}^{2}|\overline{A_{2+}}|+\frac{4a_{+}^{3}}{3}|\overline{A_{3+}}|+%
\frac{2a_{+}^{5}}{3}||\overline{A_{0}}||_{0}+\frac{2a_{+}^{4}}{3}||\overline{%
A_{0}}||_{0}^{3}.
\end{equation*}%
The fixed point theorem applies in a small ball for $\overline{A_{0}},$
provided that $\frac{2a_{+}^{5}}{3}<1.$ From (\ref{def a+-}) this condition
is equivalent to 
\begin{equation}
\frac{\nu _{+}}{\sqrt{\delta }}>\frac{\sqrt{1+\delta ^{2}}}{2.6^{1/4}}.
\label{cond nu delta}
\end{equation}%
We notice that (\ref{cond nu delta}) is compatible with (\ref{cond nu+})
since for $\delta <1$%
\begin{equation*}
\frac{\sqrt{1+\delta ^{2}}}{2.6^{1/4}}\leq 0.452<0.46.
\end{equation*}%
Moreover, we have the estimates%
\begin{equation*}
|\overline{A_{0+}}|\leq a_{+}^{1/2}k_{1},\text{ }|\overline{A_{1+}}|\leq
a_{+}^{3/4}k_{1},\text{ }|\overline{A_{2+}}|\leq a_{+}k_{1},\text{ }|%
\overline{A_{3+}}|\leq a_{+}^{5/4}k_{1},
\end{equation*}%
so that%
\begin{equation*}
|\overline{A_{0+}}|+2a_{+}|\overline{A_{1+}}|+2a_{+}^{2}|\overline{A_{2+}}|+%
\frac{4a_{+}^{3}}{3}|\overline{A_{3+}}|\leq
k_{1}a_{+}^{1/2}(1+2a_{+}^{5/4}+2a_{+}^{5/2}+\frac{4a_{+}^{15/4}}{3}).
\end{equation*}%
Choosing $k_{1}$ small enough, and assuming that (\ref{cond nu delta})
holds, we then find a unique fixed point $\overline{A_{0}}$ in $%
C^{0}[-a_{+},a_{+}]$ function of the two parameters $(\overline{x_{10}}^{s},%
\overline{x_{20}}^{s}).$

\subsubsection{Second step: integration on $[-a_{-},-a_{+}]$}

For extending the solution on $[-a_{-},-a_{+}]$ we need to solve with
respect to $\overline{A_{0}}\in C^{0}[-a_{-},-a_{+}]$%
\begin{eqnarray*}
\overline{A_{0}}(z) &=&\overline{A_{0}}(-a_{+})+(z+a_{+})\overline{A_{1}}%
(-a_{+})+\frac{(z+a_{+})^{2}}{2}\overline{A_{2}}(-a_{+})+\frac{(z+a_{+})^{3}%
}{6}\overline{A_{3}}(-a_{+}) \\
&&+\int_{z}^{-a_{+}}\frac{(z-s)^{3}}{6}\overline{A_{0}}(s)[\overline{A_{0}}%
^{2}(s)+s]ds,
\end{eqnarray*}%
where $z\in \lbrack -a_{-},-a_{+}].$ For applying the fixed point argument
for $\overline{A_{0}}$ close to $0$, as above, we need to satisfy%
\begin{equation*}
\int_{-a_{-}}^{-a_{+}}\frac{|(s+a_{-})^{3}s|}{6}ds<1,
\end{equation*}%
i.e.%
\begin{equation*}
\frac{(a_{-}-a_{+})^{5}}{120}+\frac{(a_{-}-a_{+})^{4}a_{+}}{24}<1,
\end{equation*}%
where we notice that%
\begin{equation}
\frac{a_{+}}{a_{-}}=\left( \frac{\nu _{-}}{\nu _{+}}\right) ^{4/5}.
\label{nu+/nu-}
\end{equation}%
Using (\ref{nu+/nu-}) and (\ref{cond nu delta}) the above condition holds if
the following condition on $\nu _{+}/\nu _{-}$ holds%
\begin{equation*}
\frac{1}{80}\left( (\frac{\nu _{+}}{\nu _{-}})^{4/5}-1\right) ^{5}+\frac{1}{%
16}\left( (\frac{\nu _{+}}{\nu _{-}})^{4/5}-1\right) ^{4}<1.
\end{equation*}%
We may check that%
\begin{equation*}
(1/80)X^{5}+(1/16)X^{4}<1
\end{equation*}%
holds for 
\begin{equation*}
X<1.84
\end{equation*}%
hence we need%
\begin{equation*}
\frac{\nu _{+}}{\nu _{-}}<3.69.
\end{equation*}%
If this is realized, we are done! If not, we need to iterate as follows:

i) the first step is as above, then we reach $-a_{1}$ such that%
\begin{equation*}
a_{1}=a_{+}(\frac{\nu _{+}}{\nu _{1}})^{4/5},\text{ \ }\nu _{1}=\frac{\nu
_{+}}{3.69},
\end{equation*}%
and the solution is obtained on $[-a_{1},+a_{+}]$ for $k_{1}$ small enough.

ii) The second step starts at $z=-a_{1}$ and proceeds as above. We then
reach $-a_{2}$ such that%
\begin{equation*}
a_{2}=a_{1}(\frac{\nu _{1}}{\nu _{2}})^{4/5},\ \nu _{2}=\frac{\nu _{1}}{3.69}%
,
\end{equation*}%
and the solution is obtained on $[-a_{2},+a_{+}]$ for $k_{1}$ small enough.

iii) We iterate the process $n$ times, until%
\begin{equation*}
\nu _{n}=\frac{\nu _{+}}{(3.69)^{n}}\leq \nu _{-}.
\end{equation*}%
Then, the solution is obtained, for $k_{1}$ small enough, on $%
[-a_{n},+a_{+}] $ where 
\begin{equation*}
-a_{n}=-a_{+}\left( \frac{\nu _{+}}{\nu _{n}}\right) ^{4/5}\leq -a_{-}.
\end{equation*}%
The Lemma is proved.

\subsection{Proof of $\mathbf{P}_{0}\partial _{g}\mathcal{F}_{0}=0\label%
{App3}$}

\begin{lemma}
Any $(u,v)$ in the kernel of $\mathcal{M}_{g}$ satisfies%
\begin{equation*}
\int_{%
\mathbb{R}
}A_{\ast }B_{\ast }(B_{\ast }u+A_{\ast }v)dx=0,
\end{equation*}%
and $\partial _{g}\mathcal{F}_{0}(U_{\ast },g)=(A_{\ast }B_{\ast
}^{2},A_{\ast }^{2}B_{\ast })$ belongs to the range of $\mathcal{M}_{g},$
hence $\mathbf{P}_{0}\partial _{g}\mathcal{F}_{0}=0.$
\end{lemma}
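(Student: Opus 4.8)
The plan is to use the classical device of differentiating the heteroclinic equation with respect to the parameter $g$. By Theorem \ref{theorem} the heteroclinic $U_\ast^{(g)}=(A_\ast,B_\ast)$ depends analytically on $g=1+\delta^2$, so the identity $\mathcal{F}(U_\ast^{(g)},g)\equiv 0$ holds in a neighbourhood of the value of $g$ under consideration, where $\mathcal{F}(U,g)=0$ is the system (\ref{truncated syst}) with $D_U\mathcal{F}(U_\ast^{(g)},g)=\mathcal{M}_g$. Differentiating in $g$ gives $\mathcal{M}_g\,\partial_g U_\ast^{(g)}=-\partial_g\mathcal{F}_0$, and a one-line computation of the $g$-derivative of the right-hand sides of (\ref{truncated syst}) identifies $\partial_g\mathcal{F}_0$ with $(A_\ast B_\ast^2,A_\ast^2 B_\ast)$, up to the overall sign fixed by the convention for $\mathcal{F}$ (the $D$-equation contributes nothing, since $D_\ast\equiv 0$). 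This relation already displays $\partial_g\mathcal{F}_0$ as $\mathcal{M}_g$ applied to $-\partial_g U_\ast^{(g)}$, i.e. as an element of the range of $\mathcal{M}_g$ — provided that $\partial_g U_\ast^{(g)}$ lies in the domain $\mathcal{D}_0$.

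Establishing this last point is the step I expect to be the main obstacle, everything else being soft functional analysis. Here I would use that the end states $M_-=(1,0)$ and $M_+=(0,1)$ do not depend on $g$: since, by Theorem \ref{theorem}, $U_\ast^{(g)}-M_\mp$ decays exponentially at $\mp\infty$ with a rate bounded below by a fixed multiple of $\varepsilon\delta$, uniformly in $g$ over the range $10/9\le g\le 2$, the analytic $g$-dependence lets one differentiate under this decay and conclude that $\partial_g U_\ast^{(g)}=(\partial_g A_\ast,\partial_g B_\ast)$ also decays exponentially at $\pm\infty$, at a rate that can still be kept strictly larger than $\eta=\tfrac12\varepsilon\delta$. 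Hence $\partial_g U_\ast^{(g)}\in L_\eta^2$, and its regularity (again from analyticity, together with the equation it satisfies) places it in $\mathcal{D}_0$; a Cauchy estimate on the $g$-analytic family in the relevant weighted norm, or simply differentiating in $g$ the exponential a priori bounds of Lemma \ref{unstablemanifM-} and Lemma \ref{Lemma stable manifM+}, makes this precise.

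It then remains to turn "in the range" into the two stated conclusions. Since $\mathcal{M}_g$ is Fredholm of index $0$ and formally self-adjoint in $L^2(\mathbb{R})$, its range is the $L^2$-orthogonal complement of $\ker\mathcal{M}_g$; thus $\partial_g\mathcal{F}_0$ is $L^2$-orthogonal to every $(u,v)\in\ker\mathcal{M}_g$, and writing this out with $\partial_g\mathcal{F}_0=(A_\ast B_\ast^2,A_\ast^2 B_\ast)$ gives exactly $\int_{\mathbb{R}}A_\ast B_\ast(B_\ast u+A_\ast v)\,dx=0$. Finally, $\mathbf{P}_0=\mathbb{I}-\mathbf{Q}_0$ is the projection onto a complement of the range of $\mathcal{M}_g$, so membership of $\partial_g\mathcal{F}_0$ in the range immediately yields $\mathbf{P}_0\partial_g\mathcal{F}_0=0$. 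I would close with the remark that this argument does not presuppose $\dim\ker\mathcal{M}_g=1$ — it works verbatim for a possibly two-dimensional kernel — so there is no circularity when it is invoked in Section \ref{sect dim Mg} to rule out the two-dimensional case; as a fall-back, one could instead verify the orthogonality directly by pairing $(A_\ast B_\ast^2,A_\ast^2 B_\ast)$ against the two kernel equations and integrating by parts, but the analytic-perturbation route is shorter.
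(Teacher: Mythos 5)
Your proposal is correct and follows essentially the same route as the paper: differentiate the identity $\mathcal{F}(U_{\ast}^{(g)},g)=0$ in $g$ (using the analyticity of the heteroclinic from Theorem \ref{theorem}) to exhibit $(A_{\ast}B_{\ast}^{2},A_{\ast}^{2}B_{\ast})$ as $\mathcal{M}_{g}$ applied to $\pm\partial_{g}U_{\ast}$, then use $L^{2}$-self-adjointness and Fredholmness to get the orthogonality of the kernel to this element and hence $\mathbf{P}_{0}\partial_{g}\mathcal{F}_{0}=0$. Your additional attention to the sign convention and to the exponential decay of $\partial_{g}U_{\ast}$ (so that it lies in $L_{\eta}^{2}$, resp. $\mathcal{D}_{0}$) only makes explicit points the paper leaves implicit.
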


\emph{Proof.}

Differentiating with respect to $g$ the system (\ref{truncated syst})
verified by the heteroclinic, we obtain 
\begin{equation*}
\mathcal{M}_{g}\left( 
\begin{array}{c}
\partial _{g}A_{\ast } \\ 
\partial _{g}B_{\ast }%
\end{array}%
\right) =\left( 
\begin{array}{c}
A_{\ast }B_{\ast }^{2} \\ 
A_{\ast }^{2}B_{\ast }%
\end{array}%
\right) =\partial _{g}\mathcal{F}_{0}(U_{\ast },g),
\end{equation*}%
hence $(A_{\ast }B_{\ast }^{2},A_{\ast }^{2}B_{\ast })$ belongs to the range
of $\mathcal{M}_{g}.$ When $(u,v)\in \ker \mathcal{M}_{g},$ then $(u,v)\in
\ker \mathcal{M}_{g}^{\ast }$ where $\mathcal{M}_{g}=\mathcal{M}_{g}^{\ast
}, $ when the adjoint is computed with the scalar product of $L^{2},$ hence 
\begin{equation}
\int_{%
\mathbb{R}
}A_{\ast }B_{\ast }(B_{\ast }u+A_{\ast }v)dx=0.  \label{prop3 kernel}
\end{equation}%
\ Hence, the eigenvectors $\zeta _{0}^{\ast },\zeta _{1}^{\ast }$ of the
adjoint $\mathcal{M}_{g}^{\ast }$ (the orthogonal of this 2-dimensional
eigenspace is the range of $\mathcal{M}_{g}),$ are orthogonal to $\partial
_{g}\mathcal{F}_{0}=(A_{\ast }B_{\ast }^{2},A_{\ast }^{2}B_{\ast })|_{g_{0}}$
in $L^{2}.$


\begin{thebibliography}{99}
\bibitem{BHI} B.Buffoni, M.Haragus, G.Iooss. Heteroclinic orbits for a
system of amplitude equations for orthogonal domain walls. J.Diff.Equ,2023.
https://doi.org/10.1016/j.jde.2023.01.026.

\bibitem{Buff} B.Buffoni. On minimizers of an integral functional arising in
the B\'{e}nard-Rayleigh convection problem. Preprint 2023.

\bibitem{Dieudo} J.Dieudonn\'{e}. El\'{e}ments d'Analyse. vol1.
Gauthier-Villars, Paris 1969.

\bibitem{Feni} N.Fenichel. Geometric singular perturbation theory for
ordinary differential equations. J. Diff. Equ. 31 (1979), 1, 53-98.

\bibitem{Hale} J.Hale. Ordinary differential equations. Wiley, New York,
1969.


\bibitem{HI Arma} M.Haragus, G.Iooss. Bifurcation of symmetric domain walls
for the B\'{e}nard-Rayleigh convection problem. Arch. Rat. Mech. Anal.
239(2), 733-781, 2020.

\bibitem{HIbook} M.Haragus, G.Iooss. Local Bifurcations, Center Manifolds,
and Normal forms in Infinite Dimensional Dynamical Systems. Universitext.
Springer-Verlag London, Ltd., London; EDP Sciences, Les Ulis, 2011.

\bibitem{H-S 2012} M.Haragus, A.Scheel. Grain boundaries in the
Swift-Hohenberg equation. Europ. J. Appl. Math. 23 (2012), 737-759.



\bibitem{Io24} G.Iooss. Existence of orthogonal domain walls in B\'{e}%
nard-Rayleigh convection. J. Math. Fluid Mech. 2025, 27,4.  https://doi.org/10.1007/s00021-024-00891-2

\bibitem{IoMiDe} G.Iooss, A.Mielke, Y.Demay. Theory of steady
Ginzburg-Landau equation in hydrodynamic stability problems. Eur. J. Mech.
B/fluids 8, 1989, 229-268.

\bibitem{Kap-Pro} T.Kapitula, K.Promislow. Spectral and Dynamical Stability
of Nonlinear Waves. Springer series, Appl. Math. Sci. 185. 2013.

\bibitem{Kato} T.Kato. Perturbation theory for linear operators. Classics in
Maths. Springer-Verlag, Berlin, 1995 (1st ed. in 1966).

\bibitem{Kirch82} K.Kirchg\"{a}ssner. Wave-solutions of reversible systems
and applications. J. Diff. Equ. 45, 1982, 113-127.

\bibitem{Krupa} M.Krupa, P.Szmolyan. Geometric analysis of the singularly
perturbed planar fold. Multiple -time-scale Dyn. Syst. 2001, 89-116.

\bibitem{Man-Pom} P.Manneville, Y.Pomeau. A grain boundary in cellular
structures near the onset of convection. Phil. Mag. A, 1983, 48, 4, 607-621.
\end{thebibliography}
\end{document}